\newcommand{\bigoh}{{\cal O}}
\newcommand{\cO}{{\cal O}}
\newcommand{\inv}[0]{\mathsf{-1}}
 \providecommand\@dotsep{5}
 \def\listtodoname{}
 \def\listoftodos{\@starttoc{tdo}\listtodoname}
\newcounter{nmcomment}
\newcommand{\FPT}{\textrm{\sf FPT}}
\newcommand{\NPC}{\textrm{\textup{NP-complete}}\xspace}
\newcommand{\defparproblem}[4]{
  \vspace{3mm}
\noindent\fbox{
  \begin{minipage}{.95\textwidth}
  \begin{tabular*}{\textwidth}{@{\extracolsep{\fill}}lr} \textsc{#1}  & {\bf{Parameter:}} #3 \\ \end{tabular*}
  {\bf{Input:}} #2  \\
  {\bf{Question:}} #4
  \end{minipage}
  }
  \vspace{2mm}
}
\newcommand{\CC}{{\mathcal C}}
\newcommand{\EE}{{\mathcal E}}
\newcommand{\HH}{{\mathcal H}}
\newcommand{\OO}{{\mathcal O}}
\newcommand{\wtilde}{\widetilde}
\newcommand{\cS}{{\mathcal{S}}}
\newcommand{\cF}{{\mathcal{F}}}
\newcommand{\cR}{\mathcal{R}}
\newcommand{\cA}{\mathcal{A}}
\newcommand{\cB}{\mathcal{B}}
\newcommand{\cH}{\mathcal{H}}
\newcommand{\cT}{\mathcal{T}}
\newcommand{\cX}{\mathcal{X}}
\newcommand{\cL}{\mathcal{L}}
\newcommand{\cZ}{\mathcal{Z}}
\newcommand{\prob}[1]{\mathrm{Prob}[~#1~]}
\newcommand{\mean}[1]{\mathrm{\bf E}[~#1~]}
\newtheorem{theorem}{Theorem}[section]
\newtheorem{lemma}[theorem]{Lemma}
\newtheorem{claim}[theorem]{Claim}
\newtheorem{corollary}[theorem]{Corollary}
\newtheorem{definition}[theorem]{Definition}
\newtheorem{observation}[theorem]{Observation}
\newtheorem{proposition}[theorem]{Proposition}
\renewcommand*\overline[1]{%
   \vbox{%
     \hrule height 0.5pt
     \kern0.25ex
     \hbox{%
       \kern-0.15em
       \ifmmode#1\else\ensuremath{#1}\fi
       \kern-0.1em
     }
   }
}
\definecolor{brg}{rgb}{0.0, 0.26, 0.15}
\newcommand{\thinblackbox}[1]{
\begin{center}
\begin{tcolorbox}[colback=white!30, colframe=black!10,width=15cm]
#1
\end{tcolorbox}
\end{center}
}
\newcommand{\pname}{\textsc}
\newcommand{\ProblemFormat}[1]{\pname{#1}}
\newcommand{\ProblemIndex}[1]{\index{problem!\ProblemFormat{#1}}}
\newcommand{\ProblemName}[1]{\ProblemFormat{#1}\ProblemIndex{#1}{}\xspace}
 \newcommand{\probVDH}{\ProblemName{Vertex Deletion to $\cH$}}
\newcommand{\probPVD}{\ProblemName{Planar Vertex Deletion}}
\newcommand{\probCVD}{\ProblemName{Chordal Vertex Deletion}}
\newcommand{\probIVD}{\ProblemName{Interval Vertex Deletion}}
\newcommand{\probOCT}{\ProblemName{Odd Cycle Transversal}}
\newcommand{\probFVS}{\ProblemName{Feedback Vertex Set}}
 \newcommand{\probVC}{\ProblemName{Vertex Cover}}
\newcommand{\oct}{\mathop{\textsc{oct}}}
\newcommand{\defparproblemR}[4]{
  \vspace{1mm}
\noindent\colbox{
  \begin{minipage}{0.94\textwidth}
  \begin{tabular*}{\textwidth}{@{\extracolsep{\fill}}lr} #1  &
    {\textbf{Parameter:}} #3 \\ \end{tabular*}
  {\textbf{Input:}} #2  \\
  {\textbf{Task:}} #4
  \end{minipage}
  }
}
 \newcommand{\fptsemips}{{\sf FPT-Semi-PS}}
 \newcommand{\semips}{{\sf SemiPS}}
\definecolor{blueish}{rgb}{0.122, 0.435, 0.698}
\definecolor{dagstuhlyellow}{rgb}{0.99,0.78,0.07}
\definecolor{lightgray}{rgb}{0.9,0.9,0.9}
\newtcbox{\colbox}{
size=title,
  nobeforeafter,
  colframe=white,
  colback=blue!5!white,
  arc=10pt,
  tcbox raise base}
\newcommand{\hh}{\ensuremath{\mathcal{H}}}
\renewcommand{\tilde}{\widetilde}
\newenvironment{tightcenter}
 {\parskip=0pt\par\nopagebreak\centering}
 {\par\noindent\ignorespacesafterend}
\newlength{\RoundedBoxWidth}
\newsavebox{\GrayRoundedBox}
\newenvironment{GrayBox}[1]%
   {\setlength{\RoundedBoxWidth}{\textwidth-4.5ex}
    \def\boxheading{#1}
    \begin{lrbox}{\GrayRoundedBox}
       \begin{minipage}{\RoundedBoxWidth}%
   }{%
       \end{minipage}
    \end{lrbox}%
    \begin{tightcenter}%
    \begin{tikzpicture}%
       \node(Text)[draw=black!90,fill=white,rounded corners,%
             inner sep=2ex,text width=\RoundedBoxWidth]%
             {\usebox{\GrayRoundedBox}};
        \coordinate(x) at (current bounding box.north west);
        \node [draw=white,rectangle,inner sep=3pt,anchor=north west,fill=white]
        at ($(x)+(6pt,.75em)$) {\boxheading};
    \end{tikzpicture}
    \end{tightcenter}\vspace{0pt}%
    \ignorespacesafterend
}
\title{Meta-theorems for Parameterized Streaming Algorithms }
\author{Daniel Lokshtanov\thanks{University of California, Santa Barbara, USA. \texttt{daniello@ucsb.edu}} ~~
Pranabendu Misra\thanks{Chennai Mathematical Institute. \texttt{pranabendu@cmi.ac.in}}~~
Fahad Panolan\thanks{Indian Institute of Technology, Hyderabad. \texttt{fahad@iith.ac.in}} ~~   
M. S. Ramanujan\thanks{University of Warwick, UK. \texttt{r.maadapuzhi-sridharan@warwick.ac.uk}}\and Saket Saurabh\thanks{Department of Informatics, University of Bergen, Norway and  Institute of Mathematical Sciences, HBNI,  India,
\texttt{saket@imsc.res.in}} ~~Meirav Zehavi\thanks{Ben-Gurion University of the Negev, Israel. \texttt{meiravze@bgu.ac.il}
}
}
\begin{document}
 \date{}
\begin{titlepage}
\def\thepage{}
\thispagestyle{empty}
\maketitle
\begin{abstract}

The streaming model was introduced to parameterized complexity independently by Fafianie and Kratsch [MFCS14] and by Chitnis, Cormode, Hajiaghayi and Monemizadeh
 [SODA15]. Subsequently, it was broadened by Chitnis, Cormode, Esfandiari, Hajiaghayi and Monemizadeh [SPAA15] and  by Chitnis, Cormode, Esfandiari, Hajiaghayi, McGregor, Monemizadeh and Vorotnikova [SODA16].
Despite its strong motivation, the applicability of the streaming model to central problems in parameterized complexity has remained, for almost a decade, quite limited. Indeed, due to simple $\Omega(n)$-space lower bounds for many of these problems, the $k^{\OO(1)}\cdot \mathsf{polylog}(n)$-space requirement in the model is  too strict.


Thus, we explore {\em semi-streaming} algorithms for parameterized graph problems,  and present the first systematic study of this topic. Crucially, we aim to construct succinct representations of the input on which optimal post-processing time complexity can be achieved.  

\begin{itemize}
	\item We devise meta-theorems specifically designed for parameterized streaming and demonstrate their applicability by obtaining the first $k^{\bigoh(1)}\cdot n\cdot \mathsf{polylog}(n)$-space streaming algorithms for well-studied problems such as {\sc Feedback Vertex Set on Tournaments}, {\sc Cluster Vertex Deletion}, {\sc Proper Interval Vertex Deletion} and {\sc Block Vertex Deletion}. 
	In the process, we demonstrate a fundamental connection between semi-streaming algorithms for recognizing graphs in a graph class $\cH$ and semi-streaming algorithms for the problem of vertex deletion into $\cH$.

	\item We present an algorithmic machinery for obtaining streaming algorithms for cut problems and exemplify this by giving the first $k^{\bigoh(1)}\cdot n\cdot \mathsf{polylog}(n)$-space streaming algorithms for {\sc Graph Bipartitization}, {\sc Multiway Cut} and {\sc Subset Feedback Vertex Set}. 
\end{itemize}

\end{abstract}

\newpage
\small{
\tableofcontents}

\end{titlepage}
\newpage

\section{Introduction}\label{sec:introduction}

The \textit{Parameterized Streaming model} was proposed independently by Fafianie and Kratsch~\cite{fafianie2014streaming} and Chitnis, Cormode, Hajiaghayi and Monemizadeh~\cite{ChitnisCHM15} with the goal of studying space-bounded parameterized algorithms for NP-complete problems.
In this setting, the space is restricted to $\widetilde{\OO}(k^{\OO(1)})$ (that is, $\OO(k^{\OO(1)}\cdot\mathsf{polylog}(n))$), where the parameter $k$ is a non-negative integer that aims to express some structure in the input. A  feature of this model is that it allows one to design (exact) streaming algorithms for certain NP-complete graph problems such as {\sc Vertex Cover} (where the parameter $k$ is the size of the solution, i.e., the vertex cover).

Unfortunately, even allowing $\OO(k^{\OO(1)}\cdot\mathsf{polylog}(n))$ space can only lead to limited success
as numerous NP-complete graph problems (when parameterized by the solution size) 
require space $\widetilde{\Omega}(n)$ even when $k$ is a fixed constant (e.g., $k=0$)~\cite{fafianie2014streaming,ChitnisCHM15}. \footnote{For instance, consider the classic {\NPC} problem {\sc Graph Bipartization}, where the goal is to determine, given a graph $G$ and number $k$, whether removing $k$ vertices from $G$ results in a bipartite graph. For $k=0$, this problem is nothing but testing whether $G$ is bipartite, for which there is a $\Omega(n\log n)$-space lower bound for one-pass algorithms~\cite{SunW15}.}
Subsequently,  Chitnis and Cormode~\cite{ChitnisC19} made an important advance by defining a hierarchy of complexity classes for parameterized streaming; among these classes, they defined the {\semips} (parameterized semi-streaming) class. This class, in the spirit of standard semi-streaming~\cite{muthukrishnan2005data,FeigenbaumKMSZ05}, includes all graph problems that can be solved by an algorithm that uses $\widetilde{\OO}(f(k)n)$ space for some computable function $f$ of $k$. 

The class {\semips}
 allows unbounded computational power, both while processing the stream and in post-processing. Notice that this implies that several problems that are not expected to have fixed-parameter algorithms (e.g., problems that are W[1]-hard on planar graphs) are contained in {\semips}, since the entire input can be stored and then solved in post-processing. Thus, the model seems too powerful to combine well with the usual meaning of efficiency in parameterized algorithms (i.e., solvability in time $f(k)\cdot n^{\cO(1)}$).  
   This leaves a gap in the area which must be addressed.
	That is, what would be an appropriate refinement of {\semips} that captures problems in {\FPT}?
We bridge this gap in the state of the art by initiating the study of parameterized streaming algorithms where the space complexity is bounded by $\widetilde{\OO}(f(k)n)$ {\em and} the time complexity is bounded by $g(k)n^{\bigoh(1)}$ at every edge update and in post-processing, for computable functions $f$ and $g$. We call such algorithms {\em fixed-parameter semi-streaming algorithms} (or FPSS algorithms).

The study of parameterized streaming algorithms has remained in a relatively early stage, with the state of the art predominantly focused on studying individual problems. 
In contrast, there has been remarkable progress in incorporating various other domains into the framework of parameterized complexity, such as: 
Dynamic Graph Algorithms~\cite{DvorakKT14,AlmanMW20,ChenCDFHNPPSWZ21,MajewskiPS23,korhonen2023dynamic}, Approximation Algorithms~\cite{DemaineHK05,LokshtanovPRS17,GuptaLL18,GrandoniKW19,LokshtanovSS20,ChalermsookCKLM20} and Sensitivity Oracles~\cite{BiloCCC0LSW22,AlmanH22,PilipczukSSTV22}. 
The wide interest in these domains is also a consequence of various unique challenges that arise in their settings. 
The setting of parameterized streaming algorithms offers its own unique challenges as well. For instance,   
since we cannot store all the edges incident to a vertex, it becomes challenging to utilize many of the standard tools and techniques of Parameterized Complexity.

In this paper, we significantly advance the study of parameterized streaming algorithms by giving \emph{meta-theorems}, which lead to the first FPSS algorithms for large classes of problems studied in Parameterized Complexity. This includes algorithms for {\sc Vertex Deletion to $\cal H$}, where $\cal H$ can be a graph class characterized by a finite forbidden family, or it can be a hereditary graph class. We also give a framework for various graph-cut problems and apply it to obtain the first FPSS algorithms for {\sc Graph Bipartization} (also called {\sc Odd Cycle Transversal}), {\sc Multiway Cut} and {\sc Subset Feedback Vertex Set}. 
Another key contribution is to show how, in several cases, one can essentially ``reconstruct'' the input graph, providing ``query-access'' to its edge-set using only $\widetilde{\OO}(f(k)n)$ space, thereby allowing the use of various algorithmic tools in Parameterized Complexity. Of course these tools must then be employed within limited space themselves, which in itself is a non-trivial problem that we address in this paper. 

\section{Our Contributions and Methodology}\label{sec:ourContribution}

Our  algorithmic contribution consists of (i) two meta-theorems that yield FPSS algorithms for 
several basic graph optimization problems, and (ii) a methodology to obtain FPSS algorithms for well-studied graph cut problems. Our results in (i) are arguably the first general purpose theorems applicable to dense graphs in this line of research, and using the machinery we develop in (ii), we resolve an open problem of Chitnis and Cormode~\cite{ChitnisC19} on  whether {\sc Odd Cycle Transversal} has an $\widetilde{\OO}(f(k)n)$-space streaming algorithm.  To obtain our results, we introduce novel sparsification methodologies and combinatorial results  that could be of independent interest or of use in the design of other parameterized semi-streaming algorithms.

We next describe the graph optimization problems we focus on, which are all {\em vertex-deletion problems}. 
Let $\cal H$ be a family of graphs.
The canonical vertex-deletion problem corresponding to the family $\cal H$ is defined as follows.

\medskip

\defparproblemR{\probVDH}{A graph $G$,  and an integer
  $k$.}{$k$}{Decide whether there is a vertex set $S$ of size at most $k$ such that $G-S\in \cH$.}

\medskip
\noindent

This family of problems includes fundamental problems in graph theory and combinatorial optimization, e.g., \probPVD,  \probOCT,  \probCVD,  \probIVD,  \probFVS, and \probVC (corresponding to $\cH$ being the class of planar, bipartite, chordal, interval, acyclic or edgeless graphs,  respectively). Many vertex-deletion problems are well known to be {\NPC}~\cite{LewisY80,Yannakakis78}.   Therefore, they have been studied extensively within various algorithmic paradigms such as approximation algorithms, parameterized complexity, and algorithms on restricted input classes~\cite{FominLMS12,Fujito98,LundY94}. 
However, when the input graph is too large to fit into the available memory, these paradigms on their own are insufficient. This naturally motivates the study of these problems in the streaming model.


Importantly, in the streaming setting, for {\probVDH}, even the special case of $k=0$ (i.e., {\em recognition of graphs in $\HH$}) is already non-trivial for many natural choices of $\cH$. That is, just the question of determining whether a given graph $G$ belongs to $\cH$ becomes significantly much harder in the streaming setting compared to the static (i.e., non-streaming) setting and in some cases, it is provably impossible to achieve any non-trivial upper bounds  (e.g., as we show in this paper, for chordal graphs). So, there is a significant challenge in developing space-bounded (i.e., $\wtilde\bigoh(n)$-space) algorithms for recognition of various graph classes.
Our first meta theorem
provides some evidence as to why the recognition problem is challenging for many natural graph classes, by drawing a fundamental correspondence between recognition of graphs in $\cH$ and solving {\probVDH}. 

\subsection{Our First Meta-theorem: $\cH$ is Characterized by the Absence of Finitely Many Induced Subgraphs} \label{sec:firstAlgorithmicResult}
Our first result can be encapsulated in the following surprising message, where $\hh$ is defined by excluding a finite number of forbidden graphs as induced subgraphs.

\medskip
\begin{tcolorbox}[colback=green!5!white,colframe=white!100!black]
\begin{center}
{\bf Vertex deletion to $\cH$ is not harder than recognition for $\cH$.}
\end{center}
\end{tcolorbox}
\medskip

\noindent
That is, we show that {\probVDH} has an FPSS algorithm if and only if graphs in $\cH$ can be recognized with a semi-streaming algorithm. {Equivalently, one could say that in the streaming model, just checking whether a given graph is a member of $\cH$ appears to be {\em as difficult} as solving the seemingly much more general {\probVDH} problem. 
}

To be precise, we prove the following theorem.

\begin{restatable}{theorem}{Hcovering}\label{thm:Hcovering}
Let $\cH$ be a family of graphs defined by a finite number of forbidden induced subgraphs such that $\cH$ admits a deterministic/randomized $p$-pass recognition algorithm in the turnstile (resp.~insertion-only) model for some $p\in\mathbb{N}$. Then, 
    {\probVDH} admits a deterministic/randomized $p$-pass $\wtilde{\bigoh}(k^{\bigoh(1)}\cdot n)$-space streaming algorithm in the turnstile (resp.~insertion-only) model with post-processing time $2^{\bigoh(k)}\cdot n^{\bigoh(1)}$.
\end{restatable}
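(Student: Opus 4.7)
The plan combines a streaming sparsification step with the classical $d^k$-branching FPT algorithm for \probVDH, invoking the recognition algorithm for $\cH$ during post-processing to verify candidate solutions. Let $d = \bigoh(1)$ denote the maximum number of vertices in any graph of the finite family of forbidden induced subgraphs defining $\cH$; recall that \probVDH admits a branching algorithm in time $d^k \cdot n^{\bigoh(1)}$ that, at each node with partial solution $S$, either returns $S$ (if $G - S \in \cH$) or finds a forbidden induced subgraph $F$ and recursively branches on deleting each vertex of $V(F)$, yielding a tree with $d^k$ leaves.

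The first goal is to construct, in $p$ passes using $\wtilde{\bigoh}(k^{\bigoh(1)}\cdot n)$ space, a sparse subgraph $G^\star \subseteq G$ such that $G$ and $G^\star$ admit the same set of solutions to \probVDH of size at most $k$. With $G^\star$ in memory, post-processing runs the $d^k$-branching algorithm on $G^\star$ and, at each leaf candidate $S_\ell$, verifies $G^\star - S_\ell \in \cH$ by simulating the given $p$-pass recognition algorithm on the stored edges of $G^\star - S_\ell$. Since $|E(G^\star)| = \wtilde{\bigoh}(k^{\bigoh(1)}\cdot n)$, each verification takes $n^{\bigoh(1)}$ time, giving total post-processing time $d^k \cdot n^{\bigoh(1)} = 2^{\bigoh(k)}\cdot n^{\bigoh(1)}$ as required.

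To build $G^\star$, I would apply a sunflower-lemma argument to the (implicit) hypergraph $\mathcal{H}_G$ whose hyperedges are the vertex sets of forbidden induced subgraphs of $G$; each hyperedge has size at most $d$. For every ``core'' $C$ of fewer than $d$ vertices, it suffices to retain $k+1$ hyperedges extending $C$, since whenever more than $k$ pairwise (outside-$C$)-disjoint extensions exist, any size-$k$ hitting set must contain a vertex of $C$. This caps the number of preserved forbidden subgraphs by $\wtilde{\bigoh}(k^{d-1})$ per vertex, translating to $\wtilde{\bigoh}(k^{\bigoh(1)}\cdot n)$ edges stored in $G^\star$. The streaming implementation maintains, for each vertex $v$, a compact representation of the partial forbidden patterns through $v$; when a new edge update $(u,v)$ arrives we detect newly-completed forbidden subgraphs against these stored patterns, insert them into the collection, and prune via sunflower reduction whenever a core exceeds its quota.

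The principal obstacle is arguing the correctness and efficiency of this online sparsification within exactly $p$ passes. Since forbidden subgraphs are not observed directly in the edge stream, we must infer them from the stream and per-vertex state while preserving the invariant that $G$ and $G^\star$ share all size-$\leq k$ solutions. The recognition algorithm for $\cH$ is invoked as a black box across the $p$ passes (both on restrictions of the stream and on the reduced collection itself) to validate consistency and to catch forbidden structure missed by the incremental procedure, while the finite bound $d$ on forbidden-subgraph size and the sunflower lemma together keep the per-vertex overhead polynomial in $k$. Coordinating the incremental maintenance of the forbidden-subgraph hypergraph with the $p$-pass recognition oracle — so that the final stored $G^\star$ is genuinely solution-equivalent to $G$ while the total pass count remains $p$ — is the technical core of the argument.
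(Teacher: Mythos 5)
There is a genuine gap, and it is precisely the obstacle the paper itself flags as the main difficulty. Your plan rests on constructing, during the stream, a sparse ``solution-equivalent'' subgraph $G^\star \subseteq G$ by enumerating forbidden induced subgraphs of $G$ and pruning them with a sunflower argument. Two things go wrong. First, in a single-pass (or $p$-pass) stream you cannot ``detect newly-completed forbidden subgraphs'' when an edge arrives: an induced $\cR$-subgraph on vertices $V'$ requires you to know \emph{all} edges \emph{and} non-edges among $V'$, but those edges may have arrived arbitrarily long before (or after) the current one, and the non-edge information is only settled once the stream ends. The per-vertex ``compact representation of partial forbidden patterns'' would need to track, for each vertex, $\Omega(n^{d-1})$ potential partial patterns — this is the well-known reason the naive reduction to $d$-{\sc Hitting Set} fails in the semi-streaming model (the paper explicitly singles out cluster graphs / $P_3$-freeness as an example: the two edges of a $P_3$ through $v$ may be far apart in the stream and cannot be caught without storing all of $v$'s adjacency). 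The sunflower lemma is a fine \emph{post hoc} kernelization for $d$-{\sc Hitting Set}, but it requires the hyperedge family to be \emph{available}; here you cannot even enumerate it.

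Second, even granting a sparse $G^\star \subseteq G$, ``solution-equivalence'' is not preserved by taking a sparse spanning subgraph: removing an edge can \emph{create} induced copies of forbidden graphs that were absent from $G$ (a triangle in $G$ becomes an induced $P_3$ in $G^\star$ once one edge is dropped), and it can \emph{destroy} induced copies present in $G$. So ``$G^\star - S_\ell \in \cH$'' neither implies nor is implied by ``$G - S_\ell \in \cH$'', and running the recognition algorithm on $G^\star - S_\ell$ does not certify anything about $G - S_\ell$. What you actually need to store is (a representative of) the hypergraph of obstructions, not a subgraph of $G$, and even that cannot be built online.

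The paper's proof takes a fundamentally different route that never stores edges of $G$ at all. It constructs three splitter families, and for each splitter function $f$ and each small set $J$ of color classes it runs the given recognition algorithm, in parallel, on the induced subgraph $G[f^{-1}(J)]$, storing only the $\wtilde{\bigoh}(k^{\bigoh(1)})$ resulting yes/no bits $\cB(\langle G_{f,J}\rangle)$. In post-processing these bits alone determine, via the splitter structure, a vertex set $Z^\star$ of size $k^{\bigoh(1)}$ containing every minimal solution, and then — again using only the stored bits — a small $d$-{\sc Hitting Set} instance over $Z^\star$ that is equivalent to the original \probVDH{} instance (Claims~\ref{clm:thmHcoveringBoundOnZStar}–\ref{clm:thmHcoveringEquivalence}). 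The recognition algorithm is the only thing that ever touches the stream, and its outputs are used as oracles for ``does this induced subgraph contain an obstruction,'' sidestepping the need to ever identify any obstruction explicitly. That indirection is what your proposal is missing.
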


In the above statement,  a $p$-pass recognition algorithm for $\cH$ is a $p$-pass $\widetilde{\OO}(n)$-space streaming algorithm with polynomially bounded time between edge updates and in post-processing that, given a graph $G$, correctly concludes whether or not $G\in \cH$. The turnstile model permits edge additions and removals whereas the insertion-only model only permits the former. So, our result guarantees that if one can recognize whether a given graph belongs to $\cH$, then one can also solve {\probVDH}  
within the same number of passes and using only $\wtilde{\bigoh}(k^{\bigoh(1)}\cdot n)$-space. Notably, {\em the dependency of the running time we attain on $k$ is the best possible under the Exponential Time Hypothesis {\sc (ETH)}}--- for each of the specific {\probVDH} problems for which we draw corollaries from Theorem~\ref{thm:Hcovering} (and in fact, even for the much simpler {\sc Vertex Cover} problem), it is known that there does not exist a $2^{o(k)}n^{\OO(1)}$-time algorithm under ETH even in the static setting~\cite{CyganFKLMPPS15}.

As a corollary of Theorem~\ref{thm:Hcovering}, we get the first FPSS algorithms for well-studied problems such as 
	{\sc Feedback Vertex Set on Tournaments} (FVST), {\sc Split Vertex Deletion} (SVD), {\sc Threshold Vertex Deletion} (TVD) and  {\sc Cluster Vertex Deletion} (CVD). We refer the reader to the Appendix for the formal descriptions of these problems.

	\smallskip
	
	\noindent{\bf Proof overview for Theorem~\ref{thm:Hcovering}.}  Suppose that the premise holds for the graph class $\cH$ and let $(G,k)$ be the given instance of {\probVDH}. Moreover, let $\cR$ be the finite set of graphs excluded by graphs in $\cH$ as induced subgraphs.
Suppose that the instance is a yes-instance and let $S$ be a hypothetical inclusionwise-minimal solution. Then, there is a set $\cT$ of size $|S|$, which is a set of $\cR$-subgraphs (subgraphs isomorphic to graphs in $\cR$) of $G$ such that for every graph in $\cT$, there is a unique vertex of $S$ that it intersects.
Let $V(\cT)$ denote the union of the vertex sets of the subgraphs in $\cT$ and notice that $|V(\cT)|\leq dk$, where $d$ is the maximum number of vertices among the graphs in $\cR$. Note that $d$ is a constant in our setting. We now construct an $(n,dk,(dk)^2)$-splitter family $\cF_1$ with $\wtilde\bigoh(k^{\bigoh(1)})$ functions and guess a function $f_1\in \cF_1$ which is injective on $V(\cT)$. We refer the reader to Definition~\ref{def:splitterFamilies} in Section~\ref{sec:preliminaries} for a formal definition of splitter families. At this point, it is sufficient for the reader to know that (i) for positive integers $n,x,y$, an $(n,x,y)$-splitter family is a set of functions mapping $[n]$ to $[y]$ such that every subset of $[n]$ of size $x$ is injectively mapped to $[y]$ by at least one of these functions and (ii) there are efficient ways to construct a small enough $(n,x,x^2)$-splitter family.

Returning to our description, we treat $f_1$ as a function that colors $V(G)$ with at most $(dk)^2$ colors. Now, in a single pass, for every set $J$ of at most $d$ colors, we run the recognition algorithm on the subgraph of $G$ induced by the colors in $J$. For each of those subgraphs which have been determined to {\em not} be in $\cH$, we use a second family of splitters (of size $\bigoh(\log n)$) to compute a set of at most $d$ ``candidate'' vertices that could potentially be part of a minimal solution for the original instance. This gives us a set $Z$ of size $k^{\bigoh(1)}$ which contains $S$. At this point, we can obtain a  $2^{\bigoh(k\log k)}n$-time post-processing as follows -- guess a subset $S'$ of $Z$ of size at most $k$ and use the family $\cF_1$ to check whether there is an $\cR$-subgraph of $G$ disjoint from $S'$. Notice that if such an $\cR$-subgraph exists, then there would be a function in $\cF_1$ that is injective on $S'$ plus the vertices of this subgraph and since the subgraph only spans at most $d$ color classes of this function, we would have already stored a bit identifying whether or not this subgraph is in $\cH$.

In order to improve the post-processing time to a single-exponential in $k$ (matching asymptotic lower bounds even in the static setting), we give a non-trivial reduction to the $d$-{\sc Hitting Set} problem. Notice that such a reduction in the static setting is trivial -- the $\cR$-subgraphs of the input graph correspond to the sets that need to be ``hit''. However, we require additional work in our case since the space constraints in our setting prohibit us from explicitly identifying the forbidden subgraphs in $G$. We overcome this obstacle by showing that for every subset of $Z$ of size at most $d$, our data structure can determine whether a solution needs to intersect it and show that it is also sufficient for a solution to intersect precisely these sets. To compute this $d$-{\sc Hitting Set} instance, we employ a further family of splitters with appropriately chosen parameters.

\subsection{Our Second Meta-theorem: $\cH$ is Characterized by the Absence of Infinitely Many Induced Subgraphs}
By strengthening  the requirements in Theorem \ref{thm:Hcovering}, we obtain a similar result even when the set of forbidden induced subgraphs for $\cH$ may be of {\em infinite} size. For example, the obstruction set that defines the class of proper interval graphs is infinite since it contains all chordless cycles. More generally, our result holds for any hereditary graph class (i.e., graph class closed under taking induced subraphs).  Here, we require {\em reconstruction} for $\cH$ rather than recognition for $\cH$, which means that for a given graph $G$, we need to determine whether $G\in \cH$ (as in recognition), but in case the answer is positive, we also need to output a {\em succinct representation of $G$}. By succinct representation, we mean a data structure that takes $\widetilde{\OO}(n)$ space and supports ``edge queries''---given a pair of vertices $u,v$, it answer whether $\{u,v\}$ is an edge in $G$. In addition to this requirement, we also suppose to be given an algorithm that solves the problem (in $\widetilde{\OO}(n)$ space) in the static setting, which is clearly an easier task than attaining the same result in the streaming setting. Specifically, we prove the following theorem.

\begin{restatable}{theorem}{reductionLemma}\label{lem:reductionLemma}
Let $\cH$ be a hereditary graph class such that:
\begin{enumerate}
\item\label{item:reduction1} $\cH$ admits a $p$-pass deterministic/randomized reconstruction algorithm for some $p\in\mathbb{N}$ in the turnstile (resp.~insertion-only) model, and
\item\label{item:reduction2} {\probVDH} admits a deterministic/randomized $\widetilde{\OO}(n\cdot g(k))$-space $f(k)\cdot n^{\OO(1)}$-time (static) algorithm where $g$ and $f$ are some computable functions of $k$. 
\end{enumerate}
Then,  {\probVDH} admits a $p$-pass deterministic/randomized $\widetilde{\OO}(n\cdot g(k)\cdot k^{\OO(1)})$-space streaming algorithm with post-processing time $2^k\cdot f(k) \cdot n^{\OO(1)}$  in the turnstile (resp.~insertion-only) model.
\end{restatable}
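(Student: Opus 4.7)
The plan is to combine the reconstruction algorithm supplied by hypothesis~\eqref{item:reduction1} with the small-space static algorithm of hypothesis~\eqref{item:reduction2} through a splitter-family-based color-coding scheme, in the spirit of the proof sketched for Theorem~\ref{thm:Hcovering}. First, I would fix an $(n, k, k^2)$-splitter family $\cF$ of size $\wtilde{\OO}(k^{\OO(1)})$, so that for any hypothetical solution $S$ with $|S|\le k$ some $f \in \cF$ is injective on $S$, in which case the vertices of $S$ occupy exactly $k$ distinct colors in $[k^2]$.

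In the streaming phase, I would run in parallel $k^{\OO(1)}$ copies of the $p$-pass reconstruction algorithm, each on an induced subgraph of the form $G[V \setminus f^{-1}(J)]$ for $f \in \cF$ and a carefully chosen candidate color pattern $J \subseteq [k^2]$. Because $\cH$ is hereditary, whenever $f^{-1}(J) \supseteq S$ the corresponding subgraph lies in $\cH$ and reconstruction succeeds, producing an $\wtilde{\OO}(n)$-space edge-query oracle for $G - f^{-1}(J)$. Alongside these reconstructions, the streaming phase also stores a small amount of explicit edge data, for instance all edges incident to an $\OO(k)$-sized candidate \emph{boundary} vertex set derived from $\cF$, so that the several reconstructions can later be stitched together. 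Keeping the number of parallel reconstructions to $k^{\OO(1)}$ and allowing $\wtilde{\OO}(n \cdot g(k))$ additional space for the state that the static algorithm will eventually need, yields the target streaming space $\wtilde{\OO}(n \cdot g(k) \cdot k^{\OO(1)})$.

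In post-processing, I would enumerate $2^{\OO(k)}$ candidate solutions $S$ arising from the $\cF$-colorings and the candidate patterns maintained during the stream. For each candidate, I would combine the reconstruction oracles with the stored boundary edges to synthesize an edge-query oracle for $G - S$, and then invoke the static \probVDH{} algorithm of hypothesis~\eqref{item:reduction2} through this oracle to test whether $G - S \in \cH$. Each such invocation costs $f(k) \cdot n^{\OO(1)}$, which together with the $2^k$ branches gives the claimed post-processing bound of $2^k \cdot f(k) \cdot n^{\OO(1)}$.

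The principal technical hurdle will be to stitch the $k^{\OO(1)}$ parallel reconstructions into a coherent global edge-query oracle for $G - S$ for the unknown true $S$. Each individual reconstruction yields query access only to a subgraph already known to be in $\cH$, so the splitter-based covering must be designed so that, for every pair $u, v \in V \setminus S$, at least one maintained reconstruction contains both endpoints and correctly answers the edge query $\{u, v\}$, while edges between $S$ and $V \setminus S$, which the static algorithm will inevitably query, are covered by the auxiliary boundary storage. Balancing these coverage requirements against the $k^{\OO(1)}$ multiplicative space overhead will dictate the precise choice of splitter parameters, the candidate patterns $J$, and the design of the boundary storage, and is where most of the proof effort will concentrate.
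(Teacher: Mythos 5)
Your proposal runs into a quantitative obstruction in the streaming phase that the paper's actual argument is specifically designed to avoid. You propose running the reconstruction algorithm on induced subgraphs of the form $G[V \setminus f^{-1}(J)]$ for $f$ in an $(n,k,k^2)$-splitter family $\cF$ and "candidate color patterns" $J \subseteq [k^2]$. To be sure that some such subgraph equals $G - f^{-1}(J) \supseteq G - S$ for the unknown solution $S$, you must in effect try all $J$ of size $k$, and there are $\binom{k^2}{k} = k^{\Theta(k)}$ of them per splitter function. Running the reconstruction in parallel over all these choices costs $k^{\Theta(k)} \cdot \widetilde{\OO}(n)$ space, not the claimed $k^{\OO(1)} \cdot \widetilde{\OO}(n \cdot g(k))$. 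You cannot fix this by restricting to $k^{\OO(1)}$ patterns $J$, because then you lose the guarantee that any pattern matches $f(S)$.

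The paper sidesteps this by never attempting to isolate $S$ in the stream. It uses an $(n,k,2)$-\emph{separating} family: a family $\cF = \{F_1,\ldots,F_t\}$ with $t = k^{\OO(1)}\log n$ such that for every pair $u,v$ and every $k$-set $S$ avoiding them, some $F_i$ contains both $u,v$ and avoids $S$. Reconstruction is run on each $G[F_i]$ — a genuinely polynomial-in-$k$ number of calls — and those that succeed are unioned into an implicit graph $\widetilde{G}$. The crucial structural step replaces your unsolved "stitching" hurdle: define $E'$ as the pairs $\{u,v\}$ that never co-occur in a successfully reconstructed $F_i$, and observe (via hereditarity) that every solution $S$ must be a vertex cover of $G' = (V,E')$, and conversely that for any vertex cover $X$ of $G'$ one has $\widetilde{G} - X = G - X$ exactly. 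Thus one never needs a query oracle for $G - S$ for each candidate $S$ separately; one enumerates the at most $2^k$ minimal vertex covers $X$ of $G'$ and invokes the static algorithm on the single known graph $\widetilde{G} - X$, giving the $2^k \cdot f(k) \cdot n^{\OO(1)}$ bound. Your "boundary storage" of edges incident to an $\OO(k)$-size candidate set is both ill-defined (you do not know $S$ during the stream) and unnecessary in this framing. The separating-family-plus-vertex-cover reformulation is the key idea your proposal is missing.
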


We demonstrate the applicability of our theorems for the classes of proper interval graphs and block graphs, where the size of the obstruction set is infinite. For each one of these two problems, we design a reconstruction algorithm that works in $\widetilde{\OO}(n)$ space in the static setting. 
Overall, this yields the following corollaries.

\begin{corollary}\label{cor:pivdIntro}
{\sc Proper Interval Vertex Deletion} admits a randomized $\OO(\log^2 n)$-pass $\widetilde{\OO}(k^{\OO(1)}n)$-space streaming algorithm with $2^{\OO(k)}\cdot n^{\OO(1)}$ post-processing time in the turnstile model. 
\end{corollary}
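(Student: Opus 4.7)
The plan is to derive Corollary~\ref{cor:pivdIntro} as an instantiation of Theorem~\ref{lem:reductionLemma} for $\cH$ equal to the class of proper interval graphs, which is hereditary. It therefore suffices to verify the two hypotheses of the theorem for this $\cH$: (a) a randomized $\OO(\log^2 n)$-pass reconstruction algorithm for proper interval graphs in the turnstile model using $\widetilde{\OO}(n)$ space, and (b) a static algorithm for \textsc{Proper Interval Vertex Deletion} running in $\widetilde{\OO}(n\cdot g(k))$ space and $f(k)\cdot n^{\OO(1)}$ time for some computable $f,g$.

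For (b), I would adapt a standard branching-based FPT algorithm. Proper interval graphs are precisely the chordal claw-free graphs, and their minimal induced-subgraph obstructions are the claw, the tent, the net, and the induced cycles $C_\ell$ for $\ell\ge 4$. A branching algorithm that locates any such obstruction and recurses on deleting each of its vertices yields a $c^k\cdot n^{\OO(1)}$-time procedure. Implementing this with edge-query access keeps the working space down to $\widetilde{\OO}(n\cdot k^{\OO(1)})$: the bounded-size obstructions are found by scanning candidate tuples with $\OO(1)$ edge queries each, and once no bounded-size obstruction exists the residual graph is claw-free, so induced-$C_\ell$ detection reduces to a shortest-cycle computation in an auxiliary graph that is represented implicitly via the edge oracle.

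For (a), the key structural fact is that $G$ is a proper interval graph iff it admits a linear ordering $\sigma=(v_1,\dots,v_n)$ of $V(G)$ in which every closed neighborhood $N[v_i]$ is a consecutive block (an \emph{umbrella-free} ordering). Given such $\sigma$ together with endpoints $\ell_i=\min\{j:v_j\in N[v_i]\}$ and $r_i=\max\{j:v_j\in N[v_i]\}$, the graph admits a succinct $\widetilde{\OO}(n)$-space representation supporting $\OO(\log n)$-time edge queries: $v_iv_j\in E(G)$ iff $\ell_i\le j\le r_i$. To compute this data structure in the turnstile model I would maintain per-vertex $\ell_0$-sketches of neighborhoods, for $\widetilde{\OO}(n)$ total space, which permit near-uniform neighbor sampling and neighborhood comparison. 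Using these as primitives, I would simulate a LexBFS-style procedure known to produce an umbrella-free ordering on proper interval graphs: each LexBFS step is realized by a binary search among the remaining vertices on their labels relative to those already processed, costing $\OO(\log n)$ passes per batch of steps, and batching the steps into $\OO(\log n)$ rounds (using the fact that vertices of equal LexBFS label can be broken simultaneously by comparing their sampled neighborhoods) yields $\OO(\log^2 n)$ passes overall. A final pass verifies the umbrella-free property on the reconstructed $\sigma$; any violation certifies $G\notin \cH$ and the algorithm reports failure.

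The main obstacle will be (a): simulating a LexBFS-style ordering in the turnstile streaming model within $\widetilde{\OO}(n)$ space and only $\OO(\log^2 n)$ passes. The combination of edge deletions and the inherently sequential nature of LexBFS forces the use of sketching primitives and batched binary search, and the correctness analysis must exploit the rigidity of the proper interval ordering (which is unique up to reversal on each connected component, modulo twin identification) to guarantee that sample-based comparisons yield a consistent ordering whenever $G$ is indeed a proper interval graph.
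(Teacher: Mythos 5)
Your overall architecture (invoke Theorem~\ref{lem:reductionLemma} with $\cH$ the hereditary class of proper interval graphs, supplying a reconstruction algorithm and a low-space static FPT algorithm) is exactly the paper's plan. However, both of your sub-routines have genuine gaps.

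For the static algorithm (b), the branching strategy you describe does not give $2^{\OO(k)}n^{\OO(1)}$ time. The obstruction set for proper interval graphs contains all chordless cycles $C_\ell$ with $\ell \geq 4$, and once you branch on an obstruction you must delete one of \emph{its} vertices; for a long induced cycle this is an $\ell$-way branch with $\ell$ potentially $\Theta(n)$, yielding $n^{\OO(k)}$ rather than single-exponential time. Your suggestion of detecting long induced cycles via a shortest-cycle computation in an auxiliary graph only helps with detection, not with the deletion decision. The paper avoids this by branching only on the \emph{bounded-size} obstructions (claw, net, tent, $C_4,\dots,C_7$); by a result of van~'t~Hof and Villanger (Proposition~\ref{prop:reduceToCirc}), the residual graph is then a proper circular-arc graph, and on proper circular-arc graphs the minimal deletion sets into proper interval graphs have a rigid form indexed by a single vertex (Proposition~\ref{prop:charSolOnCirc}, refined in Lemma~\ref{lem:charSolOnCircRef} and Lemma~\ref{lem:alternativeCircArc}). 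Enumerating these candidate sets is what keeps the post-processing single-exponential in $k$ and polynomial in $n$. Without this structural shortcut, your recursion does not close.

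For the reconstruction algorithm (a), the LexBFS-based plan is not substantiated in the key place: LexBFS is inherently sequential because the labels change after \emph{every} vertex selection, and your claim that the $n$ steps can be compressed into $\OO(\log n)$ batched rounds needs an argument you do not provide. Neighbor-comparison via $\ell_0$-sketches does not obviously recover LexBFS labels, and breaking ties "simultaneously" is only safe when the tied vertices have identical closed neighborhoods among the not-yet-processed vertices, which is not the generic situation. The paper takes a genuinely different route that sidesteps the sequential bottleneck: it randomly samples a "middle vertex" $v$ whose closed neighborhood, once removed, leaves at most two components (Lemma~\ref{lem:conPivTwoComps}) each of size at most $\tfrac{9}{10}n$ with good probability (Lemma~\ref{lem:probChooseMid}), and recurses on those components with an \emph{annotated} partial order that carries degree information across the boundary. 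Each recursion level costs a constant number of passes (times an amplification factor), and depth is $\OO(\log n)$, giving $\OO(\log^2 n)$ passes overall, verified by a final pass (Lemma~\ref{lem:verifyReconstruction}). If you want to pursue the LexBFS route, the burden is on you to show how the label recomputation can be parallelized into logarithmically many rounds in the turnstile model; as written, that step would fail.
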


\begin{corollary}\label{cor:blockIntro}
{\sc Block Vertex Deletion} admits a randomized 1-pass $\widetilde{\OO}(k^{\OO(1)}n)$-space streaming algorithm with $2^{\OO(k)}\cdot n^{\OO(1)}$ post-processing time in the turnstile model. 
\end{corollary}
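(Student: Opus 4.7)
The plan is to derive Corollary~\ref{cor:blockIntro} from Theorem~\ref{lem:reductionLemma} applied to the class $\cH$ of block graphs, with $p=1$ and the turnstile model. Heredity of $\cH$ is immediate: block graphs are precisely the diamond-free chordal graphs, and both properties are closed under taking induced subgraphs. For the static algorithm hypothesis of the theorem, we appeal to the known single-exponential FPT algorithm for {\sc Block Vertex Deletion} that runs in $2^{\bigoh(k)}\, n^{\bigoh(1)}$ time; this algorithm can be implemented in $\widetilde{\bigoh}(n \cdot \mathrm{poly}(k))$ working space when the input graph is accessed through a query oracle, which is precisely what the reconstruction phase will provide.

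The bulk of the work is to design a randomized 1-pass turnstile reconstruction algorithm for block graphs using $\widetilde{\bigoh}(n)$ space. The key structural fact to exploit is that for any edge $uv$ of a block graph $G$, the set $N[u]\cap N[v]$ is exactly the block of $G$ containing $uv$; moreover, if $b$ denotes the number of blocks, then $\sum_B |B| = n + b - 1 \leq 2n - 1$. Consequently, the block-cut tree together with the vertex list of each block forms an $\bigoh(n)$-size succinct encoding that supports edge queries: $(u,v)\in E(G)$ iff $u$ and $v$ share a block. We propose an algorithm that maintains, in parallel during the single pass, an AGM-style linear sketch from which a spanning forest $F$ of $G$ can be extracted in $\widetilde{\bigoh}(n)$ space, per-vertex $\ell_0$-samplers returning $\widetilde{\bigoh}(1)$ uniformly random neighbors of each vertex, and per-vertex linear sketches of neighborhood characteristic vectors supporting intersection and membership queries. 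In post-processing, each tree edge $uv \in F$ anchors a unique block $B_{uv}$; the sketches, seeded by the sampled neighbors of $u$ and $v$, recover the vertex set of $N[u]\cap N[v] = B_{uv}$. Cut vertices are then identified as those belonging to multiple recovered blocks, which together with the blocks assembles the block-cut tree; a final verification phase rejects inputs that are not block graphs.

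The hardest step we foresee is establishing that one pass genuinely suffices to recover the entire block structure when some blocks are dense: a block can have size $\Theta(n)$, and polylog per-vertex samples cannot hit every member of such a block directly. We plan to handle this by bootstrapping from $F$: for each block $B$, $F\cap E(B)$ is a spanning tree of $B$, so $F$ contains at least one edge inside every block and thereby an anchor pair $u,v$. Given such an anchor, the neighborhood-intersection sketches recover the remaining members of $N[u]\cap N[v]=B$ without ever enumerating the $\binom{|B|}{2}$ edges of the block. Total space stays within $\widetilde{\bigoh}(n)$ because of the bound $\sum_B|B|=\bigoh(n)$ noted above. Combining this reconstruction with the static FPT algorithm via Theorem~\ref{lem:reductionLemma} then yields the claimed $\widetilde{\bigoh}(k^{\bigoh(1)} n)$-space, $2^{\bigoh(k)}\,n^{\bigoh(1)}$-post-processing-time streaming algorithm in the turnstile model.
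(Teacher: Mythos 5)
Your high-level plan is the same as the paper's: apply Theorem~\ref{lem:reductionLemma} with $p=1$ in the turnstile model, feed it a 1-pass reconstruction algorithm for block graphs and a $\widetilde{\OO}(\mathrm{poly}(k)\,n)$-space static FPT algorithm. Heredity is indeed immediate, and your characterization of block graphs as diamond-free chordal graphs is correct. However, both ingredients of your proposal have real gaps.

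The reconstruction algorithm is the main problem. You propose to extract a spanning forest $F$ and then, for each tree edge $uv$, recover the block $B_{uv}=N[u]\cap N[v]$ from ``per-vertex linear sketches of neighborhood characteristic vectors supporting intersection and membership queries.'' This is not a known primitive, and your own space budget does not leave room for it. Consider $G=K_n$, which is a block graph with a single block $B=V(G)$. Your spanning forest $F$ is an arbitrary tree on $n$ vertices---indistinguishable, as a tree, from the spanning forest of the path $P_n$, which is also a block graph but with $n-1$ blocks of size two. To tell these apart in post-processing you must be able to enumerate $N[u]\cap N[v]$, a set of size $\Theta(n)$ for $K_n$, from a per-vertex sketch that must be $\widetilde{\OO}(1)$ bits in order to fit a $\widetilde{\OO}(n)$ total budget. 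Sparse recovery from a linear sketch requires sketch size $\widetilde\Omega(s)$ to recover an $s$-element support, so a $\widetilde{\OO}(1)$-bit sketch of $N(u)$ cannot emit the answer. The bound $\sum_B|B|=\OO(n)$ controls the \emph{output}, not the sketch, because the sketches are fixed before you know which queries you will ask, and a cut vertex must answer several distinct intersection queries from a single sketch. The paper avoids this entirely with a different decomposition: it notes that block graphs are (chordal) $1$-flow graphs, i.e.~every non-adjacent pair is separated by a single vertex, and maintains one connectivity sketch per set $F_i$ of an $(n,1,2)$-separating family $\cF$ of size $\OO(\log n)$; to decide whether $\{u,v\}\in E(G)$, it checks whether some $F_i\ni u,v$ disconnects $u$ from $v$ in $G[F_i]$, and a final degree check certifies the reconstruction (Lemma~\ref{lem:recon-t-flow}), followed by a perfect-elimination-ordering test for chordality (Lemma~\ref{lemma:BVD:reconst}). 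That argument never needs to enumerate a large neighborhood intersection.

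The static FPT algorithm is the second gap. You assert without argument that the known $2^{\OO(k)}n^{\OO(1)}$-time algorithm ``can be implemented in $\widetilde{\OO}(n\cdot \mathrm{poly}(k))$ working space when the input graph is accessed through a query oracle.'' The algorithm of Agrawal, Kolay, Lokshtanov and Saurabh first branches on small obstructions and then reduces to {\sc Feedback Vertex Set} on an auxiliary bipartite graph $H$ whose vertex set includes one node per maximal clique of a $\{C_4,D_4\}$-free graph; there can be $\Theta(n^2)$ such cliques, so even writing down $V(H)$ is impossible in your space budget. The paper (Lemmas~\ref{lemma:cliqueCountAlgo}--\ref{lemma:BVD:post-algo}) has to show how to count, for each $v\in V(G)$, the number of maximal cliques containing $v$ in $\widetilde{\OO}(n)$ space, and then simulate the Becker--Bar-Yehuda--Geiger random-edge FVS algorithm on $H$ using only these counts plus a careful marking and contraction argument for the degree-2 chains of $H$. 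None of this is ``plug in the known algorithm''; your proposal needs an analogous argument or an explicit citation that supplies one.
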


\noindent{\bf Proof overview for Theorem~\ref{lem:reductionLemma}.} 
Using a construction of an $(n,k,2)$-separating family, we begin by obtaining a family of $k^{\OO(1)}\log n$ vertex subsets of $G$, $F_1,F_2,\ldots,F_t$, with the property that for every pair of vertices $u,v$, and every vertex subset $S$ of size at most $k$ that excludes $u,v$, there exists an $F_i$ that contains $u,v$ and is disjoint from $S$. For each induced subgraph $G_i=G[F_i]$, we call the given reconstruction algorithm and attempt to reconstruct $G_i$ (where all calls are done simultaneously). For all pairs of vertices $u,v$ that never occur together  in any $G_i$ that we managed to reconstruct, we prove that every solution to $(G,k)$ must contain at least one vertex from each of these pairs. So, by considering the union $\widetilde{G}$ of the graphs $G_i$ that we managed to reconstruct, we are able to reconstruct (implicitly) $G$ up to not knowing whether there exist edges between the aforementioned pairs of vertices. However, these pairs of vertices impose ``vertex-cover constraints'', and we can directly consider every possibility of getting rid of them by branching (into at most $2^k$ cases). After that, we  simulate the given parameterized algorithm on $\widetilde{G}$ minus the vertices already deleted.

\medskip
\noindent{\bf Application to Proper Interval Vertex Deletion.} In light of Theorem \ref{lem:reductionLemma}, we need to design a reconstruction algorithm as well as a (static) parameterized algorithm that uses $\widetilde{\OO}(n)$ space. 

\smallskip
\noindent{\it Reconstruction.} Without loss of generality, we can focus only on connected graphs, and we observe that whenever we remove the closed neighborhood of a vertex in a connected proper interval graph, we get at most two connected components. The latter observation motivates the definition of a ``middle vertex'', which is a vertex such that each of the (at most two) connected components that result from the removal of its closed neighborhood has at most $\frac{9}{10}n$ vertices. We note that a randomly picked vertex has probability at least $\frac{4}{5}$ to be a middle vertex. This gives rise to a divide-and-conquer strategy where, at each step, we will aim to ``organize'' the unit intervals corresponding to vertices in the closed neighborhood of a middle vertex, and then recurse to organize those corresponding to vertices in each of the connected components resulting from their removal. However, these three tasks are not independent, and therefore we need to define an annotated version of the reconstruction problem. Here, we are given, in addition to $G$, a partial order on its vertex set, and we need to determine if $G$ admits a unit interval model (and is therefore a unit interval graph) where for pairs of vertices $u<v$, the interval of $u$ starts before that of $v$.

Now, suppose we handle this annotated version, being at some intermediate recursive call, and call the graph corresponding to it by $G$ (which is an induced subgraph of the original input graph). Let $v$ be the middle vertex, $M$ be its closed neighborhood, and $L$ and $R$ be the two connected components of $G-M$ (which may be empty), so that $<$ indicates that $L$ should be ordered before $R$ (or else their naming is chosen arbitrarily). We note that in various arguments, conflicts may arise---for example, here we may have a vertex in one component that is smaller than a vertex in the other (by $<$), and vice versa, in which case we can directly conclude that the original input graph is not a proper interval graph. It turns out that for $L$ (and symmetrically for $R$), it sufficient just to solve the problem while refining $<$ so that vertices with higher number of neighbors in $M$ will be bigger than those having lower number of neighbors in $M$ (with the opposite for $R$); clearly, for all vertices, the number of neighbors in $L,M$ and $R$ can be computed in one pass. We prove that whichever solution (being a unit interval model) will be returned for $L$ and $R$, it should be possible to ``patch'' it with any unit interval ordering of $M$ given that it complies with the following (and that the original graph is a proper interval graphs): we refine the ordering $<$ for $M$ so that vertices with higher degree in $L$ will be smaller, while vertices with higher degree in $R$ will be bigger. As the resolution of $L$ and $R$ is given by the recursive calls, let us now explain how to resolve $M$.

To handle $M$, we first prove that in any unit interval model of it (if one exists), when we go over the vertices from left to right, their degree in $M$ is non-decreasing, then reaches some pick (that is the degree of $v$), and after that becomes non-increasing.  If $L$ and $R$ are empty, then we show that can choose as the leftmost vertex any vertex in $M$ that is smallest by $<$ and has minimum degree in $M$ among such vertices, and if $L$ is not empty (and it is immaterial if $R$ is empty or not), we can choose as the leftmost vertex any vertex in $M$ that has highest degree in $L$ and is simultaneously smallest by $<$, and has smallest degree in $M$ within these vertices (but possibly not within all vertices in $M$). Let this vertex be $a$. Then, all neighbors of $a$ are ordered from left to right non-decreasingly by their degree in $M$ (taking into account some information from the ordering of $L$, which we will neglect in this brief overview, in order to break ties), and after them all non-neighbors of $a$ are ordered from left to right non-increasingly by their degree in $M$. Clearly, to attain the neighbors of $a$, we just need one more pass.

Overall, the above yields a potential unit interval model for original input graph $G$ -- in particular, we argue that if $G$ is a proper interval graph, then this must be a model of it. Lastly, the model is verified by another pass on the stream, checking that $G$ and the graph that the model represents are indeed the same graphs.

\smallskip
\noindent{\it Static Algorithms.} 
Here, we essentially show that the known $2^{\OO(k)}n^{\OO(1)}$ algorithm by van 't Hof and Villanger~\cite{HofV13}  can be implemented in $\widetilde{\OO}(k^{\OO(1)}n)$ space.

\medskip
\noindent{\bf Application to Block Vertex Deletion.} As before, we start with reconstruction and then turn to the static algorithms (which are here, unlike the case of proper interval graphs, non-trivial).

\smallskip
\noindent{\it Reconstruction.}  We provide reconstruction algorithms for a graph class that is broader than the class of block graphs. Here, a $t$-flow graph is one where between every pair of non-adjacent vertices, the number of vertex-disjoint paths is at most $t$, and a $t$-block graph is a $t$-flow graph that is chordal. A block graph is just a $1$-block graph. For the reconstruction of $t$-flow graphs, the algorithm starts similarly to the proof of Theorem \ref{lem:reductionLemma}. We first attain a family of $t^{\OO(1)}\log n$ vertex subsets of $G$, $F_1,F_2,\ldots,F_r$, with the property that for every pair of vertices $u,v$ and vertex subset of size at most $t$, there exists an $F_i$ that contains $u,v$ and is disjoint from $S$.  Using one pass on the stream, for each induced subgraph $G_i=G[F_i]$, we compute a spanning forest, and for each vertex in $G$, we compute its degree (in $G$). Then, we let $G'$ be the union of these spanning forests. In post-processing, we construct (implicitly) a graph $\widetilde{G}$ from $G'$ as follows (only $G'$ is kept explicitly). For every pair of non-adjacent vertices $u,v$ in $G'$ and every vertex subset of $G$ if size at most $t$, we choose some $G_i$ that contains both $u,v$ and no vertex from $S$ (which can be shown to exist), and if it holds that $u$ and $v$ are in the same connected component of $G_i$, then we add the edge $\{u,v\}$. If every vertex has the same degree in $G$ and $\widetilde{G}$, then we conclude the $G$ is a $t$-flow graph with reconstruction (stored in an implicit manner) $\widetilde{G}$, and else we conclude that $G$ is not a $t$-flow graph.  For correctness, we note that it can be shown that, necessarily, $E(G)\subseteq E(\widetilde{G})$, and that in case $G$ is a $t$-flow graph, then also $E(\widetilde{G})\subseteq E(G)$ (else it may not be true). The degree test is used to prove that reverse direction, where the algorithm returns true, in which case the degree equalities and the fact that $E(G)\subseteq E(\widetilde{G})$ imply that $G=\widetilde{G}$ (and we construct $\widetilde{G}$ in a way that ensures it is a $t$-flow graph.

To adapt the result to reconstruct $t$-block graphs, we first reconstruct them as $t$-flow graphs (because every $t$-block graph is in particular a $t$-flow graph), and then, having already this reconstruction at hand, we check whether they have a {\em perfect elimination ordering} in polynomial time and $\widetilde{\OO}(n)$ space.

\smallskip
\noindent{\it Static Algorithms.} We build on the work of Agrawal, Kolay, Lokshtanov and Saurabh~\cite{AgrawalKLS16} for {\sc Block Vertex Deletion}. However, implementing their algorithm in space $\widetilde{\OO}(k^{\OO(1)}n)$ is a non-trivial task because their $2^{\OO(k)}n^{\OO(1)}$-time algorithm for this problem uses space (even worse than) $\Omega(n^2)$. In their algorithm, they first hit small obstructions by branching (which can clearly be done in  $\widetilde{\OO}(k^{\OO(1)}n)$ space), and then they know that the resulting graph contains only $\OO(n^2)$ maximal cliques (however, there exist graphs where this bound is tight). They then construct an auxiliary bipartite graph $H$ whose vertex set consists of $V(G)$ and a vertex for each maximal clique in $G$, and where a vertex in $V(G)$ is adjacent to all cliques that contain it. It is argued that any subset $S\subseteq V(G)$ is a block graph deletion set in $G$ if and only if it is a feedback vertex set in $H$. So, the problem reduces to seeking a minimum feedback vertex in $H$ that avoids the vertices representing cliques. This  can be simply done by an algorithm for {\sc Feedback Vertex Set} (with undeletable vertices). However, in our case, we cannot even explicitly write the vertex set of $H$, which can be of size $\Omega(n^2)$. So, instead, we employ a very particular known sampling algorithm for {\sc Feedback Vertex Set}. This is an algorithm by Becker, Bar-Yehuda and Geiger~\cite{BeckerBG00}, which, after reducing the graph to another graph that has minimum degree $3$, uniformly at random selects an edge and randomly chooses an endpoint for it, and that endpoint is deleted and added to the solution. It is argued that, for any feedback vertex set, with probability at least $1/2$, the selected edge has at least one endpoint from the feedback vertex set.

To simulate this algorithm without constructing $H$, we show {\em (i)} how to compute the degree of every vertex in $V(G)$ in $H$, and {\em (ii)} how to apply the reduction rules to ensure that its minimum degree is $3$. We select a vertex in $V(G)$ (to delete and insert to our solution) with probability proportional to its degree. This already gives us a $2^{\OO(k)}n$-time $\widetilde{\OO}(k^{\OO(1)}n)$-space algorithm for {\sc Block Vertex Deletion}.

\subsection{Our Third Algorithmic Result: A Framework for Cut Problems}

We devise sparsification algorithms that, in combination with random sampling is used to deal with additional graph problems that are not encompassed by the previous theorems. Of special interest to us in this set of problems is {\sc Odd Cycle Transversal} (OCT). In this problem, one aims to decide whether there is a set of at most $k$ vertices in the given graph whose deletion leaves a bipartite graph. In other words, the ``obstruction'' set for OCT is the set of all odd cycles. Chitnis and Cormode~\cite{ChitnisC19} ask whether there exists a $\wtilde{\bigoh}(g(k)\cdot n)$-space streaming algorithm for OCT {\em even} if one were to allow $f(k)$-passes. Furthermore, they point out that already for $k=1$, the existence of such an algorithm is open (for $k=0$, this is nothing but testing bipartiteness, for which a single-pass semi-streaming algorithm is known~\cite{AhnGM12}).
Using our framework, we obtain the following result.

\begin{restatable}{theorem}{oct}
\label{thm:oct}
There is a 1-pass $\wtilde{\bigoh}(k^{\bigoh(1)}\cdot n)$-space randomized streaming algorithm for {\sc OCT} with $\wtilde{\bigoh}(3^{k}k^{\bigoh(1)} n)$-time post-processing.
\end{restatable}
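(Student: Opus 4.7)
The plan is to lift the classical Reed--Smith--Vetta iterative-compression algorithm for \textsc{Odd Cycle Transversal} into the streaming setting, using the sparsification machinery for cut problems developed in this section. During the single pass over the stream, I would build a subgraph $H$ of $G$ on $\widetilde{O}(k^{O(1)} \cdot n)$ edges that serves as a small-vertex-cut sparsifier: for every pair of disjoint vertex subsets $A, B \subseteq V(G)$ with $|A|, |B| \le k+1$ and every subset $Z \subseteq V(G) \setminus (A \cup B)$ of size at most $2(k+1)$, the minimum $A$-$B$ vertex cut in $G - Z$ equals that in $H - Z$ whenever either is of size at most $k$. Such an $H$ can be produced by maintaining $\widetilde{O}(k^{O(1)})$ edge-disjoint spanning forests \`a la Nagamochi--Ibaraki and combining them via the standard split-vertex reduction from vertex cuts to edge cuts, both performable incrementally as edges arrive in $\widetilde{O}(k^{O(1)} \cdot n)$ space.

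In post-processing, I would invoke iterative compression. List $V(G) = \{v_1, \ldots, v_n\}$ and maintain, for $i = 1, 2, \ldots, n$, an OCT $Z_i$ of $G_i := G[\{v_1, \ldots, v_i\}]$ of size at most $k$; failure to produce such a $Z_i$ at any stage certifies a \textsf{no}-instance. When moving from $G_i$ to $G_{i+1}$, the set $Z_i \cup \{v_{i+1}\}$ is trivially an OCT of $G_{i+1}$ of size at most $k+1$, and the task reduces to compressing it down to size at most $k$. For this compression I enumerate all $3^{k+1}$ partitions $Z = L \uplus R \uplus D$, where $D$ is forced into the new OCT and $L, R$ are to lie on opposite sides of the bipartition of the residual bipartite graph. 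For each partition I construct the standard bipartite double-cover auxiliary graph on $V(G_{i+1}) \setminus Z$ with $L, R$ as terminals, and run an augmenting-paths min-cut routine to decide whether a vertex cut of size at most $k - |D|$ exists. By the sparsifier guarantee, these computations can be performed entirely on the relevant induced subgraph of $H$ rather than of $G$, and since the target cut value is at most $k$, each min-cut query finishes in $\widetilde{O}(k \cdot k^{O(1)} n)$ time. Projecting any such cut back to $V(G) \setminus Z$ and taking its union with $D$ yields the desired compressed OCT.

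The $3^k$ factor in the post-processing time comes directly from this branching. A careful amortization over the $n$ compression steps---exploiting that consecutive iterations differ by a single vertex, so previously computed flow structures can be reused---brings the total time down to $\widetilde{O}(3^k k^{O(1)} n)$ as claimed. The main obstacle I expect is the first step: designing a \emph{vertex}-cut sparsifier (not merely an edge-cut one) that is robust under adversarially chosen terminals and deleted vertices, within a single streaming pass and using only $\widetilde{O}(k^{O(1)} n)$ space. Standard Nagamochi--Ibaraki-type edge-cut sparsifiers alone do not suffice, and the hard part is precisely the correctness and space analysis of this stronger sparsification primitive, which is presumably the central combinatorial contribution promised by the framework for cut problems in this section.
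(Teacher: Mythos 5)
Your plan diverges fundamentally from the paper's, and it has gaps that your own proposal flags but that are deeper than you acknowledge. The paper does \emph{not} build a vertex-cut sparsifier and does \emph{not} run iterative compression. Instead it runs the sampling primitive of Section~\ref{sec:cutProblems} to obtain $\widetilde{\OO}(k^{\OO(1)})$ random induced subgraphs $G_i$, maintains an Ahn--Guha--McGregor connectivity sketch of the \emph{bipartite double cover} of each $G_i$, recovers a spanning forest of each double cover, pulls back the forest edges to a sparse subgraph $H_i\subseteq G_i$, and proves (Claim~\ref{clm:thm:nodeULC}) that with high probability $\bigcup_i H_i$ has exactly the same OCT solutions of size at most $k$ as $G$ --- the crucial observation being that a tree path in the bipartite double cover between the two copies $u_a,v_b$ has odd length, so each discarded edge is replaced by an odd $u$--$v$ walk. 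It then invokes the static $\OO(3^k k^3 (m+n))$-time algorithm of \cite{KolayMRS20} as a black box on a graph with $\widetilde{\OO}(k^{\OO(1)}n)$ edges; the $3^k$ in the theorem is inherited from that algorithm, not from enumerating a $Z=L\uplus R\uplus D$ partition.

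There are two concrete failure points in your proposal. First, your sparsifier step does not work: Nagamochi--Ibaraki forests cannot be maintained under deletions, so they give nothing in the turnstile model the paper works in, and even insertion-only they do not yield the property you need. The $i$-th NI forest is a maximal spanning forest of the graph after removing the earlier forests; after deleting an adversarial vertex set $Z$, the restricted forests are not maximal in $G-Z$, so the ``remove $Z$, then cut'' guarantee breaks --- and the terminal sets arising in the Reed--Smith--Vetta auxiliary graph are $N(L)\cap V_1\cup N(R)\cap V_2$ and its mirror, which are not bounded by $k+1$, so they are not even covered by the sparsifier property you stated. Second, even granting the sparsifier, iterative compression as you describe it computes min cuts inside every prefix graph $G_i=G[\{v_1,\dots,v_i\}]$, which for small $i$ is $G$ with $n-i$ vertices deleted. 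Your sparsifier only preserves cuts in $G-Z$ for $|Z|\le 2(k+1)$; it gives no guarantee about $H[\{v_1,\dots,v_i\}]$ versus $G_i$ when $n-i$ is large, so the intermediate compression steps can fail. The paper sidesteps all of this by building a sparsifier whose guarantee is stated directly in terms of solution equivalence for OCT and then solving the problem once on the sparsified graph, rather than trying to simulate an iterative, cut-based algorithm on a cut sparsifier.
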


Thus, Theorem~\ref{thm:oct} affirmatively answers the open problem of Chitnis and Cormode~\cite{ChitnisC19}
in its  most general form by giving a  1-pass (instead of the $f(k)$-passes they asked for) semi-streaming algorithm for {\sc Odd Cycle Transversal}.

\paragraph{Overview of our framework.} Our framework has two high-level steps. The first is generic and the second is problem-specific.

\medskip
\noindent
{\bf Step 1:} The first step of our framework is a sampling primitive of Guha, McGregor and Tench~\cite{GuhaMT15}.
 The main idea here is to sample roughly $\wtilde\bigoh(k^{\bigoh(1)})$ vertex subsets $V_1,\dots, V_\ell$ of the input graph and argue that with good probability, the subgraph defined by the union of the edges in the sampled graphs preserves important properties of the input graph. In our case, we
prove that at least one ``no-witness'' of every non-solution is preserved in the sampled subgraph.

More precisely, we show that for every set $S\subseteq V(G)$ that is disjoint from some forbidden substructure in $G$ (e.g., odd cycles) and hence not a solution, 
the subgraph of $G$ defined by the union of the sampled subgraphs 
also contains a forbidden substructure disjoint from $S$. This allows us to identify a set of $\wtilde\bigoh(k^{\bigoh(1)})$ subgraphs that can always provide a witness when a vertex set is {\em not} a solution for the problem at hand. For problems closed under taking subgraphs, this implies that solving the problem on the sampled subgraph is sufficient. 

\noindent
{\bf Step 2:} At the end of Step 1, however, we are still left with a major obstacle. That is, the sampled subgraphs may still be dense and it is far from obvious how one could ``sparsify'' them while preserving the aforementioned properties. Therefore, as the second step in our template, one needs to  also provide a problem-specific $\wtilde\bigoh(k^{\bigoh(1)}n)$-space sparsification step that allows us to reduce the number of edges we keep from each of the sampled subgraphs to $\bigoh(n)$, while ensuring that non-solutions can still be witnessed by a substructure in the union of the sparsified subgraphs. If this is achieved, one may invoke existing linear-time (static) FPT algorithms for the problem on the sparsified instance and show that a solution for this reduced instance is also a solution for the original instance.

We next briefly sketch our sparsification procedure for {\sc Odd Cycle Transversal} that leads to Theorem~\ref{thm:oct}. 
The bipartite double cover of a graph $G$ is the bipartite graph with two copies of the original vertex set $V$ (say $V_a$ and $V_b$) and two copies of each edge (for an edge $(u,v)$ in the original graph, there are two edges $(u_a,v_b)$ and $(u_b,v_a)$, where for each $z\in \{u,v\}$ and $x\in \{a,b\}$, $z_x$ is the copy of $z$ contained in $V_x$). Ahn, Guha and McGregor~\cite{AhnGM12} used this auxiliary graph in their bipartiteness-testing algorithm by observing that $G$ is bipartite precisely when its bipartite double cover has exactly twice as many connected components as $G$. In our work, we conduct a closer examination of bipartite double covers and exploit the fact that odd-length closed walks in $G$  (which must exist if $G$ is non-bipartite) that contain a vertex $v$ correspond precisely to $v_a$-$v_b$ paths in the bipartite double cover of $G$. We build upon this fact to show that the edges preserved by computing a dynamic connectivity sketch for all the sampled subgraphs together preserve odd-length closed walks in the union of these subgraphs. Finally, we argue that solving OCT with a static FPT algorithm on the sparsified instance is equivalent to solving OCT on the original instance.

\paragraph{Applying the framework to other cut problems.}  Recall that each specific application of the framework boils down to designing, for the cut problem at hand, {\bf (i)} a problem-specific sparsification subroutine and {\bf (ii)} a linear-space (or ideally, linear-time) FPT algorithm for use in post-processing. For instance, consider the classic {\sc Subset Feedback Vertex Set} and {\sc Multiway Cut} problems. We show that these two problems also fall under the same framework. For  {\sc Subset Feedback Vertex Set}, the obstruction set is the family of all cycles that contain ``terminals''---the objective of  {\sc Subset Feedback Vertex Set} is to determine whether one can choose at most $k$ vertices that intersect all cycles in the input graph $G$ that pass through at least one designated vertex, called a terminal (the input contains, in addition to $G$ and $k$, a subset of vertices called terminals). 
 In the case of {\sc Multiway Cut} the obstruction set is the family of all paths that connect pairs of ``terminals''. In this problem, the input is $G$, $k$ and a vertex subset called terminals and the objective is to determine whether a set of at most $k$ vertices intersects every path in $G$ between a pair of terminals. For both these problems, we show how to combine the sampling primitive behind our algorithm for {\sc Odd Cycle Transversal} along with problem-specific sparsifications for these (to handle Point {\bf (i)} above) and the invocation of existing linear-time FPT algorithms (to handle Point {\bf (ii)} above). Roughly speaking, in the sparsification step for {\sc Subset FVS}, we maintain a dynamic connectivity sketch for the sampled subgraphs along with the edges incident on terminals and for {\sc Multiway Cut}, we show that it is sufficient to maintain a dynamic connectivity sketch for the sampled subgraphs
  This leads us to the following two results.

\begin{restatable}{theorem}{subsetFVS}\label{thm:subsetFVS}
	There is a 1-pass $\wtilde{\bigoh}(k^{\bigoh(1)}\cdot n)$-space randomized streaming algorithm for  {\sc Subset Feedback Vertex Set} with $\wtilde{\bigoh}(2^{\bigoh(k)}\cdot n)$-time post-processing.
\end{restatable}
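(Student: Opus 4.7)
The plan is to instantiate the two-step framework outlined earlier for \textsc{Subset Feedback Vertex Set} (SFVS). In Step~1, we invoke the sampling primitive of Guha, McGregor and Tench~\cite{GuhaMT15} to maintain, during the single pass, a family $V_1,\ldots,V_\ell$ of $\ell = k^{\bigoh(1)}\log n$ vertex subsets of $V(G)$ with the following structural guarantee: with high probability, for every set $S\subseteq V(G)$ with $|S|\le k$ that is \emph{not} an SFVS of $G$, some $G[V_i]$ contains a cycle through a terminal that is vertex-disjoint from $S$. This is the SFVS analogue of the odd-cycle preservation used in the proof of Theorem~\ref{thm:oct}, and it should follow from the same abstract sampling lemma, using that the obstruction family -- cycles through a terminal -- enjoys the hereditary closure needed by the primitive.

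Since each $G[V_i]$ may still be dense, Step~2 requires a problem-specific sparsification that uses $\wtilde{\bigoh}(k^{\bigoh(1)}n)$ space while preserving, for every $|S|\le k$, a witness cycle through a terminal disjoint from $S$. The plan is to maintain, for each $V_i$, two linear-sketch structures in parallel during the stream: (i) a dynamic graph-connectivity sketch of $G[V_i]$ in the style of Ahn, Guha and McGregor~\cite{AhnGM12}, which in post-processing lets us recover a spanning forest of $G[V_i]-A$ for any deletion set $A$ of size at most $k$; and (ii) for each terminal $t$, an $\ell_0$-sampler over the edges of $G$ incident to $t$, from which we can extract, in post-processing, a representative set of terminal-incident edges of $G[V_i]$. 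Let $\widetilde{G}$ be the union, over all $i$, of the recovered spanning forests together with the sampled terminal-incident edges. The key claim to establish is that $\widetilde{G}-S$ contains a cycle through a terminal whenever $G-S$ does: split such a cycle at a terminal $t$ into two arcs whose interiors avoid $T$, use the $\ell_0$-sample at $t$ to recover the first edge of each arc, and stitch the interiors back together via the connectivity sketch applied to $G[V_i]-(S\cup (T\setminus\{t\}))$.

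With $\widetilde{G}$ in hand, post-processing reduces to solving SFVS statically on $\widetilde{G}$ with parameter $k$. We invoke a known single-exponential FPT algorithm for SFVS, whose $2^{\bigoh(k)}n^{\bigoh(1)}$ running time dominates post-processing and yields the stated $\wtilde{\bigoh}(2^{\bigoh(k)}n)$ bound. Correctness of returning a solution to $\widetilde{G}$ as a solution for $G$ then follows because (a) $\widetilde{G}\subseteq G$ so any SFVS of $G$ remains one of $\widetilde{G}$, and (b) by Step~1 plus the sparsification guarantee, any non-SFVS of $G$ is still exposed by a terminal cycle in $\widetilde{G}$.

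The hard part will be the sparsification analysis in Step~2, where one must simultaneously argue that (a) for any $|S|\le k$, the recovered spanning forests together with the sampled terminal-incident edges continue to support some cycle through a terminal in $\widetilde{G}-S$, even though each individual forest is acyclic; and (b) the entire data structure fits in $\wtilde{\bigoh}(k^{\bigoh(1)}n)$ space despite the number of terminals being potentially $\Theta(n)$. Point (a) needs a careful choice of which terminal-incident edges to recover from each $\ell_0$-sampler so that a pair of them can be completed to a cycle via forest paths, and point (b) relies on the fact that an $\ell_0$-sampler per terminal uses only $\wtilde{\bigoh}(\log n)$ bits, keeping the aggregate sampler space at $\wtilde{\bigoh}(n)$ even with $\Theta(n)$ terminals.
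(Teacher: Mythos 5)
There is a genuine gap in the sparsification step. An $\ell_0$-sampler over the edges incident to a terminal $t$ returns a \emph{single uniformly random} such edge; it does not let you pick out ``the first edge of each arc'' of a specific witness cycle. If $t$ has degree $\Theta(n)$, the probability that a fixed target edge is the one recovered is $\Theta(1/n)$, so $\widetilde{\OO}(1)$ independent samplers per terminal per $V_i$ cannot reliably produce the two specific edges your stitching argument needs, and boosting to $\Omega(n)$ samplers per terminal blows the space budget. You identify this tension yourself in your closing paragraph (``a careful choice of which terminal-incident edges to recover from each $\ell_0$-sampler''), but the tool you chose cannot support that choice: $\ell_0$-samplers are the wrong primitive here.

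The paper sidesteps this entirely with a structural observation plus exact sparse recovery: if $(G,T,k)$ is a yes-instance, then after deleting a solution $S$, every remaining terminal-incident edge must be a biconnected block of $G-S$ by itself (otherwise it lies on a terminal cycle), and a graph on $n$ vertices has at most $n$ blocks; hence the total number of terminal-incident edges in a yes-instance is at most $kn+n=(k+1)n$. This lets them run a deterministic $(k+1)n$-sparse recovery structure (Proposition~\ref{prop:kSparseRecovery}) over \emph{all} terminal-incident edges and a counter; if the count exceeds $(k+1)n$ they answer No outright, otherwise they recover every terminal-incident edge exactly. With all such edges in hand, the correctness argument is a clean contradiction using only the sampling lemma applied to $S'=S\cup\{t\}$: any leftover terminal cycle in $G-S$ forces a $u$--$v$ path (with $u,v\notin T$) to exist in $G-S'$ but not in $G'-S'$, contradicting Lemma~\ref{lem:samplingCorrectness}. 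They do \emph{not} need to query connectivity sketches on $G[V_i]-(S\cup(T\setminus\{t\}))$ at query time; they just take one spanning forest per $G_i$ and invoke the sampling lemma once. If you replace your per-terminal $\ell_0$-samplers with the $(k+1)n$-sparse recovery structure and add the no-instance counter check, your outline converges to the paper's proof; as written, step (a) of your sparsification analysis cannot be completed.

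Two smaller remarks: first, your Step~1 guarantee is phrased as ``some $G[V_i]$ contains a cycle through a terminal disjoint from $S$,'' but the sampling primitive actually guarantees that the \emph{union} $\bigcup_{G_i\in\mathcal{L}_S}G_i$ contains the obstruction, not any single $G_i$; this matters because a long terminal cycle will typically be spread across several sampled subgraphs. Second, note the paper invokes the sampling primitive with parameter $k+1$ rather than $k$, precisely so that it can afford to additionally delete the terminal $t$ (forming $S'=S\cup\{t\}$) inside the correctness argument; your version at parameter $k$ would not cover sets of size $k+1$.
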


\begin{restatable}{theorem}{multiwaycut}\label{thm:multiwaycut}
	There is a 1-pass $\wtilde{\bigoh}(k^{\bigoh(1)}\cdot n)$-space randomized streaming algorithm for  {\sc Multiway Cut} with $\wtilde{\bigoh}(4^kk^{\bigoh(1)}\cdot n)$-time post-processing.
\end{restatable}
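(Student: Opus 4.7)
The plan is to instantiate, for \textsc{Multiway Cut}, exactly the two-step sparsification-plus-FPT framework developed earlier for \textsc{Odd Cycle Transversal}. The obstructions here are the paths connecting pairs of distinct terminals: a set $S \subseteq V(G)\setminus T$ of size at most $k$ is a multiway cut precisely when $G - S$ has no such path. Hence the goal is to build, in one streaming pass using $\wtilde{\bigoh}(k^{\bigoh(1)}n)$ space, a sparse subgraph $G'$ of $G$ on the same vertex set such that, with high probability, for every $|S|\le k$, $G - S$ has a terminal-terminal path if and only if $G' - S$ does. Once such a $G'$ has been produced, the theorem reduces to invoking a known static FPT algorithm in post-processing.

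For Step~1, I invoke the Guha--McGregor--Tench sampling primitive exactly as in the proof of Theorem~\ref{thm:oct}, producing $\ell = \wtilde{\bigoh}(k^{\bigoh(1)})$ vertex subsets $V_1,\ldots,V_\ell \subseteq V(G)$ with the following guarantee: with high probability, for every vertex set $S$ of size at most $k$ that is \emph{not} a multiway cut of $G$, there is an index $i$ for which $V_i \cap S = \emptyset$ and $G[V_i]$ still contains a terminal-terminal path of $G - S$. For Step~2, the problem-specific sparsification, I maintain in parallel during the single pass, for each $V_i$, the Ahn--Guha--McGregor dynamic-connectivity linear sketch of $G_i := G[V_i]$; each sketch is a turnstile sketch of size $\wtilde{\bigoh}(n)$ and, at the end of the stream, outputs a spanning forest $F_i$ of $G_i$. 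Let $G' := \bigcup_i F_i$, which has $\wtilde{\bigoh}(k^{\bigoh(1)}n)$ edges and can be assembled in post-processing. The correctness of the sparsification rests on the following observation: whenever Step~1 furnishes an index $i$ with $V_i \cap S = \emptyset$ and two terminals $t,t'$ that are connected in $G_i$, they remain connected in $F_i$, and any $t$-$t'$ path in $F_i$ automatically avoids $S$ since $V_i \cap S = \emptyset$. Combined with the trivial containment $E(G') \subseteq E(G)$, this gives ``$S$ is a multiway cut of $G$ iff $S$ is a multiway cut of $G'$'' for every $|S|\le k$ with high probability.

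For the post-processing, I run the classical $\bigoh^*(4^k)$-time FPT algorithm of Chen--Liu--Lu for \textsc{Multiway Cut} on the sparsified instance $(G',T,k)$; because $|E(G')| = \wtilde{\bigoh}(k^{\bigoh(1)}n)$, this step takes $\wtilde{\bigoh}(4^k k^{\bigoh(1)} n)$ time, matching the stated bound. The main technical obstacle I anticipate is the correctness of Step~2: specifically, arguing that the sampling guarantee delivers a single $V_i$ in which the \emph{entire} witness path survives and which is simultaneously disjoint from $S$, so that the unavoidable fact that a spanning forest need not preserve vertex-connectivity across the whole graph never actually bites. A secondary obstacle is the scheduling of the $\ell = \wtilde{\bigoh}(k^{\bigoh(1)})$ connectivity sketches: they must all be driven in parallel off the same one-pass edge stream while staying within the total $\wtilde{\bigoh}(k^{\bigoh(1)}n)$ space budget, which is handled in a standard way by using the linearity of the sketches. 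Once these two ingredients are nailed down, the proof of Theorem~\ref{thm:multiwaycut} follows the same logical skeleton as Theorem~\ref{thm:oct}, with ``odd-length closed walks in the bipartite double cover'' replaced by ``$t$-$t'$ paths for distinct $t,t' \in T$''.
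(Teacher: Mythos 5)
Your high-level plan matches the paper's: sample $V_1,\ldots,V_\ell$, keep a dynamic-connectivity sketch of each $G_i = G[V_i]$, take $G'$ to be the union of the extracted spanning forests, and hand $(G',T,k)$ to a $4^k k^{\bigoh(1)}(m+n)$-time static FPT algorithm. But your correctness argument for Step~2 rests on a claim that the sampling primitive does not actually deliver, and you flag it yourself as the main obstacle without resolving it: you want a \emph{single} index $i$ with $V_i \cap S = \emptyset$ and $G[V_i]$ containing an entire terminal-terminal path of $G-S$. That is far too strong. Each vertex is kept in $V_i$ with probability $1/2k$, so the probability that a path on $j$ vertices survives intact in a fixed $V_i$ decays like $(2k)^{-j}$ and is not boosted enough by having $\wtilde{\bigoh}(k^{\bigoh(1)})$ samples. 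The actual guarantee (Lemma~\ref{lem:samplingCorrectness}) is only that the \emph{union} $\bigcup_{G_i\in\cL_S} G_i$ of the sampled subgraphs whose vertex sets are disjoint from $S$ contains some witness path; no single $G_i$ is promised to.

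The paper closes this gap exactly where you anticipate trouble: the witness path in $\bigcup_{G_i\in\cL_S} G_i$ may use edges from many different $G_i$'s, and one must argue that passing from each $G_i$ to its spanning forest $H_i$ does not destroy it. The resolution (made explicit in the OCT proof via Claim~\ref{clm:thm:nodeULC} and used implicitly for \textsc{Multiway Cut}) is a local replacement: every edge $\{a,b\}$ of the witness path lying in some $G_i$ with $V_i \cap S = \emptyset$ connects $a$ and $b$ in $G_i$, hence $a$ and $b$ lie in the same tree of $H_i$, and the unique $a$-$b$ path in $H_i$ stays inside $V_i$ and therefore avoids $S$. Concatenating these detours edge by edge yields a $t_1$-$t_2$ walk in $\bigcup_i H_i$ avoiding $S$. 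For \textsc{Multiway Cut} one instantiates the lemma with $\CC_{uv}$ being the family of $u$-$v$ paths for every pair $u,v$ (so $r \le n^2$, hence $q=2$), giving: for every $|P|\le k$ and every $u,v$, a $u$-$v$ path in $G-P$ implies a $u$-$v$ path in $G'-P$. That is the statement you need, and it cannot be obtained from a single sampled set.

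So the skeleton of your proof is right and the obstacle you name is the right obstacle, but the argument you sketch for overcoming it (a whole witness path surviving in one $V_i$) is not available; you need the union-of-samples statement plus the edge-by-edge spanning-forest replacement to make the sparsification correct.
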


\subsection{Refinement of the Class {\semips}}\label{sec:modelInOurContributions}

   Our algorithmic results demonstrate that the notion of fixed-parameter semi-streaming algorithms  
is widely applicable to parameterized versions of graph optimization problems.
As a result, we obtain an associated natural complexity class (which we call {\fptsemips}) that is the analogue of the class {\sf FPT} (fixed-parameter tractable problems) in the semi-streaming setting. In the same spirit, we also introduce notions of {\em semi-streaming kernelization} (and {\em semi-streaming compression}), where the algorithm uses $\widetilde{\OO}(k^{\bigoh(1)}n)$ space overall and polynomial time at each edge update and in post-processing, and eventually outputs an equivalent instance of size $f(k)$ of the same problem (or a different problem, respectively). This output is called a kernel (or a compression, respectively).
We show that
our definitions are robust and faithfully reflect their analogues in the static setting. That is, we show (see Theorem~\ref{thm:FPT-semi-PS-Kernels-equivalence}) that:

\medskip

\begin{tcolorbox}[colback=green!5!white,colframe=white!100!black]
\begin{center}
{\bf A (decidable) parameterized  problem has an FPSS algorithm if and only if it has a semi-streaming kernelization.}\end{center}
\end{tcolorbox}

\medskip

Notice that {\fptsemips} is contained in {\sf FPT}. We show that there are problems in {\sf FPT} that are not in {\fptsemips} by proving a $\Omega(n^2)$-space lower bound for the {\sc Chordal Vertex Deletion} problem even when $k=0$, that is, for the problem of recognizing whether a given graph is chordal.

In summary, our contributions in this paper are grounded in a novel exploration of parameterized streaming algorithms. By developing unified algorithms through meta-theorems, proposing a complexity class and demonstrating its richness through containment of numerous well-studied parameterized graph problems, our work significantly extends the boundaries of the field. Our advances also naturally point to a large number of open questions in parameterized streaming (see Section~\ref{sec:conclusion} for a discussion).


\section{Related Work}
 Recently, Chakrabarti et.al~\cite{ChakrabartiG0V20} studied \emph{vertex-ordering problems} in digraphs such as testing {\sc Acyclicity} and computing {\sc Feedback Vertex (Arc) sets}. In brief, they gave space lower bounds of $\Omega(n^{1+1/p})$ in general digraphs and $\Omega(n/p)$ in Tournaments for these problems, where $p$ is the number of passes. Moreover, they analyze the post-processing complexity of their algorithm and show that both the time and space complexity of their post-processing are essentially optimal.
We refer to ~\cite{mcgregor2014graph,ChakrabartiG0V20} for more details and further references.

Fafianie and Kratsch~\cite{fafianie2014streaming} introduced a notion of streaming kernelization, which is an algorithm that takes polynomial time, $\Omega(p(k) \log n)$ space and outputs an equivalent instance whose size is bounded polynomially in the parameter $k$, where $p$ is some polynomial function of $k$, and $n$ is the input size.
In this setting they gave streaming kernelizations for certain problems such as {\sc $d$-Hitting Set} in $1$-pass and {\sc Edge Dominating Set} in 2-passes. On the other hand, they  obtained $\Omega(m)$-space lower-bounds for 1-pass streaming kernelization for many problems including {\sc Feedback Vertex Set}, {\sc Odd Cycle Transversal}, {\sc Cluster Vertex Deletion} and interestingly, {\sc Edge Dominating Set}. Here $m$ denotes the number of edges. Further, they obtained multi-pass lower bounds for {\sc Cluster Editing} and {\sc Chordal Completion} -- they gave a $\Omega(n/t)$-space lower bound for  $t$-pass algorithms.  Chitnis et.al~\cite{ChitnisCHM15,ChitnisCEHMMV16} studied parameterized streaming algorithms for {\sc Vertex Cover} and the more general {\sc $d$-Hitting Set} problem and gave a $\tilde\OO(k^d)$-space streaming kernelization even on dynamic streams.
We remark that in \cite{ChitnisCHM15},  a $\tilde\OO(nk)$-space 1-pass streaming FPT algorithm for {\sc Feedback Vertex Set} was presented, along with an $\Omega(n)$ lower-bound.

In the previously discussed work of Chitnis and Cormode~\cite{ChitnisC19}, they proposed a hierarchy of complexity classes for parameterized streaming. In particular they defined the classes {\sf FPS}, {\sf SubPs}, {\sf SemiPS}, {\sf SupPS} and {\sf BrutePS} that bound space usage to $\tilde\OO(f(k))$, $f(k)\cdot o(n)$, $\tilde\OO(f(k) \cdot n)$, $\tilde\OO(f(k) \cdot n^{1+\epsilon})$ for $0 \leq \epsilon < 1$ and $\tilde\OO(n^2)$, respectively. In their setting there is no restriction on time, i.e. an unbounded amount of time may be spent during the streaming phase and post-processing (although this was not exploited for their positive results).  Then it is clear that every decidable graph problem lies in {\sf BrutePS}, since we can store the entire graph and solve it via brute force computation. They show that certain problems such as {\sc Dominating Set} and {\sc Girth} are tight for {\sf BrutePS} in 1-pass, i.e. they require $\Omega(n^2)$ space. 
%
Similarly, they show that {\sc Feedback Vertex Set} and {\sc $k$-Path} are tight for {\semips}, i.e. require $\tilde\OO(f(k) \cdot n)$ space. They also proved a general result placing all minor-bidimensional problems~\cite{DemaineH05} in the class {\semips}. Their argument implies that these problems are also typically in the class {\fptsemips} depending on whether or not they have a static linear-space fixed-parameter algorithm. However, the problems we consider in this paper are {\em not} minor-bidimensional and so, this theorem is inapplicable in our case. 
Furthermore, as mentioned earlier, {\sc Vertex Cover} and {\sc $d$-Hitting Set} lie in {\sf FPS}, i.e. they require $\tilde\OO(f(k))$ space for some function $f$ of $k$ alone.

Let us remark that, in light of this result for {\sc $d$-Hitting Set}, one could be tempted to conclude that {\probVDH}  (for $\cH$ characterized by finite forbidden induced subgraphs) can be reduced to an instance of {\sc $d$-Hitting Set} in the natural way (i.e., forbidden subgraphs of the input graph correspond to sets in the {\sc $d$-Hitting Set} instance).
However, in the streaming setting, with only $\tilde\OO(f(k)\cdot n)$-space available, this reduction is no longer feasible since we are unable to store all edges and enumerate all obstructions.
As an illustrative example, consider the question of recognizing whether an input graph is $k$-vertex deletions away from being a cluster graph in the semi-streaming setting. Recall that cluster graphs are exactly the induced-$P_3$-free graphs. Since the edges forming a $P_3$ may arrive far apart in time, one cannot hope to  simply reduce the problem to a 3-Hitting Set instance and solve it, unless one stores the entire graph.
 Therefore, in spite of the $\tilde\OO(k^d)$-space streaming kernelization for {\sc $d$-Hitting Set}~\cite{ChitnisCEHMMV16}, this  approach cannot be used to resolve such graph modification problems. In this work we give a novel data structure (see Section~\ref{sec:coveringBoundedObstructions} and the discussion following the proof of Theorem~\ref{thm:Hcovering}) using which one can indeed reduce the given instance of any {\probVDH} problem (for such $\cH$) to an instance of {\sc $d$-Hitting Set} that is bounded polynomially in $k$, using $\wtilde\bigoh(k^{\bigoh(1)}n)$ space.

Finally, it is important to mention the
 work of Feigenbaum et.al.~\cite{FeigenbaumKSV02}, who also considered time complexity in the context of streaming algorithms. In particular, they defined a class called PASST$(s,t)$ (stands for probably approximately correct streaming space complexity $s$ and time complexity $t$) that additionally bounds the per-item processing time during the streaming phase. However, in this work also the post-processing time is not bounded. Chitnis and Cormode (see Remark 43~\cite{ChitnisC19}) also consider restricting the per-item processing time and the post-processing time to only $n^{\OO(1)}$.  However, while this is suitable for streaming kernelization algorithms~\cite{fafianie2014streaming}, it is too restrictive when asking for parameterized streaming algorithms.

\section{Preliminaries}\label{sec:preliminaries}
When clear from the context, we use $m$ and $n$ to denote the number of edges and vertices respectively.

\subsection{Splitters and Separating Families}
\begin{definition}{\rm \cite{Naor:1995:SND:795662.796315}}
\label{def:splitterFamilies}
	Consider $n,k,\ell\in {\mathbb N}$ such that $k\leq n,\ell$.  An {\em $(n, k, \ell)$-splitter} $\cF$ is a family of functions from $[n]$ to $[\ell]$ such that for every set $S \subseteq [n]$ of size $k$, there exists a function $f \in \cF$ that is injective on $S$.
	\end{definition}

	\begin{proposition}[Splitter construction] {\rm \cite{Naor:1995:SND:795662.796315}}
	\label{prop:splitterConstruction}
	For every $n, k \geq 1$, one can construct an $(n, k, k^2)$- splitter of size $\OO(k^6 \log k \log n)$ in time (and space) $k^{\bigoh(1)}n\log n$.

	\end{proposition}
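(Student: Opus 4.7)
The plan is to construct the splitter in two stages and then compose them. If $\cF_1$ is a family of functions $[n]\to[L]$ with the property that every $k$-subset of $[n]$ is mapped injectively by some $f\in\cF_1$, and $\cF_2$ is an $(L,k,k^2)$-splitter, then the composition $\{g\circ f : f\in\cF_1,\ g\in\cF_2\}$ is an $(n,k,k^2)$-splitter of size $|\cF_1|\cdot|\cF_2|$: for any $k$-set $S\subseteq[n]$, some $f\in\cF_1$ is injective on $S$, so $f(S)$ has size $k$, and then some $g\in\cF_2$ is injective on $f(S)$, making $g\circ f$ injective on $S$. Choosing $L=\poly(k)$ will isolate the dependence on $n$ in $|\cF_1|$ and keep $|\cF_2|$ a function of $k$ alone.

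For the first stage I would produce a family $\cF_1$ of $\poly(k)\cdot\log n$ functions $[n]\to[L]$ with $L=\Theta(k^2)$ such that every $k$-subset is mapped injectively by some $f\in\cF_1$. A uniformly random function $[n]\to[Ck^2]$ is injective on a fixed $k$-set with constant probability, so a random collection of $O(\log n)$ such functions succeeds for every fixed $k$-set by a standard union bound over the $\binom{n}{k}$ possibilities once enough functions are taken; explicit constructions based on limited-independence hashing and algebraic (Reed--Solomon-type) codes, as in Naor--Schulman--Srinivasan, derandomize this while preserving the asymptotics. Each function is represented by a short seed of length $\poly(k,\log n)$ that can be evaluated in $\poly(k,\log n)$ time per query, so building $\cF_1$ and evaluating all of its functions on every element of $[n]$ takes $k^{O(1)}\cdot n\log n$ time and space.

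For the second stage, with the universe reduced to $[L]=\poly(k)$, I would build an $(L,k,k^2)$-splitter of size $\poly(k)$ by a more expensive direct method that is affordable because its cost is independent of $n$, e.g.\ iterating over an appropriately sized $k$-wise independent family over $[L]$, or an exhaustive search over short seeds of a suitable hash family. Multiplying $|\cF_1|$ and $|\cF_2|$ yields a splitter of size $\OO(k^6\log k\log n)$ if the parameters are tuned carefully. The main difficulty I anticipate is matching the precise exponent in $k$: a naive combination of generic perfect-hash-family constructions with a generic small-universe splitter gives a cruder polynomial in $k$, and one needs the sharper NSS constructions to land at $\OO(k^6\log k\log n)$. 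I would therefore invoke those explicit constructions as a black box from \cite{Naor:1995:SND:795662.796315}, and the remaining work is to verify that each function can be stored and evaluated within the claimed $k^{O(1)}\cdot n\log n$ time and space budget, which follows routinely from the short-seed representations.
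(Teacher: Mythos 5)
The paper does not prove this proposition at all; it is stated as a citation to Naor, Schulman and Srinivasan, and the splitter is used as a black box in the sequel. Your proposal, after some preliminary sketching, also defers the load-bearing step to that same reference, so at the level that matters you and the paper agree: neither of you actually constructs the object from scratch.

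That said, the scaffolding you erect around the citation has a quantitative error worth flagging. You claim that a uniformly random function $[n]\to[Ck^2]$ is injective on a fixed $k$-set with constant probability (true), and then that $\OO(\log n)$ independent such functions suffice to cover all $k$-subsets by a union bound over the $\binom{n}{k}$ choices. But to drive the per-set failure probability below $1/\binom{n}{k}$ you need $\Theta(\log\binom{n}{k})=\Theta(k\log n)$ functions, not $\OO(\log n)$; the extra factor of $k$ is not optional. It happens to be harmless for the final statement because it is absorbed into the $k^{\OO(1)}$ slack, but the argument as written is wrong. Separately, your two-stage decomposition is somewhat vacuous as described: if stage one already produces a family $[n]\to[\Theta(k^2)]$ that separates every $k$-set, you have an $(n,k,\Theta(k^2))$-splitter and the only thing stage two contributes is normalizing the range from $[Ck^2]$ to $[k^2]$. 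All of the genuine difficulty, including hitting the $k^6\log k$ exponent, lives in stage one, and that is exactly the part you hand off to NSS.
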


	We also require the following construction of a family of subsets over a set.
\begin{definition}[$(n,k,\ell)$-Separating Family] \label{def:separatingFamily}
    Let $U$ be a universe of $n$ elements. An \emph{$(n,k,\ell)$-separating family} over $U$ is a family $\cal F$ of subsets of $U$, such that for any pair of disjoint subsets $A$ and $B$ of $U$ where $|A| \leq k$ and $|B| \leq \ell$, there exists $F \in \cal F$ such that $A \cap F = \emptyset$ and $B \subseteq F$.
\end{definition}

\begin{lemma}
    Let $U$ be a universe of $n$ elements, and let $\ell \leq k$ be two integers. There exists an $(n,k,\ell)$-separating family of size $\OO((k+\ell)^{2\ell + 6} \log (k + \ell) \log n)$ that can be enumerated in $\OO((k + \ell)^{2\ell + \OO(1)} \cdot n \log n) $ time (and space).
\end{lemma}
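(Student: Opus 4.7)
The plan is to reduce the construction of an $(n,k,\ell)$-separating family to the splitter construction from Proposition~\ref{prop:splitterConstruction}. Specifically, I would first build an $(n, k+\ell, (k+\ell)^2)$-splitter $\cF_0$, which by Proposition~\ref{prop:splitterConstruction} has size $\bigoh((k+\ell)^6 \log(k+\ell) \log n)$ and can be produced in time $(k+\ell)^{\bigoh(1)} n \log n$. Then, for every $f \in \cF_0$ and every subset $T \subseteq [(k+\ell)^2]$ with $|T| \leq \ell$, I would add the set $F_{f,T} := f^{\inv}(T) = \{u \in U : f(u) \in T\}$ to the separating family $\cF$.

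For correctness, fix disjoint $A, B \subseteq U$ with $|A| \leq k$ and $|B| \leq \ell$, and let $S := A \cup B$, so $|S| \leq k + \ell$. By the splitter property there exists $f \in \cF_0$ that is injective on $S$. Take $T := f(B) \subseteq [(k+\ell)^2]$, which has size $|B| \leq \ell$, so $F_{f,T} \in \cF$. By construction $B \subseteq F_{f,T}$; moreover, if some $a \in A$ satisfied $f(a) \in T = f(B)$, then $f(a) = f(b)$ for some $b \in B$, and injectivity of $f$ on $S$ would force $a = b$, contradicting disjointness of $A$ and $B$. Hence $A \cap F_{f,T} = \emptyset$, as required.

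For the size and running-time bounds, note that the number of subsets $T \subseteq [(k+\ell)^2]$ of size at most $\ell$ is at most $(\ell+1)\binom{(k+\ell)^2}{\ell} = \bigoh((k+\ell)^{2\ell})$. Multiplying by $|\cF_0|$ gives $|\cF| = \bigoh((k+\ell)^{2\ell + 6}\log(k+\ell)\log n)$, matching the claim. Enumerating $\cF$ requires, for each function $f \in \cF_0$, iterating over all such $T$ and writing down $f^{\inv}(T)$ (which takes $\bigoh(n)$ work per set given $f$), for a total of $\bigoh((k+\ell)^{2\ell + \bigoh(1)} \cdot n \log n)$ time (and the same bound on space, since the splitter itself takes $(k+\ell)^{\bigoh(1)} n \log n$ space to store by Proposition~\ref{prop:splitterConstruction}). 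There is no real obstacle here beyond a careful bookkeeping of the two factors: the polynomial-in-$(k+\ell)$ overhead coming from the splitter and the $(k+\ell)^{2\ell}$ overhead coming from guessing the image of $B$.
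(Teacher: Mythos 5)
Your proposal is correct and matches the paper's own proof almost verbatim: the paper also sets $q=k+\ell$, builds an $(n,q,q^2)$-splitter, and takes preimages $h^{-1}(Y)$ of color sets $Y$ of size $\ell$, with the same correctness argument and the same bookkeeping for the size and time bounds. (If anything, your choice of subsets $T$ of size \emph{at most} $\ell$ is a touch cleaner than the paper's "of size $\ell$," which implicitly requires padding $Y$ with unused colors when $|B|<\ell$.)
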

\begin{proof}
    Let $q = \ell + k$. We begin by enumerating an $(n,q,q^2)$-splitter $\cal H$ over $U$. Recall that $\cal H$ is a collection of functions from $U$ to $\{1,2, \ldots, q^2\}$ such that, for any subset $X$ of $U$ of size at most $q$, there exists a function $h \in \cal H$ such that $h(i) \neq h(j)$ for any two distinct $i,j \in X$. From Proposition~\ref{prop:splitterConstruction}, we have a construction of $\cal H$ containing $\OO(q^6 \log q \log n)$ functions in $\OO(q^{\OO(1)} \cdot n \log n)$ time (and space).
    For each function $h \in \cal H$, we enumerate the following family of subsets of $U$. For each subset $Y$ of size $\ell$ of $\{1,2,\ldots,q^2\}$ we output $h^{-1}(Y) \subseteq U$. Note that, for each $h \in \cal H$ we enumerate at most $q^{2\ell}$ subsets of $U$, and we denote them by ${\cal F}_h$. We define the $(n,k,\ell)$-separating family $\cal F$ as the union of these subsets, i.e. ${\cal F} = \bigcup_{h \in {\cal H}} {\cal F}_h$. Observe that $\cal F$ contains $\OO(q^{2\ell + 6} \log q \log n)$ subsets of $U$, and it can be enumerated in $\OO(q^{2\ell} \cdot q^{\OO(1)} \cdot n \log n)$ time (and space). 

    It only remains to argue that $\cal F$ is indeed a $(n,k,\ell)$-separating family over $U$. Towards this, consider any two disjoint subsets $A$ and $B$ of $U$ of size at most $k$ and $\ell$, respectively. Then there is a function $h \in H$ such that $h(i) \neq h(j)$ for all $i\neq j \in A \cup B$.
    Let $Y = \{h(i) \mid i \in B\}$, then observe that $h^{-1}(Y) \in F$ is disjoint from $A$ and contains $B$. This holds for any choice of the subsets $A$ and $B$, and hence $\cal F$ is a $(n,k,\ell)$-separating family over $U$.
\end{proof}

We require the following corollary of the above lemma, where we wish to separate pairs of vertices from vertex subsets of size at most $k$ in a graph.
\begin{corollary}\label{cor:graph-separating-family}
    Let $G$ be a graph on $n$ vertices. Then, there is a family $\cal F$ of $\OO((k+2)^{10} \log(k+2) \log n)$ subsets of $V(G)$ such that for any pair of vertices $u, v$ and any subset $X$ of at most $k$ vertices  where $u,v \notin X$, there exists $F \in \cal F$ such that $u,v \in F$ and $X \cap F = \emptyset$. Such a family $\cal F$ can be enumerated in $\OO(k^{\OO(1)}\cdot n \log n)$ time (and space).
\end{corollary}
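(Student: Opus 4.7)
The plan is to obtain Corollary~\ref{cor:graph-separating-family} as an immediate instantiation of the preceding lemma on $(n,k,\ell)$-separating families. Concretely, I would set the universe $U = V(G)$ (so $|U| = n$), keep the parameter $k$ as is, and take $\ell = 2$. The lemma then produces an $(n,k,2)$-separating family $\cF$ over $V(G)$.

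Next I would verify the required separation property. Let $u,v \in V(G)$ and let $X \subseteq V(G)$ with $|X| \le k$ and $u,v \notin X$. Apply the definition of a separating family with $A := X$ (of size at most $k$) and $B := \{u,v\}$ (of size at most $\ell = 2$); note $A$ and $B$ are disjoint by assumption. The lemma guarantees an $F \in \cF$ with $A \cap F = \emptyset$ and $B \subseteq F$, i.e.\ $X \cap F = \emptyset$ and $\{u,v\} \subseteq F$, which is exactly the property claimed in the corollary.

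For the quantitative bounds, I would just substitute $\ell = 2$ into the size and enumeration complexity from the lemma. The size becomes
\[
\OO\bigl((k+\ell)^{2\ell+6}\log(k+\ell)\log n\bigr) \;=\; \OO\bigl((k+2)^{10}\log(k+2)\log n\bigr),
\]
matching the claimed bound. The construction time (and space) similarly becomes $\OO((k+2)^{2\cdot 2 + \OO(1)}\cdot n\log n) = \OO(k^{\OO(1)}\cdot n\log n)$, as desired.

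Since the entire argument is a direct specialization, there is essentially no combinatorial obstacle; the only thing to double-check is the edge case where $u = v$ or where $\{u,v\}$ has size $1$ (both handled trivially since $|B|\le \ell=2$ is all the lemma requires), and the arithmetic $2\ell+6 = 10$ for $\ell=2$.
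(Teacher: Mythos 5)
Your proposal is correct and is exactly the intended derivation: the paper states the corollary without its own proof, as a direct instantiation of the preceding $(n,k,\ell)$-separating-family lemma with $\ell=2$, $A=X$, $B=\{u,v\}$, which is precisely what you did. The only point worth noting (which the paper also glosses over) is that the lemma is stated for $\ell\le k$, so when $k<2$ one should formally replace $k$ by $\max(k,2)$, which only strengthens the separation property and does not affect the stated bounds.
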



\subsection{Semi-streaming Algorithms}
We assume standard word RAM model of computation with words of bitlength $\bigoh (\log n)$, where $n$ is the vertex count of the input graph. Vertex labels are assumed to fit within single machine words and can be operated on in $\bigoh(1)$-time.


\paragraph{Insertion-only streams.}
 Let $P$ be a problem parameterized by  $k\in {\mathbb N}$. Let $(I,k)$ be an instance of $P$ that has an input $X=\{x_1,\dots, x_i,\dots, x_m\}$ with input size $|I| = m$. Let $\cS$ be a stream of ${\sf INSERT}(x_i)$ (i.e., the insertion of an element $x_i$) operations of underlying instance $(I, k)$. In particular, the stream $\cS$ is a permutation $X' = \{x'_1 , \dots, x'_i , \dots,  x'_m \}$ for $x'_i \in  X$ of an input $X$.

\paragraph{Dynamic/turnstile streams.}
Let $P$ be a problem parameterized by  $k\in {\mathbb N}$. Let $(I,k)$ be an instance of $P$ that has an input $X=\{x_1,\dots, x_i,\dots, x_m\}$ with input size $|I| = m$. We say that stream $\cS$ is a turnstile parameterized stream if $\cS$ is a stream of ${\sf INSERT}(x_i)$ (i.e., the insertion of an element $x_i$) and ${\sf DELETE}(x_i)$ (i.e., the deletion of an element $x_i$) operations applying to the underlying instance $(I,k)$ of $P$.

Throughout this paper, when we refer to a (semi-)streaming algorithm without explicitly mentioning the number of passes required, then the number of passes is 1. Similarly, when the type of input stream (whether it is insertion-only or turnstile) is not explicitly mentioned, then it is to be understood that the input stream being referred to is a turnstile stream. We also assume that in either stream, when the input is an instance $(I,k)$ of a parameterized graph problem $P$, the algorithm is aware of the vertex set of the input graph. Moreover, $k$ is first given in the stream in unary and never deleted. It is only then that the stream begins providing $I$. Note that the elements in $I$ can potentially undergo deletion and reinstertions, depending on the input model.

\begin{definition}[$k$-sparse recovery algorithm] A {\em $k$-sparse recovery algorithm} is a data structure which accepts insertions and deletions of elements from $[n]$ and recovers all elements of the stream if, at the recovery time, the number of elements stored in it is at most $k$.
\end{definition}

We require the following result of Barkay, Porat and Shalem~\cite{BarkayPS15} which we have specialized to our setting.

\begin{proposition}[Lemma 9, \cite{BarkayPS15}]\label{prop:kSparseRecovery}
	There is a deterministic structure, SRS with parameter $k$, denoted by SRS$_k$, that that keeps a sketch of stream $I$ (comprising insertions and deletions of elements from $[n]$) and can recover all of $I$'s elements if $I$ contains at most $k$ distinct elements. It uses $\bigoh(k\log(n))$ bits of space, and is updated in $\bigoh(\log^2 k)$ operations amortized.
\end{proposition}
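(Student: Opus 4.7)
The plan is to realize $\mathrm{SRS}_k$ as a deterministic algebraic sketch in the style of Prony / Berlekamp--Massey. Fix a prime $p$ with $n < p = \bigoh(n)$ and identify $[n]$ with $n$ distinct nonzero elements of $\mathbb{F}_p$. Let $c\in\mathbb{Z}^{[n]}$ be the net-frequency vector that tracks, for each element, insertions minus deletions. The sketch stores the $2k$ power sums
\[
S_j \;=\; \sum_{i=1}^{n} c_i\, i^{\,j} \pmod{p}, \qquad j = 0,1,\ldots,2k-1.
\]
Each $S_j$ occupies $\bigoh(\log p) = \bigoh(\log n)$ bits, yielding total space $\bigoh(k \log n)$ bits as required. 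The updates are algebraically immediate: $\mathsf{INSERT}(x)$ performs $S_j \leftarrow S_j + x^j$ and $\mathsf{DELETE}(x)$ performs $S_j \leftarrow S_j - x^j$ for every $j$.

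For recovery, suppose the distinct elements of the stream form a set $T$ with $|T|\le k$. Then $(S_j)_{j\ge 0}$ is a linearly recurrent sequence whose minimal (locator) polynomial is $\Lambda(z)=\prod_{i\in T}(z-i)$. With $2k$ consecutive terms in hand, the Berlekamp--Massey algorithm extracts $\Lambda$ in $\bigoh(k^2)$ field operations; factoring $\Lambda$ over $\mathbb{F}_p$ recovers $T$, and solving the resulting Vandermonde system produces the coefficients $c_i$. Because $p>n$, distinct supports generate distinct power-sum vectors, so recovery is exact and deterministic. This handles both the space guarantee and the correctness of the recovery routine in one shot.

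The main obstacle is attaining the $\bigoh(\log^2 k)$ amortized update bound, since a straightforward implementation spends $\Theta(k)$ time per operation computing the tuple $(x^0,x^1,\ldots,x^{2k-1})$. My plan is to stage updates through a buffer of capacity $\Theta(k)$: while the buffer is not full, new operations are appended in $\bigoh(1)$ time, and a pending query commits the buffer before proceeding. When the buffer fills, the aggregate contribution $\bigl(\sum_{(x,\epsilon)\in B}\epsilon\, x^j\bigr)_{j<2k}$ is added to $(S_0,\ldots,S_{2k-1})$ by a single multipoint evaluation of a polynomial of degree $\bigoh(k)$ at $2k$ points over $\mathbb{F}_p$. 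Standard fast polynomial arithmetic performs this in $\bigoh(k\log^2 k)$ field operations, which amortizes to $\bigoh(\log^2 k)$ per update; the buffer itself uses only $\bigoh(k\log n)$ additional bits, preserving the space budget.

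The delicate point will be checking that the batched multipoint evaluation achieves the claimed $\bigoh(k\log^2 k)$ cost in the word RAM model with $\bigoh(\log n)$-bit arithmetic over $\mathbb{F}_p$; this follows from the standard fast multiplication and multipoint-evaluation primitives over fields of polynomial size. Once this ingredient is in place, correctness and the claimed space/time bounds follow directly from the analysis sketched above, matching the statement of the proposition.
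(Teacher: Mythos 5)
The paper cites this result from Barkay--Porat--Shalem (Lemma~9) and gives no proof of its own, so there is no internal argument to compare against. Your blind construction --- storing the first $2k$ power sums of the net-frequency vector over a prime field, recovering the locator polynomial by Berlekamp--Massey, and amortizing updates by buffering plus fast polynomial arithmetic --- is the standard Prony / Reed--Solomon syndrome-decoding route to a deterministic $k$-sparse recovery structure, and it does reach the stated space and amortized-update bounds.

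Three details need tightening before this is airtight. First, the prime $p$ must exceed the largest possible $|c_i|$, which is governed by the stream length rather than by $n$; otherwise an element whose net multiplicity happens to be a nonzero multiple of $p$ silently vanishes from the sketch. Under the usual $\mathrm{poly}(n)$ bound on stream length one still has $\log p = \bigoh(\log n)$, so the $\bigoh(k\log n)$ space bound survives, but ``$n < p = \bigoh(n)$'' as written is too small. Second, the batched flush you invoke is not a multipoint evaluation but its \emph{transpose}: you need the power sums $\bigl(\sum_{(x,\epsilon)\in B}\epsilon\,x^j\bigr)_{j<2k}$, which one computes by forming $D(z)=\prod_{(x,\epsilon)\in B}(1-xz)$ and $N(z)=\sum_{(x,\epsilon)\in B}\epsilon\,D(z)/(1-xz)$ via a subproduct tree and reading off $N(z)/D(z)\bmod z^{2k}$. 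The cost is the same $\bigoh(k\log^2 k)$ (e.g.\ by Tellegen's transposition principle), so your amortization stands, but the primitive you name is not the one actually used. Third, ``factoring $\Lambda$ over $\mathbb{F}_p$'' has no known deterministic polynomial-time algorithm in general; since the roots are promised to lie in $[n]$, you should instead locate them by a deterministic multipoint evaluation of $\Lambda$ over all of $[n]$ (or by binary search on sign-like tests against subproduct trees), which is permissible because the proposition does not bound recovery time. With those fixes, your sketch is a valid proof of the statement.
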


In other words, Barkay, Porat and Shalem~\cite{BarkayPS15} have given a deterministic $k$-sparse recovery algorithm that uses $\wtilde{\bigoh}(k)$ space.

We also require the following version of the dynamic connectivity result of Ahn, Guha and McGregor~\cite{AhnGM12}.

\begin{proposition}{\rm \cite{AhnGM12}}\label{prop:dynamicConnectivitySketch}
For every $c\in {\mathbb N}$, there exists a  1-pass $\wtilde{\bigoh}(n)$-space streaming algorithm in the turnstile model that, in post-processing, constructs a spanning forest of the input graph with probability at least $1-1/n^c$ in time $\wtilde{\bigoh}(n)$.
\end{proposition}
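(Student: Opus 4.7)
The plan is to reconstruct the classical Ahn--Guha--McGregor linear-sketching scheme, with the $\ell_0$-sampler failure probability tuned so that a final union bound yields the required $1-n^{-c}$ success guarantee. The central idea is a signed incidence encoding that turns ``find an edge leaving a vertex set'' into ``find any nonzero coordinate of a linearly-maintained vector''.

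\textbf{Encoding.} Fix an arbitrary total order on $V(G)$. For each potential edge $e=\{u,v\}$ with $u<v$, let $\chi_e\in\{-1,0,+1\}^{V\times V}$ have $+1$ at coordinate $(u,v)$ and $-1$ at $(v,u)$. For each vertex $v$, define $x_v=\sum_{e\ni v}\chi_e$. The key cancellation property, which I would verify in a short calculation, is that for any $C\subseteq V$ the vector $\sum_{v\in C}x_v$ has support exactly $E(C,V\setminus C)$ (every edge internal to $C$ contributes a $+1$ and a $-1$ that cancel, while every cut edge survives in exactly one coordinate). Thus a nonzero coordinate of this sum identifies an edge leaving the supernode $C$.

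\textbf{Sketching primitive and stored data.} Invoke a standard turnstile $\ell_0$-sampler (e.g.\ Jowhari--Sa\u{g}lam--Tardos) that uses $\tilde{\bigoh}(\log(1/\delta))$ bits of space and, on any nonzero input vector, returns a uniformly random nonzero coordinate with failure probability $\le\delta$. Set $\delta=n^{-(c+3)}$. Let $T=\lceil\log_2 n\rceil+1$. During the stream, maintain $T$ independent samplers $\SSS_v^{(i)}$ for every $v\in V$, $i\in[T]$, each representing the vector $x_v$ restricted to the stream processed so far. On an edge insertion/deletion of $e=\{u,v\}$, apply the linear update $\pm\chi_e$ to $\SSS_u^{(i)}$ and $\SSS_v^{(i)}$ for every $i$; the per-update cost is $\tilde{\bigoh}(1)$ per sampler and the total space is $\tilde{\bigoh}(n)$.

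\textbf{Post-processing via Bor\r{u}vka.} Initialize each vertex as its own supernode. In round $i$, for every current supernode $C$, use linearity to form the combined sketch $\sum_{v\in C}\SSS_v^{(i)}$, query it to obtain (with failure probability $\delta$) an edge leaving $C$ whenever such an edge exists, and then merge supernodes joined by sampled edges using union--find. Because every round at least halves the number of non-isolated supernodes that still have outgoing edges, $T=\bigoh(\log n)$ rounds suffice; the returned tree of sampled edges is a spanning forest of $G$. The total number of sampler queries is at most $nT=\bigoh(n\log n)$, and each combined sketch is assembled in $\tilde{\bigoh}(1)$ amortized time per participating vertex, giving post-processing time $\tilde{\bigoh}(n)$. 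A union bound over all queries bounds the overall failure probability by $nT\cdot n^{-(c+3)}\le n^{-c}$ for $n$ large enough (the constant $c$ can be absorbed into the sampler's $\delta$ otherwise).

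\textbf{Main obstacle.} The subtle point is cross-round independence: the supernodes queried in round $i$ depend on random outcomes from rounds $1,\ldots,i-1$, so reusing a single sampler per vertex would create circular dependencies between the queried vectors and the sampler's internal randomness. This is precisely what forces $T$ independent copies and requires the failure parameter $\delta$ to be polynomially small so the union bound goes through. Once this independence is handled, the bound $\tilde{\bigoh}(n)$ on both space and post-processing time follows from the per-sampler cost of $\tilde{\bigoh}(1)$ multiplied by the $nT=\tilde{\bigoh}(n)$ samplers maintained.
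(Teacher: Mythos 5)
The paper does not prove this proposition; it cites it directly from Ahn, Guha and McGregor, so your task was to reconstruct the AGM sketch. Your outline has the right architecture --- linear per-vertex sketches, Bor\r{u}vka-style contraction, $T=\bigoh(\log n)$ \emph{independent} sampler copies to handle cross-round adaptivity, and a union bound with polynomially small per-query failure --- but the signed incidence encoding as you wrote it does not cancel.

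Concretely, you define a single vector $\chi_e$ per edge $e=\{u,v\}$ (with $+1$ at $(u,v)$ and $-1$ at $(v,u)$) and set $x_w=\sum_{e\ni w}\chi_e$. Then for an internal edge $\{u,v\}$ of a supernode $C$, the sum $\sum_{w\in C}x_w$ picks up $\chi_e$ \emph{twice}, once from $x_u$ and once from $x_v$, yielding $+2$ at coordinate $(u,v)$ and $-2$ at $(v,u)$. The $+1$ and $-1$ you invoke live in different coordinates of the same $\chi_e$, not in the contributions of the two endpoints, so they never meet and never cancel. With this encoding the combined sketch's support contains all internal edges of $C$, and an $\ell_0$-sample can return a self-edge of the supernode instead of a cut edge, breaking the Bor\r{u}vka progress argument. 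The fix is the standard AGM orientation: index coordinates by unordered pairs $\{u,v\}$ and set $x_w[\{u,v\}]=+1$ if $w=\min(u,v)$, $-1$ if $w=\max(u,v)$, and $0$ otherwise (scaled by whether $\{u,v\}\in E$). Then for an internal edge both endpoints contribute with opposite signs to the \emph{same} coordinate and cancel, while a cut edge has exactly one endpoint in $C$ and so survives with value $\pm1$. Once the encoding is corrected, the rest of your argument --- the per-round independence, the $\delta=n^{-(c+3)}$ choice, the union bound over at most $nT$ queries, and the $\tilde{\bigoh}(n)$ space and post-processing bounds --- goes through and matches the cited result.
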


\subsection{Graphs and Graph Classes}\label{sec:graphClassPrelims}
Fix a set of graphs $\cR$. Any  graph that does not contain a graph in $\cR$ as an induced subgraph is called an $\cR$-free graph.
%
 We use $d_{\cR}$ to denote the maximum number of vertices among the graphs in $\cR$.
We say that $S$ is an {\em $\cR$-deletion set} of a graph $G$ if $G-S$ is $\cR$-free. We say that a subgraph of $G$ is an {\em $\cR$-subgraph} if it is isomorphic to a graph in $\cR$.

In the rest of this section, we recall the various graph classes we consider in this paper and characterizations of these classes that we use in our algorithms.

{\em Acyclic tournaments} are precisely those tournaments that exclude directed triangles~\cite{diestel2000graphtheory}. {\em Split graphs} are graphs whose vertex set can be partitioned into two disjoint sets, one of which is a clique and the other is an independent set. Split graphs are also characterized by the exclusion of $\{2K_2,C_4,C_5\}$~\cite{Golumbicbook}. {\em Cluster graphs}
are graphs where every connected component is a clique. These are also characterized by the exclusion of $\{P_3\}$~\cite{Golumbicbook}.
%
%
{\em Threshold graphs} are precisely those graphs that are characterized by the exclusion of $\{2K_2,C_4,P_4\}$ respectively~\cite{Golumbicbook}.
{\em Block graph} are graphs in which every biconnected component is a clique. It is known that they are characterized by the exclusion of
$\{D_4, C_{\ell + 4,\, \ell \geq 0} \}$~\cite{brandstadt1999graph}. Here $D_4 = K_4 - e$ for some edge $e \in E(K_4)$, and $C_{\ell+4}$ denotes an induced cycle on $\ell + 4$ vertices.

\section{From Recognition to  {\probVDH} with Finitely Many Obstructions}\label{sec:coveringBoundedObstructions}
Recall that for a class $\cH$ of graphs, in the {\probVDH} problem, the input is a graph $G$ and integer $k$ and the objective is to decide whether there is a set $S\subseteq V(G)$ of size at most $k$ such that $G-S\in \cH$. The standard parameterization is $k$, the size of the solution. %
In this section, we only focus on the case where $\cH$ is characterized by a finite set of forbidden induced subgraphs.

We begin by formally capturing the notion of an efficient semi-streaming algorithm that recognizes graphs in $\cH$.

\begin{definition}\label{def:sparseRecognitionAlgorithms}
A class of graphs $\cH$ admits a {\em $p$-pass recognition algorithm} if there exists a $p$-pass $\widetilde{\OO}(n)$-space streaming algorithm with polynomially bounded time  between edge updates and in post-processing that, given a graph $G$, correctly concludes whether or not $G\in \cH$.
\end{definition}

\subsection{The Fixed-parameter Semi-streaming Algorithm for {\probVDH}}


We are now ready to prove our first meta theorem.

%

\Hcovering*

\begin{proof}
%
%
Let $\cR$ denote the finite set of graphs excluded by graphs in $\cH$ as induced subgraphs.
In what follows, let $d=d_\cR$ denote the size of the largest graph in the fixed set of graphs $\cR$.
Note that $d$ is a constant in this setting.
Recall that we know $V(G)$ (and hence $n$) and $d$ apriori, while $k$ is provided at the beginning of the stream.
 Let $\alpha=max\{dk,k+d\}$ and $\beta=d^2 \cdot \binom{\alpha}{d}\cdot \alpha^{c}\ln n+d$ (where $c$ is the constant in the $\bigoh(\cdot)$ notation in the size of the splitter family as given in Proposition~\ref{prop:splitterConstruction}).

 \begin{enumerate}[(i)]
 	\item We construct an $(n,\alpha,\alpha^2)$-splitter family $\cF_1$ of  size at most $\alpha^{c}\ln n$, which can be constructed in time $\alpha^{\bigoh(1)}\cdot n\ln n$ (Proposition~\ref{prop:splitterConstruction}).
 	\item We construct an $(n,d+1,(d+1)^2)$-splitter family $\cF_2$ of  size $d^{\bigoh(1)}\ln n$, which can be constructed in time $d^{\bigoh(1)}\cdot n\ln n$.
 	\item We construct an $(n,\beta,\beta^2)$-splitter family $\cF_3$ of  size at most $\beta^{c}\ln n$, which can be constructed in time $\beta^{\bigoh(1)}\cdot n\ln n$.
 \end{enumerate}

Before describing our (post-)processing steps, we define some useful notation.

 \begin{itemize}

 \item   	 For  every  $f_1\in \cF_1$, and $J \subseteq [\alpha^2]$, we denote by $G_{f_1,J}$ the graph $G[f_1^\inv(J)]$, i.e., the subgraph induced by those vertices of $G$ whose image under $f_1$ is contained in $J$.
 \item   Similarly, for every  $f_3\in \cF_3$, and $J\subseteq [\beta^2]$, we denote by $G_{f_3,J}$ the graph $G[f_3^\inv(J)]$, i.e., the subgraph induced by those vertices of $G$ whose image under $f_3$ is contained in $J$.
 \item   	 For  every $f_2\in \cF_2$, $i\in [(d+1)^2]$ and graph $G_{f_1,J}$, we denote by
 $G_{f_2,i,f_1,J}$ the graph $G_{f_1,J}-f_2^\inv(i)$. That is, the graph obtained from $G_{f_1,J}$ by deleting those vertices whose image under $f_2$ is $i$.
 \item Let $\cX$ denote the set: $$\{G_{f_1,J}\mid f_1\in \cF_1, J\subseteq [\alpha^2], |J|\leq d\}$$ $$\bigcup \{G_{f_2,i,f_1,J}\mid f_1\in \cF_1,f_2\in \cF_2,i\in [(d+1)^2],  J\subseteq [\alpha^2], |J|\leq d\}$$ $$\bigcup \{G_{f_3,J}\mid f_3\in \cF_3,  J\subseteq [\beta^2], |J|\leq d\}.$$
 Notice that $|\cX|\leq |\cF_1|\cdot d\cdot \binom{\alpha^2}{d}  + |\cF_1|\cdot |\cF_2|\cdot d\cdot \binom{\alpha^2}{d}\cdot (d+1)^2 + |\cF_3|\cdot d\cdot \binom{\beta^2}{d}=\wtilde{\bigoh}((kd)^{\bigoh(d)})=\wtilde{\bigoh}(k^{\bigoh(1)})$.
  	\end{itemize}


\smallskip
 \noindent
 {\bf Processing the stream:}
 We process the graph stream by running the $p$-pass recognition algorithm (call this algorithm, $\cA$) assumed in the premise of the theorem for each of the graphs in $\cX$. Since $|\cX|=\wtilde{\bigoh}(k^{\bigoh(1)})$, the space used by our algorithm is $\wtilde{\bigoh}(k^{\bigoh(1)}\cdot n)$ as required. Moreover, for each graph $G'\in \cX$, the premise guarantees that we can decide whether $G'\in \cH$ by running a polynomial-time post-processing algorithm on the data-structure constructed by Algorithm $\cA$, denoted by $\langle G'\rangle$. We call this post-processing algorithm, Algorithm $\cB$. We say that $\cB(\langle G'\rangle)=\top$ if $G'\in \cH$ and $\cB(\langle G'\rangle)=\bot$ otherwise.

We are now ready to describe our fixed-parameter post-processing algorithm.

\smallskip
 \noindent
 {\bf Post-processing:}
 For every $f_1\in \cF_1$ we construct a vertex set $Z_{f_1}$ as follows. For every $v\in V(G)$,  we add $v$ to the set $Z_{f_1}$ if and only if there exists $J\subseteq [\alpha^2]$ of size at most $d$ such that (i) $v\in V(G_{f_1,J})$, (ii)  $\cB(\langle G_{f_1,J}\rangle)=\bot$ and (iii) for every $f_2\in \cF_2$, $\cB(\langle G_{f_2,f_2(v),f_1,J}\rangle)=\top$.
  Clearly, computing $Z_{f_1}$ takes  $k^{\bigoh(1)}\cdot n^{\bigoh(1)}$-time. Therefore, computing the set $\cZ=\{Z_{f_1}\mid f_1\in \cF_1\}$ can be done in $k^{\bigoh(1)}\cdot n^{\bigoh(1)}$-time and additional $\wtilde{\bigoh}(k^{\bigoh(1)}n)$-space.

  Let $Z^\star=\cup_{Z\in \cZ}Z$.  We now construct a $d$-set system $\EE$ with universe $Z^\star$ as follows. For every $L\subseteq Z^\star$ of size at most $d$, we add $L$ to $\EE$ if and only if there is an $f_3\in \cF_3$ and a set $J\subseteq [\beta^2]$ of size at most $d$ such that $J$ is disjoint from $f_3(Z^\star\setminus L)$ and $\cB(\langle G_{f_3,J} \rangle)=\bot$. This completes the construction of $\EE$. Notice that the number of sets in $\EE$ is bounded by $d\cdot |Z^\star|^{d}$.

  Finally, we execute the standard linear-space $O^*(d^k)$-time branching FPT algorithm for {\sc $d$-Hitting Set} on the instance $(Z^\star,\EE,k)$ (see, for example, \cite{CyganFKLMPPS15}) and return the answer returned by this execution.
  This completes the description of the algorithm.
   We now argue the correctness of this algorithm.

 We begin by bounding the size of $Z^\star$.

 \begin{claim}\label{clm:thmHcoveringBoundOnZStar}
 	$|Z^\star|\leq d^2 \cdot \binom{\alpha^2}{d}\cdot \alpha^{c}\ln n$.
 \end{claim}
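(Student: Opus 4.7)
The plan is to bound $|Z_{f_1}|$ uniformly for each $f_1 \in \cF_1$ and then sum over $|\cF_1| \le \alpha^c \ln n$. For a fixed $f_1$, I will refine the decomposition by partitioning the defining witnesses according to the subset $J \subseteq [\alpha^2]$ of size at most $d$ that certifies membership in $Z_{f_1}$, and bound each slice
\[
Z_{f_1,J} := \{\, v \in V(G_{f_1,J}) : \cB(\langle G_{f_1,J}\rangle)=\bot \text{ and } \cB(\langle G_{f_2,f_2(v),f_1,J}\rangle)=\top \text{ for all } f_2 \in \cF_2 \,\}
\]
separately. Since the number of admissible $J$'s is at most $d \cdot \binom{\alpha^2}{d}$ (summing binomial coefficients up to $d$), once we show $|Z_{f_1,J}| \le d$, the desired bound
$|Z^\star| \le |\cF_1| \cdot d \cdot \binom{\alpha^2}{d} \cdot d = d^2 \cdot \binom{\alpha^2}{d} \cdot \alpha^c \ln n$
will follow.

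The core of the argument, and the only non-routine step, is proving $|Z_{f_1,J}| \le d$. Fix $(f_1,J)$ with $\cB(\langle G_{f_1,J}\rangle)=\bot$; by the assumption on the recognition algorithm, this means $G_{f_1,J} \notin \cH$ and hence $G_{f_1,J}$ contains some $\cR$-subgraph $H_0$. I claim that $Z_{f_1,J} \subseteq V(H_0)$, which immediately gives $|Z_{f_1,J}| \le |V(H_0)| \le d$. Suppose toward a contradiction that some $v \in Z_{f_1,J}$ satisfies $v \notin V(H_0)$. The set $V(H_0) \cup \{v\}$ has size at most $d+1$, so by the splitter property of $\cF_2$ (which is an $(n,d+1,(d+1)^2)$-splitter), there exists $f_2 \in \cF_2$ that is injective on $V(H_0) \cup \{v\}$. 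In particular, $f_2^{-1}(f_2(v))$ contains $v$ but no vertex of $V(H_0)$. Consequently, $H_0$ survives intact as an induced subgraph of $G_{f_2,f_2(v),f_1,J} = G_{f_1,J} - f_2^{-1}(f_2(v))$, so this graph contains an $\cR$-subgraph and is not in $\cH$. By correctness of the recognition routine, $\cB(\langle G_{f_2,f_2(v),f_1,J}\rangle)=\bot$, contradicting condition~(iii) in the definition of $Z_{f_1}$.

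Putting these pieces together: for each $f_1 \in \cF_1$ we get $|Z_{f_1}| \le \sum_{J : |J|\le d} |Z_{f_1,J}| \le d^2 \cdot \binom{\alpha^2}{d}$, and summing (via union) over the $|\cF_1| \le \alpha^c \ln n$ functions in $\cF_1$ yields the stated bound on $|Z^\star|$. The main obstacle was exactly the geometric observation that condition~(iii), when combined with the splitter property at the slightly larger parameter $d+1$, forces every $v \in Z_{f_1,J}$ to lie in every $\cR$-subgraph of $G_{f_1,J}$; everything else is a routine counting argument.
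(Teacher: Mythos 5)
Your proof is correct and takes essentially the same route as the paper: decompose $Z_{f_1}$ by the witnessing sets $J$, and use the $(n,d+1,(d+1)^2)$-splitter $\cF_2$ to show that any $v$ outside an $\cR$-subgraph $H_0$ of $G_{f_1,J}$ would yield an $f_2$ with $\cB(\langle G_{f_2,f_2(v),f_1,J}\rangle)=\bot$, contradicting condition~(iii). The only cosmetic difference is that the paper phrases the key step as ``$v$ is contributed by $J$ only if $G_{f_1,J}-v\in\cH$'' and then observes this forces $v$ into every $\cR$-subgraph, whereas you prove directly that $Z_{f_1,J}\subseteq V(H_0)$ for a fixed $\cR$-subgraph $H_0$; these are equivalent.
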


 \begin{proof}
 To prove the claim, it suffices to prove that for every $f_1\in \cF_1$ and $J\subseteq [\alpha^2]$ such that $|J|\leq d$, at most $d$ vertices of $G_{f_1,J}$ can be added to the set $Z_{f_1}$. The claim then follows from the bound of $d\cdot \binom{\alpha^2}{d}$ on possible values of $J$ and the bound of $\alpha^c\ln n$ on the size of $|\cF_1|$.

 We say that a vertex $v$ {\em is contributed to} $Z_{f_1}$ by a set $J\subseteq [\alpha^2]$ of size at most $d$ if  $v\in V(G_{f_1,J})$, $\cB(\langle G_{f_1,J}\rangle)=\bot$, and  for every $f_2\in \cF_2$, $\cB(\langle G_{f_2,f_2(v),f_1,J}\rangle)=\top$.
  We now argue that each $J$ contributes at most $d$ vertices to $Z_{f_1}$. To do so, we show that a vertex of $G_{f_1,J}$ is contributed to $Z_{f_1}$ by $J$ precisely if it intersects every $\cR$-subgraph of $G_{f_1,J}$, that is, $G_{f_1,J}-v\in \cH$. Since each graph in $\cR$ contains at most $d$ vertices, there can be at most $d$ such vertices.

 Suppose to the contrary that a vertex $v\in V(G_{f_1,J})$ is contributed to $Z_{f_1}$ by $J$ and $G_{f_1,J}-v$ is not in $\cH$. Let $H$ be an $\cR$-subgraph of $G_{f_1,J}-v$ and let $f_2\in \cF_2$ be a function which is injective on $V(H)\cup \{v\}$.
 Since $\cF_2$ is an $(n,(d+1),(d+1)^2)$-splitter, such an $f_2$ exists. However, notice that $\cB(\langle G_{f_2,f_2(v),f_1,J}\rangle)=\bot$, contradicting our assumption that $v$ is contributed to $Z_{f_1}$ by $J$. This completes the proof of the claim.
 \end{proof}

 We next observe that if $(G,k)$ is a yes-instance of {\probVDH}, then it is sufficient to look for our solution within $Z^\star$.

 \begin{claim}\label{clm:thmHcoveringZStarContainsAllSolutions}
 	 Every minimal solution of size at most $k$ in $G$ is contained in $Z^\star$.
 \end{claim}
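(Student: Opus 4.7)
The plan is to prove this by fixing an arbitrary vertex $v \in S$ of a minimal solution $S$ with $|S| \le k$, and producing an explicit $f_1 \in \cF_1$ together with a set $J \subseteq [\alpha^2]$ of size at most $d$ that witnesses $v$'s inclusion in $Z_{f_1}$. The starting point is minimality: since $S \setminus \{v\}$ is not a solution, $G - (S \setminus \{v\})$ contains an $\cR$-subgraph, call it $H$. Because $S$ itself is a solution, any such $H$ must be hit by $S$, and by construction it is disjoint from $S \setminus \{v\}$, so $V(H) \cap S = \{v\}$ and in particular $v \in V(H)$.

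Next I would bound $|V(H) \cup S| \le d + k \le \alpha$ and invoke the defining property of the $(n,\alpha,\alpha^2)$-splitter $\cF_1$ to pick $f_1 \in \cF_1$ injective on $V(H) \cup S$. Set $J := f_1(V(H))$, so $|J| \le d$. Condition (i) in the definition of $Z_{f_1}$ is immediate since $f_1(v) \in J$. For condition (ii), injectivity of $f_1$ on $V(H)$ guarantees $V(H) \subseteq f_1^{-1}(J)$, hence $H$ is an $\cR$-subgraph of $G_{f_1,J}$, so $G_{f_1,J} \notin \cH$ and $\cB(\langle G_{f_1,J}\rangle) = \bot$.

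The main step is condition (iii): for every $f_2 \in \cF_2$, $\cB(\langle G_{f_2, f_2(v), f_1, J}\rangle) = \top$. The key observation is that, because $f_1$ is injective on $V(H)\cup S$ and $V(H)\cap S = \{v\}$, no vertex of $S \setminus \{v\}$ lies in $f_1^{-1}(J)$; therefore $G_{f_1,J} - v$ is an induced subgraph of $G - S \in \cH$. Since $\cH$ is closed under induced subgraphs (it is defined by forbidden induced subgraphs), $G_{f_1,J} - v \in \cH$, and then the further induced subgraph $G_{f_2, f_2(v), f_1, J} = G_{f_1,J} - f_2^{-1}(f_2(v))$ (which removes $v$ along with possibly more vertices) also belongs to $\cH$, regardless of which $f_2$ we chose. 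Thus all three conditions hold, $v \in Z_{f_1} \subseteq Z^\star$, and since $v \in S$ was arbitrary the claim follows.

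The one spot to be careful about is ensuring that condition (iii) is handled by the hereditary property of $\cH$ rather than requiring a tailored choice of $f_2$: the definition quantifies over all $f_2 \in \cF_2$, so the argument must work uniformly, and the step above does exactly that by showing that the relevant graph is an induced subgraph of $G - S$ for every $f_2$. Everything else is a routine verification of injectivity and containment, using only the splitter guarantees from Proposition~\ref{prop:splitterConstruction} and the definitions introduced at the beginning of the proof.
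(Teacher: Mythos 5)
Your proof is correct and is essentially the argument the paper gives, with one small organizational difference. The paper fixes a \emph{single} witness set $\cT=\{\cT_v : v\in S\}$ (one obstruction per $v\in S$, each meeting $S$ only in $v$), notes $|V(\cT)|\le dk$, and picks one $f_1\in\cF_1$ injective on all of $V(\cT)$; it then shows $S\subseteq Z_{f_1}$ for that one $f_1$, taking $J_v=f_1(V(\cT_v))$ for each $v$. You instead process each $v\in S$ independently, choosing a fresh $f_1$ injective on $V(H)\cup S$ where $H$ is the obstruction in $G-(S\setminus\{v\})$; this only uses $|V(H)\cup S|\le k+d\le\alpha$ rather than $dk\le\alpha$. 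Both bounds are covered by $\alpha=\max\{dk,k+d\}$, and both arguments succeed because $Z^\star$ is the union $\bigcup_{f_1\in\cF_1}Z_{f_1}$, so it is immaterial whether all of $S$ lands in one $Z_{f_1}$ or in several. The verification of conditions (i)--(iii) in the definition of $Z_{f_1}$ is the same in substance in both: injectivity of $f_1$ on $S$ ensures $f_1^{-1}(J)\cap S=\{v\}$, so $G_{f_1,J}-v$ is an induced subgraph of $G-S\in\cH$, and hereditariness handles all $f_2\in\cF_2$ uniformly.
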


 \begin{proof}
 	Let $S$ be an inclusionwise-minimal solution of size at most $k$ (also called an $\cR$-deletion set). That is, $G-S\in \cH$.
 	 	  Let $\cT$ denote a set of obstructions witnessing the minimality of $S$.  In other words, $\cT$ is a set of $\cR$-subgraphs of $G$ such that for every graph in $\cT$, there is a unique vertex of $S$ that it intersects. Since $S$ is a minimal $\cR$-deletion set, such a set of $|S|$ $\cR$-subgraphs must exist. For each $v\in S$, we denote by $\cT_v$ the unique graph in $\cT$ that contains $v$. Let $V(\cT)$ denote the union of the vertex sets of the subgraphs in $\cT$ and notice that $|V(\cT)|\leq dk$.

  Now, consider a function $f_1\in \cF_1$ that is injective on $V(\cT)$. Since $\cF_1$ is an $(n,dk,(dk)^2)$-splitter family, such an $f_1$ exists. We argue that $S\subseteq Z_{f_1}$.
  For each $v\in S$, let $J_v\subseteq [(dk)^2]$ denote the set $\{f_1(u)\mid u\in  V(\cT_v)\}$. Notice that $|J_v|\leq d$.   Moreover, it must be the case that for every $v\in S$ , the graph $G[f_1^\inv(J_v)]$ has at least one minimal $\cR$-deletion set of size exactly 1 (which is the vertex $v$) and at most $d$ distinct minimal $\cR$-deletion sets of size exactly 1. The former is a consequence of the fact that $G[f_1^\inv(J_v)]$ is disjoint from $S\setminus \{v\}$ and the latter is a consequence of the fact that
the graphs in $\cR$ have size bounded by $d$.

Notice that for each $v\in S$, since $v$ is an $\cR$-deletion set of $G[f_1^\inv(J_v)]$, it follows that for every $f_2\in \cF_2$, the graph $G_{f_2,f_2(v),f_1,J}\in \cH$ and therefore, it must be the case that $\cB(\langle G_{f_2,f_2(v),f_1,J}\rangle)=\top$.
In other words, the vertex $v$ is contributed to $Z_{f_1}$ by $J_v$ (see proof of Claim~\ref{clm:thmHcoveringBoundOnZStar} for the definition).  Hence, we conclude that $S\subseteq Z_{f_1}\subseteq Z^\star$.
 	This completes the proof of the claim.
 \end{proof}

 \begin{claim}\label{clm:thmHcoveringEquivalence}
 A set $S\subseteq V(G)$ is a solution for the instance $(G,k)$ of {\probVDH} if and only if  it is a solution  for the {\sc $d$-Hitting Set} instance $(Z^\star,\EE,k)$.

 \end{claim}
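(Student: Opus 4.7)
The plan is to prove the two directions separately: the forward direction is routed through Claim~\ref{clm:thmHcoveringZStarContainsAllSolutions}, and the reverse direction proceeds by contradiction, constructing a violating set in $\EE$ from any hypothetical $\cR$-subgraph that survives the deletion.

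For the forward direction, I would start with an inclusionwise-minimal solution $S$ for $(G,k)$. Claim~\ref{clm:thmHcoveringZStarContainsAllSolutions} immediately gives $S \subseteq Z^\star$, so the domain requirement for the hitting-set instance is automatic. To show $S$ hits every $L \in \EE$, I would unpack the definition of $\EE$: each such $L$ comes with witnesses $f_3 \in \cF_3$ and $J \subseteq [\beta^2]$ with $|J|\le d$ such that $\cB(\langle G_{f_3,J}\rangle) = \bot$, meaning $G_{f_3,J}$ contains some $\cR$-subgraph $H$. Since $S$ is an $\cR$-deletion set for $G$, some $v \in S$ lies in $V(H) \subseteq V(G_{f_3,J})$, giving $f_3(v) \in J$. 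Combined with $v \in Z^\star$ and the defining disjointness $J \cap f_3(Z^\star \setminus L) = \emptyset$, this forces $v \in L$, so $S \cap L \neq \emptyset$.

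For the reverse direction, I would argue by contradiction. Suppose $S \subseteq Z^\star$ with $|S|\le k$ hits every set in $\EE$, and yet $G - S \notin \cH$. Then there is an $\cR$-subgraph $H'$ of $G$ with $V(H') \cap S = \emptyset$ and $|V(H')|\le d$. My candidate set in $\EE$ disjoint from $S$ is $L := V(H') \cap Z^\star$; its size is at most $d$, and $V(H') \cap S = \emptyset$ ensures $L \cap S = \emptyset$. To certify $L \in \EE$, I would exhibit $f_3 \in \cF_3$ and take $J := f_3(V(H'))$; the disjointness of $J$ from $f_3(Z^\star \setminus L)$ is the key condition that $f_3$ must guarantee. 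Given any such $f_3$, the graph $G_{f_3,J}$ contains $H'$ as an induced subgraph, so $G_{f_3,J} \notin \cH$ and $\cB(\langle G_{f_3,J}\rangle) = \bot$, which is exactly what the definition of $\EE$ demands.

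The main technical point, and the reason behind the particular definition of $\beta$, is the existence of this separating $f_3$. The idea is to invoke the splitter property of $\cF_3$: since $\beta$ is chosen so that $|Z^\star| + d \le \beta$ (using the bound from Claim~\ref{clm:thmHcoveringBoundOnZStar}), some $f_3 \in \cF_3$ is injective on the set $V(H') \cup Z^\star$, whose size is at most $|V(H')| + |Z^\star| \le \beta$. Injectivity on that set immediately yields $f_3(V(H')) \cap f_3(Z^\star \setminus V(H')) = \emptyset$, which is precisely the required disjointness. This is the only place where the inflation of $\beta$ beyond $|Z^\star|$ is used; once it is in place, both directions close cleanly, and the claim follows.
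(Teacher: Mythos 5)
Your proof is correct and follows the same approach as the paper: the forward direction routes through Claim~\ref{clm:thmHcoveringZStarContainsAllSolutions} and unpacks the definition of $\EE$ (you state it directly while the paper uses contradiction, a cosmetic difference), and the reverse direction constructs the exact same witness $L = V(H') \cap Z^\star$ with the same splitter $f_3 \in \cF_3$ injective on $Z^\star \cup V(H')$. You also correctly identify the role of $\beta$ in guaranteeing the existence of that splitter, and your observation that $Z^\star \setminus L = Z^\star \setminus V(H')$ makes the disjointness step explicit where the paper leaves it implicit.
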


 \begin{proof}
 Suppose that $(G,k)$ is a yes-instance of {\probVDH} and let $S\subseteq V(G)$ be a solution. By Claim~\ref{clm:thmHcoveringZStarContainsAllSolutions}, we have that $S\subseteq Z^\star$. Now, suppose that $S$ is not a solution for the {\sc $d$-Hitting Set} instance $(Z^\star,\EE,k)$ and let $L\in \EE$ be a set disjoint from $S$. Then, by the construction of $\EE$, we have that there is a function $f_3$ and a set $J\subseteq [\beta^2]$ disjoint from $f_3(S)$ such that $\cB(\langle G_{f_3,J})=\bot$. This implies that there is a subgraph of $G$, i.e., $G_{f_3,J}$ that contains an $\cR$-subgraph and is disjoint from $S$, a contradiction to our assumption that $S$ is an $\cR$-deletion set in $G$.

Conversely, suppose that $S$ is a solution for the $d$-{\sc Hitting Set} instance $(Z^\star,\EE,k)$, but not an $\cR$-deletion set in $G$. Then, there is an $\cR$-subgraph $H$ in $G-S$. Let $L=V(H)\cap Z^\star$. Notice that  $|L|\leq d$ since $|V(H)|\leq d$. Moreover, we claim that in the construction of $\EE$, we would have added the set $L$ to $\EE$. To see this, first observe that $Z^\star\cup V(H)$ has size at most $\beta$ and so, there is a function $f_3\in \cF_3$ which is injective on $Z^\star\cup V(H)$. Therefore, for $J=f_3(V(H))$, we have that $J$ is disjoint from $f_3(Z^\star\setminus L)$ and $\cB(\langle G_{f_3,J}\rangle)=\bot$, ensuring that $L$ is contained in $\EE$.
Since $S$ is a solution for the $d$-{\sc Hitting Set} instance $(Z^\star,\EE,k)$, it follows that $S$ intersects $L$ and hence also intersects $V(H)$, a contradiction to our assumption that $S$ is not an $\cR$-deletion set in $G$.
This completes the proof of the claim.
 \end{proof}

This completes the proof of Theorem~\ref{thm:Hcovering}. 
\end{proof}

\paragraph{Further insights that can be drawn from our proof of Theorem~\ref{thm:Hcovering}. }
First of all, notice that that in the proof of this theorem, if there is a randomized $p$-pass recognition algorithm for $\cH$ where the post-processing succeeds with high probability (i.e., with probability at least $1-1/n^c$ for any $c\in {\mathbb N}$), then one obtains a randomized $p$-pass  $\wtilde{\bigoh}(k^{\bigoh(1)}n)$-space streaming algorithm for
{\probVDH} with $2^{\bigoh(k)}n^{\bigoh(1)}$-time post-processing and a
randomized $p$-pass $\wtilde{\bigoh}(k^{\bigoh(1)}\cdot n)$-space polynomial streaming compression for {\probVDH}, both of which succeed with high probability. Similarly, if we have a recognition algorithm for $\cH$ only on insertion-only streams, then we can still use Theorem~\ref{thm:Hcovering}, with the caveat that the resulting algorithms for {\probVDH} also only work for insertion-only streams.

Secondly, our algorithm can be used to obtain a {\em polynomial compression} from {\probVDH} to {\sc $d$-Hitting Set}. Indeed, if $d^k<n$, then the fixed-parameter post-processing algorithm we use can be seen to solves the problem in polynomial time, allowing us to produce a trivial equivalent instance of {\sc $d$-Hitting Set} as the output. On the other hand, if $d^k>n$, then the size of the {\sc $d$-Hitting Set} instance $(Z^\star,\EE,k)$ is already bounded by $d\cdot |Z^\star|^d\leq d \cdot d^2 \cdot \binom{\alpha^2}{d}\cdot \alpha^{c}\ln n = k^{\bigoh(1)}$ as required. 

\subsection{Corollaries}

\begin{lemma}\label{lem:sparseRecogExamples}
For each of the following graph classes $\cH$, there is a finite family $\cR$  such that $\cH$ is precisely the class of graph that exclude graphs in $\cR$ as induced subgraphs: Acyclic tournaments,  split graphs, threshold graphs and cluster graphs. Moreover, the first three graph classes have a deterministic 1-pass recognition algorithms and cluster graphs have a randomized 1-pass recognition algorithm that succeeds with probability at least $1-1/n^c$ for any given $c>0$.
\end{lemma}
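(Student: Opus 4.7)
The forbidden-induced-subgraph part of the lemma is immediate from the characterizations already recorded in Section~\ref{sec:graphClassPrelims}: a directed triangle for acyclic tournaments, $\{2K_2,C_4,C_5\}$ for split graphs, $\{2K_2,C_4,P_4\}$ for threshold graphs, and $\{P_3\}$ for cluster graphs. The substantive task is to exhibit a $1$-pass $\widetilde{\OO}(n)$-space recognition algorithm (in the sense of Definition~\ref{def:sparseRecognitionAlgorithms}) for each class.

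The first three classes can all be recognized from the degree sequence alone, so during the pass I would simply maintain per-vertex in- and out-degrees in $\bigoh(n\log n)$ bits and run a polynomial-time check in post-processing. Concretely: (i) a tournament is acyclic iff its in-degree sequence is a permutation of $\{0,1,\dots,n-1\}$, since the unique transitive tournament on $n$ vertices realizes this sequence; (ii) by the Hammer--Simeone theorem, a graph with sorted degree sequence $d_1\geq\dots\geq d_n$ is split iff $\sum_{i\leq m} d_i = m(m-1) + \sum_{i>m} d_i$ where $m=\max\{i:d_i\geq i-1\}$; (iii) threshold graphs are exactly those constructible by iteratively adding isolated or universal vertices, which on the sorted degree multiset corresponds to repeatedly removing an entry of value $0$ or an entry of value (current size $-1$) (decrementing the remaining entries in the latter case) and terminating at the empty multiset. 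Each of these tests is deterministic, $1$-pass, and uses $\widetilde{\OO}(n)$ space.

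For cluster graphs, degree information does not suffice, so the plan is a randomized scheme that combines an Ahn--Guha--McGregor spanning-forest sketch (Proposition~\ref{prop:dynamicConnectivitySketch}) with fingerprints of closed neighborhoods. I would sample a hash $h\colon V(G)\to \mathbb{Z}_p$ from a pairwise independent family with $p$ prime and $p=\Theta(n^{c+2})$, and maintain for every vertex $v$ the value $\sigma(v)=\sum_{u\in N[v]} h(u)\bmod p$, which is trivially updatable under edge insertions and deletions in $\widetilde{\OO}(n)$ total space. In post-processing I would extract a spanning forest $F$ from the dynamic-connectivity sketch and output ``yes'' iff $\sigma(u)=\sigma(v)$ on every edge of $F$. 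The correctness argument is that a graph is $P_3$-free iff $N[u]=N[v]$ on every edge, so in the ``yes'' direction the test passes deterministically; conversely, if $G$ is not a cluster graph then some edge $\{b,c\}$ satisfies $N[b]\neq N[c]$, hence $\sigma(b)\neq\sigma(c)$ with probability $1-\bigoh(1/p)$, and this in turn forces at least one edge along the $b$-$c$ path in $F$ to witness a fingerprint mismatch. Choosing $p$ as above yields overall failure probability at most $1/n^c$.

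\textbf{Main obstacle.} The first three cases reduce to off-the-shelf degree-sequence criteria and are routine. The nontrivial case is cluster graph recognition, where whole neighborhoods cannot be stored explicitly; the plan handles this by checking fingerprint equality only on the linearly many spanning-forest edges rather than on all adjacent pairs, using the fact that on a cluster graph the closed-neighborhood function is constant on each component, so transitivity along $F$ propagates equality.
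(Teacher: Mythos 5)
Your treatment of acyclic tournaments, split graphs, and threshold graphs matches the paper's approach exactly — the paper likewise observes that all three classes are determined by their degree sequences and so can be recognized deterministically in one pass by keeping per-vertex degree counters — and the explicit criteria you cite (in-degree permutation of $\{0,\dots,n-1\}$, Hammer–Simeone, and the degree peeling test) are all correct. The cluster-graph part, however, has a genuine flaw: pairwise independence of $h$ does \emph{not} imply $\sigma(b)\neq\sigma(c)$ with high probability when $N[b]\neq N[c]$. Write $A=N[b]\setminus N[c]$ and $B=N[c]\setminus N[b]$. For the standard linear pairwise-independent hash $h(u)=\alpha u+\beta\bmod p$ you get $\sigma(b)-\sigma(c)=\alpha\bigl(\sum_{u\in A}u-\sum_{u\in B}u\bigr)+\beta\,(|A|-|B|)\bmod p$, which is identically zero whenever $|A|=|B|$ and $\sum_{u\in A}u=\sum_{u\in B}u$. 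That situation is realizable: take vertices $b,c,1,2,3,4$ with edges $\{b,c\},\{b,1\},\{b,4\},\{c,2\},\{c,3\}$, so $A=\{1,4\}$, $B=\{2,3\}$; this is not a cluster graph (it contains the induced path $1$–$b$–$c$), yet your test would accept it with probability $1$. Raising the independence to any fixed $k$ does not help, since two distinct size-$m$ multisets over $\mathbb{Z}_p$ can agree on up to $m$ power sums, and $|A|+|B|$ can be $\Theta(n)$. To salvage your fingerprinting route you would need something like the polynomial sketch $\sigma(v)=\sum_{u\in N[v]}x^{u}\bmod p$ for a single uniformly random $x\in\mathbb{Z}_p$, which is still incrementally maintainable, vanishes on a nonempty symmetric difference with probability at most $n/p$, and hence gives the claimed $1-1/n^c$ bound with $p=\Theta(n^{c+1})$.

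For comparison, the paper's cluster-graph test sidesteps fingerprinting entirely: it stores the degree of every vertex together with one spanning-forest sketch of Proposition~\ref{prop:dynamicConnectivitySketch}, and then checks that for every tree $T$ of the recovered forest and every $v\in V(T)$ one has $\deg_G(v)=|V(T)|-1$. This is a different packaging of the same combinatorial fact you use ($N[u]=N[v]$ on every edge of a cluster graph), but it requires no auxiliary hash, so the forest sketch is the sole source of error. Your route, once the hash is replaced by a genuine polynomial fingerprint, also works, but carries an extra failure event and extra machinery for no gain.
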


\begin{proof}
As discussed in Section~\ref{sec:graphClassPrelims}, it is known that acyclic tournaments are precisely those tournaments that exclude directed triangles~\cite{diestel2000graphtheory}. Split graphs, cluster graphs and threshold graphs are characterised by the exclusion of $\{2K_2,C_4,C_5\}, \{P_3\},$ and $\{2k_2,C_4,P_4\}$ respectively~\cite{Golumbicbook}.  Furthermore, acyclic tournaments~\cite{diestel2000graphtheory}, split graphs and threshold graphs~\cite{Golumbicbook} have characterizations through their degree sequence. That is, it is sufficient to know the degrees of each vertex in order to be able to recognize graphs from these classes. As storing the degree of each vertex can be trivially done in $\wtilde{\bigoh}(1)$ space, the algorithm follows.

 For cluster graphs, it is straightforward to see that they can be recognized from a spanning forest and the degree sequence. Indeed, a graph $G$ is a cluster graph if and only if for every tree $T$ in a spanning forest of $G$ and every vertex $v\in V(T)$, the degree of $v$ in $G$ is precisely $|V(T)|-1$. In order to compute a spanning forest of the input graph with high probability, we use Proposition~\ref{prop:dynamicConnectivitySketch}.
 This completes the proof of the lemma.
\end{proof}

As immediate corollaries of Theorem~\ref{thm:Hcovering} and Lemma~\ref{lem:sparseRecogExamples}, we obtain in one shot the first 1-pass $\wtilde{\bigoh}(k^{\bigoh(1)}\cdot n)$-space streaming algorithms 
for several vertex-deletion problems such as {\sc Feedback Vertex Set on Tournaments}, {\sc Split Vertex Deletion}, {\sc Threshold Vertex Deletion} and {\sc Cluster Vertex Deletion} (for the last of which the algorithms are randomized). Moreover, our data structures enable the post-processing for these algorithms to be done by a single-exponential fixed-parameter algorithm parameterized by the solution size. Furthermore, notice that the
post-processing time used by the 1-pass recognition algorithms for the above specific graph classes (in Lemma~\ref{lem:sparseRecogExamples}) is $\wtilde{\bigoh}(n)$. Combining this fact with a closer inspection of the proof of Theorem~\ref{thm:Hcovering} indicates that the polynomial factor in the running times of the fixed-parameter post-processing routines for these problems is in fact only $\wtilde{\bigoh}(n)$. Thus, we have the following results.

\paragraph{Further impact of our techniques in the form of new algorithms for {\probVDH} in the static setting.} As discussed above, a closer  examination of the proof of Theorem~\ref{thm:Hcovering} implies an FPT algorithm for {\probVDH} where the polynomial factor in the running time is $n\cdot T(n)$ where $T(n)$ is the time required to {\em recognize} a graph in $\cH$.
In the standard $2^{\bigoh(k)}n^{\bigoh(1)}$ branching algorithm for {\probVDH}, this factor is at least $T'(n)$ where $T'(n)$ is the time required  to {\em compute an obstruction}, i.e., an $\cR$-subgraph in the input graph (assuming $\cR$ is the set of forbidden induced subgraphs for $\cH$). The computation of obstructions for various graph classes is typically achieved through a {\em certifying recognition algorithm}, i.e., an algorithm that returns an obstruction if it concludes that the input is not in the graph class. However, designing certifying recognition algorithms that are nearly (or just as) efficient as a normal recognition algorithm is a non-trivial task. Indeed, Heggernes and Kratsch~\cite{HeggernesK07} note that usually different and sometimes deep insights are needed to produce useful certificates.
Therefore, our approach allows one to improve the polynomial dependence in the standard $2^{\bigoh(k)}n^{\bigoh(1)}$ branching algorithm for {\probVDH} in cases where recognition of $\cH$ is efficient but the time required to {\em find} an obstruction is worse by a factor of $n$ or more.

\section{From Reconstruction to  {\probVDH} with Infinitely Many Obstructions}\label{sec:reconstruction}


In this section, we present a meta theorem to reduce the design of streaming algorithms for deletion to hereditary graph classes to the design of reconstruction algorithms for these classes.

Towards the presentation of our theorem, we first need to define the meaning of reconstruction. This is done in the following two definitions.

\begin{definition}
Given an $n$-vertex graph $G$, a {\em succinct representation of $G$}  is a data structure that uses $\widetilde{\OO}(n)$ space and supports an $\widetilde{\OO}(n)$-space polynomial-time procedure that, given two vertices $u,v\in V(G)$, correctly answers whether $\{u,v\}\in E(G)$.
\end{definition}

\begin{definition}
For an integer $p \geq 0$, a class of graphs $\cH$ admits a {\em $p$-pass reconstruction algorithm} if there exists a $p$-pass $\widetilde{\OO}(n)$-space streaming algorithm with polynomial post-processing time that, given a graph $G$, correctly concludes whether $G\in \cH$, and in case $G\in{\cH}$, outputs a succinct representation of~$G$.
\end{definition}

We are now ready to prove our meta theorem.


\reductionLemma*

\begin{proof}
Let $G$ denote the input graph, which is presented to us as a stream of edges. Let $n = |V(G)|$. We say that a vertex subset $S \subseteq V(G)$ is a \emph{solution} to $(G,k)$ if $|S| \leq k$ and $G - S \in \cH$.
Let ${\cal F} = \{F_1, F_2, \ldots, F_t\}$ be an $(n,k,2)$-separating family over $V(G)$. Note that such a family contains $t = k^{\OO(1)} \log n$ vertex subsets, and it can be constructed in $\OO(k^{\OO(1)} n \log n)$ time and space (Corollary~\ref{cor:graph-separating-family}).
Given the family $\cal F$, for each $1 \leq i \leq t$, let $G_i = G[F_i]$.
For brevity, let $\cal A$ denote the algorithm of Item~\ref{item:reduction1}, which can reconstruct a graph in the class $\cH$ in $p$-passes. And let $\cal B$ denote the static algorithm for {\probVDH} of Item~\ref{item:reduction2}.
Our streaming algorithm for {\probVDH} is as follows.
\begin{itemize}

    \item \textbf{Streaming Phase.} In the streaming phase, we construct a collection of data-structures from the input stream of edges using the reconstruction algorithm $\cal A$. We first construct the family $\cal F$ of subsets of $V(G)$. Then for each graph $G_i = G[F_i]$, where $1 \leq i \leq t$, we apply algorithm $\cal A$, in parallel. At the end of the stream, for each $G_i$, the algorithm either concludes that $G_i \in \cH$ and outputs a succinct representation of it, or else it concludes that $G_i \notin \cH$. Let $D_i$ denote the data-structure output by $\cal A$ whenever $G_i \in \cH$, and $D_i = \emptyset$ otherwise. Observe that each data-structure $D_i$ requires $\tilde\OO(n)$ space, and there are at most $t = k^{\OO(1)}\log n$ of them. We store the family $\cal F$ and the collection of these data-structures $\{D_i\}_{1 \leq i \leq t}$. Note that, we require $p$-passes and $\tilde\OO(k^{\OO(1)}\cdot g(k) \cdot n)$ space in the streaming phase.

    \item \textbf{Post-processing Phase.} Let us now describe the post-processing phase of our algorithm. Let us define a graph $\tilde{G}$ as follows: $V(\tilde{G}) = V(G)$ and $E(\tilde{G}) = \bigcup_{1 \leq i \leq t \,:\, D_i \neq \emptyset} E(G_i)$.
    Observe that $\tilde{G}$ is a subgraph of $G$, although it may not be an induced subgraph. Note that the graph $\tilde{G}$ is not explicitly constructed, but a succinct representation of $\tilde{G}$ is obtained from the data-structures $\{D_i \mid 1 \leq i \leq t\}$ as follows: given any two vertices $u,v \in V(G)$, to determine if $\{u,v\} \in E(\tilde{G})$,
    we query every $D_i$ for the edge $\{u,v\}$; if any one of these queries succeed we output that $\{u,v\}$ is an edge in $\tilde{G}$, otherwise we output that $\{u,v\}$ is a non-edge of $\tilde{G}$. Next, let $E' = \{\{u,v\} \in V \times V \mid u \neq v \text{ and } \forall~ 1 \leq i \leq t \text{ such that } D_i \neq \emptyset,~ |F_i \cap \{u,v\}| \leq 1\}$.
    Let $G' = (V(G), E')$, and note that it is not necessarily a subgraph of $G$ (although it can be shown to contain every edge in $E(G) \setminus E(\tilde{G})$). The graph $G'$ is also not explicitly constructed, but it is reconstructed from the vertex subsets $\{F_i \mid 1 \leq i \leq t, ~D_i \neq \emptyset\}$.

    To construct a solution to $(G,k)$, we do the following.
    We first enumerate every minimal vertex cover of size at most $k$ in the graph $G'$.  Note that there are at most $2^k$ minimal vertex covers of $G'$ of size at most $k$~\cite{CyganFKLMPPS15}. This is accomplished by a simple branching algorithm, that processes the edges of $G'$ one by one.
    This algorithm runs in $2^k \cdot n^{\OO(1)}$ time and $\widetilde\OO(n)$ space.
    For each vertex cover $X$ of $G'$ of size at most $k$ produced by the above enumeration, we apply the algorithm $\cal B$ to $(\tilde{G} - X, k - |X|)$, and either obtain a solution $\tilde{S}$ to $(\tilde{G}-X, k-|X|)$ or no such solution exists.
    If we do indeed find a solution $\tilde{S}$ to $(\tilde{G}-X, k-|X|)$, then we output $\tilde{S} \cup X$ as the solution to $(G,k)$. Otherwise, for every $X$ the algorithm $\cal B$ fails to find a solution to $(G-X,k-|X|)$, and  we output that there is no solution to $(G,k)$. This completes the description of the algorithm.
\end{itemize}

It clear that the above algorithm requires $p$-passes, and used $\OO(n \cdot g(k) \cdot k^{\OO(1)})$ space and $(2^k \cdot f(k)) \cdot n^{\OO(1)}$ time.
It only remains to argue the correctness of our algorithm.
Towards this, first recall that $\cal F$ has the following property: for any pair of vertices $u,v$ and any $S \subseteq V(G)$ such that $u,v \notin V(G)$ and $|S| \leq k$, there is a subset $F_i \in {\cal F}$ such that $u,v \in F_i$ and $S \cap F_i = \emptyset$. Next, we have the following claim: If $S$ is a solution to $(G,k)$ and $e = \{u,v\} \in E'$, then $|S \cap \{u,v\}| \geq 1$. That is, we claim that $S$ is a (not necessarily minimal) vertex cover for $G'$.
We prove this claim using the properties of the $(n,k,2)$-separation family $\cal F$ over $V(G)$. Suppose the claim is false, and let $\{u,v\} \in E'$ such that $S \cap \{u,v\} = \emptyset$. Then there exists $F_i \in {\cal F}$ such that $\{u,v\} \subseteq F_i$ and $S \cap F_i = \emptyset$. Observe that $G_i = G[F_i]$ is an induced subgraph of $G - S$, and $G - S \in \cH$. Since $\cH$ is hereditary, $G_i \in \cH$ as well. Therefore, $D_i \neq \emptyset$, and hence by the construction of $E'$, we have $\{u,v\} \notin E'$. This is a contradiction.

Next, consider any vertex cover $X$ of $G'$.
We claim that $\tilde{G} - X = G - X$.
First observe that, as $V(G) = V(\tilde{G})$, we have $V(G) \setminus X = V(\tilde{G}) \setminus X$.
Next suppose that there are  $u,v \in V(\tilde{G}) \setminus X$ such that $\{u,v\} \in E(G) \setminus E(\tilde{G})$. Then by the definition of $\tilde{G}$, there is no vertex subset $F_i \in \cal F$ such that $G_i \in \cH$ and $u,v \in F_i$.
In other words, for any $F_j \in \cal F$ such that $u,v \in F_j$, we have $G_j \notin \cH$ and hence $D_j = \emptyset$.
Hence, by definition $\{u,v\} \in E'$, and therefore $X \cap \{u, v\} \neq \emptyset$. But this is again a contradiction.
Hence we conclude $\tilde{G} - X = G - X$.

Now, let us argue that $(G,k)$ has a solution  if and only if our algorithm concludes that there is a solution to $(G,k)$. In the forward direction, consider a solution $S^\star$ to $(G,k)$. Let $Y \subseteq S^\star$ be a minimal subset that is a vertex-cover of $G'$, and note that $|Y| \leq k$. Then as $\cH$ is hereditary, $S^\star - Y$ is a solution to the instance $(G - Y, k - |Y|)$.
Recall that our post-processing algorithm enumerates all minimal vertex covers of $G'$ of size at most $k$, and in particular $Y$. Then for the vertex cover $Y$ of $G'$, it computes a solution to $(\tilde{G} - Y, k - |Y|)$ using algorithm $\cal B$. Here, recall that $\tilde{G} - Y = G - Y$ and hence it admits a solution $\tilde{S} = S^\star \setminus Y$ of size at most $k - |Y|$. Therefore, by invoking algorithm $\cal B$ on $(\tilde{G} - Y, k -|Y|)$, we correctly conclude that $(G,k)$ admits a solution.
In the reverse direction, suppose that our algorithm outputs $\tilde{S} \cup X$ as a solution to $(G,k)$, where $X$ is a vertex cover of $G'$ and $\tilde{S}$ is a solution to $(\tilde{G}-X, k - |X|)$. Recall that $\tilde{G} - X = G - X$, and hence $G - (\tilde{S} \cup X) = \tilde{G} - (\tilde{S} \cup X)$. Hence it follows that $G - (\tilde{S} \cup X) \in \cH$, i.e. $\tilde{S} \cup X$ is a solution $(G,k)$. This concludes the proof of this lemma.
\end{proof}

Finally, we remark that the results of this section also hold for digraphs (with the same proof).

\subsection{Proper Interval Vertex Deletion in $\OO(\log^2 n)$ Passes}

Formally, the class of proper interval graphs can be defined as follows.

\begin{definition}[Proper Interval Graph]
A graph $G$ is a {\em proper interval graph} if there exists a function $f$ that assigns to each vertex in $G$ an open interval of unit length on the real line such that every two vertices in $G$ are adjacent if and only if their intervals intersect. Such a function $f$ is called a {\em representation}, and given a vertex $v\in V(G)$, we let $\mathsf{begin}_f(v)$ and $\mathsf{end}_f(v)$ denote the beginning and end of the interval assigned to $v$ by $f$ (as if it was closed).

Equivalently, the demand that each interval will be of unit length can  be replaced by the demand that no interval will properly contain another interval.\footnote{To be more precise, when intervals are required to have unit length, then the graph is called a {\em unit interval graph}, and when intervals should not properly contain one another, it is called a {\em proper interval graph}. However, these two notions are known to be equivalent.}
\end{definition}

We remark that whenever we have a representation, we can slightly perturb it so that no two vertices will be assigned the same interval, or, more generally, no two endpoints (start or end) of intervals will coincide. Further, given a representation of the second ``type'' (where intervals can have different lengths), it is possible to construct a representation of the first ``type'' with the same ordering of the starting and ending points of all intervals assigned, and that every representation of the first ``type'' is also of the second ``type''.

The purpose of this section is to prove the following theorems. The second theorem implies the first but requires additional arguments.

\begin{theorem}\label{thm:piv}
{\sc Proper Interval Vertex Deletion} admits an $\OO(\log^2 n)$-pass semi-streaming algorithm with $2^{\OO(k)}\cdot n^{\OO(1)}$ post-processing time.
\end{theorem}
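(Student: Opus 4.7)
The plan is to invoke Theorem~\ref{lem:reductionLemma} with $\cH$ being the class of proper interval graphs. For this we must establish two things: an $\OO(\log^2 n)$-pass reconstruction algorithm for proper interval graphs (Item~\ref{item:reduction1}), and a static algorithm for {\sc Proper Interval Vertex Deletion} running in $\widetilde{\OO}(k^{\OO(1)} n)$ space and $2^{\OO(k)} n^{\OO(1)}$ time (Item~\ref{item:reduction2}). Feeding these into Theorem~\ref{lem:reductionLemma} immediately yields an $\OO(\log^2 n)$-pass $\widetilde{\OO}(k^{\OO(1)} n)$-space streaming algorithm with the claimed $2^{\OO(k)} \cdot n^{\OO(1)}$ post-processing, proving Theorem~\ref{thm:piv}.

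For the reconstruction algorithm, I will first reduce to connected graphs by computing a spanning forest via Proposition~\ref{prop:dynamicConnectivitySketch}, and then solve an annotated version of the problem in which, besides the input graph $G$, we receive a partial order $<$ on $V(G)$ and must output (if one exists) a unit interval representation whose starting-point order refines $<$. The algorithm follows a recursive divide-and-conquer strategy based on the notion of a \emph{middle vertex}: a vertex $v$ in a connected proper interval graph such that every connected component of $G - N[v]$ has at most $\tfrac{9}{10}n$ vertices. In any unit interval model, removing $N[v]$ yields at most two components, one to the left and one to the right of $v$'s interval, and a uniformly random vertex is a middle vertex with probability at least $\tfrac{4}{5}$; hence $\OO(\log n)$ independent random samples (verified in one pass per batch by computing neighborhoods and induced component sizes) locate one with high probability. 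Given a middle vertex $v$, set $M = N[v]$, let $L$ and $R$ be the two components of $G - M$ (oriented arbitrarily), refine $<$ on $L$ by additionally ordering vertices according to their degree into $M$ (and on $R$ in the opposite direction), and recurse. For the middle region $M$, exploit the fact that in any valid unit interval ordering the in-$M$ degree is unimodal, that the leftmost vertex of $M$ is forced to be the one with maximum degree into $L$ and minimum degree into $M$ among such vertices (breaking ties by $<$), and that its neighbors in $M$ appear before its non-neighbors, sorted by degree into $M$ in opposite monotone patterns. A constant number of additional passes suffice to read off these degrees and resolve $M$. Since recursion splits into pieces of size at most $\tfrac{9}{10} n$ the depth is $\OO(\log n)$ and each level consumes $\OO(\log n)$ passes, giving $\OO(\log^2 n)$ in total. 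A final pass verifies the produced unit interval model against the stream; if verification fails we report $G \notin \cH$, else the model itself furnishes a succinct representation answering edge queries in polynomial time.

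For the static algorithm I plan to revisit the $2^{\OO(k)} n^{\OO(1)}$-time branching algorithm of van~'t~Hof and Villanger~\cite{HofV13} and reimplement it so that the working memory never exceeds $\widetilde{\OO}(k^{\OO(1)} n)$. The main observation is that their algorithm branches on small forbidden induced subgraphs (short holes and the finite family of non-chordal obstructions to proper interval graphs), and at each recursive node only the current deletion set together with adjacency data about a bounded number of vertices is needed. Since the branching depth is $\OO(k)$ and each call can be implemented by scanning an adjacency representation of the residual graph, we can keep all intermediate state in $\widetilde{\OO}(k^{\OO(1)} n)$ space when we have query access to the edge set, which is exactly what the succinct representation from Theorem~\ref{lem:reductionLemma} provides.

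The principal technical obstacle is the correctness of the reconstruction procedure, specifically showing that the refinements to the partial order at each recursive call preserve feasibility: one must verify that no valid unit interval model is accidentally excluded when we tighten $<$ on $L$, $M$, and $R$, and dually that any model produced by the recursion can be patched with the chosen ordering of $M$ into a global model of $G$. This hinges on a careful structural analysis of how intervals on either side of $N[v]$ can intersect $M$, and on proving that the unimodality and degree-based tie-breaking rules for $M$ pin down the correct leftmost vertex up to ambiguities that do not affect global consistency. The static-algorithm reimplementation is comparatively routine once query access to $G$ is available.
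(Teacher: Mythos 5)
Your proposal takes essentially the same route as the paper: you invoke Theorem~\ref{lem:reductionLemma} with $\cH$ the class of proper interval graphs, supply a randomized $\OO(\log^2 n)$-pass reconstruction built around middle vertices (random sampling to find one, the $L$/$M$/$R$ decomposition with degree-refined partial orders on $L$ and $R$, and degree-unimodal resolution of $M$, verified by a final pass), and supply the static component by reimplementing the van~'t~Hof--Villanger algorithm in $\widetilde{\OO}(k^{\OO(1)}n)$ space -- all exactly the paper's plan (Lemma~\ref{lem:pivReconstruction}, Lemma~\ref{lem:annotatedReconstructionStep}, Lemma~\ref{lem:pivpost-processing}). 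One remark: your static phase is under-described in that van~'t~Hof--Villanger is not purely branching on constant-size obstructions -- after hitting claws, nets, tents, and $C_4,\ldots,C_7$, the residual graph is proper circular-arc and the remaining long holes must be handled via the circular-arc structure (the paper either appeals to a linear-space circular-arc representation or introduces the $W_{v,x}$ characterization of minimal cuts) -- but this is an omission of detail rather than a wrong approach.
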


To prove this lemma, we present the reconstruction and post-processing algorithms for the class of proper interval graphs as required by Theorem  \ref{lem:reductionLemma}. 

\subsubsection{Reconstruction Algorithm}

The purpose of this section is to prove the following lemma.

\begin{lemma}\label{lem:pivReconstruction}
The class of proper interval graphs admits an $\OO(\log^2 n)$-pass reconstruction algorithm.
\end{lemma}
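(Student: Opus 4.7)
The plan is to implement a divide-and-conquer strategy that, at every recursion step, peels off the closed neighborhood of a ``middle vertex'' and recurses on the two (at most) components that remain. First I would reduce to the case where $G$ is connected by computing a spanning forest using the connectivity sketch of Proposition~\ref{prop:dynamicConnectivitySketch} (one pass) and treating each component independently. I would also pass from proper interval reconstruction to an annotated version: the input is a graph $G$ together with a partial order $<$ on $V(G)$, and the task is either to produce a unit interval model $f$ that is order-preserving (i.e., $u<v$ implies $\mathsf{begin}_f(u)<\mathsf{begin}_f(v)$) or to certify that no such model exists. The original problem is the special case where $<$ is empty; the annotation is what makes the recursive calls composable.

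For the recursive step, I would sample a vertex $v$ uniformly at random and spend $O(1)$ passes checking whether $v$ is a middle vertex: compute $M = N[v]$ and apply a spanning-forest sketch to $G - M$, verifying that every component has at most $9n/10$ vertices. A standard structural argument on unit interval models shows $G - M$ has at most two components $L$ and $R$. Because a uniformly random vertex is middle with probability at least $4/5$, running $O(\log n)$ independent copies of this test in parallel identifies a middle vertex with high probability in $O(\log n)$ passes. Once $L$ and $R$ are identified (with the order $L < R$ either dictated by $<$ or chosen arbitrarily), I would refine $<$ on $L$ by declaring $u <_L u'$ whenever $|N(u) \cap M| < |N(u') \cap M|$, and symmetrically for $R$ (reverse inequality), so that the recursive calls return models that plug in correctly at the boundary with $M$. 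Then I would recurse on $(L, <_L)$ and $(R, <_R)$.

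The key work is reassembling $M$ given the recursive solutions for $L$ and $R$. I would prove that in any unit interval model of $G[M]$, the sequence of $M$-degrees along the model is unimodal with peak at $v$. Using this, I would select a leftmost vertex $a$ in $M$ as sketched in the introduction (minimal in $<$, maximal in $\deg_L$ where $L$ is nonempty, and minimal in $\deg_M$ breaking ties among those), using a further pass to compute $N(a)$. After $a$ is fixed, I would order the neighbors of $a$ left-to-right by non-decreasing $M$-degree (breaking ties using the partial order inherited from $L$) and then the non-neighbors of $a$ by non-increasing $M$-degree, again refined by inherited order. Combined with the recursive models of $L$ and $R$, this yields a candidate unit interval model for the whole call. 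At the very top of the recursion I would run a final verification pass that checks, for every streamed edge, that its endpoints overlap in the candidate model, and uses per-vertex degree counters to certify that no stream edges are missing; if either check fails, $G$ is not a proper interval graph.

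The main obstacle I anticipate is proving correctness of the tie-breaking rules for choosing $a$ and for ordering within equal-degree layers, since several cases arise depending on whether $L$ or $R$ is empty, on coincidences among $M$-degrees, and on how the partial order inherited from the caller interacts with the refined orders prescribed to the recursive calls. The combinatorial statement to prove is that, if $G$ is a proper interval graph, then \emph{any} model returned by the recursion for $L$ and $R$ can be patched with the greedy ordering of $M$ built above. For the pass and space budget: the recursion depth is $O(\log n)$ because every recursive call halves (actually multiplies by $9/10$) the vertex count, every depth of the recursion tree performs its work in parallel across all subproblems at that depth, and each depth consumes $O(\log n)$ passes for the Monte Carlo middle-vertex search plus $O(1)$ passes for degree and connectivity computations, giving $O(\log^2 n)$ passes overall; the space is $\widetilde{O}(n)$ because at any moment we keep only a logarithmic number of spanning-forest sketches and constant-size per-vertex labels per active subproblem. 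Together with a final failure check in the verification pass, this yields the claimed $O(\log^2 n)$-pass reconstruction algorithm.
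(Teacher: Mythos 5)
Your proposal matches the paper's proof essentially step for step: reduce to the connected case, pass to an annotated variant carrying a partial order $<$ so that the recursive calls compose, split on the closed neighborhood of a randomly selected middle vertex (testing ``middle'' with a connectivity sketch, with success probability $\geq 4/5$), recurse on $L$ and $R$ with refined orders $<_L,<_R$ built from $d_M(\cdot)$, order $M$ greedily from a leftmost vertex $a$ chosen via degree/order tie-breaking and the unimodality of $d_M(\cdot)$, and verify the resulting model with a final degree-and-edge pass. The one substantive step you leave implicit is the placement of interval \emph{ending} points relative to the starting points (the paper spends three explicit steps interleaving them using $d_M(\cdot)$ and $d_R(\cdot)$); this is where the claim that any recursively returned models of $L$ and $R$ can be patched onto $M$ is actually established, so it needs to be specified rather than left to the verification pass to catch.
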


We first observe that it suffices to focus on the class of {\em connected} proper interval graphs. Indeed, this follows by running an algorithm for the connected case on all connected components simultaneously, and answering ``yes'' if and only if the answer to all components is ``yes''; in case the answer is ``yes'', the output representation is the union of the representations of the connected components.

\begin{observation}\label{obs:pivReduceToConnected}
The class of connected proper interval graphs admits an $x$-pass reconstruction algorithm with polynomial post-processing time, then so does the class of proper interval graphs.
\end{observation}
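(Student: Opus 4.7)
The plan is to reduce the general case to the connected case by first identifying the connected components of $G$ and then invoking the assumed connected-case algorithm on each component in parallel. The semantic correctness of this reduction rests on the elementary fact that a graph is a proper interval graph if and only if each of its connected components is a proper interval graph---given representations of the individual components, one can concatenate them on pairwise-disjoint portions of the real line to obtain a representation of the whole graph.

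To carry this out, I will first run the Ahn--Guha--McGregor dynamic connectivity sketch (Proposition~\ref{prop:dynamicConnectivitySketch}) in one pass over the stream, using $\widetilde{\OO}(n)$ space, to recover a spanning forest $F$ of $G$; from $F$ I read off the connected components $C_1, \ldots, C_r$ of $G$ at the end of this initial pass. In the next $x$ passes, I will run $r$ parallel instances of the assumed $x$-pass connected reconstruction algorithm $\mathcal{A}_{\text{conn}}$, one per component $C_i$. When an edge $\{u, v\}$ arrives in the stream, I look up the (necessarily common) component label of $u$ and $v$ and deliver the edge only to the corresponding instance. Since each instance of $\mathcal{A}_{\text{conn}}$ on $G[C_i]$ uses $\widetilde{\OO}(|C_i|)$ space, the total space across all instances telescopes to $\sum_i \widetilde{\OO}(|C_i|) = \widetilde{\OO}(n)$.

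To produce the output, I will declare $G$ a proper interval graph if and only if every instance of $\mathcal{A}_{\text{conn}}$ declares yes on its component, and I will form the succinct representation of $G$ by bundling the per-component succinct representations together with the component-label table: an edge query $\{u, v\}$ is answered ``no'' whenever the component labels of $u$ and $v$ differ, and is otherwise delegated to the succinct representation of their common component. The post-processing time remains polynomial, as it is the sum of the per-component post-processing times.

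The one delicate point---and really the only obstacle in this otherwise routine argument---is that the scheme uses $x+1$ passes rather than $x$. This is immaterial for the intended application, since the target bound in Lemma~\ref{lem:pivReconstruction} is $\OO(\log^2 n)$, which absorbs the additive constant, and the observation should be read as producing an $\OO(x)$-pass algorithm. If one insists on exactly $x$ passes, an alternative would be to integrate the connectivity sketch into the first pass of $\mathcal{A}_{\text{conn}}$ itself, buffering the uncommitted edges implicitly via a second sketch and replaying them onto the correct instances once the component labels are available at the start of pass two; but for the use we make of this observation, such a sharpening is unnecessary.
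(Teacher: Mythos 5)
Your proposal is correct and follows essentially the same route as the paper: decompose $G$ into its connected components, run the connected-case algorithm on each component in parallel, accept iff every component accepts, and bundle the per-component representations. The paper's own proof is a one-line remark that elides how the components are found in the stream; you fill that gap with an initial connectivity-sketch pass and correctly observe that the resulting $x+1$-pass count is immaterial for the $\OO(\log^2 n)$-pass application (one small caveat you should note: in the turnstile model a transient edge may temporarily join two vertices that land in different final components, so the router should simply discard updates whose endpoints carry different component labels rather than assume the labels agree).
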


In particular, dealing with connected proper interval graphs allows us to make use of the following result regarding the number of connected components left after the removal of any closed neighborhood.

\begin{lemma}\label{lem:conPivTwoComps}
Let $G$ be a connected proper interval graph with representation $f$. For any $v\in V(G)$, the graph $G-N[v]$ consists of at most two connected components: one on the set of vertices whose intervals end at or before the interval of $v$ starts (according to $f$), and the other on the set of vertices whose intervals start at or after the interval of $v$ ends.
\end{lemma}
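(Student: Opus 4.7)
The plan is to rely on the natural linear order on $V(G)$ induced by the representation $f$, and exploit the fact that in a connected proper interval graph, consecutive vertices in this order must be adjacent. First, I would enumerate $V(G)$ as $v_1, v_2, \ldots, v_n$ by increasing $\mathsf{begin}_f$-value (breaking ties by $\mathsf{end}_f$ if needed, though a mild perturbation of $f$ ensures all endpoints are distinct). Since all intervals have unit length (or, equivalently, no interval properly contains another), this ordering is simultaneously the ordering by $\mathsf{end}_f$. Let $v = v_i$, and let $a \leq i \leq b$ be the smallest and largest indices, respectively, of vertices whose intervals intersect that of $v$, so that $N[v] = \{v_a, \ldots, v_b\}$.

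Next, I would identify the two candidate components. Define $L = \{v_1, \ldots, v_{a-1}\}$ and $R = \{v_{b+1}, \ldots, v_n\}$. For any $u \in V(G) \setminus N[v]$, the interval of $u$ is disjoint from that of $v$; since $f$-values are distinct, it must lie strictly to one side, so either $\mathsf{end}_f(u) \leq \mathsf{begin}_f(v)$ (in which case $u \in L$) or $\mathsf{begin}_f(u) \geq \mathsf{end}_f(v)$ (in which case $u \in R$). This shows the sets of the lemma statement coincide with $L$ and $R$, and together partition $V(G) \setminus N[v]$.

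The step that requires connectedness of $G$ is arguing that $G[L]$ and $G[R]$ are each connected. Here the key sublemma is: if $G$ is a connected proper interval graph, then for every $j < n$ the pair $v_j, v_{j+1}$ is adjacent in $G$. Indeed, if they were not, then $\mathsf{end}_f(v_j) \le \mathsf{begin}_f(v_{j+1})$; but then every $v_p$ with $p \le j$ has $\mathsf{end}_f(v_p) \le \mathsf{end}_f(v_j) \le \mathsf{begin}_f(v_{j+1}) \le \mathsf{begin}_f(v_q)$ for every $q \ge j+1$, so no edge crosses the cut $\{v_1,\ldots,v_j\} \mid \{v_{j+1},\ldots,v_n\}$, contradicting connectivity. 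Consequently $v_1{-}v_2{-}\cdots{-}v_{a-1}$ is a Hamiltonian path in $G[L]$, so $G[L]$ is connected (if non-empty), and the same argument applied to $v_{b+1}, \ldots, v_n$ shows $G[R]$ is connected (if non-empty).

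Finally, I would rule out edges of $G - N[v]$ between $L$ and $R$. For $u \in L$ and $w \in R$, we have $\mathsf{end}_f(u) \leq \mathsf{begin}_f(v) < \mathsf{end}_f(v) \leq \mathsf{begin}_f(w)$, so the intervals of $u$ and $w$ are disjoint and hence $uw \notin E(G)$. Combining the three steps yields that $G - N[v]$ has at most two connected components, namely $G[L]$ and $G[R]$, exactly as described. The main obstacle is really just justifying the consecutive-adjacency claim in the ordered sequence, as the rest reduces to routine interval-disjointness bookkeeping.
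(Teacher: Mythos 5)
Your proof is correct, and it takes a genuinely different route than the paper. The paper argues directly that $G[L]$ (and symmetrically $G[R]$) is connected by a path-shortening argument: given $u,w \in V(L)$, it takes an arbitrary $u$--$w$ path $P$ in $G$, identifies the last vertex $x$ of $L$ before $P$ leaves $L$ and the first vertex $w'$ of $L$ after $P$ finally returns, and shows via interval arithmetic that $f(x)$ and $f(w')$ must intersect, so the detour can be shortcut. Your argument instead establishes the stronger structural fact that a connected proper interval graph has a Hamiltonian path along the interval order (the ``consecutive vertices are adjacent'' sublemma, proved by a clean cut argument), and then $G[L]$ and $G[R]$ are connected simply because they are prefixes/suffixes of that ordering. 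Your route is more modular and yields a reusable fact (the Hamiltonian path); the paper's route avoids stating that fact explicitly at the cost of a slightly fiddlier case split. One small thing worth being explicit about: when you write $N[v] = \{v_a,\ldots,v_b\}$ you are implicitly using that $N[v]$ is a contiguous block in the interval order, which does hold for proper interval graphs (no containment forces the $\mathsf{begin}$- and $\mathsf{end}$-orders to agree, so any $v_r$ between two neighbors of $v$ is also a neighbor), but it deserves a sentence rather than being assumed silently.
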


\begin{proof}
Let $L$ be the subgraph of $G$ induced by the set of vertices whose intervals end at or before the interval of $v$ starts. Let $R$ be the subgraph of $G$ induced the set of vertices whose intervals start at or after the interval of $v$ ends. Note that every vertex not in $V(L)\cup V(R)$ has an interval that intersects that of $v$, and the set that consists of these vertices is precisely $N[v]$. Further, the intervals of two vertices that belong to different graphs among $L$ and $R$ do not intersect, and hence they are not neighbors. So, to conclude the proof, it remains to argue that each graph among $L$ and $R$ is connected. We will only show this for $L$ as the proof for $R$ is symmetric. Consider two vertices $u,v\in V(L)$. Then, because $G$ is connected, there exists a (simple) path $P$ between $u$ and $v$ in $G$. If $P$ belongs to $L$, we are done. Else, $P$ can be rewritten as $u-P_1-x-y-P_2-z-w-P_3-v$ (possibly $y=z$ and $P_1,P_2,P_3$ can be empty) where $V(P_1)\cup V(P_3)\cup\{x,w\}\subseteq V(L)$ and $y,z\in N[v]$. So, {\em (i)} $\mathsf{start}_f(x)<\mathsf{start}_f(y)<\mathsf{end}_f(x)<\mathsf{start}_f(v)$, and {\em (ii)} $\mathsf{start}_f(w)<\mathsf{start}_f(z)<\mathsf{end}_f(w)<\mathsf{start}_f(v)$.

Without loss of generality, suppose that {\em (iii)} $\mathsf{start}_f(z)<\mathsf{start}_f(y)$. We claim that $\mathsf{start}_f(x)<\mathsf{start}_f(z)<\mathsf{end}_f(x)$, which will imply (by {\em (ii)}) that $f(x)$ and $f(w)$ intersect. Clearly, $\mathsf{start}_f(x)<\mathsf{start}_f(z)$ because $f(x)$ ends before $f(v)$ starts while $f(z)$ does not. Moreover, $\mathsf{start}_f(z)<\mathsf{end}_f(x)$, because otherwise, by {\em (iii)}, $\mathsf{end}_f(x)<\mathsf{start}_f(z)<\mathsf{start}_f(y)$ which is a contradiction to {\em (i)}. So, $u-P_1-x-w-P_3-v$ is a path within $L$.
\end{proof}

In order to proceed, we need to define an annotated version of the reconstruction problem, where we only seek proper interval graphs with representations that comply with a given order on the intervals assigned to vertices. This annotation will be encountered since a step (which will be encapsulated by a lemma ahead) of our algorithm will reduce a problem instance to smaller annotated instances (and will reduce these smaller annotated instances to even smaller annotated instances, and so on, based on divide on conquer).

\begin{definition}
The {\sc Annotated Proper Interval Reconstruction} problem is defined as follows. Given a graph $G$ and a partial ordering $<$ on $V(G)$, decide whether $G$ is a proper interval graph that admits a representation $f$ where for every different $u,v\in V(G)$, no endpoint of $f(u)$ coincides with an endpoint of $f(v)$, and such that if $u<v$, then we have that $\mathsf{begin}_f(u)<\mathsf{begin}_f(v)$ (such a representation is said to {\em comply with $<$}). Moreover, in case $G$ is such a graph, output a succinct representation of $G$ that is the permutation on $\{\mathsf{begin}_f(v):v\in V(G)\}\cup\{\mathsf{end}_f(v):v\in V(G)\}$ corresponding to $f$.\footnote{Here, we mean that intervals are scanned from left to right. Note that this is indeed a succinct representation because it takes space $\widetilde{\OO}(n)$, and to check whether two vertices $u,v$ are neighbors, we just need to check whether $\mathsf{begin}_f(v)<\mathsf{begin}_f(u)<\mathsf{end}_f(v)$ or $\mathsf{begin}_f(u)<\mathsf{begin}_f(v)<\mathsf{end}_f(u)$.}
\end{definition}

From now on, as we justify immediately, our objective will be to prove the following lemma.

\begin{lemma}\label{lem:annotatedReconstruction}
The {\sc Annotated Proper Interval Reconstruction} problem admits an $\OO(\log^2 n)$-pass algorithm.
\end{lemma}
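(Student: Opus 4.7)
The plan is to design a recursive divide-and-conquer algorithm that follows the blueprint laid out in the overview preceding the lemma. By (the annotated analogue of) Observation~\ref{obs:pivReduceToConnected}, which extends verbatim because in any representation the vertices of distinct connected components occupy disjoint sub-intervals that can be shifted independently subject to $<$, I may assume the input graph $G$ is connected. At each recursive level, the core step is to identify a \emph{middle vertex} $v$, split the vertex set into $M = N[v]$ and the (at most two, by Lemma~\ref{lem:conPivTwoComps}) components $L$ and $R$ of $G - M$, and then recurse on $L$ and $R$ under refined orderings while handling $M$ locally.

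Since a uniformly random vertex is a middle vertex with probability at least $4/5$, I would sample $\Theta(\log n)$ vertices uniformly at random and, in a single pass, run a dynamic connectivity sketch (Proposition~\ref{prop:dynamicConnectivitySketch}) in parallel for each sampled $v$, restricted to $G - N[v]$. In post-processing I test each sample and, for one that qualifies, locate the components; if none succeeds or if the two-component conclusion of Lemma~\ref{lem:conPivTwoComps} is violated, the algorithm declares that $G$ is not a proper interval graph. Given a middle vertex and the resulting $M$, $L$, $R$, I consult $<$ to decide which component is on the left (choosing arbitrarily if $<$ is silent across $L$ and $R$, and rejecting if $<$ forces both directions) and, in one further pass, compute every vertex's degree within each of $L$, $M$, $R$. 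I then refine $<$ as described in the overview---pushing high-$M$-degree vertices of $L$ rightward, those of $R$ leftward, and ordering $M$ so that high-$L$-degree vertices come first and high-$R$-degree vertices come last---and recurse on $L$ and $R$ under these refined orderings.

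To place $M$ locally, I exploit the unimodality of $M$-degrees in any valid model: the leftmost vertex $a$ of $M$ is determined by the rule stated in the overview (splitting cases on whether $L$ is empty), which uses only $<$ and the already-computed degrees; then one additional pass identifies $N_M(a)$, after which the ordering of $M$ is pinned down by sorting the neighbors of $a$ by non-decreasing $M$-degree and placing the non-neighbors after them in non-increasing order of $M$-degree. Once the three sub-representations are stitched into a candidate permutation of interval endpoints for $G$, I run one verification pass that checks edge-by-edge that the graph implied by this permutation coincides with $G$; any discrepancy causes rejection. Because each of $L$ and $R$ has at most $\tfrac{9}{10} n$ vertices, the recursion depth is $\OO(\log n)$, and each level uses $\OO(\log n)$ passes for sampling plus $\OO(1)$ passes for degree computation, identification of $N_M(a)$, and verification; pipelining the recursive calls at a common depth over the stream yields $\OO(\log^2 n)$ passes overall, and boosting each sketch's success probability to $1-1/n^c$ for sufficiently large $c$ gives high-probability correctness via a union bound.

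The main obstacle will be making the third step fully rigorous: I must verify that the refinement of $<$ passed down to $L$ and $R$ is tight enough that \emph{any} valid ordering those recursive calls could return is ``patchable'' with the locally chosen ordering of $M$, and yet loose enough that when $G$ genuinely is a proper interval graph a correct recursive answer is guaranteed to exist. The combinatorial content here lies in the unimodality claim and in a careful case analysis of how endpoints of intervals of $M$ interleave with those of $L$ and $R$, for which the overview gives only a sketch; these arguments need to be carried out with some care both to validate the leftmost-vertex rule and to justify the degree-based refinements. The safety net against residual errors is the final verification pass, which ensures that any incorrect partial reconstruction produced by an unlucky middle-vertex choice, a faulty refinement, or a non-proper-interval input is detected and rejected.
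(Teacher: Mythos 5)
Your proposal tracks the paper's proof closely: the paper packages the recursive step as Lemma~\ref{lem:annotatedReconstructionStep} (locate a middle vertex by rejection sampling, split into $M=N[v^\star]$ and at most two components $L,R$ via Lemma~\ref{lem:conPivTwoComps}, recurse on $L$ and $R$ under degree-based refinements of $<$, pin down the ordering of $M$ locally via the unimodality of $d_M(\cdot)$ and a carefully chosen leftmost vertex $a$, stitch, and verify), then invokes it with $s=\log_5(n^2)$. Your one genuine departure is batching $\Theta(\log n)$ candidate middle vertices into a single level with parallel connectivity sketches, versus the paper's sequential one-candidate-per-trial rejection loop; this is a reasonable variant, but note it costs only $\OO(1)$ passes per level (you need one pass to learn the $N[v]$'s and one pass for the sketches on $G-N[v]$), so your stated ``$\OO(\log n)$ passes for sampling'' per level overcounts what you actually describe --- the batched version would in fact give $\OO(\log n)$ passes total, tighter than the lemma requires. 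Where the proposal remains an outline rather than a proof is exactly the part you flag: showing that the refinements $<_L,<_R$ pushed into the recursive calls are loose enough that a genuine proper interval $G$ stays a yes-instance, yet tight enough that \emph{any} representation the subcalls return is patchable with the chosen ordering of $M$, together with correctness of the leftmost-vertex rule and the endpoint-placement rules. That content occupies the sequence of Claims~\ref{claim:proceed} through~\ref{claim:orderEndInternal} in the paper and is the bulk of the work. You rightly observe that the final verification pass guarantees soundness against spurious acceptances, but completeness on yes-instances still rests on those combinatorial claims, which need to be carried out.
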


Indeed, to prove Lemma \ref{lem:pivReconstruction}, it suffices to prove Lemma \ref{lem:annotatedReconstruction} because we can choose the order $<$ to define all vertices as incomparable, and thereby make $<$ immaterial.

The proof of Lemma \ref{lem:annotatedReconstruction} will be based on an inductive argument. The main part of the proof is given by Lemma \ref{lem:annotatedReconstructionStep} ahead. Before this, we state the following definition and lemmas that will be used in its proof.

\begin{definition}
Let $G$ be an $n$-vertex graph. A vertex $v$ is a {\em middle vertex} if the size of every connected component of $G-N[v]$ is at most $\frac{9}{10}n$.
\end{definition}

\begin{lemma}\label{lem:probChooseMid}
Let $G$ be a connected proper interval graph. Then, by selecting a vertex $v\in V(G)$ uniformly at random, $v$ is a middle vertex with probability at least $\frac{4}{5}$.
\end{lemma}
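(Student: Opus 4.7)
The plan is to fix any proper interval representation $f$ of $G$ (guaranteed to exist since $G$ is a proper interval graph) and to count the vertices that fail to be middle vertices. By Lemma~\ref{lem:conPivTwoComps}, for every $v\in V(G)$ the graph $G-N[v]$ consists of at most two connected components, one on the set $L_v$ of vertices whose intervals end before $f(v)$ starts and the other on the set $R_v$ of vertices whose intervals start after $f(v)$ ends. Hence the condition of being a middle vertex becomes purely combinatorial in $f$: $v$ is non-middle if and only if $|L_v|>\tfrac{9}{10}n$ or $|R_v|>\tfrac{9}{10}n$.

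Next, I would order the vertices as $v_1,v_2,\ldots,v_n$ by increasing $\mathsf{begin}_f$ (using the assumption that $f$ can be perturbed so that no two endpoints coincide). The crucial observation is that in a proper interval representation no interval properly contains another, so this order coincides with the order by increasing $\mathsf{end}_f$. This immediately gives $L_{v_i}\subseteq\{v_1,\ldots,v_{i-1}\}$ and $R_{v_i}\subseteq\{v_{i+1},\ldots,v_n\}$, so $|L_{v_i}|\le i-1$ and $|R_{v_i}|\le n-i$. Thus $|L_{v_i}|>\tfrac{9}{10}n$ can occur only for $i>\tfrac{9}{10}n+1$, and $|R_{v_i}|>\tfrac{9}{10}n$ only for $i<\tfrac{1}{10}n-1$.

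Combining these two bounds, every non-middle vertex lies either among the leftmost $\lfloor n/10\rfloor$ or the rightmost $\lfloor n/10\rfloor$ positions of the order. The total number of non-middle vertices is therefore at most $n/5$, and a uniformly random vertex of $G$ is a middle vertex with probability at least $4/5$, as claimed. I do not expect any real obstacle here: once the geometric reformulation via Lemma~\ref{lem:conPivTwoComps} and the equality of the $\mathsf{begin}_f$- and $\mathsf{end}_f$-orderings (the proper interval property) are invoked, the argument reduces to the short counting step above; the only mild technicality is to justify the non-coincidence of interval endpoints, which is already addressed in the preceding discussion.
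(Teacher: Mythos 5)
Your proof is correct and follows essentially the same route as the paper's: fix a representation $f$, invoke Lemma~\ref{lem:conPivTwoComps} to describe the two components $L_v$, $R_v$, order vertices by $\mathsf{begin}_f$, and observe that the $i$-th vertex in this order has $|L_{v_i}|\le i-1$ and $|R_{v_i}|\le n-i$, so all but the first and last roughly $n/10$ vertices are middle vertices. (A small arithmetic slip: from $|R_{v_i}|\le n-i$, the condition $|R_{v_i}|>\tfrac{9}{10}n$ forces $i<\tfrac{1}{10}n$, not $i<\tfrac{1}{10}n-1$; this is harmless. Also, once vertices are ordered by $\mathsf{begin}_f$, the inclusion $L_{v_i}\subseteq\{v_1,\ldots,v_{i-1}\}$ already follows directly from $\mathsf{begin}_f(v_j)<\mathsf{end}_f(v_j)\le\mathsf{begin}_f(v_i)$, so you did not actually need to invoke the coincidence of the $\mathsf{begin}_f$- and $\mathsf{end}_f$-orderings, though it is of course true for proper interval representations.)
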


\begin{proof}
Let $f$ be a representation of $G$, and $n=|V(G)|$. For every vertex $v$, let $L_v$ and $R_v$ be the first and second connected components (which may be empty) defined in Lemma \ref{lem:conPivTwoComps}. Order $V(G)=\{v_1,v_2,\ldots,v_n\}$ so that $f(v_i)$ starts at or before $f(v_j)$ starts. Then, for any $i\in\{1,2,\ldots,n\}$, $|V(L_{v_i})|\leq i-1$ and $|V(R_{v_i})|\leq n-i$. So, whenever $\frac{1}{10}\leq i\leq \frac{9}{10}$, $|V(L_{v_i})|\leq \frac{9}{10}n$ and $|V(R_{v_i})|\leq \frac{9}{10}n$, and then $v_i$ is a middle vertex. So, there are at least $\frac{4}{5}n$ mid vertices, which implies the desired probability.
\end{proof}

\begin{lemma}\label{lem:verifyReconstruction}
There exists a 1-pass $\widetilde{\OO}(n)$-space polynomial-time algorithm that given two graphs $G$ and $H$ on the same vertex set $V$ and a succinct representation of $H$ (where the representation is stored in memory and not part of the stream), decides whether $G$ is equal to $H$.\footnote{That is, for any pair of vertices $u,v\in V$, we have that $u,v$ are adjacent in $G$ if and only if they are adjacent in $H$.}
\end{lemma}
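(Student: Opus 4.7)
The plan is to use a linear fingerprinting sketch indexed by vertices: assign each vertex a random label from a large finite field and, during the single pass, maintain for every vertex $v$ the sum of labels of its neighbors in $G$. In post-processing, compute the analogous sum for $H$ from its succinct representation and compare. Since two graphs on a common vertex set are equal if and only if their neighborhoods agree at every vertex, this reduces equality testing to $n$ scalar equality tests in $\mathbb{F}_p$, each of which can fail only with negligible probability over the random labels.

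Concretely, before the stream begins I would pick a prime $p$ in the interval $[n^3, 2n^3]$ and draw, for each $u \in V$, an independent uniform label $r_u \in \mathbb{F}_p$; storing these takes $\widetilde{\OO}(n)$ bits. During the pass, I keep an accumulator $\phi_v \in \mathbb{F}_p$ for each vertex $v$; on an insertion of $\{u,v\}$ I update $\phi_v \mathrel{+}= r_u$ and $\phi_u \mathrel{+}= r_v$, and on a deletion I subtract. Because each $\phi_v$ is a linear sketch of the characteristic vector of $N_G(v)$, at the end of the pass we have $\phi_v = \sum_{u \in N_G(v)} r_u \pmod{p}$ regardless of the order of stream operations in the turnstile model. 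In post-processing I would iterate over all pairs $(u,v)$, query the stored succinct representation of $H$ to decide whether $\{u,v\} \in E(H)$, and thereby assemble $\psi_v := \sum_{u \in N_H(v)} r_u \pmod{p}$. The algorithm outputs ``yes'' iff $\phi_v = \psi_v$ for every $v$. The streaming phase uses $\widetilde{\OO}(n)$ space and constant arithmetic work per edge update, and the post-processing uses $\OO(n^2)$ queries into $H$ together with $\widetilde{\OO}(n)$ extra space.

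For correctness, equality of $G$ and $H$ clearly makes every test pass. Conversely, if $G \neq H$, then there is some $v$ with $N_G(v) \neq N_H(v)$; for such $v$ the difference $\phi_v - \psi_v$ is a nontrivial $\pm 1$ combination of the labels $\{r_u : u \in N_G(v) \triangle N_H(v)\}$, so conditioned on all but one of these labels it is uniform over $\mathbb{F}_p$, and hence vanishes with probability at most $1/p \leq 1/n^3$. A union bound over all $v$ bounds the total error by $1/n^2$. The only delicate point is to ensure that the labels $r_u$ used during the streaming pass are identical to those used in post-processing and that they are ``fresh'' relative to the stream; since we have $\widetilde{\OO}(n)$ space available and $H$ is already stored in memory, we can retain the labels explicitly (or equivalently keep a short seed of a sufficiently independent hash family into $\mathbb{F}_p$), and I do not anticipate any substantive obstacle beyond this standard fingerprinting bookkeeping.
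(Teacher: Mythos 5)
Your proposal is correct, but it takes a genuinely different route from the paper. The paper's proof is deterministic: it observes that $G=H$ if and only if $E(G)\subseteq E(H)$ and every vertex has the same degree in $G$ and $H$ (since the inclusion plus equality of degree sums forces $|E(G)|=|E(H)|$, hence $E(G)=E(H)$). The algorithm then, in one pass, records all degrees of $G$ and, as each edge of $G$ arrives, immediately checks it against the succinct representation of $H$, reporting inequality on the first miss; afterwards it computes degrees of $H$ from the representation and compares. Your fingerprinting scheme is a standard and clean alternative, but it pays for its elegance with a $1/\mathrm{poly}(n)$ error probability, so it only proves a randomized version of the lemma while the paper achieves the statement deterministically. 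On the other hand, your approach has a real advantage that you did not mention: being a linear sketch, it handles turnstile streams (insertions \emph{and} deletions) without modification. The paper's phrasing ``for each edge in $G$ (when it appears in the stream), it checks that it belongs to $H$'' is most naturally read as insertion-only; to extend it to turnstile one would replace the immediate-abort check with a signed counter of stream edges falling outside $E(H)$ and compare to zero at the end. Either approach suffices for where the lemma is used (inside a randomized reconstruction procedure), so your argument is acceptable; just be aware that it quietly weakens the lemma from deterministic to randomized, and you should say so explicitly rather than leaving the model unstated.
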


\begin{proof}
The algorithm works as follows. Using one pass on the stream, it computes the degree of every vertex in $G$, and also for each edge in $G$ (when it appears in the stream), it checks that it belongs to $H$ using the succinct representation and if not it returns that $G\neq H$. Afterwards, it computes the degree of every vertex in $H$ (using the succinct representation) and checks that it equals the degree computed for $G$, and if not then it returns that $G\neq H$. At the end, it returns that $G=H$.

Clearly, this is a 1-pass $\widetilde{\OO}(n)$-space polynomial-time algorithm. Moreover, correctness follows by observing that $G=H$ if and only if $E(G)\subseteq E(H)$ and the degree of every vertex is the same in $G$ and $H$.
\end{proof}

We now turn to state and proof the aforementioned Lemma \ref{lem:annotatedReconstructionStep}.

\begin{lemma}\label{lem:annotatedReconstructionStep}
Let $n,s\in\mathbb{N}$. Suppose that {\sc Annotated Proper Interval Reconstruction} admits a $5\cdot s\cdot \log_{\frac{10}{9}} \widetilde{n}$-pass algorithm on $\widetilde{n}$-vertex graphs where $\widetilde{n}\leq \frac{9}{10}n$ with success probability at least $1-(\frac{1}{5})^s\cdot \widetilde{n}\log_{\frac{10}{9}}\widetilde{n}$. Then, {\sc Annotated Proper Interval Reconstruction} admits a $5\cdot s\cdot \log_{\frac{10}{9}} n$-pass algorithm on $n$-vertex graphs with success probability at least $1-(\frac{1}{5})^s\cdot n\log_{\frac{10}{9}} n$.\footnote{We remark that the constant $5$ was not optimized. For example, just by sacrificing modularity and not checking the validity of our reconstruction here (at each step) but only once in the end, it reduces to $4$.}
\end{lemma}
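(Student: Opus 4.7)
The plan is to prove the lemma by a single level of divide-and-conquer that calls the assumed algorithm on two sub-instances, each of size at most $\frac{9}{10}n$. By Observation~\ref{obs:pivReduceToConnected} we may assume $G$ is connected. We begin by sampling $s$ vertices of $V(G)$ uniformly at random; in parallel over these samples we use one pass to compute $N[v_i]$ for each sampled $v_i$ and a second pass to compute a spanning forest of $G - N[v_i]$ via Proposition~\ref{prop:dynamicConnectivitySketch}. From the spanning forests we identify those $v_i$ that are middle vertices. By Lemma~\ref{lem:probChooseMid}, each sample is middle with probability at least $\frac{4}{5}$ whenever $G$ is a connected proper interval graph, so with failure probability at most $(1/5)^s$ we obtain a middle vertex $v$; if no sample is middle we halt and return ``no''. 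Let $M := N[v]$. By Lemma~\ref{lem:conPivTwoComps}, $G - M$ consists of at most two connected components, a ``left'' side $L$ and a ``right'' side $R$, each of size at most $\frac{9}{10}n$. We inspect the partial order $<$ to decide which side is which, answering ``no'' if $<$ forces contradictory cross-side orderings and naming the sides arbitrarily if $<$ is silent.

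We then refine $<$ on each sub-instance: on $L$ we set $u <_L u'$ when either $u < u'$, or $u, u'$ are $<$-incomparable and $\deg_M(u) < \deg_M(u')$; on $R$ we do the analogous thing but with larger $M$-degree corresponding to being smaller. The values $\deg_M(\cdot)$ are obtained in one additional pass (or folded into an earlier pass at the cost of slightly more state). Any contradiction with $<$ is reported as ``no''. We then invoke the inductively assumed algorithm in parallel on $(L, <_L)$ and $(R, <_R)$, both running on the same stream by filtering out irrelevant edges and together consuming $5s\log_{\frac{10}{9}}(9n/10) = 5s\log_{\frac{10}{9}} n - 5s$ passes. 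By the induction hypothesis, the two calls fail with probability at most $(1/5)^s |L|\log_{\frac{10}{9}} |L|$ and $(1/5)^s |R|\log_{\frac{10}{9}} |R|$ respectively.

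The central obstacle is to construct a representation of $M$ and patch the three pieces into a single representation of $G$. Following the overview, we first argue that in any unit-interval representation of $G$ the degrees of vertices in $G[M]$ read left-to-right form a unimodal sequence peaking at $v$. We then pick a candidate leftmost vertex $a \in M$: if $L = \emptyset$, $a$ is any $<$-minimum vertex of $M$ of smallest $M$-degree; otherwise $a$ is chosen with largest degree into $L$, breaking ties first by $<$ and then by smallest $M$-degree. One further pass recovers $N(a)$. We place $N(a) \cap M$ at the front in non-decreasing $M$-degree and $M \setminus N(a)$ at the back in non-increasing $M$-degree, breaking ties consistently with $<$ and with the boundary data from the representations returned for $L$ and $R$. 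Concatenating the three succinct representations yields a candidate succinct representation $H$ of $G$, which we verify by invoking Lemma~\ref{lem:verifyReconstruction} in one further pass; if verification fails we return ``no'', and otherwise we return $H$.

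The non-recursive work at each level costs only a constant number of passes, which fits in the extra $5s$ passes allotted per recursion level, so the total is at most $5s\log_{\frac{10}{9}} n$ passes. By a union bound, the overall failure probability is at most $(1/5)^s + (1/5)^s(|L|\log_{\frac{10}{9}}|L| + |R|\log_{\frac{10}{9}}|R|) \leq (1/5)^s + (1/5)^s \cdot n(\log_{\frac{10}{9}} n - 1) \leq (1/5)^s \cdot n\log_{\frac{10}{9}} n$, using $|L|+|R| \leq n$ and $\max(|L|,|R|) \leq \frac{9}{10}n$. The most delicate part of the argument is the correctness of the $M$-handling and the patching with the recursively produced representations, but any mistake in the construction of $H$ is caught by the verification pass of Lemma~\ref{lem:verifyReconstruction} and converted into an honest ``no'' answer, completing the inductive step.
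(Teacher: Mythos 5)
Your overall plan — sample a middle vertex, split into $L$, $M$, $R$, refine the partial order, recurse on $L$ and $R$ in parallel, build a representation for $M$, concatenate, and verify — is the same as the paper's, and your pass-count and failure-probability bookkeeping match (your parallel sampling in two passes vs.\ the paper's sequential loop is a harmless variant).

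However, there is a genuine gap in the correctness argument. You write that ``any mistake in the construction of $H$ is caught by the verification pass of Lemma~\ref{lem:verifyReconstruction} and converted into an honest ``no'' answer, completing the inductive step.'' This only gives \emph{soundness}: a returned ``yes'' is certainly correct. It does not give \emph{completeness}: if $G$ is a proper interval graph admitting a representation complying with $<$, but your construction of $H$ is flawed, then $H$ fails verification and your algorithm returns ``no'' on a yes-instance. To close the inductive step one must prove that on every yes-instance the constructed ordering \emph{is} a valid representation complying with $<$ — the paper does exactly this with the chain of Claims~\ref{claim:proceed} through~\ref{claim:orderEndInternal}, which argue step-by-step that there exists some valid representation matching each decision the algorithm makes. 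You explicitly flag the $M$-handling and patching as ``the most delicate part'' and then do not prove it; that is precisely the heart of the lemma.

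There is also a concrete omission in the construction itself. You describe only how to order the \emph{starting} points of the $M$-intervals (unimodally by $M$-degree around $v$), but a succinct representation is a permutation of both starting and ending points of all intervals, and the $M$-intervals overlap intervals of both $L$ and $R$. ``Concatenating the three succinct representations'' therefore does not define a permutation: one must say where each $\mathsf{end}_f(\cdot)$ of an $L$-vertex falls among the $\mathsf{begin}_f(\cdot)$'s of $M$-vertices, and where each $\mathsf{end}_f(\cdot)$ of an $M$-vertex falls among the $\mathsf{begin}_f(\cdot)$'s of $R$-vertices (and among other $M$-starting points), as the paper's Steps~\ref{step:orderEndLR}--\ref{step:orderEndInternal} do via the $d_M(\cdot)$ and $d_R(\cdot)$ counts. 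Without this, $H$ is not well-defined and the claim that the algorithm ever outputs a representation on a yes-instance cannot even be stated. Finally, your rule for picking $a$ when $L=\emptyset$ but $R\neq\emptyset$ differs from the paper (which first fixes the rightmost vertex $b$ via $d_R$, then picks $a$ from $M\setminus N[b]$); this may or may not be salvageable, but since you give no argument either way it is subsumed by the larger gap.
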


\begin{proof}
Let $\mathfrak{A}$ denote the algorithm guaranteed by the supposition of the lemma.

\medskip
\noindent{\bf Algorithm.} The algorithm works as follows.
\begin{enumerate}
\item For $i=1,2,\ldots,s$:
	\begin{enumerate}
	\item Select uniformly at random a vertex $v^\star\in V(G)$.
	\item Go over the stream once to compute $N(v^\star)$.
	\item Go over the stream once using a connectivity sketch to compute the connected components of $G-N[v^\star]$. By Lemma \ref{lem:conPivTwoComps}, there are at most two such components, which we denote by $C_1$ and $C_2$ (where possibly one or both of them are empty).
	\item If $|V(C_1)|\leq\frac{9}{10}n$ and $|V(C_2)|\leq\frac{9}{10}n$, then go directly to Step \ref{step:proceed}.
	\end{enumerate}
\item Return failure.
\item\label{step:proceed} Rename $C_1$ and $C_2$ as $L$ and $R$ (where $C_1$ can be either) such that for every pair of comparable vertices $u,w\in V(G)$ such that $u<w$, at least one of the following holds: {\em (i)} $u\in V(L)$; {\em (ii)} $w\in V(R)$; {\em (iii)} $u,w\in N[v]$. If this is not possible, then return ``no-instance''. Also, denote $M=G[N[v]]$.

\item Go over the stream once to compute for every $u\in V(G)$: $d_L(u)=|N(u)\cap V(L)|$, $d_R(u)=|N(u)\cap V(R)|$ and $d_M(u)=|N(u)\cap V(M)|$.

\item\label{step:identifyAB} Here, we consider several cases:
\begin{enumerate}
\item {\bf $V(L)\neq\emptyset$.} Among all vertices $u\in M$ with maximum $d_L(u)$ and that are smallest by $<$, let $a$ be one with minimum $d_M(u)$ (if there is more than one choice, then select one  arbitrarily).  If $a$ cannot be chosen (i.e., there is no vertex that has maximum $d_L(u)$ and is simultaneously smallest by $<$ among all vertices in $M$), then return ``no-instance''. Afterwards:
\begin{enumerate}
\item\label{step:chooseb1} Go over the stream once to compute $N(a)$.
\item\label{step:chooseb2}  If $M\setminus N[a]\neq\emptyset$, then among all vertices $u\in M\setminus N[a]$ with maximum $d_R(u)$  and that are largest by $<$, let $b$ be one with minimum $d_M(u)$ (if there is more than one choice, then select one arbitrarily).  If $b$ cannot be chosen (i.e., there is no vertex that has maximum $d_R(u)$ and is simultaneously largest by $<$  among all vertices in $M\setminus N[a]$), then return ``no-instance''.
\item\label{step:chooseb3} Else, among all vertices $u\in M$ with maximum $d_R(u)$ and that are largest by $<$, choose some vertex $b$ arbitrarily.  If $b$ cannot be chosen (i.e., there is no vertex that has maximum $d_R(u)$ and is simultaneously largest by $<$  among all vertices in $M$), then return ``no-instance''.
\end{enumerate}

\item {\bf $V(L)=\emptyset$ and $V(R)\neq\emptyset$.} Among all vertices $u\in M$ with maximum $d_R(u)$ and that are largest by $<$, let $b$ be one with minimum $d_M(u)$ (if there is more than one choice, then select one arbitrarily).  If $b$ cannot be chosen (i.e., there is no vertex that has maximum $d_R(u)$ and is simultaneously largest by $<$ among all vertices in $M$), then return ``no-instance''. Afterwards:
\begin{enumerate}
\item Go over the stream once to compute $N(b)$.
\item If $M\setminus N[b]\neq\emptyset$, then among all vertices $u\in M\setminus N[b]$ with maximum $d_L(u)$ and that are smallest by $<$, let $a$ be one with minimum $d_M(u)$ (if there is more than one choice, then select one arbitrarily).  If $a$ cannot be chosen (i.e., there is no vertex that has maximum $d_R(u)$ and is simultaneously smallest by $<$  among all vertices in $M\setminus N[b]$), then return ``no-instance''.
\item Else, among all vertices $u\in M$ with maximum $d_R(u)$ and that are smallest by $<$, choose some vertex $a$ arbitrarily.  If $a$ cannot be chosen (i.e., there is a conflict between having minimum $d_M(u)$ and being smallest by $<$), then return ``no-instance''.
\end{enumerate}

\item {\bf $V(L)=V(R)=\emptyset$.} Among all vertices $u\in M$ that are smallest by $<$, let $a$ be one of minimum $d_M(u)$ (if there is more than one choice, then select one arbitrarily). Repeat Steps \ref{step:chooseb1}, \ref{step:chooseb2} and \ref{step:chooseb3} to define $b$.
\end{enumerate}

\item\label{step:recursiveCalls} Call the algorithm in the supposition of the lemma on $L$ with $<_L$ defined as follows: for all $u,v\in V(L)$, if $u<v$ or $d_M(u)<d_M(v)$, then $u<_Lv$. If a conflict arises in the definition of $<_L$ or the algorithm returns ``no-instance'', then return ``no-instance'', and else let $\mathsf{Perm}_L$ be the permutation it returns (supposedly corresponding to some representation $g_L$ of $L$ that complies with $<_L$). Simultaneously,\footnote{That is, use one pass on $G$ to simulate one pass on $L$ as well as one pass on $R$.} call the algorithm in the supposition of the lemma on $R$ with $<_R$ defined as follows: for all $u,v\in V(R)$, if $u<v$ or $d_M(u)>d_M(v)$, then $u<_Rv$. If a  conflict arises in the definition of $<_R$ or the algorithm ``no-instance'', then return ``no-instance'', and else let $\mathsf{Perm}_R$ be the permutation it returns (supposedly corresponding to some representation $g_R$ of $R$ that complies with $<_R$).

\item\label{step:orderStart} Now, for some (unknown and possibly incorrect) representation $f$, we define an ordering on starting points, $\mathsf{begin}_f(\cdot)$, as follows. First, the starting points of the intervals of all vertices in $L$ and $R$ are ordered exactly as in $\mathsf{Perm}_L$ and $\mathsf{Perm}_R$, and all starting points of the intervals of vertices in $L$ appear before those in $M$, and for those in $M$, they appear before those in $R$. Internally, the starting points of the vertices in $M$ are ordered as follows. For every $u,v\in M$, if at least one of the following conditions is true, then $\mathsf{begin}_f(u)<\mathsf{begin}_f(v)$:
	\begin{enumerate}
	\item $u<v$ (where $<$ is the input partial order on $V(G)$);
	\item $d_L(u)>d_L(v)$;
	\item $d_R(u)<d_R(v)$;
	\item $u,v\in N(a)$ and $d_M(u)<d_M(v)$;
	\item $u,v\notin N(a)$ and $d_M(u)>d_M(v)$;
	\item $u\in N(a)$ and $v\notin N(a)$.
	\end{enumerate}
In case a conflict occurs, that is, the above conditions imply that both $\mathsf{begin}_f(u)<\mathsf{begin}_f(v)$ and $\mathsf{begin}_f(v)<\mathsf{begin}_f(u)$ for some $u,v\in V(G)$, then we return ``no-instance''. Besides this, ties are broken arbitrarily.

\item\label{step:orderEndLR} We proceed by inserting the ending points of the intervals of all vertices in $V(L)\cup V(R)$ within the ordering of the starting points. First, the ending points of the intervals of all vertices in $L$ with no neighbors in $M$ are  ordered exactly as in $\mathsf{Perm}_L$ (and are inserted before the starting points of the intervals of all vertices in $M$). Second, the ending points of the intervals of all vertices in $R$ are ordered exactly as in $\mathsf{Perm}_R$ (and are inserted after the starting points of the intervals of all vertices in $M$). Third, for every vertex $v\in V(L)$ with neighbours in $M$, insert $\mathsf{end}_f(v)$ between $\mathsf{begin}_f(x)$ and $\mathsf{begin}_f(y)$ where $x$ is the $d_M(v)$-th vertex in $M$ according to the already established ordering of starting points of intervals, and $y$ is the vertex whose starting point is ordered immediately after that of $x$ (which might not exist, in which case $\mathsf{end}_f(v)$ is just placed after all starting points of intervals of vertices in $M$ and before all those of vertices in $R$).

\item\label{step:orderEndM} Lastly, we insert the ending points of the intervals of all vertices in $V(M)$ within the ordering of the starting points. For every vertex $v\in V(M)$ with $d_R(v)>0$, insert $\mathsf{end}_f(v)$ between $\mathsf{begin}_f(x)$ and $\mathsf{begin}_f(y)$ where $x$ is the $d_R(v)$-th vertex in $R$ according to the already established ordering of starting points of intervals, and $y$ is the vertex whose starting point is ordered immediately after that of $x$ (which might not exist, in which case $\mathsf{end}_f(v)$ is just placed after all starting points). Additionally, for every vertex $v\in V(M)$ with $d_R(v)=0$, if the starting point of $f(v)$ was placed after the starting point of $f(v^\star)$, then place the ending point of $f(v)$ after the starting point of all vertices in $M$ but before those of all vertices in $R$; else, insert $\mathsf{end}_f(v)$ between $\mathsf{begin}_f(x)$ and $\mathsf{begin}_f(y)$ where $x$ is the $d_M(v)$-th vertex in $M$ according to the already established ordering of starting points of intervals, and $y$ is the vertex whose starting point is ordered immediately after that of $x$ (which might not exist, in which case $\mathsf{end}_f(v)$ is just placed after all starting points of intervals of vertices in $M$ and before all those of vertices in $R$).

\item\label{step:orderEndInternal} To complete the ordering of all starting points and ending points, it remains only to define the ordering between ending points that were inserted between the same two starting points---for every two such ending points $\mathsf{end}_f(u)$ and $\mathsf{end}_f(v)$, we order $\mathsf{end}_f(u)$ before $\mathsf{end}_f(v)$ if and only $\mathsf{begin}_f(u)$ before $\mathsf{begin}_f(v)$.

\item\label{step:end1} Check whether in the constructed succinct representation (that is the ordering of starting and ending points) no interval properly contains another. If this check fails, return ``no-instance''.

\item\label{step:end2} Go over the stream once using the algorithm in Lemma \ref{lem:verifyReconstruction} to verify whether the graph given by the constructed succinct representation yields a graph equal to $G$. If the answer is yes, then return ``yes-instance'' and the representation, and otherwise return ``no-instance''.
\end{enumerate}

This completes the description of the algorithm.

\medskip
\noindent{\bf Space, Time, Number of Passes and Success Probability.}  Clearly, the space complexity of the algorithm is $\widetilde{O}(n)$ and it works in polynomial time. Regarding the number of passes, observe that the number of passes it makes is at most \[\begin{array}{l}
2s+3+\max_{\frac{1}{10}n\leq \ell\leq \frac{9}{10}n}5s\log_{\frac{10}{9}}\ell\\
\leq 5s(1+\max_{\frac{1}{10}n\leq\ell\leq\frac{9}{10}n}\log_{\frac{10}{9}}\ell)\\
= 5s(1+\log_{\frac{10}{9}}\frac{9}{10}n)\\
=5s\log_{\frac{10}{9}}n.
\end{array}\]

Moreover, by Lemma \ref{lem:probChooseMid} and union bound, its failure probability is at most
\[\begin{array}{l}
\smallskip
(\frac{1}{5})^s+\max_{\frac{1}{10}n\leq\ell\leq\frac{9}{10}n}((\frac{1}{5})^s\cdot \ell\log_{\frac{10}{9}} \ell + (\frac{1}{5})^s\cdot (n-\ell)\log_{\frac{10}{9}} (n-\ell))\\

\smallskip
\leq (\frac{1}{5})^s+(\frac{1}{5})^s\cdot \max_{\frac{1}{10}n\leq\ell\leq\frac{9}{10}n}(\ell\log_{\frac{10}{9}} \frac{9}{10}n + (n-\ell)\log_{\frac{10}{9}}\frac{9}{10}n)\\

= (\frac{1}{5})^s+(\frac{1}{5})^s\cdot n\log_{\frac{10}{9}}\frac{9}{10}n \leq(\frac{1}{5})^s\cdot n\log_{\frac{10}{9}} n.
\end{array}\]
So, the success probability is at least $1-(\frac{1}{5})^s\cdot n\log_{\frac{10}{9}} n$ as claimed.

\medskip
\noindent{\bf Correctness: Reverse Direction.} For correctness, first observe that by Steps \ref{step:end1} and \ref{step:end2}, if the algorithm returns ``yes-instance'', then this is correct and so does the representation (because the answer and representation are verified in these steps). The difficult part in the proof is to show the other direction.

\medskip
\noindent{\bf Correctness: Forward Direction.}  In the rest of the proof, we assume that $G$ is a proper interval graph that admits a representation that complies with $<$, and aim to prove that in this case the algorithm returns ``yes-instance'' (in which case, due to our verification, the representation will be correct). To this end, we need to show that there exists some representation for $G$ that complies with $<$ and with our ordering of starting and ending points---then, necessarily, the checks in the last two steps succeed and the algorithm  returns ``yes-instance''. For this, we will have a sequence of claims that will justify the decisions made in the different steps of the algorithm. We begin by justifying Step \ref{step:proceed} with the following claim, whose correctness follows directly from the definition of a representation, and by observing that taking the mirror image of a representation also gives a representation, but which may not comply with $<$.

\begin{claim}\label{claim:proceed}
The algorithm does not answer ``no-instance'' in Step~\ref{step:proceed}, and there exists a representation of $G$ that complies with $<$ and where all intervals of vertices in $L$ begin before all intervals of vertices in $M$ begin, and all intervals of vertices in $M$ begin before all intervals of vertices in $R$ begin.
\end{claim}

Next, we claim that the vertices $a$ and $b$ have leftmost and rightmost intervals among the vertices in $M$, addressing Step \ref{step:identifyAB}.

\begin{claim}\label{claim:identifyAB}
The algorithm does not answer ``no-instance'' in Step~\ref{step:proceed}, and there exists a representation of $G$ that satisfies the conditions in Claim \ref{claim:proceed} and where $a$ (resp.~$b$) is the vertex in $M$ whose interval's starting point is leftmost (resp.~rightmost).
\end{claim}

\begin{figure}
\center
  \fbox{\includegraphics[scale=0.8]{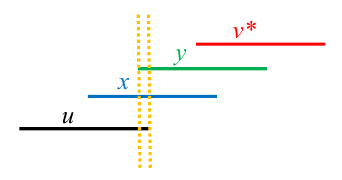}}
  \caption{The sub-interval where $y$ and $y$ intersect is flanked by dashed yellow lines, and it is a sub-interval of $x$.}
  \label{fig:NeighborhoodContainment}
\end{figure}

\begin{proof}
We only consider now representations that satisfy the conditions in Claim \ref{claim:proceed} (and we know that at least one such representation exists). Clearly, given two vertices $x,y$ in $M$ such that the interval of $x$ (in any of the representations we consider) starts before that of $y$, it must be that  {\em (i)} $x$ has at least as many neighbors in $L$ as $y$, {\em (ii)} $y$ has at least as many neighbors as $x$ in $R$, and {\em (iii)} either $x,y$ are incomparable or $x<y$. Moreover, vertices in $M$ that have the same number of neighbors in $L$ (resp.~$R$) have also the exact same neighbors in $L$ (resp.~$R$)---to see this, suppose that the interval of $y$ (resp.~$x$) starts before that of $x$ (resp.~$y$), let $u$ be a neighbor of $y$ (resp.~$x$)  in $L$, and notice that every intersection point of the intervals of $y$ (resp.~$x$) and $u$ also belongs to the interval of $x$ (resp.~$y$); see Fig.~\ref{fig:NeighborhoodContainment} (the case of $R$ is symmetric). We also note that no vertex can have neighbors in both $L$ and $R$.

We proceed to present arguments that justify the choices we make based on $d_M(\cdot)$. For this purpose, let $a'$ and $b'$ be the vertices in $M$ having the leftmost and rightmost intervals in some representation that does not assign the same interval to two different vertices. Then, notice that $V(M)\subseteq N[a']\cup N[b']$ (see Fig.~\ref{fig:DegreeFunc}).  Moreover, notice that for every $u\in N(a')\cap V(M)$, we have that $N[a']\cap V(M)\subseteq N[u]\cap V(M)$---indeed, the interval of every vertex in $N(a')\cap V(M)$ must intersect the (open) right endpoint of the interval of $a'$, and the interval of $u$ intersects it as well, and hence the intervals of the neighbor and of $u$ intersect each other (see Fig.~\ref{fig:DegreeFunc}). Symmetrically, for every $u\in N(b')\cap V(M)$, we have that $N(b')\cap V(M)\subseteq N(u)\cap V(M)$. In particular, this means that when we look at vertices from left to right according to the starting point of their intervals, $d_M(\cdot)$ is non-decreasing, and after reaching some peak, becomes non-increasing (see Fig.~\ref{fig:DegreeFunc}). Furthermore, two vertices that have minimum $d_M(\cdot)$ and are neighbors themselves, have the exact same neighbors in $M$.

\begin{figure}
\center
  \fbox{\includegraphics[scale=0.8]{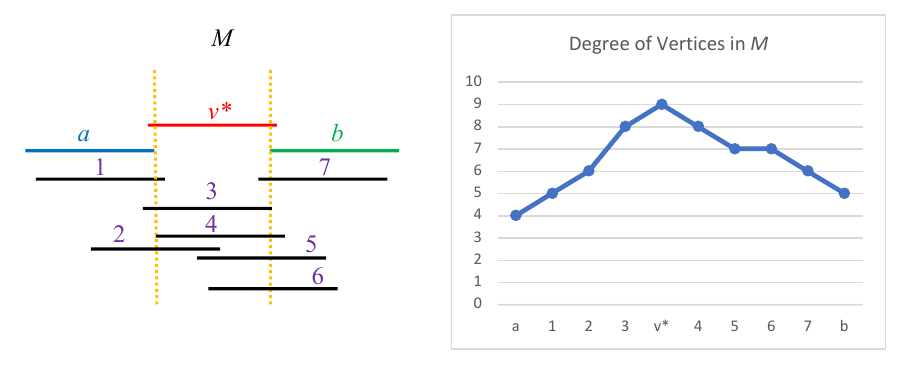}}
  \caption{Notice that $V(M)\subseteq N[a]\cup N[b]$, and for any $u\in N(a)=\{1,2,3,v^\star\}$, $N[a]\subseteq N[u]$. Additionally, the degree of each vertex, ordered from left to right, is shown on the right.}
  \label{fig:DegreeFunc}
\end{figure}

Overall, the above arguments justify our choice of $a$ and $b$ (i.e., that there exists at least one representation among those we consider that makes the same choice). For the sake of clarity, let us show this explicitly for the case where $V(L)\neq\emptyset$ (the arguments for the other two cases follows the same lines). As argued above, the leftmost vertex must have, among the vertices in $M$, maximum $d_L(\cdot)$ (which is greater than $1$) and, simultaneously, be smallest by $<$. Further, among these vertices with maximum $d_L(\cdot)$ (which, therefore, are neighbors of each other and have the same neighborhood in $L$) and that are smallest by $<$, it must have minimum $d_M(\cdot)$. All the vertices that can be chosen in this manner also have the same neighborhood also in $M$ (and no neighbors in $R$), and hence the choice among them is arbitrary. This justifies the choice of $a$. Now, in the easier case where $M\subseteq N[a]$, then $G[M]$ is in fact a clique. So, among all vertoces with largest $d_R(\cdot)$ (all of which have the same neighborhood in $R$) and which is largest by $<$, we can pick one as $b$ arbitrarily. In case $M\setminus N[a]\neq\emptyset$, then $G[M]$ is not a clique, and, in particular, the leftmost vertex is not a neighbor of $a$, yet it has minimum $d_M(\cdot)$ among all non-neighbors of $a$ (and all of the vertices that satisfy this condition have the same neighborhood in $M$). So, among the non-neighbors of $a$ vertices with maximum $d_R(\cdot)$ and that are largest by $<$, we can choose $b$ as a vertex with minimum $d_M(\cdot)$ (we remark that $b$ might not have minimum $d_M(\cdot)$ among all vertices in $M$, but it should have it among all the non-neighbors of $a$). Notice that is some choice is not possible, then we have a ``no-instance''.
\end{proof}

For the recursive calls in Step \ref{step:recursiveCalls}  on $L$ and $R$, we have the following claim.

\begin{claim}\label{claim:recursiveCalls}
The algorithm does not answer ``no-instance'' in Step~\ref{step:recursiveCalls}. Moreover, for {\em any} pair of representations $g_L$ of $L$ and $g_R$ of $R$ that comply with $<_L$ and $<_R$, respectively, there exists a representation of $G$ that satisfies the conditions in Claim \ref{claim:identifyAB} and that the restrictions of its ordering of starting and ending points to $L$ and $R$ is the same as $g_L$ and $g_R$, respectively.
\end{claim}

\begin{proof}
To prove the first part of the claim, let $f$ be a representation of $G$ that complies with $<$. Let $f_L$ and $f_R$ be the restrictions of $f$ to $L$ and $R$, respectively. Then, $f_L$ and $f_R$ are clearly representations of $L$ and $R$ that comply with the restrictions of $<$ to $V(L)$ and $V(R)$, respectively. Further, for any $u,v\in V(L)$ such that $d_L(u)<d_L(v)$, it must be that the starting point of $u$ is to the left of that of $v$---indeed, this follows from the observation that for any two vertices $x,y\in V(L)$ such that $f(x)$ starts before $f(y)$, every neighbor of $x$ in $M$ is also a neighbor of $y$ in $M$. Symmetrically, for any $u,v\in V(R)$ such that $d_L(u)>d_L(v)$, it must be that the starting point of $u$ is to the left of that of $v$. So, $f_L$ and $f_R$ comply with $<_L$ and $<_R$, respectively, which means that the recursive calls are made with yes-instances. Therefore, the algorithm does not answer ``no-instance'' in Step~\ref{step:recursiveCalls}.

To prove the second part of the claim, it suffices to show that there exists a representation $f$ of $G$ that satisfies the conditions in Claim \ref{claim:identifyAB}, and whose restriction to $L$ (resp.~$R$) is $g_L$ (resp.~$g_R$). We will only show a modification to ensure restriction to $L$ is $g_L$, since the modification for $R$ is symmetric (and can be done after the the property for $L$ is derived without altering it) Towards this, let $h$ be a representation of $G$ that satisfies the conditions in Claim \ref{claim:identifyAB}. Then, for some implicit $f$ that is meant to be a representation (which, until proved below, maybe incorrect), we define an ordering of the starting and ending points of the intervals of vertices as follows. The internal ordering between endpoints corresponding to vertices in $V(M)\cup V(R)$ is the same as in $f$, and the ordering between endpoints corresponding to vertices in $V(L)$ is the same as in $g_L$. All starting points corresponding to vertices in $V(L)$ appear before all those of vertices in $V(M)$, and all ending points corresponding to vertices in $V(L)$ appear before all those of vertices in $V(M)$ and before all starting points corresponding to  vertices in $V(R)$.
It remains to specify the relation between the ending points corresponding to $V(L)$ and the starting points corresponding to $V(M)$. To this end, let $v_1,v_2,\ldots,v_\ell$ where $\ell=|V(L)|$ be the ordering of the vertices in $L$ by the starting points (from left to right) of the intervals assigned to them by $g_L$. Let $u_1,u_2,\ldots,u_\ell$ be defined in the same way but with respect to $h$.
Additionally, let $w_1,w_2,\ldots,w_m$ where $m=|V(M)|$ be the ordering of the vertices in $M$ by the starting points (from left to right) of the intervals assigned to them by $h$. Then, for any $i\in\{1,2,\ldots,\ell\}$:
\begin{itemize}
\item If $d_M(v_i)=0$, then place the ending point of $f(v_i)$ before the starting points of all intervals (assigned by $f$) of vertices in $M$.
\item If $d_M(v_i)=|V(M)|$, then place the ending point of $f(v_i)$ after the starting points of all intervals (assigned by $f$) of vertices in $M$.
\item Else, place the ending point of $f(v_i)$ between the starting points of $f(w_{d_M(v_i)})$ and $f(w_{d_M(v_{i+1})})$.
\end{itemize}
We need to argue that no conflicts arose, that is, that we do not have two vertices $v_i,v_j\in V(L)$ such that the ordering of their ending points in $g_L$ is different than the ordering implied when we placed them between starting points corresponding to vertices in $M$. Suppose, by contradiction, that such a conflict arose. Then, there exist $v_i,v_j\in V(L)$ such that there exists $w_t\in V(M)$ so that we have ordered $\mathsf{end}_{f}(v_j)<\mathsf{start}_{f}(w_t)<\mathsf{end}_{f}(v_i)$, but $i<j$ (i.e., $\mathsf{end}_{g_L}(v_i)<\mathsf{end}_{g_L}(v_j)$). Having $\mathsf{end}_{f}(v_j)<\mathsf{start}_{f}(w_t)<\mathsf{end}_{f}(v_i)$ means that $d_M(v_j)<d_M(v_i)$, which, by the definition of $<_L$, implies that $v_j<_L v_i$. But then, it cannot be that $\mathsf{end}_{g_L}(v_i)<\mathsf{end}_{g_L}(v_j)$, because this contradicts the demand that $g_L$ complies with $<_L$. So, we have reached a contradiction, which means that no conflict arose.

\begin{figure}
\center
  \fbox{\includegraphics[scale=0.8]{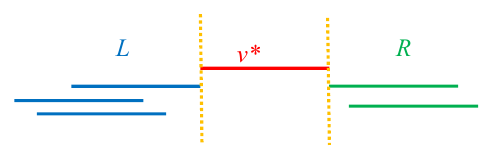}}
  \caption{Intervals of vertices in $L$ cannot intersect intervals of vertices in $R$. (Note that the intervals are open.)}
  \label{fig:NoIntersectionLRFIG}
\end{figure}

Clearly, if $f$ that corresponds to our ordering is a representation of $G$, then it should be clear that $f$ satisfies the conditions in Claim \ref{claim:identifyAB} and that the restriction of its ordering of starting and ending points to $L$ is the same as $g_L$ and $g_R$. So, it remains to prove that $f$ is a representation. Directly from the definition of the ordering for $f$, for every pair of intervals such that one belongs to a vertex in $L$ and the other to a vertex in $R$, their intervals do not intersect and indeed they are not adjacent (see Fig.~\ref{fig:NoIntersectionLRFIG}). Further, because $g_L$ and $h$ are representation, directly from the definition of the ordering for $f$, every two vertices that are both in $V(L)$ or both in $V(M)\cup V(R)$, their intervals (as assigned by $f$) intersect if and only if they are adjacent. It remains to consider the case where we have a vertex $v_i\in V(L)$ and a vertex $w_j\in V(M)$. Then, we need to show that $\mathsf{start}_f(w_j)<\mathsf{end}_f(v_i)$ if and only if $v_i$ and $w_j$ are adjacent in $G$.

To prove the above, let $t\in\{1,2,\ldots,\ell\}$ be such that $v_i=u_t$. We will claim that the placements of the ending point $u_t$ by $f$ and by $h$ among the starting points intervals of vertices in $M$ (as assigned by $f$ or $h$---the ordering of these starting points by $f$ and $h$ is the same) are the same. Given that this claim is correct, then because $h$ is a representation, we can conclude that $\mathsf{start}_f(w_j)<\mathsf{end}_f(v_i)$ if and only if $v_i$ and $w_j$ are adjacent in $G$. So, it remains to prove this claim. This follows from the observation that if $h(u_t)$ intersects $h(w_q)$ for some $w_q\in\{1,2,\ldots,m\}$, then $h(u_t)$ also intersect $h(w_p)$ for all $p\in\{1,2,\ldots,m\}$. Hence, because $u_t$ intersects exactly $d_M(u_t)$ vertices in $M$, the ending point of $h(u_t)$ must be placed within the starting points of the intervals of vertices in $M$ (as assigned by $h$)  exactly as we have placed the ending point of $f(v_i)$ within them. This completes the proof.
\end{proof}

Now, we consider the correctness of the ordering of starting points as defined in Step \ref{step:orderStart}.

\begin{claim}\label{claim:orderStart}
The algorithm does not answer ``no-instance'' in Step~\ref{step:orderStart}, and there exists a representation of $G$ that satisfies the conditions in Claim \ref{claim:recursiveCalls} and where the starting points of the intervals of all vertices are ordered in the way defined in Step \ref{step:orderStart}.
\end{claim}

\begin{proof}
Consider all represetations of $G$ that satisfy the conditions in Claim \ref{claim:identifyAB} (where we know that at least one such representation exists). So, by the condition in Claim~\ref{claim:proceed}, all intervals of vertices in $L$ begin before all intervals of vertices in $M$ begin, and all intervals of vertices in $M$ begin before all intervals of vertices in $R$ begin. With respect to the internal ordering of the starting points of the intervals of vertices in $V(L)$ and $V(R)$, correctness is guaranteed by the condition in Lemma \ref{claim:recursiveCalls}. Lastly, by following the same lines of  arguments as those given in the proof of  Claim~\ref{claim:identifyAB}, we conclude that among the representation we consider, at least one has the internal ordering of starting points of the intervals of vertices in $V(M)$ as the one defined in Step \ref{step:orderStart}.
\end{proof}

Next, we consider the correctness of the ordering of ending points for vertices in $L$ and $R$ as defined in Step \ref{step:orderEndLR}. With respect to vertices in $L$ with no neighbors in $M$, and vertices in $R$, correctness is immediate. With respect to vertices in $L$ with neighbors in $M$, clearly the endpoints should occur after the first starting point of intervals of vertices in $M$ and before all starting points of intervals of vertices in $R$ (because no vertex in $L$ can be a neighbor of a vertex in $R$; see Fig.~\ref{fig:NoIntersectionLRFIG}). Moreover, in any representation, if a vertex in $L$ has $d$ neighbors in $M$, then this $d$ must be the $d$ vertices in $M$ whose starting points are leftmost in that representation. So, this justifies our placement of all ending points of vertices in $L$ and $R$ within the starting points.

\begin{claim}\label{claim:orderEndLR}
There exists a representation of $G$ that satisfies the conditions in Claim \ref{claim:orderStart} and where the ending points of the intervals of all vertices in $V(L)\cup V(R)$ are ordered in the way defined in Step \ref{step:orderEndLR}.
\end{claim}

We proceed to consider the correctness of the ordering of ending points for vertices in $M$ as defined in Step \ref{step:orderEndM}. For vertices in $M$ that have neighbors in $R$ (i.e., $d_R(\cdot)>0$), the correctness follows the symmetric arguments as for the vertices in $L$ that have neighbors in $M$ in the previous claim. For vertices $u$ in $M$ with no neighbors in $R$, in case their intervals start after that of $v^\star$, then they must end after the ending point of the interval of $v^\star$ which is, in turn, after the starting points of all vertices in $M$. Because such $u$ have no neighbors in $R$, the ending points of their intervals must also be before the starting points of all intervals in $R$. Now, in case their intervals start before the interval of $v^\star$, then they must too intersect $a$, and so they intersect the first $d_M(u)$ vertices in $M$, which justifies the placement of the ends of their intervals.

\begin{claim}\label{claim:orderEndM}
There exists a representation of $G$ that satisfies the conditions in Claim \ref{claim:orderEndLR} and where the ending points of the intervals of all vertices in $V(M)$ are ordered in the way defined in Step \ref{step:orderEndM}.
\end{claim}

We proceed to consider the correctness of the ``internal ordering'' of ending points as defined in Step \ref{step:orderEndInternal}. This part is clear because if the starting point of the interval of one vertex is more to the left than another, so is the relation between their ending points (since all intervals are of the same unit length).

\begin{claim}\label{claim:orderEndInternal}
There exists a representation of $G$ that satisfies the conditions in Claim \ref{claim:orderEndM} and where the ending points of the intervals of all vertices are ordered in the way defined in Step \ref{step:orderEndInternal}.
\end{claim}

Overall, when we reach Claim \ref{claim:orderEndInternal}, we have, in fact, defined a strict ordering of the starting and ending points of the intervals of all vertices. As we argued before this sequence of claims, this completes the proof.
\end{proof}

We are now ready to conclude the proof of Lemma \ref{lem:annotatedReconstruction}, which, as argued earlier, implies Lemma \ref{lem:pivReconstruction}.

\begin{proof}[Proof of Lemma \ref{lem:annotatedReconstruction}]
Let $s\in\mathbb{N}$. We claim that for any $n\in\mathbb{N}_0$, {\sc Annotated Proper Interval Reconstruction} admits a $5\cdot s\cdot \log_{\frac{10}{9}} n$-pass algorithm on $n$-vertex graphs with success probability at least $1-(\frac{1}{5})^s\cdot n\log_{\frac{10}{9}} n$. This follows by induction on $n$, where at the basis $n$ is a constant (or just $0$) and hence the claim is trivial, and the step follows from Lemma \ref{lem:annotatedReconstructionStep}.

Now, we pick $s=\log_5(n^2)$. Then, the number of passes is $5\cdot \log_5(n^2)\cdot \log_{\frac{10}{9}} n=\OO(\log^2 n)$, and the success probability is  $1-(\frac{1}{5})^{\log_5(n^2)}\cdot n\log_{\frac{10}{9}} n = 1-\frac{1}{n^2}\cdot n\log_{\frac{10}{9}} n>\frac{9}{10}$ assuming that $n$ is larger than some fixed constant.
This completes the proof.
\end{proof}

\subsubsection{Post-processing Algorithm}\label{sec:ivdPost}

The purpose of this section is to prove the following lemma.
\begin{lemma}\label{lem:pivpost-processing}
{\sc Proper Interval Vertex Deletion} admits an $\widetilde{\OO}(n)$-space $2^{\OO(k)}\cdot n^{\OO(1)}$-time (static) algorithm.
\end{lemma}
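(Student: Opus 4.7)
The plan is to implement the $2^{\OO(k)} \cdot n^{\OO(1)}$-time branching algorithm of van~'t~Hof and Villanger~\cite{HofV13} for {\sc Proper Interval Vertex Deletion} using only $\widetilde{\OO}(n)$ working memory. Recall that proper interval graphs are characterised by the forbidden induced subgraphs claw, net, tent, and $C_\ell$ for every $\ell \geq 4$, and that van~'t~Hof and Villanger's procedure has two phases: (i) a bounded-search-tree phase that, while the current residual graph $G-S$ contains any of the constant-size forbidden obstructions (claws, nets, tents, $C_4$, $C_5$), branches on which of its $O(1)$ vertices to add to the partial solution $S$; and (ii) a ``long-hole'' phase which, after phase~(i) terminates, uses a structural analysis of induced cycles $C_\ell$ with $\ell \geq 6$ in claw/net/tent/$C_4$/$C_5$-free graphs to identify, in polynomial time, a set of $O(1)$ candidate vertices at least one of which must lie in any proper-interval-deletion set.

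The first thing I would do is maintain the partial solution $S$ explicitly in $O(k \log n)$ bits and access $G-S$ through edge queries to $G$ (which is available via random access to the input, or via the succinct representation supplied by Theorem~\ref{lem:reductionLemma}) combined with membership queries in $S$. Phase~(i) then consists of iterating over all tuples of at most $d_0 = O(1)$ vertices, testing whether any of them induces one of the five small forbidden subgraphs, and branching on the resulting obstruction; each invocation uses only $O(\log n)$ auxiliary space, while the depth-first recursion stack is only $O(k \log n)$ deep and the tree has at most $d_0^{k} = 2^{\OO(k)}$ leaves. Phase~(ii) is implemented similarly, but each invocation first has to locate a long induced cycle and the associated structural objects from~\cite{HofV13}; I would show that each of these can be computed by a constant number of BFS/DFS traversals of $G-S$, each of which needs only $\widetilde{\OO}(n)$ space to store a BFS forest, distance labels, and a queue of active vertices.

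The main obstacle is the bookkeeping in phase~(ii): van~'t~Hof and Villanger's analysis of long holes implicitly manipulates auxiliary structures (chordal completions, minimal separators, ``umbrella'' orderings of consecutive neighbourhoods) whose naive materialisation would cost $\Omega(n^2)$ space. I would go through their algorithm module by module and recast every such computation as a short sequence of $\widetilde{\OO}(n)$-space traversals or bounded-arity tuple enumerations over $G-S$, never storing any dense auxiliary graph in its entirety; a separator, for instance, can be represented by a single Boolean labelling of $V(G)$, and an umbrella ordering by a single permutation on $V(G)$, both of which fit in $\widetilde{\OO}(n)$ bits. Because the branching tree is traversed in a depth-first manner and the work at each node uses $\widetilde{\OO}(n)$ space that is released on backtracking, the total working memory stays within $\widetilde{\OO}(n)$ while the running time remains $2^{\OO(k)} \cdot n^{\OO(1)}$, yielding the desired lemma.
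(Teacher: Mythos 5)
Your high-level decomposition (first branch on constant-size obstructions, then handle the residual structured graph) matches the paper's, but the way you handle the second phase leaves a genuine gap and also differs from how the paper actually closes it.

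The gap is exactly where you flag the "main obstacle": you assert that van~'t~Hof--Villanger's treatment of the residual graph can be implemented in $\widetilde{\OO}(n)$ space by recasting separator/chordal-completion/umbrella computations as low-space traversals, but you defer this to an unverified module-by-module plan. That is precisely the non-trivial content of the lemma, so nothing has been proved. Your account of phase~(ii) is also not quite what \cite{HofV13} do: after hitting claws, nets, tents, $C_4,\dots,C_7$ by branching, the residual graph is a proper circular-arc graph, and their result characterises a \emph{minimal solution} as $\{u : \mathsf{end}_f(v) \in f(u)\}$ for some vertex $v$ and a circular-arc representation $f$ — an enumeration over $O(n)$ candidate \emph{sets}, not a branching step over "$O(1)$ candidate vertices". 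The concrete hurdle is therefore computing (or circumventing) the circular-arc representation in $\widetilde{\OO}(n)$ space, and you never establish that.

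The paper takes a different and cleaner route that avoids this hurdle entirely. After the same small-obstruction branching (including $C_6$ and $C_7$), it proves a representation-free characterisation: if the residual proper circular-arc graph has an independent set of size at least $4$, any minimal deletion set equals $W_{v,x} = \{u\in N[v]\cap N[x]: N[u]\neq N[v],\, N[u]\subseteq N[v]\cup N[x]\}$ for some pair of vertices $v,x$ (and the small-independence case is shown to already be a proper interval graph). The post-processing then simply tries all $O(n^2)$ pairs $(v,x)$, checks $|W_{v,x}|\leq k$, and tests whether $G-W_{v,x}$ is a proper interval graph using the low-space reconstruction routine — all in $\widetilde{\OO}(n)$ space and $n^{\OO(1)}$ time per branching leaf, with $7^k$ leaves. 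Your approach could plausibly be salvaged by verifying in detail that a proper circular-arc representation can be built in $\widetilde{\OO}(n)$ space (the paper notes this appears to hold for the algorithm of Deng et al.\ but does not rely on it), but as written your proposal does not carry out that verification and instead substitutes a plan for a proof.
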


\begin{figure}
\center
  \fbox{\includegraphics[scale=0.5]{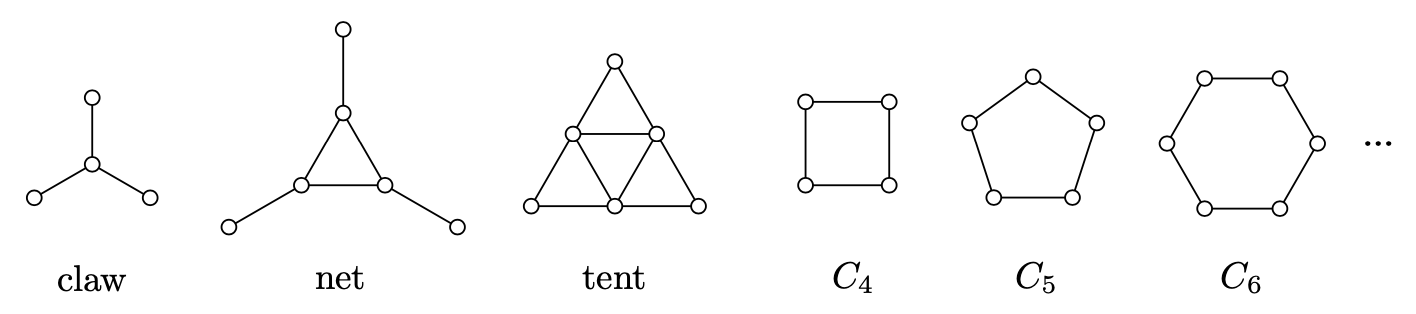}}
  \caption{The (infinite) family of forbidden induced subgraphs of proper interval graphs. The figure was taken from \cite{HofV13}.}
  \label{fig:ObstructionsProperInterval}
\end{figure}

Towards this proof, we need to recall the alternative characterization of proper interval graphs in terms of forbidden induced subgraphs. For this purpose, we remind that for any integer $i\geq 3$, $C_i$ in the induced (i.e.~chordless) cycle on $i$ vertices, and it is termed a {\em hole}. A {\em claw, net} and {\em tent} are specific graphs  on four, six and six, respectively, vertices, as illustrated in Fig.~\ref{fig:ObstructionsProperInterval}.

\begin{proposition}[\cite{wegner1967eigenschaften, brandstadt1999graph}]\label{prop:pivChar}
A graph $G$ is a proper interval graph if and only if is admits no claw, net, tent and hole.
\end{proposition}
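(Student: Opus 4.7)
The plan is to take the existing $2^{\OO(k)} n^{\OO(1)}$-time algorithm of van~'t~Hof and Villanger~\cite{HofV13} for \probVDHp{Proper Interval Graphs} and re-implement it so that the working memory during a static execution stays within $\widetilde{\OO}(n)$, granted that the input graph $G$ is accessed through a succinct representation supporting edge queries in $\widetilde{\OO}(n)$ space (which is consistent with the hypothesis needed when this subroutine is plugged into Theorem~\ref{lem:reductionLemma}). The forbidden induced subgraph characterization from Proposition~\ref{prop:pivChar}---claws, nets, tents, and all holes $C_\ell$ for $\ell \geq 4$---drives the algorithm.

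First, I would recall the two-phase structure of the van~'t~Hof--Villanger algorithm: \emph{(Phase~1)} branch on small forbidden induced subgraphs (claws, nets, tents, and short holes $C_4, C_5, C_6, C_7$), each of which has at most constantly many vertices, producing at most $c^k$ leaves for some constant $c$; \emph{(Phase~2)} once the graph is free of such small obstructions, exploit the fact that any remaining obstruction is a long hole, and resolve these using the structural argument in \cite{HofV13}, which branches on a bounded number of candidate vertices near the boundary of a maximal chordal-like region. Each leaf of the branching tree corresponds to a fixed set $D$ of at most $k$ deleted vertices; correctness is then witnessed by verifying that $G - D$ admits a representation, using our own reconstruction subroutine from Lemma~\ref{lem:pivReconstruction} (or its static analog) as a black-box check.

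Second, I would argue that every step of this recursion can be carried out in $\widetilde{\OO}(n)$ space. Detection of a small forbidden induced subgraph amounts to iterating over all $\binom{n}{\leq 7}$ constant-size subsets and deciding via $\OO(1)$ edge queries whether they form a claw/net/tent/short hole; this uses $n^{\OO(1)}$ time but only $\widetilde{\OO}(\log n)$ working memory beyond the succinct representation. Branching maintains only the current deletion set $D \subseteq V(G)$ of size at most $k$, and since we process branches sequentially in a DFS fashion, the recursion stack at any moment holds only $\OO(k)$ nested frames, each of size $\widetilde{\OO}(k)$, giving $\widetilde{\OO}(k^2)$ auxiliary space. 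Finding long holes or the canonical substructures used in Phase~2 can be done by computing BFS layers or shortest induced paths between selected vertex pairs, which can be simulated on the succinct representation in $\widetilde{\OO}(n)$ space.

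The main obstacle I expect is faithfully re-expressing Phase~2 of van~'t~Hof--Villanger in this space-restricted model. Their original write-up implicitly assumes random access to adjacency lists, and the canonical substructures (such as shortest holes through a fixed vertex or the associated ``umbrella'' decomposition) are typically computed by procedures that store intermediate BFS trees and markers on every vertex. The plan is to show that each such procedure needs only $\widetilde{\OO}(n)$ scratch space---effectively one label per vertex---because at any point in the algorithm, the data structures of interest (a BFS tree, a candidate hole, a list of up to $\OO(1)$ branching vertices) admit a concise $\widetilde{\OO}(n)$ encoding. Combined with the fact that the branching factor and depth of the overall recursion are unaffected, this yields a $2^{\OO(k)} n^{\OO(1)}$-time, $\widetilde{\OO}(n)$-space static algorithm, completing the proof of Lemma~\ref{lem:pivpost-processing}.
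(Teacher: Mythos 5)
Your proposal does not address the statement it is supposed to prove. Proposition~\ref{prop:pivChar} is a purely graph-theoretic assertion---Wegner's characterization that $G$ is a proper interval graph if and only if it contains no claw, net, tent or hole as an induced subgraph---and the paper itself offers no proof of it, importing it from \cite{wegner1967eigenschaften,brandstadt1999graph}. What you have written is instead a plan for re-implementing the van~'t~Hof--Villanger branching algorithm \cite{HofV13} within $\widetilde{\OO}(n)$ space, i.e., an argument aimed at Lemma~\ref{lem:pivpost-processing}. Moreover, your sketch explicitly invokes Proposition~\ref{prop:pivChar} as the driver of the branching (``the forbidden induced subgraph characterization \ldots drives the algorithm''), so with respect to the target statement the argument is circular: nothing in it establishes either direction of the equivalence. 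A genuine proof would need to show (i) that claws, nets, tents and holes are not proper interval graphs, so by heredity none can occur as an induced subgraph of one, and (ii) the substantial converse, that every graph excluding all of these admits a proper (equivalently unit) interval representation---a structural argument, e.g., via chordality together with claw-freeness and asteroidal-triple-freeness, or via the existence of an umbrella-free vertex ordering---none of which appears in your text.

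Even read charitably as an attempt at Lemma~\ref{lem:pivpost-processing}, your route diverges from the paper's and leaves its hardest step unproved. The paper does branch on the small obstructions (claw, net, tent, $C_4$--$C_7$) as you do, but it then avoids re-running van~'t~Hof--Villanger's second phase altogether: using Propositions~\ref{prop:reduceToCirc} and~\ref{prop:charSolOnCirc} it shows that on the resulting proper circular-arc graph every minimal solution has the explicit form $W_{v,x}$ for some pair of vertices $v,x$ (Lemmas~\ref{lem:charSolOnCircRef} and~\ref{lem:alternativeCircArc}), so it suffices to enumerate the $\OO(n^2)$ candidate sets and test each with the reconstruction algorithm of Lemma~\ref{lem:pivReconstruction}, all in $\widetilde{\OO}(n)$ space. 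Your plan instead promises to simulate the original Phase~2 machinery (BFS trees, ``umbrella'' decompositions) in $\widetilde{\OO}(n)$ space, but this is exactly the point you yourself flag as the main obstacle and for which no argument is supplied; as it stands, neither the stated proposition nor the post-processing lemma is established by your write-up.
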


We also need the definition of a proper circular-arc graph (see Fig.~\ref{fig:ProperCircularArc}).

\begin{definition}
A graph $G$ is a {\em proper circular-arc} if there exist a circle $C$ on the Euclidean  plane and a function $f$ that assigns to each vertex in $G$ an open interval of  $C$ such that no interval properly contains another, and every two vertices in $G$ are adjacent if and only if their intervals intersect. Such a function $f$ is called a {\em representation}, and given a vertex $v\in V(G)$, we let $\mathsf{begin}_f(v)$ and $\mathsf{end}_f(v)$ denote the beginning and end (in clockwise order) of the interval assigned to $v$ by $f$ (as if it was closed).
\end{definition}

We note the the class of proper circular-arc graphs is not equivalent to the class of unit circular-arc graphs.

\begin{figure}
\center
  \fbox{\includegraphics[scale=0.8]{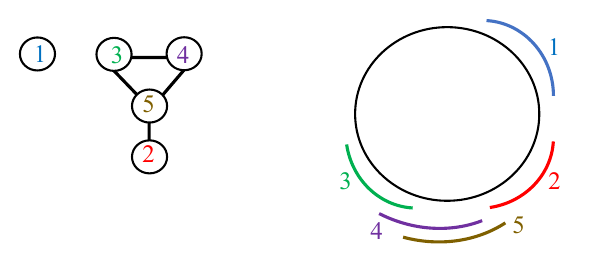}}
  \caption{A proper circular-arc graph.}
  \label{fig:ProperCircularArc}
\end{figure}

Notice that any $C_i$ (hole on $i$ vertices) is a proper circular-arc graph, and hence the class of proper arc-circular graphs is a strict superclass of the class of proper interval graphs. Our proof will build upon two results by van 't Hof and Villanger~\cite{HofV13}:

\begin{proposition}[\cite{HofV13}]\label{prop:reduceToCirc}
Let $G$ be a graph that that admits no claw, net, tent, $C_4$, $C_5$ and $C_6$ as induced subgraphs. Then, $G$ is a proper circular-arc graph.
\end{proposition}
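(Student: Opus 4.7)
The plan is to split into two cases based on whether $G$ contains a hole, i.e., an induced cycle $C_i$ for some $i \geq 4$, and to handle the nontrivial case by bootstrapping a circular-arc representation from a long induced cycle.

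\textbf{Case 1: $G$ is hole-free.} In this case $G$ contains no induced $C_i$ for any $i \geq 4$. Combined with the hypothesis that $G$ is $\{\text{claw, net, tent}\}$-free, Proposition~\ref{prop:pivChar} directly yields that $G$ is a proper interval graph. Every proper interval graph is trivially a proper circular-arc graph: take any unit interval representation on the real line, translate it so that all intervals lie within an arc of length strictly less than the circumference of a large enough circle, and read the intervals as arcs. So we are done.

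\textbf{Case 2: $G$ contains a hole.} Since $\{C_4, C_5, C_6\}$ are forbidden, any induced cycle has length at least $7$. Fix an induced cycle $C = v_0 v_1 \cdots v_{\ell-1}$ with $\ell \geq 7$. The plan is to place an open arc $A_{v_i}$ on a circle $\mathcal{C}$ so that the arcs $A_{v_0},\ldots,A_{v_{\ell-1}}$ are arranged cyclically, each overlapping exactly its two cyclic neighbors and no others, which faithfully represents $G[V(C)]$ as a proper circular-arc graph. We then extend the representation to the remaining vertices. The key structural claim is that for any vertex $u \notin V(C)$, the set $N(u) \cap V(C)$ forms a (possibly empty) consecutive arc along $C$ of bounded length; granting this, we can insert an arc $A_u$ on $\mathcal{C}$ covering exactly the portion of $\mathcal{C}$ occupied by the arcs of $N(u)\cap V(C)$, with fine-tuned endpoints so that $A_u$ properly contains no other arc and is properly contained in none. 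Vertices further away from $C$ are handled by iterating this argument on the (now smaller) subgraph after identifying $u$ with its arc placement, or by a short inductive argument on distance from $C$.

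\textbf{Main obstacle and how to address it.} The heart of the argument is the structural claim about $N(u) \cap V(C)$ for $u \notin V(C)$. One shows via case analysis that if $N(u)\cap V(C)$ contained two vertices $v_i, v_j$ that are not cyclically adjacent on $C$ and the ``gaps'' between them on both sides were nontrivial, then the shorter arcs of $C$ between $v_i$ and $v_j$, together with $u$, would produce either a chord-less short cycle (giving $C_4$, $C_5$, or $C_6$, depending on the gap lengths), or, when combined with an appropriate non-neighbor of $u$, would embed a claw, net, or tent. A careful enumeration of the possible neighborhood patterns on $C$, ruling each out by exhibiting one of the six forbidden induced subgraphs, forces $N(u)\cap V(C)$ to be a consecutive arc, and moreover forces its length to be small enough that the arc $A_u$ cannot properly contain any $A_{v_i}$ nor vice versa. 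The final step is to verify that the extension procedure preserves the proper-arc property globally; this follows because any would-be containment or spurious intersection between newly placed arcs would, by the same enumeration, witness a forbidden induced subgraph in $G$.
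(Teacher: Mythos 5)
The paper does not actually prove this proposition; it is imported verbatim from van 't Hof and Villanger \cite{HofV13}, so there is no in-paper proof to compare against. The question is therefore whether your sketch would stand on its own.

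Your Case 1 is fine. But Case 2 has genuine gaps. The most basic one: you silently assume that the entire graph is organized around a single hole $C$, which requires $G$ to be connected. The statement, read literally, is false without a connectedness hypothesis. For example, the disjoint union of $C_7$ with an isolated vertex (or with another $C_7$) contains no claw, net, tent, $C_4$, $C_5$, or $C_6$, yet it is not a proper circular-arc graph: any circular-arc model of a circular-arc graph that is not an interval graph has arcs covering the whole circle, leaving no room to place the arc of a vertex in a different component. The original result in \cite{HofV13} is stated for \emph{connected} graphs; that hypothesis seems to have been dropped in the restatement here, and your proof does not notice or compensate for it. Before fixing a hole and trying to extend, you would need to justify that you may assume $G$ is connected.

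Second, even granting connectedness, the entire technical content of the theorem is deferred in your sketch. The claims that $N(u)\cap V(C)$ is consecutive, that it is short enough that $A_u$ need not wrap the circle and avoids all containments, that the hole $C$ dominates $G$ (so there are in fact no vertices at distance $\geq 2$ from $C$ — a fact you do not even state, yet your plan quietly relies on something like it and it requires its own claw/net argument), and the global consistency claim that ``any would-be containment or spurious intersection between newly placed arcs would \ldots witness a forbidden induced subgraph'' are all asserted without proof. These are not details; they are the theorem. As a plan of attack your outline is reasonable, but it is a plan, not a proof, and the actual argument in \cite{HofV13} is considerably more involved than what you have written.
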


\begin{proposition}[\cite{HofV13}]\label{prop:charSolOnCirc}
Let $G$ be a proper circular-arc graph with representation $f$. Then, any minimal subset $U\subseteq V(G)$ such that $G-U$ is a proper interval graph satisfies that there exists $v\in V(G)$ such that $U=\{u\in V(G): \mathsf{end}_f(v)\in f(u)\}$.\footnote{Note that $v\notin U$ (as $f(v)$ is an open interval).} Moreover, for every $v\in V(G)$, we have that $U=\{u\in V(G): \mathsf{end}_f(v)\in f(u)\}$ satisfies that $G-U$ is a proper interval graph.
\end{proposition}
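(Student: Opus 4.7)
My plan is to prove the two halves of the proposition separately. I will handle the ``moreover'' direction first, which is the direct one, and then prove the forward direction via a structural lemma forcing the existence of a cut point of the desired form.

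\emph{Moreover direction.} Fix $v \in V(G)$, set $p := \mathsf{end}_f(v)$, and let $U := \{u \in V(G) : p \in f(u)\}$. Since $f(v)$ is an open arc ending at $p$, we have $v \notin U$. For every $u \in V(G) \setminus U$, the arc $f(u)$ avoids $p$, so cutting the circle at the single point $p$ unrolls it onto the real line and sends each such $f(u)$ to a bounded open interval. The ``no-proper-containment'' condition on $f$ is preserved by this cut, so $G - U$ inherits a proper interval representation.

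\emph{Forward direction.} Let $U$ be inclusion-minimal with $G - U$ a proper interval graph. The crux is the \emph{Uncovered-Point Lemma}: there exists a point $q$ on the circle avoided by every arc $f(u)$ with $u \in V(G) \setminus U$. Given the lemma, I slide $q$ counter-clockwise until the first endpoint of an arc is reached. Along the way, any $\mathsf{begin}_f(u)$ crossed must satisfy $u \in U$---otherwise $u$ would cover a clockwise neighborhood of its begin point and hence cover $q$, contradicting the uncovered status of $q$---while crossing $\mathsf{end}_f(u)$ with $u \in U$ only adds $u$ to the covering set on the counter-clockwise side, which is irrelevant for $V(G) \setminus U$. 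Thus the sliding terminates at some $p := \mathsf{end}_f(v)$ with $v \in V(G) \setminus U$, and by construction no arc of $V(G) \setminus U$ contains $p$. This gives $\{u : p \in f(u)\} \subseteq U$. For the reverse inclusion, if some $u^\star \in U$ satisfied $p \notin f(u^\star)$, then $\{u : p \in f(u)\} \subseteq U \setminus \{u^\star\}$; the moreover direction just proved makes $\{u : p \in f(u)\}$ a deletion set, and since proper interval graphs form a hereditary class (so supersets of deletion sets are deletion sets), $U \setminus \{u^\star\}$ would also be a deletion set, contradicting the minimality of $U$. Hence $U = \{u : p \in f(u)\}$.

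\emph{Proof sketch of the Uncovered-Point Lemma.} I would argue by contradiction, assuming the arcs of $V(G) \setminus U$ cover the entire circle. Starting from an arbitrary $A_1 \in V(G) \setminus U$, greedily define $A_{i+1}$ as the arc in $V(G) \setminus U$ that contains $\mathsf{end}_f(A_i)$ and maximizes $\mathsf{end}_f(A_{i+1})$ clockwise. The right endpoints strictly advance clockwise, so the sequence closes into a cycle $A_1, \ldots, A_k$ after some $k \geq 2$. The greedy maximality together with the no-proper-containment condition on $f$ implies that if $A_{i+2}$ contained $\mathsf{end}_f(A_i)$ then it would have been preferred to $A_{i+1}$, so $A_i$ and $A_j$ are nonadjacent whenever $|i - j| \bmod k \geq 2$; hence $A_1, \ldots, A_k$ induce a chordless cycle in $G - U$. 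For $k \geq 4$ this is a hole, contradicting Proposition~\ref{prop:pivChar}; $k = 2$ is impossible since two proper arcs cannot cover the circle. The main obstacle is the case $k = 3$, where $A_1, A_2, A_3$ form an induced triangle covering the circle without directly violating Proposition~\ref{prop:pivChar}. Here minimality must be invoked directly: pick any $w \in U$ and show, via a case analysis of the possible obstructions (claw, net, tent, or hole) in $G - (U \setminus \{w\})$ that would involve $w$ together with vertices of $V(G) \setminus U$, that the forced triangular covering by $A_1, A_2, A_3$ prohibits the creation of any such obstruction upon adding $w$ back, thereby contradicting minimality of $U$.
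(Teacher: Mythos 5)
The paper never proves this proposition---it is imported verbatim from \cite{HofV13}---so your argument has to stand entirely on its own. The ``moreover'' direction is correct (cutting the circle at $\mathsf{end}_f(v)$ is a homeomorphism on the arcs that avoid it, preserving intersections and non-containment), and your reduction of the forward direction to the Uncovered-Point Lemma is also sound: sliding to the first end-point of an arc of $W=V(G)\setminus U$ and then combining minimality with heredity does give $U=\{u:\mathsf{end}_f(v)\in f(u)\}$ \emph{once an uncovered point exists}. The genuine gap is the Uncovered-Point Lemma itself, which carries all the content of the forward direction and is false at the level of generality you (and, as transcribed, the proposition) work at. First, your dismissal of $k=2$ is wrong: two open arcs, neither containing the other, can cover the circle, e.g.\ $(0^\circ,216^\circ)$ and $(180^\circ,36^\circ)$. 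Second, the covering cases cannot be argued away via minimality. Take $G$ to be a $C_4$ on $\{a,b,c,d\}$ (edges $ac, cb, bd, da$) together with two universal vertices $x,y$, and the arcs $f(x)=(170^\circ,10^\circ)$, $f(y)=(350^\circ,190^\circ)$, $f(a)=(340^\circ,20^\circ)$, $f(c)=(345^\circ,40^\circ)$, $f(b)=(30^\circ,330^\circ)$, $f(d)=(50^\circ,344^\circ)$. One checks that this is a proper circular-arc model of exactly this graph (no two arcs nested) and that $f(x)\cup f(y)$ is the whole circle. Here $U=\{a\}$ is a minimal deletion set ($G-a$ is $K_5$ minus an edge), it avoids $x$ and $y$, so no point of the circle---in particular no $\mathsf{end}_f(v)$---is uncovered by the arcs of $W$, and consequently no $v$ with $U=\{u:\mathsf{end}_f(v)\in f(u)\}$ exists. (Even more simply, $K_3$ represented by three arcs covering the circle, with $U=\emptyset$, already violates the statement.) In particular your $k=3$ escape plan, ``show that adding $w\in U$ back cannot create an obstruction,'' is hopeless: in the example adding $a$ back does create an obstruction, and minimality yields no contradiction; when $U=\emptyset$ there is not even a $w$ to pick.

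What this reveals is that the statement needs additional hypotheses that your proof neither identifies nor uses: in \cite{HofV13}, and in the only place this paper applies the proposition (Lemma~\ref{lem:onCircArc}, via Lemma~\ref{lem:charSolOnCircRef} and Observation~\ref{obs:smallPerimeter}), the graph has no claw, net, tent or hole of length at most $7$ and is not itself a proper interval graph, hence has an independent set of size at least $4$; it is precisely that kind of assumption which excludes the covering pair/triangle configurations (for instance, a covering pair forces the two vertices to be universal and the rest to split into two cliques, so the independence number is at most $2$). Without such a hypothesis the Uncovered-Point Lemma, and with it your forward direction, is not just unproved but unprovable, so the proposal cannot be completed as written.
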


So, Propositions \ref{prop:reduceToCirc} and \ref{prop:charSolOnCirc} present a straightforward way to prove Lemma \ref{lem:pivpost-processing} given that we know how to obtain a representation of a proper circular-arc graph in $\widetilde{\OO}(n)$ space and polynomial time. This can indeed be done---for example, thee linear (in the number of vertices and edges) algorithm for this purpose given in~\cite{DengHH96} can be seen to work in $\widetilde{\OO}(n)$ space. In what follows, we given an alternative approach, which shows that we can also solve the problem in $\widetilde{\OO}(n)$ space and polynomial time without the representation---while the proof is not short, it yields a very simple algorithm (see the proof of Lemma \ref{lem:pivpost-processing}	 ahead), which may be of independent interest.

We refine one direction of Proposition \ref{prop:charSolOnCirc} as follows.

\begin{lemma}\label{lem:charSolOnCircRef}
Let $G$ be a proper circular-arc graph with representation $f$ that has an independent set of size at least $4$. Then, any minimal subset $U\subseteq V(G)$ such that $G-U$ is a proper interval graph satisfies that there exists $v\in V(G)$ such that $U=\{u\in V(G): \mathsf{end}_f(v)\in f(u), N[v]\neq N[u]\}$.
\end{lemma}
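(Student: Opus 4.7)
The plan is to leverage Proposition~\ref{prop:charSolOnCirc}, which already provides, for the given minimal $U$, a vertex $v \in V(G)$ (with $v\notin U$ by the footnote) such that $U = \{u \in V(G) : \mathsf{end}_f(v) \in f(u)\}$. The refined characterization differs only by imposing the extra condition $N[v] \neq N[u]$, so the $\supseteq$ containment is immediate. To establish $\subseteq$ for this same $v$, I would show that no $u \in U$ can satisfy $N[u] = N[v]$: if such a $u$ existed, then $U' := U\setminus\{u\}$ would still be a valid deletion set, contradicting the minimality of $U$.

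To prove that $U'$ remains a solution, set $H := G - U' = (G-U) + u$. A quick calculation using $N[u] = N[v]$, $u \in U$, and $v \notin U$ shows that within $H$, the vertices $u$ and $v$ satisfy $N_H[u] = N_H[v]$; in particular they are adjacent and are \emph{true twins} in $H$. I would then argue that adding such a true twin back into the proper interval graph $G - U$ cannot destroy the proper interval property.

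For this, I would use the forbidden-induced-subgraph characterization from Proposition~\ref{prop:pivChar}: assume for contradiction that $H$ contains an induced claw, net, tent, or hole $F$. If $u \notin V(F)$, then $F$ is already induced in $H - u = G - U$, contradicting the hypothesis on $U$. If $u \in V(F)$ and $v \notin V(F)$, then because $N[u] = N[v]$ the two vertices have identical neighborhoods in $V(F) \setminus \{u\}$, so replacing $u$ by $v$ produces an isomorphic induced copy of $F$ inside $G - U$, again a contradiction. The remaining case is $\{u,v\} \subseteq V(F)$, in which $u$ and $v$ are adjacent true twins within $F$.

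The main technical step is then a routine case analysis over the four obstruction types showing that none of them contains a pair of adjacent vertices with identical closed neighborhoods: in the claw the only adjacent pairs are center--leaf (different degrees); in the net and in the tent each edge has an endpoint with a private pendant or a private triangle-neighbor; and in any hole $C_k$ ($k \geq 4$) any two adjacent vertices differ on their two ``outer'' neighbors. This rules out the last case and completes the contradiction, yielding the claim that $U'$ is a valid solution and therefore proving the lemma. I expect this case analysis to be the only genuinely case-based part of the argument; the independent-set-of-size-four hypothesis does not seem to be needed in the twin-swapping step itself and is presumably only used upstream (e.g., to guarantee that the representation $f$ is non-degenerate so that Proposition~\ref{prop:charSolOnCirc} is meaningfully applicable).
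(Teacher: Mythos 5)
Your proof is correct and takes a genuinely different route from the paper's. Both start from Proposition~\ref{prop:charSolOnCirc} and both ultimately reduce to showing that the vertex $v$ it produces has no true twin inside $U$. The paper argues geometrically inside the circular-arc representation: assuming $U$ contains a twin of $v$, it selects a ``last'' such twin $v'$ using the cyclic order of ending points (this is where the independent-set hypothesis is invoked, to control where a twin may end), sets $T=\{u:\mathsf{end}_f(v')\in f(u)\}$, shows $T\subsetneq U$, and applies the second half of Proposition~\ref{prop:charSolOnCirc} to conclude $G-T$ is a proper interval graph, contradicting minimality. You replace all of this geometry with a one-step combinatorial observation: if $u\in U$ had $N[u]=N[v]$, then $u$ and $v$ would be adjacent true twins in $G-(U\setminus\{u\})$, and adding an adjacent true twin to a proper interval graph preserves the class, so $U\setminus\{u\}$ would already be a deletion set, contradicting minimality. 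This buys a simpler argument that makes no visible use of the independent-set hypothesis; that hypothesis is genuinely needed in Observation~\ref{obs:smallPerimeter} and Lemma~\ref{lem:alternativeCircArc}, and appears in the present lemma chiefly because the paper's own geometric proof uses it (though be a little cautious here, since Proposition~\ref{prop:charSolOnCirc}, cited from~\cite{HofV13}, may carry implicit non-degeneracy assumptions). One further simplification is available to you: instead of the forbidden-subgraph case analysis, take a proper interval representation $g$ of $G-U$ and set $g(u):=g(v)$; the two intervals coincide (so neither properly contains the other), and since $u$ and $v$ are twins, $g(u)$ then has exactly the required intersection pattern with every other $g(w)$, giving a valid representation of $G-(U\setminus\{u\})$ directly and bypassing the enumeration of claw, net, tent and hole entirely. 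Your case analysis is nonetheless correct as written.
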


\begin{figure}
\center
  \fbox{\includegraphics[scale=0.8]{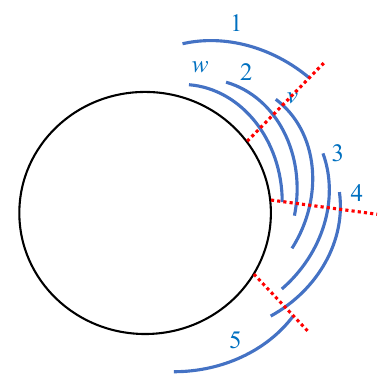}}
  \caption{A figure for the proof of Lemma \ref{lem:charSolOnCircRef}. Here, $U=\{2,v,3,4\}$, $N[w]=N[2]=N[v]=\{1,w,2,v,3,4\}$ and for all $u\notin\{w,2,v\}$, $N[u]\neq N[w]$. So, $v$ is indeed leftmost among vertices in $U$ with the same closed neighborhood as $v$. Notice that $T=\{3,4\}\subset U$.}
  \label{fig:ChangeCharacterization}
\end{figure}

\begin{proof}
Let $U$ be a minimal subset of $V(G)$ such that $G-U$ is a proper interval graph. Then, by Proposition \ref{prop:charSolOnCirc}, there exists $w\in V(G)$ such that $U=\{u\in V(G): \mathsf{end}_f(w)\in f(u)\}$. Now, let $v\in U$ be a vertex such that $N[v]=N[w]$ and when we go in clockwise order starting at $\mathsf{end}_f(v)$ (and neglecting vertices whose intervals coincide with that of $v$ apart from $v$ itself), the last end $\mathsf{end}_f(\cdot)$ that we encounter among those of all vertices in $U$ that have the same neighborhood as $w$ is that of $v$. In particular, as the maximum independent set size is at least $4$, this means that there does not exist a vertex $u$ with the same closed neighborhood as $v$ (and in particular, which is a neighbor of $w$) and yet ``ends after $v$'', that is, $\mathsf{end}_f(v)\in f(u)$.  So, $T=\{u\in V(G): \mathsf{end}_f(v)\in f(u), N[v]\neq N[u]\}=\{u\in V(G): \mathsf{end}_f(v)\in f(u)\}\subseteq U$ (see Fig.~\ref{fig:ChangeCharacterization}). If $w=v$, we are done. Else, we have that $T\subset U$ because $v\in U$ but $v\notin T$.  By Proposition \ref{prop:charSolOnCirc}, $G-T$ is a proper interval graph. However, this contradicts the minimality of $U$.
\end{proof}

Recall that the crux of the proof should be in the following lemma.

\begin{lemma}\label{lem:onCircArc}
Consider {\sc Proper Interval Vertex Deletion} on proper circular-arc graphs that admit no claw, net, tent, $C_4$, $C_5$, $C_6$ and $C_7$ as induced subgraphs. Then, it admits an $\widetilde{\OO}(n)$-space $2^{\OO(k)}\cdot n^{\OO(1)}$-time (static) algorithm.
\end{lemma}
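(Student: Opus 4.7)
The plan is to enumerate all candidate inclusion-minimal solutions via the structural characterization of Lemma~\ref{lem:charSolOnCircRef} and return the smallest such candidate of size at most $k$. First, I would separate the corner case where the independence number $\alpha(G)\le 3$. In this regime $G$ is ``near-dense'' and is covered by a bounded number of cliques (up to twin contraction), so a brute-force enumeration over deletion sets of size at most $k$ within these cliques solves the problem in $2^{\OO(k)}\cdot n^{\OO(1)}$ time and $\widetilde{\OO}(n)$ space.

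In the main case ($\alpha(G)\ge 4$), Lemma~\ref{lem:charSolOnCircRef} guarantees that every inclusion-minimal solution has the form $U_v=\{u\in V(G):\mathsf{end}_f(v)\in f(u),\ N[u]\ne N[v]\}$ for some $v\in V(G)$, where $f$ is an implicit proper circular-arc representation of $G$ whose existence is guaranteed by assumption but which we do not compute explicitly. Hence it suffices to enumerate at most $n$ candidates indexed by $v$ and return the smallest valid one.

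The heart of the argument is a combinatorial procedure that, for each $v\in V(G)$, recovers $U_v$ from $G$ alone in $\widetilde{\OO}(n)$ space. The set $U_v$ is precisely the ``right side'' clique of $v$: the neighbors of $v$ whose intervals in $f$ contain $\mathsf{end}_f(v)$, minus the twins of $v$. Twins of $v$ can be identified in $\widetilde{\OO}(n)$ space by grouping vertices whose closed neighborhoods have equal fingerprints (via hashing). The remaining set $N(v)\setminus\mathrm{twins}(v)$ partitions into two cliques---the ``left'' and the ``right'' side of $v$---since the intervals of any non-twin neighbor of $v$ contain either $\mathsf{begin}_f(v)$ or $\mathsf{end}_f(v)$ but not both (no proper containment); equivalently, $G[N(v)\setminus\mathrm{twins}(v)]$ is co-bipartite. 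Computing this bipartition reduces to $2$-coloring the complement of this induced subgraph, which can be done by a BFS that issues vertex-pair adjacency queries into $G$ on the fly, using only $\widetilde{\OO}(n)$ space per vertex. Since we do not know a priori which of the two cliques is the ``right'' one, I would treat both as candidates, yielding $\OO(n)$ candidate sets overall. Each candidate $C$ is then verified by testing whether $G-C$ is a proper interval graph via a standard $\widetilde{\OO}(n)$-space proper interval recognition algorithm, and the smallest valid candidate is output.

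The main obstacle I anticipate is establishing that the co-bipartite structure of $G[N(v)\setminus\mathrm{twins}(v)]$ indeed holds under the restrictive hypothesis of the lemma (no claw, net, tent, or short holes $C_4,\ldots,C_7$), and that the bipartition recovered combinatorially matches the left/right split induced by some representation $f$. Here the exclusion of the short holes is crucial, because it rules out configurations in which a hole passes tightly through the neighborhood of $v$ and thereby destroys the clean split---the remaining holes are long enough that a local degree-based or BFS-like argument around $v$ disambiguates the two sides within the allotted space. A further factor of $2^{\OO(k)}$ is paid to resolve bounded branching needed to fix the disposition of twin classes once a candidate $U_v$ is chosen, yielding the claimed overall time bound.
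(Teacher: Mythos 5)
Your main case takes a genuinely different route from the paper, but it has gaps that prevent it from closing, and your corner case handling is both overcomplicated and bound-wise suspect.

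For $\alpha(G)\le 3$: the paper's Observation~\ref{obs:smallPerimeter} shows that under the lemma's hypotheses (no claw, net, tent, $C_4,\ldots,C_7$, and maximum independent set of size $\le 3$), $G$ is \emph{already} a proper interval graph, so no deletion is needed and the corner case is absorbed into the initial recognition check. Your ``brute-force over deletion sets within a bounded number of cliques'' is not obviously $2^{\OO(k)}\cdot n^{\OO(1)}$---naively it is $\binom{n}{k}=n^{\Theta(k)}$---and it uses none of the structural hypotheses that make the corner case trivial.

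For $\alpha(G)\ge 4$: the key difficulty, which the paper resolves but you do not, is that Lemma~\ref{lem:charSolOnCircRef} defines $U_v$ via the (unknown) representation $f$, so one must find a representation-free way to compute it. The paper's Lemma~\ref{lem:alternativeCircArc} shows $U_v = W_{v,x}$ for some $x\in V(G)$, where $W_{v,x}=\{u\in N[v]\cap N[x]:N[u]\neq N[v],\ N[u]\subseteq N[v]\cup N[x]\}$ depends only on $G$---so it suffices to loop over all $\OO(n^2)$ pairs $(v,x)$ and test $G-W_{v,x}$ for proper-interval-ness, giving a polynomial algorithm (the $2^{\OO(k)}$ in the lemma statement is slack; the branching appears only in Lemma~\ref{lem:pivpost-processing}). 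Your co-bipartiteness route aims for the same goal but leaves two essential points unproved, as you yourself flag. First, on a \emph{circular} proper arc model, an arc $f(u)$ can cover both $\mathsf{begin}_f(v)$ and $\mathsf{end}_f(v)$ while going around the other side of the circle (so it does not properly contain $f(v)$), which breaks the claimed clean left/right dichotomy; ruling this out needs an argument from $\alpha(G)\ge 4$ and the forbidden-hole hypotheses that you do not supply. Second, even where $G[N(v)\setminus\mathrm{twins}(v)]$ is co-bipartite, its complement may be disconnected, in which case the BFS $2$-coloring yields one of exponentially many valid bipartitions and may not produce the clique that equals $U_v$; you would need to argue uniqueness or enumerate exhaustively, neither of which you do. These are exactly the obstructions that the $W_{v,x}$ characterization was engineered to circumvent, so as written your proposal does not constitute a proof.
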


Towards the proof of this lemma, we have the following lemma. In particular, this lemma gives a characterization of minimal solutions that does not depend on $f$.

\begin{lemma}\label{lem:alternativeCircArc}
Let $G$ be a proper circular-arc graph with representation $f$ and that has an independent set of size at least $4$. Let $v\in V(G)$, and denote $U=\{u\in V(G): \mathsf{end}_f(v)\in f(u), N[v]\neq N[u]\}$. Then, there exists $x\in V(G)$ such that $U=W_{v,x}$ where $W_{v,x}=\{u\in N[v]\cap N[x]: N[u]\neq N[v], N[u]\subseteq N[v]\cup N[x]\}$.
\end{lemma}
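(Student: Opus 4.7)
My plan is to take $x$ to be the vertex in $U$ whose clockwise endpoint $\mathsf{end}_f(x)$ is maximal (equivalently, by the proper-arc ordering, the vertex in $U$ with latest $\mathsf{begin}_f(x)$); the degenerate case $U=\emptyset$ I would handle separately by choosing $x=v$. For convenience I would first set up local coordinates near $\mathsf{end}_f(v)$: for each arc $y$ in the argument, abbreviate $a_y = \mathsf{begin}_f(y)$ and $b_y = \mathsf{end}_f(y)$ in a linear chart around $\mathsf{end}_f(v)$. In this chart, the proper-arc property says no open interval is contained in another and (for distinct arcs) $a_y < a_{y'}$ iff $b_y < b_{y'}$; the independent-set-of-size-$4$ hypothesis is invoked here to rule out wrap-around pathologies that would invalidate the local linear order.

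For the forward inclusion $U \subseteq W_{v,x}$, I argue geometrically: for any $u \in U$, proper-ness forces $a_v \le a_u < b_v < b_u \le b_x$, so $f(u) \subseteq (a_v, b_x) = f(v) \cup f(x)$. Consequently every arc meeting $f(u)$ meets $f(v)$ or $f(x)$, so $N[u] \subseteq N[v] \cup N[x]$. Membership $u \in N(v) \cap N(x)$ is immediate from $\mathsf{end}_f(v) \in f(u) \cap f(x)$, and $N[u] \ne N[v]$ holds by definition of $U$.

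The reverse inclusion $W_{v,x} \subseteq U$ is the substantive part. Let $u \in W_{v,x}$. The cases $u=v$ (ruled out by $N[u] \ne N[v]$) and $u=x$ (gives $u \in U$) are immediate, so suppose $u \ne v,x$ and, for contradiction, $\mathsf{end}_f(v) \notin f(u)$. Adjacency of $u$ to both $v$ and $x$ together with proper-ness forces $a_u < a_v < a_x < b_u \le b_v < b_x$. I then split on which side of the symmetric difference $N[v] \triangle N[u]$ is witnessed. If some $w \in N(u) \setminus N[v]$ exists, then $f(w) \cap f(v) = \emptyset$ combined with $f(w) \cap f(u) \ne \emptyset$ forces $b_w \le a_v < a_x$, so $f(w) \cap f(x) = \emptyset$ and hence $w \notin N[v] \cup N[x]$, contradicting $N[u] \subseteq N[v] \cup N[x]$. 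Otherwise $N[u] \subsetneq N[v]$, so pick $w \in N(v) \setminus N(u)$; a proper-arc analysis (intersection with $f(v)$ while avoiding $f(u)$, together with forbidding $f(w) \subsetneq f(v)$) forces $a_w \in [b_u, b_v)$ and $b_w > b_v$, and in particular $\mathsf{end}_f(v) \in f(w)$. If $N[w] = N[v]$, then $u \in N[v] = N[w]$ implies $u \in N(w)$, contradicting $w \notin N(u)$; otherwise $w \in U$, but $a_w \ge b_u > a_x$ contradicts the maximality of $x$ in $U$ by the proper-arc ordering.

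The step I expect to be the main obstacle is the ``$v$-twin'' branch in the reverse direction --- vertices $w$ with $N[w] = N[v]$ whose arcs contain $\mathsf{end}_f(v)$; the key trick is to exploit $u \in N[v] = N[w]$ to force the adjacency $u \in N(w)$ in direct contradiction to $w \notin N(u)$. The remaining edge case $U=\emptyset$ is handled by taking $x=v$, for which $W_{v,v}$ simplifies to $\{u \in N[v] : N[u] \subsetneq N[v]\}$; I would argue this set is empty by combining the proper-arc containment constraints with the independent-set hypothesis, which produces, for any putative $u$, a witness in $N(u) \setminus N[v]$ lying in the complementary region of the circle.
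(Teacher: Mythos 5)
Your proof follows the paper's approach at the top level: pick $x\in U$ with $\mathsf{end}_f(x)$ furthest from $\mathsf{end}_f(v)$ in clockwise order, and prove the two inclusions. The forward inclusion is the same as the paper's. The interesting content is in the reverse inclusion, and here you are actually more careful than the paper. The paper's proof asserts ``Because $N[u]\neq N[v]$ while $N[u]\subseteq N[v]\cup N[x]$, there exists a vertex that is a neighbor of $u$ and $x$ but not of $v$.'' That inference only works when $N[u]\not\subseteq N[v]$; from $N[u]\neq N[v]$ alone one could have $N[u]\subsetneq N[v]$, in which case the symmetric difference lives entirely in $N[v]\setminus N[u]$ and the claimed witness need not exist. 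Your split into Case~1 (a witness in $N(u)\setminus N[v]$) and Case~2 ($N[u]\subsetneq N[v]$) addresses exactly this: Case~1 is the paper's intended argument, and Case~2 is the genuinely missing piece, where you take $w\in N(v)\setminus N(u)$, deduce $a_w\geq b_u$ and $b_w>b_v$ (so $\mathsf{end}_f(v)\in f(w)$), rule out $N[w]=N[v]$ via the forced adjacency $u\in N(w)$, and then use the extremality of $x$ (through the proper-arc monotonicity $a_w>a_x\Rightarrow b_w>b_x$) to contradict the choice of $x$. That part is correct and fills what looks like a real gap in the paper's proof. Your use of the independent-set-of-size-$4$ hypothesis to justify the local linear chart is the same device the paper uses (see its Fig.~\ref{fig:SmallPerimeter}).

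One genuine issue: your handling of the degenerate case $U=\emptyset$ is right to flag (the paper simply picks $x\in U$ without comment) and taking $x=v$ is the right repair, but the reason you sketch for why $W_{v,v}=\emptyset$ is on the wrong side. Any $u\in W_{v,v}$ satisfies $N[u]\subsetneq N[v]$ by the very definition of $W_{v,v}$, so $N(u)\setminus N[v]=\emptyset$ automatically; you cannot hope to ``produce a witness in $N(u)\setminus N[v]$'' --- that would just re-derive the hypothesis you are trying to refute. The correct argument here is precisely your own Case~2 argument, repeated: given $u\in W_{v,v}$ with $\mathsf{start}_f(v)\in f(u)$, pick $w\in N(v)\setminus N(u)$, conclude $\mathsf{end}_f(v)\in f(w)$ and $N[w]\neq N[v]$ (since $u\in N[v]\setminus N[w]$), hence $w\in U$, contradicting $U=\emptyset$. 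With that one substitution, the degenerate case closes cleanly.
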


\begin{proof}
Let $x$ be a vertex such that $x\in U$ (and so $\mathsf{end}_f(v)\in f(x)$) and among all vertices in $U$ the end of its interval $\mathsf{end}_f(x)$ is furthest from $\mathsf{end}_f(v)$ (when distance is measure in clockwise order); see Fig.~\ref{fig:Wvx}. In what follows, we show that $U=W_{v,x}$.

\begin{figure}
\center
  \fbox{\includegraphics[scale=0.8]{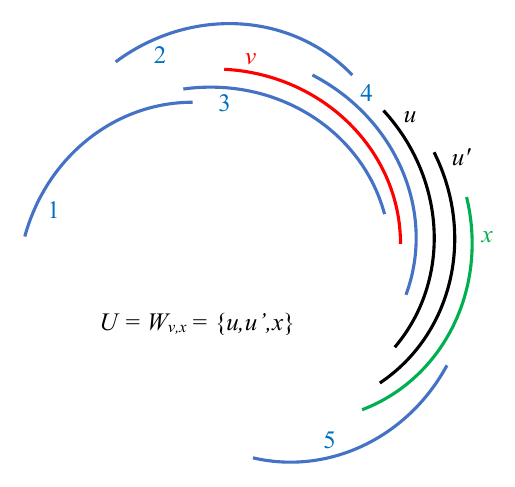}}
  \caption{A figure for the proof of Lemma \ref{lem:alternativeCircArc}. In particular, $1,2,3,5\notin U$ since $\mathsf{end}_f(v)$ does not belong to their intervals, and $4\notin U$ since $N[v]=N[4]$. Additionally, $1,2\notin W_{v,x}$ since they do not belong to $N[x]$, $3\notin W_{v,x}$ since $N[3]$ is not contained in $N[v]\cup N[x]$, $4\notin W_{v,x}$ since $N[v]=N[4]$, and $5\notin W_{v,x}$ since $4\notin N[v]$.}
  \label{fig:Wvx}
\end{figure}

In one direction, consider some $u\in U$. Then, because $\mathsf{end}_f(v)\in f(u)$, $\mathsf{end}_f(v)\in f(x)$ and $G$ has an independent set of size at least $4$, and by the choice of $w$, this means that when we go from $\mathsf{start}_f(v)$ in clockwise order, we observe the ordering $\mathsf{start}_f(v)<\mathsf{start}_f(u)\leq \mathsf{start}_f(x)< \mathsf{end}_f(v)<\mathsf{end}_f(u)\leq \mathsf{end}_f(x)$; see Fig.~\ref{fig:Wvx}. So, $f(u)$ has non-empty intersection with both $f(v)$ and $f(x)$, and it is contained in their union, which means that $u\in N[v]\cap N[x]$ and $N[u]\subseteq N[v]\cup N[x]$. Because $u\in U$, it also holds that $N[u]\neq N[v]$, and hence $u\in W_{v,x}$.

In the other direction, consider some $u\in W_{v,x}$. Because $u\in N[v]\cap N[x]$ and in particular $u\in N(v)$, we have that  $\mathsf{end}_f(v)\in f(u), \mathsf{start}_f(v)\in f(u)$ or $f(v)=f(u)$.
Because $N[u]\neq N[v]$ while $N[u]\subseteq N[v]\cup N[x]$, there exists a vertex that is a neighbor of $u$ and $x$ but not of $v$. However, by the choice of $x$ and because $G$ has an independent set of size at least $4$, in case $\mathsf{start}_f(v)\in f(u)$ or $f(v)=f(u)$, this is not possible. For intuition, see Fig.~\ref{fig:Wvx} (where it can be seen that any vertex in $N[u]\setminus N[x]$, which in this case is $5$, must be ``more to the left'' than $x$, which imply that $u$ is ``more to the left'' than $v$) and Fig.~\ref{fig:SmallPerimeter}. So, necessarily, $\mathsf{end}_f(v)\in f(u)$.
\end{proof}

\begin{figure}
\center
  \fbox{\includegraphics[scale=0.8]{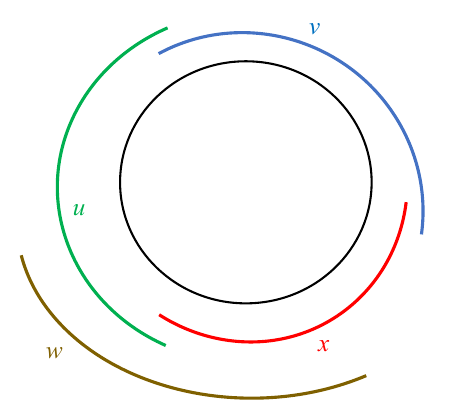}}
  \caption{A figure for the proof of Lemma \ref{lem:alternativeCircArc}. Here, there cannot be an independent set of size at least $4$ (even if additional intervals were added, preserving the property that not interval properly contains another), and we have $u\in W_{v,x}\setminus U$. Intuitively, we do have that $u$ must be ``more to the left'' than $v$, but it can do so, still being a neighbor of both $v$ and $x$, while going ``far enough'' to the left so that it can intersect the start, rather than the end, of $v$; yet, this prevents the graph from having a large independent set. }
  \label{fig:SmallPerimeter}
\end{figure}

We complement Lemma \ref{lem:alternativeCircArc} with the following simple observation.

\begin{observation}\label{obs:smallPerimeter}
Let $G$ be a proper circular-arc graph with representation $f$ such that the maximum independent set size of $G$ is at most $3$, and which admits no claw, net, tent, $C_4$, $C_5$, $C_6$ and $C_7$ as induced subgraphs. Then, $G$ is a proper interval graph.
\end{observation}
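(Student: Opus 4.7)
The plan is to reduce the claim to the forbidden-induced-subgraph characterization of proper interval graphs given by Proposition \ref{prop:pivChar}, and then to rule out large holes via an independence-number argument. By Proposition \ref{prop:pivChar}, $G$ is a proper interval graph if and only if it contains no induced claw, net, tent, or hole $C_i$ for $i\geq 4$. The hypotheses of the observation already exclude claws, nets, tents, and the holes $C_4$, $C_5$, $C_6$ and $C_7$, so only holes $C_i$ with $i\geq 8$ remain to be excluded.

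The key remaining step is the following simple bound: for any $i\geq 4$, the independence number of $C_i$ is $\lfloor i/2\rfloor$, achieved by picking every other vertex along the cycle. Consequently, for any $i\geq 8$, $\alpha(C_i)\geq 4$. Since the independence number of an induced subgraph is a lower bound for $\alpha(G)$, the presence of an induced $C_i$ with $i\geq 8$ in $G$ would imply $\alpha(G)\geq 4$, contradicting the hypothesis that the maximum independent set size of $G$ is at most $3$. Thus $G$ also contains no induced $C_i$ for $i\geq 8$, and combined with the hypotheses we conclude that $G$ excludes every obstruction listed in Proposition \ref{prop:pivChar}, so $G$ is a proper interval graph.

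There is no real obstacle here: the proof is essentially a two-line combinatorial consequence, and notably it does not even make direct use of the proper circular-arc representation $f$; the representation matters only implicitly because the surrounding lemmas (in particular Lemma \ref{lem:alternativeCircArc} and Lemma \ref{lem:charSolOnCircRef}) branch on whether $\alpha(G)\geq 4$, and this observation handles the complementary case uniformly via the forbidden-subgraph characterization.
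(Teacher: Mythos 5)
Your proof is correct and follows exactly the paper's argument: both reduce to the forbidden-induced-subgraph characterization of Proposition~\ref{prop:pivChar} and then observe that any remaining obstruction $C_i$ with $i\geq 8$ has an independent set of size at least $4$, contradicting $\alpha(G)\leq 3$. Your additional remark that the circular-arc representation $f$ is never actually used is accurate.
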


\begin{proof}
Targeting a contradiction, suppose that $G$ is not a proper interval graph. Then, as $G$ admits no claw, net, tent, $C_4$, $C_5$, $C_6$  and $C_7$ as induced subgraphs, by Proposition \ref{prop:pivChar}, it must admit some hole $C_i$ for $i\geq 8$ as an induced subgraph. As $C_i$ for any $i\geq 8$ has an independent set of size at least $4$, so does $G$, and therefore we have reached a contradiction.
\end{proof}

Now, it becomes easy to prove Lemma \ref{lem:onCircArc}:

\begin{proof}[Proof of Lemma \ref{lem:onCircArc}]
The algorithm works as follows. It first checks in $\widetilde{\OO}(n)$ space and polynomial time whether $G$ is already  proper interval graph by making use of the algorithm in Lemma~\ref{lem:pivReconstruction}, and if the answer is positive, then it returns that we have a yes-instance (with solution $\emptyset$). Notice that although the algorithm in Lemma \ref{lem:pivReconstruction} works for streams, we can simulate it by iterating over the edges in $G$.

Afterwards, for every pair of vertices $v,x\in V(G)$, it computes $W_{v,x}=\{u\in N(v)\cap N(x): N[u]\neq N[v], N(u)\subseteq N(v)\cup N(x)\}$ and if $|W_{v,x}|\leq k$, then:
\begin{itemize}
\item It checks in $\widetilde{\OO}(n)$ space and polynomial time whether $G-W_{v,x}$ is a proper interval graph by making use of the algorithm in Lemma \ref{lem:pivReconstruction}.
\item  If the answer is positive, then it returns that we have a yes-instance (with solution $W_{v,x}$).
\end{itemize}
At the end, having not returned a positive answer so far, the algorithm returns ``no-instance''. This completes the description of the algorithm.

Clearly, our algorithm uses $\widetilde{\OO}(n)$ space and polynomial time. We now argue that it is correct. In one direction, suppose that $(G,k)$ is a yes-instance, and let $U$ be a solution for it. If $G$ is already a proper interval graph, then by the correctness of the algorithm in Lemma \ref{lem:pivReconstruction}, our algorithm returns ``yes-instance''.
Else, by Observation \ref{obs:smallPerimeter}, $G$ has an independent set of size at least $4$. Then, Lemmas \ref{lem:charSolOnCircRef} and \ref{lem:alternativeCircArc} imply that there exists a pair of vertices $v,x\in V(G)$ such that $U=W_{v,x}$. By the correctness of the algorithm in Lemma \ref{lem:pivReconstruction}, in the iteration corresponding to this pair our algorithm should return ``yes-instance''.

In the other direction, suppose that our algorithm returns ``yes-instance''. By the correctness of the algorithm in Lemma \ref{lem:pivReconstruction}, either $G$ is already a proper interval graph, or our algorithm has found a subset of vertices $W$ of size at most $k$ such that $G-W$ is a proper interval graph. So, $(G,k)$ is a yes-instance.
\end{proof}

Finally, we are ready to prove Lemma \ref{lem:pivpost-processing}.

\begin{proof}[Proof of Lemma \ref{lem:pivpost-processing}]
The algorithm works as follows. We first consider every subset $X\subseteq V(G)$ of size at most $6$, and for each such subset $X$, we check whether it induces a claw, net, tent, $C_4$, $C_5$, $C_6$ or $C_7$; in case the answer is positive, we branch on which vertex to delete from $X$ (that is, the algorithm calls itself recursively on $(G-v,k-1)$ for each $v\in X$). After this, we call the algorithm in Lemma \ref{lem:onCircArc} and return the same answer.

By Proposition \ref{prop:pivChar} and because we select which vertices to delete to hit all claws, nets, tents, $C_4$'s, $C_5$'s, $C_6$'s and $C_7$'s in an exhaustive manner, there exists a solution to $(G,k)$ if and only if there exists a solution to one of the instances with whom we call the algorithm in Lemma \ref{lem:onCircArc}. Moreover, by Proposition \ref{prop:reduceToCirc}, each call is made with a proper circular-arc graph. So, by the correctness of the algorithm in Lemma \ref{lem:onCircArc}, our algorithm is correct.

For the first part of our algorithm, notice that going over every subset $X\subseteq V(G)$ of size at most $7$ and doing brute-force to check isomorphism are done in $\widetilde{\OO}(n)$ space. Moreover, the total time required by this part if $7^k\cdot n^{\OO(1)}$ (since we branch in at most $7$ ways, and in each the parameter $k$ is decreased). So, by the space and time complexities of the algorithm  in Lemma \ref{lem:onCircArc}, we conclude that the space and time complexities of our algorithm are as stated.
\end{proof}

Based on Theorem \ref{lem:reductionLemma}, \ref{lem:pivReconstruction} and \ref{lem:pivpost-processing}, we conclude the correctness of Theorem \ref{thm:piv}.


\subsection{Block Vertex Deletion in One Pass}

Recall (see Section~\ref{sec:graphClassPrelims}) that a graph $G$ is a block graph if every biconnected component of it is a clique and equivalently, if it is $\{D_4, C_{\ell + 4,\, \ell \geq 0} \}$-free.


The purpose of this section is to prove the following theorem.

\begin{theorem}\label{thm:blockGraphs}
{\sc Block Vertex Deletion} admits a $1$-pass semi-streaming algorithm with $2^{\OO(k)} \cdot n^{\OO(1)}$ post-processing time.
\end{theorem}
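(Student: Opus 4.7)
The plan is to apply Theorem~\ref{lem:reductionLemma} with $\cH$ being the class of block graphs, which is hereditary since block graphs are characterized by the exclusion of $\{D_4, C_{\ell+4} : \ell \geq 0\}$ as induced subgraphs. To invoke the theorem with $p=1$ pass, I must supply two ingredients: a 1-pass reconstruction algorithm for block graphs (Item~\ref{item:reduction1}) and a static algorithm for \textsc{Block Vertex Deletion} running in $\widetilde{\OO}(n\cdot g(k))$ space and $2^{\OO(k)}\cdot n^{\OO(1)}$ time (Item~\ref{item:reduction2}). Given these, Theorem~\ref{lem:reductionLemma} immediately yields a 1-pass $\widetilde{\OO}(k^{\OO(1)}\cdot n)$-space streaming algorithm with post-processing time $2^k\cdot 2^{\OO(k)}\cdot n^{\OO(1)} = 2^{\OO(k)}\cdot n^{\OO(1)}$.

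For the reconstruction algorithm (viewing block graphs as $1$-block graphs per the overview), I would use Corollary~\ref{cor:graph-separating-family} to construct a family $F_1,\ldots,F_r$ of $r = \OO(\log n)$ subsets of $V(G)$ such that for every pair $u,v\in V(G)$ and every vertex $w\notin\{u,v\}$, there is some $F_i$ that contains $u,v$ and avoids $w$. In the single pass, using Proposition~\ref{prop:dynamicConnectivitySketch} in parallel for each $i$, I compute a spanning forest of each $G_i = G[F_i]$ along with the degree of every vertex in $G$. In post-processing, I implicitly construct a graph $\widetilde{G}$: for every non-adjacent pair $u,v$ in the union of these spanning forests, I declare $uv\in E(\widetilde{G})$ iff there exists an $F_i$ containing both $u,v$ and avoiding some size-$\leq 1$ set, in which $u$ and $v$ lie in the same connected component. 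I then conclude that $G$ is a block graph (with reconstruction $\widetilde{G}$) iff every vertex has the same degree in $G$ and $\widetilde{G}$, and $\widetilde{G}$ admits a perfect elimination ordering (testable in $\widetilde{\OO}(n)$ space via edge queries). Correctness follows from the overview's arguments that $E(G)\subseteq E(\widetilde{G})$ always, and equality plus chordality certifies membership in the class.

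For the static algorithm, I would adapt the approach of Agrawal, Kolay, Lokshtanov and Saurabh~\cite{AgrawalKLS16} in a space-efficient manner. First, branch exhaustively on induced $D_4$'s: for each such occurrence we delete one of the four vertices, decrementing $k$, which uses only $\widetilde{\OO}(k^{\OO(1)}n)$ space and contributes a $4^k$ factor. After this, the residual graph $G'$ is $D_4$-free, and so by the auxiliary-graph characterization, $S\subseteq V(G')$ is a block-graph deletion set iff $S$ is a feedback vertex set of the bipartite graph $H$ whose vertex set is $V(G')$ together with a node per maximal clique of $G'$, and whose edges are the vertex-clique incidences. The obstacle is that $H$ can have $\Theta(n^2)$ clique-nodes, so we cannot even write it down. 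I would overcome this by simulating the Becker--Bar-Yehuda--Geiger randomized sampling algorithm for \textsc{Feedback Vertex Set} on $H$: at each step, we (i) compute, for each $v\in V(G')$, its degree in $H$ (the number of maximal cliques of $G'$ containing $v$), (ii) apply degree-$1$ and degree-$2$ reduction rules so that the relevant part of $H$ has minimum degree $3$, and (iii) sample a vertex of $V(G')$ to delete with probability proportional to its $H$-degree. Repeating this sampling loop $2^{\OO(k)}$ times boosts the success probability to a constant.

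The main obstacle will be step (ii)--(iii) of the static algorithm: counting, for each vertex of $G'$, the number of maximal cliques containing it, and applying the reduction rules of the Becker--Bar-Yehuda--Geiger algorithm to the implicit $H$ while storing only $\widetilde{\OO}(k^{\OO(1)}n)$ information. The key insight I would exploit is that in a $D_4$-free graph $G'$, the intersection of maximal cliques with the closed neighborhood of any single vertex has a rigid structure (each neighbor of $v$ lies in exactly one maximal clique containing $v$), which should allow both the degree counts and the reduction rules in $H$ to be carried out by local queries to the succinct representation of $G'$ in polynomial time and linear space.
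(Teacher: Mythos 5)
Your overall strategy mirrors the paper's exactly: invoke Theorem~\ref{lem:reductionLemma} with $\cH$ = block graphs, supply a reconstruction algorithm based on separating families and spanning-forest sketches (the paper presents this for the generalization to $t$-flow/$t$-block graphs with $t=1$, plus the perfect-elimination-ordering test for chordality), and supply a space-efficient static algorithm that branches on small obstructions and then simulates the Becker--Bar-Yehuda--Geiger randomized FVS algorithm on the implicit auxiliary bipartite graph $H$ of Agrawal, Kolay, Lokshtanov and Saurabh~\cite{AgrawalKLS16}. Your reconstruction sketch is essentially correct and matches Lemma~\ref{lemma:BVD:reconst}.

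However, there is a concrete gap in the static algorithm. You branch exhaustively only on induced $D_4$'s, and then assert that on the residual $D_4$-free graph $G'$ one can apply the auxiliary-graph characterization. But Proposition~\ref{prop:BVD:FVS} (Lemma~7 of~\cite{AgrawalKLS16}) requires $G'$ to be $\{C_4,D_4\}$-free, not merely $D_4$-free; the same holds for Proposition~\ref{prop:BVD:cliqueBound}, which bounds the maximal cliques and guarantees that any two intersect in at most one vertex. The paper's Lemma~\ref{lemma:BVD:post-algo} accordingly branches on \emph{both} $C_4$'s and $D_4$'s in its first stage (still a $4^k$ factor, since both obstructions have four vertices). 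Your proposal as written does not eliminate $C_4$'s before appealing to the characterization, so the invocation of Proposition~\ref{prop:BVD:FVS} is not justified; you would need to either also branch on $C_4$'s, or supply a separate argument (not present in the paper or in~\cite{AgrawalKLS16}) that $D_4$-freeness alone suffices.

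A secondary point: your treatment of the degree-1/degree-2 reduction rules on the implicit $H$ is too vague to count as a proof. The paper's Lemma~\ref{lemma:BVD:special-case} cannot literally perform these reductions on $H$ (which has $\Omega(n^2)$ vertices), so it instead uses a marking scheme on $V(G')$ and proves that sampling a vertex of $V(G')$ with probability proportional to its $H$-degree succeeds with probability at least $1/17$ per round (not $1/2$), via a careful analysis of long degree-2 paths in $H$ that must be tolerated rather than contracted. Your ``main obstacle'' paragraph correctly identifies that this is where the work lies and points at the right structural fact (each neighbor of $v$ lies in a unique maximal clique through $v$ once $D_4$ is excluded), but you do not actually carry out the marking argument or the revised probability bound, so that portion of the proposal remains a plan rather than a proof.
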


\subsubsection{Reconstruction Algorithm}
In this section we present a somewhat more general result on the reconstruction of block graphs. We begin by defining $t$-flow graphs and $t$-block graphs that generalize block graphs. Next, we present a 1-pass randomized streaming algorithm for reconstruction of $t$-flow graphs. Then, combined with a simple recognition algorithm, based on \emph{Perfect Elimination Ordering}, we obtain a 1-pass randomized streaming algorithm for reconstruction of $t$-block graphs. As a corollary we obtain a one-pass randomized reconstruction algorithm for block graphs.

\begin{definition}
    For an integer $t \geq 1$, a graph $G$ is called \emph{$t$-flow graph} if and only if for each pair of vertices $u,v \in V(G)$ either $\{u,v\} \in E(G)$,
    or there is a subset $S$ of at most $t$ vertices such that $u$ and $v$ are disconnected in $G - S$.
\end{definition}

\begin{definition}
    For an integer $t \geq 1$, a graph $G$ is called a \emph{$t$-block graph} if it is a chordal graph where any two maximal cliques intersect in at most $t$ vertices.
\end{definition}

Note that $1$-block graphs are the same as \emph{block graphs}.
Recall that a \emph{clique-tree decomposition} of a graph $G$ is a tree-decomposition $(T,\beta)$ such that every bag $\beta(v)$ for $v \in V(T)$ induces a clique in $G$~\cite{brandstadt1999graph}. Then recall that a graph $G$ is chordal if and only if it admits a clique-tree decomposition~\cite{brandstadt1999graph}.  Then following observation is immediate.
\begin{observation}\label{obs:chordal-t-block}
    A graph $G$ is a $t$-block graph if and only if it is a chordal $t$-flow graph.
\end{observation}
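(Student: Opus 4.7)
\textbf{Proof Proposal for Observation~\ref{obs:chordal-t-block}.} My plan is to prove both directions using a clique-tree decomposition $(T,\beta)$ of the chordal graph, and then relate separators in $G$ to intersections of adjacent bags. The central tool will be the running intersection property together with Menger's theorem applied to short paths through a clique intersection.

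For the forward direction ($t$-block $\Rightarrow$ chordal $t$-flow), chordality is immediate, so I only need $t$-flow. Given non-adjacent $u,v \in V(G)$, I take a clique-tree $(T,\beta)$ whose bags are the maximal cliques. Since $\{u,v\} \notin E(G)$, no bag contains both, so the subtree $T_u = \{B : u \in \beta(B)\}$ and $T_v = \{B : v \in \beta(B)\}$ are disjoint. Pick any edge $\{B,B'\}$ on the path in $T$ from $T_u$ to $T_v$; then $S = \beta(B) \cap \beta(B')$ is a $u$-$v$ separator in $G$ (standard fact from the running intersection property). Since $\beta(B)$ and $\beta(B')$ are maximal cliques, $|S| \leq t$ by the $t$-block hypothesis, establishing the $t$-flow property.

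For the reverse direction (chordal $t$-flow $\Rightarrow$ $t$-block), I argue by contradiction: suppose two maximal cliques $C_1, C_2$ satisfy $|C_1 \cap C_2| > t$. Fix any clique-tree $(T,\beta)$; by running intersection, along the $C_1$-to-$C_2$ path in $T$, the intersection of any two consecutive bags contains $C_1 \cap C_2$, so in particular I can locate adjacent bags $B, B'$ with $|\beta(B) \cap \beta(B')| > t$. Since $\beta(B)$ and $\beta(B')$ are distinct maximal cliques, neither contains the other, so I can pick $u \in \beta(B) \setminus \beta(B')$ and $v \in \beta(B') \setminus \beta(B)$. Running intersection again forces $u$ to appear only in bags on the $B$-side of the edge $\{B,B'\}$ and $v$ only on the $B'$-side, so $u,v$ are non-adjacent in $G$. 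Now for each $w \in \beta(B) \cap \beta(B')$, the two-edge path $u$-$w$-$v$ is a valid $u$-$v$ path (since $\beta(B)$ and $\beta(B')$ are cliques), and these paths are internally vertex-disjoint. By Menger's theorem, every $u$-$v$ separator has size at least $|\beta(B) \cap \beta(B')| > t$, contradicting the $t$-flow property.

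The main conceptual obstacle is the reverse direction: the hypothesis bounds separator sizes, but the conclusion constrains \emph{all} pairs of maximal cliques, not only adjacent ones in the clique-tree. The running intersection property resolves this by pushing large intersections down to adjacent bags, and Menger's theorem (applied to the explicit length-two paths through $\beta(B) \cap \beta(B')$) then converts a large bag intersection into a large separator. No routine calculation is needed beyond citing these two standard facts.
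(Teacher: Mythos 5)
Your proof is correct and uses exactly the clique-tree machinery the paper sets up in the sentences just before calling the observation ``immediate'': both directions go through a clique-tree whose bags are the maximal cliques together with the running intersection property, which is precisely the intended route. One minor note: invoking Menger's theorem in the reverse direction is overkill, since each $w \in \beta(B)\cap\beta(B')$ is the unique internal vertex of a length-two $u$-$v$ path, so any set disconnecting $u$ from $v$ while avoiding $u,v$ must already contain all of $\beta(B)\cap\beta(B')$ directly.
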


Let us begin with  a reconstruction algorithm for $t$-flow graph.

\begin{lemma}\label{lem:recon-t-flow}
    Let $t \geq 1$ be an integer. The class of $t$-flow graphs has a 1-pass randomized reconstruction algorithm, with $n^{t+\OO(1)}$ post-processing time and $\tilde\OO(n)$ space.
\end{lemma}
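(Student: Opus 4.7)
The plan is to combine the $(n,t,2)$-separating family from Corollary~\ref{cor:graph-separating-family} with the turnstile dynamic connectivity sketch of Proposition~\ref{prop:dynamicConnectivitySketch}. I would first construct a family $\cF=\{F_1,\ldots,F_r\}$ of size $r=t^{\OO(1)}\log n$ with the property that for every pair $u,v$ and every $S\subseteq V(G)\setminus\{u,v\}$ with $|S|\le t$, some $F_i$ contains $\{u,v\}$ and is disjoint from $S$. During the single pass on the edge stream, I would maintain in parallel one dynamic connectivity sketch for each induced subgraph $G_i=G[F_i]$, together with the degree $d_G(v)$ of every vertex $v$. The total space is $\widetilde{\OO}(t^{\OO(1)}n)$, which is $\widetilde{\OO}(n)$ for constant~$t$.

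In post-processing, I would recover from each sketch a spanning forest of $G_i$ and let $G'$ be the explicit union of these forests (stored in $\widetilde{\OO}(n)$ space). The reconstructed graph $\widetilde{G}$ is then defined \emph{implicitly} by declaring $\{u,v\}\in E(\widetilde{G})$ if and only if for every $S\subseteq V(G)\setminus\{u,v\}$ with $|S|\le t$, the vertices $u,v$ lie in the same connected component of $G'-S$. This yields a succinct representation: edge queries are answered in $n^{t+\OO(1)}$ time by enumerating all $S$ of size at most $t$ and running connectivity on $G'$. The algorithm then computes $d_{\widetilde{G}}(v)$ for every $v$ and outputs ``$G$ is a $t$-flow graph'' together with $\widetilde{G}$ as its succinct representation if and only if $d_G(v)=d_{\widetilde{G}}(v)$ for every vertex; otherwise it reports ``not a $t$-flow graph''.

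Correctness would rest on three claims. \emph{(i)} $E(G)\subseteq E(\widetilde{G})$: for any $\{u,v\}\in E(G)$ and any eligible $S$, the separating family supplies an $F_i\ni u,v$ disjoint from $S$, so $u,v$ share a component of $G_i$ (via the edge $\{u,v\}$ itself) and hence of its spanning forest, which lies inside $G'-S$. \emph{(ii)} The graph $\widetilde{G}$ is \emph{always} a $t$-flow graph: if $\{u,v\}\notin E(\widetilde{G})$, a witnessing $S$ separates $u,v$ in $G'-S$; any hypothetical path $u=w_0,\ldots,w_k=v$ in $\widetilde{G}-S$ would, applying the definition to each $\{w_i,w_{i+1}\}$ with this same $S$ (which is disjoint from $\{w_i,w_{i+1}\}$ since $w_i,w_{i+1}\notin S$), force each $w_i,w_{i+1}$ to share a component of $G'-S$, contradicting the separation of $u,v$. \emph{(iii)} If $G$ is itself a $t$-flow graph, then for every non-edge $\{u,v\}$ the $t$-flow property yields an $S$ of size at most $t$ with $u,v$ disconnected in $G-S$; since $G'\subseteq G$ as a subgraph, they are also disconnected in $G'-S$, so $\{u,v\}\notin E(\widetilde{G})$, giving $E(\widetilde{G})\subseteq E(G)$ and hence $G=\widetilde{G}$ with matching degrees. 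Conversely, when the degree test passes, the combination of $E(G)\subseteq E(\widetilde{G})$ and $\sum_v d_G(v)=\sum_v d_{\widetilde{G}}(v)$ forces $E(G)=E(\widetilde{G})$, and by~\emph{(ii)} this common graph is a $t$-flow graph, so the output is valid.

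The main obstacle, in my view, is conceptual: one must argue that connectivity inside the proxy graph $G'-S$, rather than inside $G-S$ (which we cannot afford to retain), is an adequate substitute for the ``true'' connectivity underlying the definition of a $t$-flow graph. The separating family is tailored precisely for this: whenever an honest separator $S$ of size at most~$t$ for a pair $\{u,v\}$ exists, some $F_i$ contains both endpoints while avoiding $S$, so spanning-forest connectivity in $G'$ faithfully mirrors true connectivity in $G-S$ for exactly the combinatorial objects required by the definition. Randomization enters only through the connectivity sketches, and a union bound over the $r=t^{\OO(1)}\log n$ sketches ensures that the whole pipeline succeeds with high probability.
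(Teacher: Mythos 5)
Your proposal is correct and tracks the paper's construction closely: the same $(n,t,2)$-separating family, the same connectivity sketches for the induced subgraphs $G_i$, the same degree test. There is one small but substantive variation in how $\widetilde{G}$ is declared. The paper defines $\{u,v\}\in E(\widetilde{G})$ by testing, for each eligible $S$, connectivity of $u,v$ inside a \emph{single chosen} subgraph $G_i$ (equivalently in its spanning forest $D_i$), whereas you test connectivity of $u,v$ in the whole proxy graph $G'-S$. Since each $D_i\subseteq G'-S$, your condition is implied by the paper's, and both are implied by $\{u,v\}\in E(G)$; your claim~\emph{(iii)} that it implies $\{u,v\}\in E(G)$ when $G$ is a $t$-flow graph matches the paper's Menger-type argument. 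The real payoff of your variant is claim~\emph{(ii)}: because $\widetilde{G}$-edges certify $G'-S$-connectivity, the same $S$ that separates a non-edge $\{u,v\}$ of $\widetilde{G}$ in $G'$ also separates it in $\widetilde{G}$, so $\widetilde{G}$ is \emph{unconditionally} a $t$-flow graph, and the degree-test conclusion ($E(G)\subseteq E(\widetilde{G})$ plus equal degrees forces $G=\widetilde{G}$, a $t$-flow graph) is airtight. The paper instead asserts, for the reverse direction, that a bad pair $\{u,v\}$ (connected in $G-S$ for every small $S$) must appear in $E(\widetilde{G})$; this step is stated but not argued, and it is not immediate from either definition of $\widetilde{G}$, since the connecting paths in $G-S$ need not lie inside any single $G_i$, nor inside $G'$. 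Your reformulation sidesteps this entirely and makes the reverse direction transparent, at the cost of no extra space or time (each edge query still enumerates $n^{t}$ sets $S$ and runs a $\widetilde{\OO}(n)$-time connectivity check on the explicitly stored $G'$).
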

\begin{proof}
    Let ${\cal F} = \{V_1, V_2, \ldots, V_r\}$ be a $(n,t,2)$-separating family over $V(G)$. Note that such a family contains $r = (t+2)^{10} \log (t+2) \log n$ vertex subsets,  and it can be constructed in $\OO(t^{\OO(1)} n \log n)$ time and space (Corollary~\ref{cor:graph-separating-family}).
    Recall that, for any pair of vertices $\{u,v\}$ and any $Q \subseteq V(G)$ such that $|Q| \leq t$, there is an $V_i \in {\cal F}$ such that $u,v \in V_i$ and $V_i \cap S = \emptyset$. For $V_i \in \cal F$, let $G_i = G[V_i]$.
    We now describe our algorithm.

    \begin{itemize}

    \item \textbf{Streaming Phase.} With the family $\cal F$ in hand, our algorithm computes the following data structures in 1-pass: ~$(i)$~the degree $d_G(v)$ of every vertex $v \in V(G)$, and~$(ii)$~for each $G_i$, a (randomized) connectivity sketch $D_i$ (Proposition~\ref{prop:dynamicConnectivitySketch}). Note that $D_i$ is a spanning-forest of $G_i$ and we can enumerate $E(D_i)$. Further, we require $\tilde\OO(r \cdot n)$ space to store all these data-structures. Let $G'$ be the graph with vertex set $V(G)$ and edge-set $\cup_{1 \leq i \leq r} E(D_i)$, and note that we can explicitly construct $G'$, which requires $\tilde\OO(r \cdot n)$ space.

    \item \textbf{Reconstruction Algorithm.} Next, let us describe the post-processing algorithm to reconstruct $G$. Recall that our goal is to devise an $n^{t+\OO(1)}$-time and $\tilde\OO(n)$-space algorithm that, given a pair of vertices $u,v \in V(G)$ as input, outputs that either $\{u,v\} \in E(G)$, or $\{u,v\} \notin E(G)$, or else $G$ is not a $t$-flow graph. This algorithm is as follows. We first define a graph $\tilde{G}$ on the vertex $V(G)$, and compute the degree $d_{\tilde{G}}(v)$ of every vertex $v \in V(G)$.
    Note that this graph $\tilde{G}$ is not explicitly constructed.
    Initially, let $\tilde{G} = G'$.  Then we consider every distinct pair of non-adjacent vertices $x,y$ in $\tilde{G}$. For each vertex pair $x,y$, we determine if $\{x,y\} \in E(G)$ or not as follows. We consider every subset $S$ of at most $t$ vertices such that $x,y \notin S$. For each choice of $S$, we first determine $V_i \in \cal F$ such that $x,y \in V_i$ and $S \cap V_i = \emptyset$. Then we check if $x$ and $y$ are connected in $G_i = G[V_i]$ by using its connectivity sketch $D_i$.
    If $x$ and $y$ are connected in $G_i$ for every choice of $S$, then we conclude that edge $\{x,y\}$ is in $\tilde{G}$ and increment $d_{\tilde{G}}(x)$ and $d_{\tilde{G}}(y)$ by 1 each. Furthermore, if the pair $x,y$ is the input pair $u,v$, then we also remember if this particular pair forms an edge or a non-edge.
    Finally, when all vertex pairs have been processed, we check if $d_G(v)$ is equal to $d_{\tilde{G}}(v)$ for all $v \in V(G)$. If not, then we output that $G$ is not a $t$-flow graph. Otherwise, we output that $G$ is not a $t$-flow graph, and the succinct representation queries, given $u,v$, whether or not $\{u,v\} \in E(\tilde{G})$.
    Observe that, it is clear that the above algorithm requires $n^{t + \OO(1)}$ time and $\tilde\OO(n)$ space.
    \end{itemize}

    We now argue that above reconstruction algorithm is correct. Towards this, we first show that $E(G) \subseteq E(\tilde{G})$, irrespective of whether or not $G$ is a $t$-flow graph. Consider any edge $\{x,y\} \in E(G)$, and consider any subset of vertices $S \subseteq V(G)$ such that $|S| \leq t$ and $x,y \notin S$. Then, the separating-family $\cal F$ contains $V_i$ such that $x,y \in V_i$ and $S \cap V_i = \emptyset$. Since $\{x,y\} \in E(G_i)$ and $D_i$ is a spanning forest of $G_i = G[V_i]$ (from the connectivity sketch by Proposition~\ref{prop:dynamicConnectivitySketch}), we have that $x,y$ must in the same connected component of $D_i$. Hence, $x,y$ are connected in $G_i$. Since this holds for every choice of $S$, it follows that $\{x,y\} \in E(\tilde{G})$. By applying this argument to every edge in $E(G)$, we obtain that $E(G) \subseteq E(\tilde{G})$.

    Next, let us argue that $E(\tilde{G}) \subseteq E(G)$, assuming that $G$ is a $t$-flow graph. Towards this, consider an edge $\{x,y\} \in E(\tilde{G})$. As $G'$ is a subgraph of both $G$ and  $\tilde{G}$, if $\{x,y\} \in E(G')$ then $\{x,y\} \in E(G)$. Otherwise, $\{x,y\} \notin E(G')$. Then, for every $S \subseteq V(G)$ such that $|S| \leq t$ and $x,y \notin S$, consider $V_i \in \cal F$ such that $x,y \in V_i$ and $S \cap V_i =\emptyset$. Observe that, as $\{x,y\} \in E(\tilde{G})$, there is a path in $G_i = G[V_i]$ from $x$ to $y$. This path is also present in the graph $G' - S$. Since this holds for every choice of $S$, the minimum-separator of $x$ and $y$ in $G'$ has $t+1$ or more vertices. Therefore, assuming that $G$ is a $t$-flow graph, $\{x,y\} \in E(G)$. Finally, this argument holds for every edge in $\tilde{G}$, and hence $E(\tilde{G}) \subseteq E(G)$. Combined with the above we obtain that $E(G) = E(\tilde{G})$, when ever $G$ is a $t$-flow graph.

    Finally, let us argue that our reconstruction algorithm is correct. 
    In one direction let $G$ be a $t$-flow graph. Then, as $E(G) = E(\tilde{G})$, we have $d_G(v) = d_{\tilde{G}}(v)$ for every vertex $v \in V(G)$. Hence the reconstruction algorithm correctly concludes that $G$ is a $t$-flow graph (and the succinct representation is correct).
    In the reverse direction, suppose that the algorithm concludes that $G$ is a $t$-flow graph. This means that $d_G(v) = d_{\tilde{G}}(v)$ for every vertex $v \in V(G)$. We give a proof by contradiction, and let us assume that $G$ is not a $t$-flow graph. This means there exists $u,v \in V(G)$ such that $\{u,v\} \notin E(G)$ and for every $S \subseteq V(G)$, $|S| \leq t$, $u$ and $v$ are connected in $G - S$. Then $\{u,v\} \in E(\tilde{G})$, and as $E(G) \subseteq E(\tilde{G})$, it must be the case that $d_{\tilde{G}}(v) \geq d_{G}(v) + 1$. But this is a contradiction since $d_G(v) = d_{\tilde{G}}(v)$. Therefore, $G$ must be a $t$-flow graph.
    This concludes the proof of this lemma.
\end{proof}

We are now ready to present a reconstruction algorithm for $t$-block graphs.
Towards this we require a few results about chordal graphs.
\begin{definition}\label{defn:peo}
    A \emph{perfect elimination ordering} of a graph $G$ is an ordering $v_1, v_2, \ldots v_n$ of the vertex set $V(G)$ such that for any $i \in \{1,2, \ldots n\}$, $N_{G_i}(v_i)$ is a clique in the graph $G_i = G - \{v_1, v_2, \ldots v_{i-1}\}$.
\end{definition}

The following is a well-known characterization of chordal graph.
Recall that a vertex $v$ in a graph $G$ is called \emph{simplician} if $N(v)$ induces a clique in $G$.
\begin{proposition}[\cite{brandstadt1999graph}]\label{prop:chordal-peo}
A graph $G$ is a chordal graph if and only if it has a perfect elimination ordering. Furthermore, given a chordal graph $G$ such an ordering can be obtained (greedily) by repeatedly picking a simplicial vertex in the current graph and deleting it.
\end{proposition}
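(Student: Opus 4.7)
The plan is to prove the two directions of the biconditional separately, and then observe that the constructive ``furthermore'' clause is essentially free once the forward direction is established.

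For the easier direction, namely that a PEO implies chordality, I would argue by contradiction. Suppose $v_1,\ldots,v_n$ is a PEO of $G$, but $G$ contains an induced cycle $C$ of length at least $4$. Let $i$ be the smallest index such that $v_i \in V(C)$, and let $x,y$ be the two neighbors of $v_i$ on $C$. Then $x,y \in V(G_i)$ and $x,y \in N_{G_i}(v_i)$, so by the PEO property the edge $\{x,y\}$ must be present. But $x$ and $y$ are at distance $2$ along $C$, so $\{x,y\}$ would be a chord, contradicting the fact that $C$ is induced.

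For the forward direction, the workhorse is the following lemma: every chordal graph with at least one vertex has a simplicial vertex, and if it is not complete then it has two non-adjacent simplicial vertices. I would prove this by induction on $|V(G)|$. If $G$ is a clique, every vertex is simplicial. Otherwise, pick two non-adjacent vertices $a,b$ and let $S$ be a minimal $a$-$b$ separator; let $A,B$ be the vertex sets of the connected components of $G-S$ containing $a,b$ respectively. Applying the induction hypothesis to $G[A \cup S]$ and $G[B \cup S]$ (both proper subgraphs of $G$, and chordal because chordality is hereditary), each of these either is a clique or contains two non-adjacent simplicial vertices. In the latter case, since $S$ itself is a clique (see below), at least one of the two non-adjacent simplicial vertices must lie outside $S$; that vertex's neighbors all lie in its own component plus $S$, and its neighborhood is a clique in the smaller graph, which is precisely its neighborhood in $G$. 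So it is simplicial in $G$. The case where the smaller graph is a clique yields $a$ or $b$ as simplicial. In either case we get simplicial vertices in both $A$ and $B$, and these are non-adjacent because $S$ separates them.

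The key sub-fact I used is that every minimal $a$-$b$ separator $S$ in a chordal graph is a clique. I would prove this by contradiction: if $u,v \in S$ were non-adjacent, then minimality of $S$ forces both $u$ and $v$ to have neighbors in the component of $a$ and in the component of $b$; pick shortest paths $P_a$ through the $a$-side and $P_b$ through the $b$-side connecting $u$ to $v$. Concatenating $P_a$ and $P_b$ gives an induced cycle of length at least $4$ (induced because all internal vertices of $P_a$ lie in the $a$-component and all internal vertices of $P_b$ lie in the $b$-component, and the paths are shortest), contradicting chordality. With the simplicial-vertex lemma in hand, the forward direction follows by induction: pick any simplicial vertex $v_1$, set $v_1$ to be first in the ordering, and recurse on $G - v_1$, which remains chordal. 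The same argument shows that any greedy choice of a simplicial vertex at each step succeeds, establishing the ``furthermore'' clause.

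The main obstacle will be the sub-lemma that minimal separators in chordal graphs are cliques, together with ensuring that the inductive step produces a simplicial vertex \emph{outside} the separator $S$ rather than one whose neighborhood in the smaller graph fails to equal its neighborhood in $G$. Once these are handled carefully the rest of the argument is a routine induction.
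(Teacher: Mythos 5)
Your proof is correct and complete: the reverse direction via the minimum-index vertex on an induced long cycle, and the forward direction via Dirac's lemma (chordal implies a simplicial vertex exists, proved by induction using the fact that minimal separators in chordal graphs are cliques), is the classical textbook argument. The paper itself does not prove this proposition at all --- it is stated as a known result and cited to Brandstädt, Le, and Spinrad's survey --- so there is no paper proof to compare against; your argument matches the standard one found in that reference.
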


\begin{lemma}\label{lemma:BVD:reconst}
    Let $t \geq 1$ be an integer. The class of $t$-block graphs has a 1-pass randomized reconstruction algorithm, with $n^{t+\OO(1)}$ post-processing time and $\widetilde\OO(n)$ space.
\end{lemma}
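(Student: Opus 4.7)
The plan is to combine Lemma~\ref{lem:recon-t-flow} with a post-processing chordality check, leveraging Observation~\ref{obs:chordal-t-block} which says that a graph is a $t$-block graph if and only if it is a chordal $t$-flow graph. In the single streaming pass, I would simply run the reconstruction algorithm of Lemma~\ref{lem:recon-t-flow} for $t$-flow graphs. This already gives us, in $\widetilde{\OO}(n)$ space, either a correct rejection (in which case we output that $G$ is not a $t$-block graph, since every $t$-block graph is a $t$-flow graph) or a succinct representation of $G$ which, w.h.p., correctly answers edge queries in $n^{t+\OO(1)}$ time each.

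Given this query-access to $E(G)$, post-processing reduces to deciding chordality. Using Proposition~\ref{prop:chordal-peo}, I would attempt to compute a perfect elimination ordering greedily: maintain a set $D$ of already-eliminated vertices (initially empty), and at each of the $n$ iterations, scan vertices $v \notin D$ looking for one that is simplicial in $G-D$. To test simpliciality of $v$, first compute $N_{G-D}(v)$ by issuing an edge query for each pair $(v,u)$ with $u \in V(G) \setminus D$ (at most $n$ queries), and then verify that every pair inside $N_{G-D}(v)$ is an edge of $G$ by issuing at most $n^2$ further edge queries. If a simplicial vertex is found, add it to $D$ and continue; if no simplicial vertex exists at some iteration, output that $G$ is not a $t$-block graph. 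If all $n$ iterations succeed, output that $G$ is a $t$-block graph, with the same succinct representation produced by Lemma~\ref{lem:recon-t-flow} as the query-structure.

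For correctness: if $G$ is a $t$-block graph, then (w.h.p.) the first stage succeeds and the succinct representation correctly captures $E(G)$; by Proposition~\ref{prop:chordal-peo}, $G$ admits a PEO discoverable by the greedy procedure, so the algorithm accepts. Conversely, if the algorithm accepts, then $G$ is a $t$-flow graph (from Lemma~\ref{lem:recon-t-flow}) and chordal (from Proposition~\ref{prop:chordal-peo} applied to the successful greedy run), so by Observation~\ref{obs:chordal-t-block} it is a $t$-block graph. For resources: the streaming phase inherits the $\widetilde{\OO}(n)$ space and $1$-pass guarantees of Lemma~\ref{lem:recon-t-flow}; the post-processing performs $\OO(n^4)$ edge queries, each costing $n^{t+\OO(1)}$ time, giving $n^{t+\OO(1)}$ total time, and only stores the set $D$ and a single candidate neighborhood at a time on top of the succinct representation, keeping space at $\widetilde{\OO}(n)$.

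There is no real obstacle here beyond recognizing that we must never materialize $E(\tilde G)$ explicitly (which could be $\Omega(n^2)$); the simpliciality test must be phrased purely through the edge-query oracle, which is why the running time carries the $n^{t+\OO(1)}$ factor. The randomness enters only through the underlying $t$-flow reconstruction, so the algorithm succeeds with high probability, matching the guarantees of Lemma~\ref{lem:recon-t-flow}.
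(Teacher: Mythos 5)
Your proof follows the same approach as the paper: run the $t$-flow reconstruction from Lemma~\ref{lem:recon-t-flow} in the streaming phase, then in post-processing use the edge-query oracle it yields to greedily compute a perfect elimination ordering (Proposition~\ref{prop:chordal-peo}) and invoke Observation~\ref{obs:chordal-t-block} for correctness. Your accounting of the $n^{t+\OO(1)}$ per-query cost and the $\OO(n^4)$ query bound is, if anything, slightly more explicit than the paper's, but the argument is the same.
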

\begin{proof}
    Let $G$ denote the input graph.
    We begin by applying the reconstruction algorithm for $t$-flow graph from  Lemma~\ref{lem:recon-t-flow} to $G$. If it concludes that $G$ is not a $t$-flow graph, then we output that $G$ is not a $t$-block graph. Otherwise, $G$ is a $t$-flow graph we obtain a succinct representation of $G$.
    Now ,our goal is to test if $G$ is also a chordal graph.
    Here we apply the algorithm of Proposition~\ref{prop:chordal-peo}, which requires that we repeatedly pick a simplicial vertex  in the current graph and removing it. Observe that it is enough to describe a $\widetilde\OO(n)$ algorithm to find a simplicial vertex in the current remaining graph. This algorithm is as follows. We examine the vertices one by one, and for each vertex $v$ we first enumerate its neighborhood $N(v)$, and then test if every pair of vertices in it forms an edge or not. If so, we conclude that $v$ is a simplicial vertex. Otherwise, $v$ is not a simplicial vertex. It is clear that this runs in $n^{\OO(1)}$ time and $\tilde\OO(n)$ space.

    Our algorithm tests if the $t$-flow graph is chordal by attempting to compute a perfect elimination ordering of it. If it succeeds, then we output that $G$ is a $t$-block graph, and also output its succinct representation from Lemma~\ref{lem:recon-t-flow}. Otherwise, we output that $G$ is not a $t$-block graph. To argue the correctness of this algorithm, observe that if the algorithm does succeed in computing a perfect elimination ordering of $G$, then it is a chordal by Proposition~\ref{prop:chordal-peo}. Hence, $G$ is a $t$-block graph by Observation~\ref{obs:chordal-t-block}. In the reverse direction, if $G$ is a $t$-block graph, then it is a chordal $t$-flow graph. Hence, by Proposition~\ref{prop:chordal-peo}, our algorithm produces a perfect elimination ordering of $G$. Hence it outputs that $G$ is a $t$-block graph. This concludes the proof of this lemma.
\end{proof}

Let us remark that the above lemma, which describes a linear-space algorithm to recognize whether a given $t$-flow graph is also chordal, crucially depends on the fact that the $t$-flow graphs have a linear-space reconstruction algorithm.
In contrast, for general graphs we obtain a lower-bound $\Omega(n^2)$ for recognizing chordal graphs (see Theorem~\ref{thm:chordalLowerBound}).

\subsubsection{Post-processing Algorithm}

In this section we present the post-processing algorithm for {\sc Block Vertex Deletion} (BVD). This algorithm takes as input a graph $G$ and an integer $k$, and outputs a vertex subset $S \subseteq V(G)$ such that $G - S$ is a block graph, if one exists. This algorithm uses $\widetilde\OO(n)$ space, and in particular we assume that the input graph $G$ is presented in read-only memory.
Our algorithm follows the approach of ~\cite{AgrawalKLS16} (to obtain the fastest possible algorithm) with suitable modifications to use only $\tilde\OO(n)$ space.

Let us first consider the case when the input graph $G$ is $\{C_4, D_4\}$-free. Then, we have the following proposition from~\cite{AgrawalKLS16}.

\begin{proposition}[Lemma 6~\cite{AgrawalKLS16}]
\label{prop:BVD:cliqueBound}
    Let $G$ be a graph that is $\{C_4, D_4\}$-free.
    Then $G$ has at most $n^2$ maximal cliques, and any two maximal cliques intersect in at most one vertex.
\end{proposition}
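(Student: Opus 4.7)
The plan is to first establish the stronger structural assertion---that any two distinct maximal cliques of $G$ share at most one vertex---and then deduce the $n^2$ cardinality bound by a straightforward edge-counting argument. Only the $D_4$-freeness appears to be essential for this proposition; the $C_4$-freeness is presumably invoked elsewhere in the algorithm of \cite{AgrawalKLS16} and is not actually needed here.

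For the structural claim I would argue by contradiction. Suppose two distinct maximal cliques $K_1 \neq K_2$ satisfy $|K_1 \cap K_2| \geq 2$, and pick $u, v \in K_1 \cap K_2$. By maximality of $K_1$ and the assumption $K_1 \neq K_2$, there exists $a \in K_1 \setminus K_2$. Since $K_2$ is maximal and $a \notin K_2$, the set $K_2 \cup \{a\}$ cannot be a clique, so some $b \in K_2$ is non-adjacent to $a$; crucially, $b \notin \{u,v\}$, because $a$ lies in $K_1$ together with both $u$ and $v$ and is therefore adjacent to them. The induced subgraph $G[\{a, b, u, v\}]$ then contains the edges $\{a,u\}, \{a,v\}$ (from $K_1$), the edges $\{b,u\}, \{b,v\}$ (from $K_2$), the edge $\{u,v\}$, and only the non-edge $\{a,b\}$. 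This is precisely $K_4$ minus one edge, i.e.\ $D_4$, contradicting $D_4$-freeness.

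For the cardinality bound, observe that every maximal clique of size at least two contains at least one edge of $G$, and by the structural claim just proved, no edge of $G$ lies in two distinct maximal cliques (its two endpoints would witness an intersection of size at least two). Hence the map sending each maximal clique of size $\geq 2$ to an arbitrary edge it contains is injective, giving at most $|E(G)| \leq \binom{n}{2}$ such cliques. There are also at most $n$ maximal cliques of size one (namely the isolated vertices). Summing, the total is at most $\binom{n}{2} + n \leq n^2$.

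The only step that requires real thought is extracting the non-neighbor $b$ of $a$ inside $K_2$ in the first part; once the four vertices $\{a,b,u,v\}$ are in hand, the induced $D_4$ is immediate and everything else is bookkeeping. This lemma will then feed into the reduction to \textsc{Feedback Vertex Set} on an auxiliary clique-incidence graph described in the overview for Theorem~\ref{thm:blockGraphs}, where the quadratic bound on the number of maximal cliques governs the size of that auxiliary graph.
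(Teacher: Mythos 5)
Your proof is correct and, since the paper merely cites this as Lemma~6 of Agrawal et al.\ without reproducing the argument, your route is the natural one and surely matches the source: $D_4$-freeness forces any two distinct maximal cliques to share at most one vertex, and injecting each maximal clique of size at least two into one of its edges (counting singleton cliques separately) gives the $n^2$ bound. Your side remark that $C_4$-freeness is dispensable for this particular statement is also accurate; it is invoked only in the follow-up lemma relating block vertex deletion sets to feedback vertex sets of the auxiliary clique-incidence graph.
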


Given a graph $G$ that is $C_4$ and $D_4$ free, consider the \emph{auxiliary graph} $H$ constructed from $G$ as follows. $H$ is a bipartite graph with vertex set $V(H) = V(G) \cup V_{\cal C}$ where $\cal C$ is the collection of all maximal cliques in $G$ and $V_{\cal C}$ is a collection of vertices, with one vertex $v_C$ for each element $C$ of $\cal C$. The edge set $E(H)$ consists of all pairs $(v, v_C)$, where $v \in V(G)$, $v_C \in V_{\cal C}$ and $C \in \cal C$, such that the vertex $v$ belongs to the clique $C$.

\begin{proposition}[Lemma 7 ~\cite{AgrawalKLS16}]
\label{prop:BVD:FVS}
    Let $G$ be a $\{C_4, D_4\}$-free graph, and let $H$ be the auxiliary graph of $G$. Then, $S \subseteq V(G)$ is a block vertex deletion set of $G$ if and only if $S$ is a feedback vertex set of $H$.
\end{proposition}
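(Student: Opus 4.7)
My plan is to prove each direction by converting cycles in $H - S$ into forbidden structures in $G - S$, and vice versa, with the key leverage coming from Proposition \ref{prop:BVD:cliqueBound}: in a $\{C_4, D_4\}$-free graph any two maximal cliques share at most one vertex. Since $H$ is bipartite with parts $V(G)$ and $V_{\cal C}$ and $S \cap V_{\cal C} = \emptyset$, every cycle of $H - S$ has even length $2r$ with $r \geq 2$ and alternates between the two parts, so it has the form $v_1, v_{C_1}, v_2, v_{C_2}, \ldots, v_r, v_{C_r}, v_1$ with distinct $v_i \in V(G) \setminus S$, distinct $v_{C_i} \in V_{\cal C}$, and $\{v_i, v_{i+1}\} \subseteq C_i$ (indices mod $r$).

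For the forward direction, I will assume $S$ is a block vertex deletion set, so $G - S$ is a block graph. In particular, every cycle of $G - S$ lies inside a single clique, since each biconnected component is a clique. Assuming toward a contradiction that $H - S$ contains a cycle as above, I will split into three cases. When $r = 2$, the vertices $v_1, v_2$ lie in both $C_1$ and $C_2$, contradicting the single-vertex intersection property of Proposition \ref{prop:BVD:cliqueBound}. When $r = 3$, the edges $v_1 v_2, v_2 v_3, v_3 v_1$ form a triangle in $G - S$ lying inside a unique maximal clique $C$ of $G$; since $|C \cap C_i| \geq 2$ for each $i$, Proposition \ref{prop:BVD:cliqueBound} forces $C = C_1 = C_2 = C_3$, contradicting the distinctness of the $v_{C_i}$'s. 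When $r \geq 4$, the walk $v_1 v_2 \cdots v_r v_1$ is a genuine cycle in the block graph $G - S$, hence all its vertices lie inside a single maximal clique $C$ of $G$, and the same argument forces $C = C_1 = \cdots = C_r$, again a contradiction.

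For the reverse direction, I will assume $S$ is a feedback vertex set of $H$. Since $G - S$ inherits $\{C_4, D_4\}$-freeness from $G$, if $G - S$ failed to be a block graph then, by the characterization of block graphs as $\{D_4, C_{\ell + 4} : \ell \geq 0\}$-free graphs, it would contain an induced cycle on some $\ell \geq 5$ vertices $v_1 v_2 \cdots v_\ell v_1$. For each $i$, I will pick a maximal clique $C_i$ of $G$ containing the edge $v_i v_{i+1}$ and consider the closed walk $v_1, v_{C_1}, v_2, v_{C_2}, \ldots, v_\ell, v_{C_\ell}, v_1$ in $H - S$. The $v_i$'s are distinct by construction; to argue that the $v_{C_i}$'s are also distinct, I will observe that any coincidence $C_i = C_j$ with $i \neq j$ forces three or four vertices of the induced cycle into a common maximal clique of $G$, producing a chord $v_a v_b$ with $v_a, v_b$ non-adjacent on the cycle and hence contradicting inducedness (here $\ell \geq 5$ together with Proposition \ref{prop:BVD:cliqueBound} handles the sub-case $|i - j| = 1$ cleanly). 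This yields a simple cycle in $H - S$, contradicting the feedback vertex set assumption.

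The main obstacle I anticipate is ensuring that the lifted walks are genuine simple cycles rather than merely closed walks; the bookkeeping of when two of the $C_i$ can coincide, or when two maximal cliques can share a vertex, requires a careful case split. In each case, however, Proposition \ref{prop:BVD:cliqueBound} delivers the contradiction by forcing too many distinct cycle vertices into a single maximal clique of $G$.
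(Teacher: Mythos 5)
Your proposal is correct. The paper does not itself prove this proposition --- it is imported verbatim as Lemma~7 from~\cite{AgrawalKLS16} --- so there is no in-paper proof to compare against, but your argument is a sound and essentially complete reconstruction of the standard one. Both directions are handled correctly: the forward direction correctly exploits that cycles in the bipartite $H-S$ alternate between $V(G)\setminus S$ and $V_{\cal C}$, project to triangles or longer cycles of $G-S$, and (in a block graph) collapse into a single maximal clique, which then forces all the $C_i$'s to coincide via Proposition~\ref{prop:BVD:cliqueBound}; the reverse direction correctly lifts an induced $C_\ell$ ($\ell \ge 5$) of $G-S$ to a cycle in $H-S$ by passing through the (unique) maximal clique on each cycle edge, with distinctness of the $v_{C_i}$'s following from the chordlessness of the induced cycle.

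Two small remarks. First, your separate $D_4$-based treatment of the $r=3$ case in the forward direction is correct but unnecessary: a triangle is also a cycle of the block graph $G-S$, so it too lies in a single block (a clique), and the $r\ge 4$ argument applies verbatim to $r=3$; only $r=2$ genuinely needs to be split off (and there Proposition~\ref{prop:BVD:cliqueBound} alone closes it). Second, the parenthetical claim that ``$\ell\ge5$ together with Proposition~\ref{prop:BVD:cliqueBound} handles the sub-case $|i-j|=1$'' slightly misattributes the work being done: when $C_i=C_{i+1}$ one gets $v_i,v_{i+1},v_{i+2}$ in a common clique, and the chord $v_iv_{i+2}$ contradicts inducedness purely from $\ell\ge5$; Proposition~\ref{prop:BVD:cliqueBound} is not needed there (it is what guarantees the maximal clique on each edge is unique in the first place, which you use implicitly when choosing the $C_i$'s). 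Neither point affects correctness.
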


Agrawal, Kolay, Lokshtanov and Saurabh~\cite{AgrawalKLS16} explicitly constructed $H$ and used an algorithm for {\sc Feedback Vertex Set} on $H$ (while using weights to ensure that only vertices from $V(G)$ are picked). We cannot do this, since just storing the vertex set of $H$ takes $\OO(n^2)$ space. So, we have a more difficult task of resolving the {\sc Feedback Vertex Set} problem on $H$ without even constructing its vertex set explicitly.

For our algorithm, we require the following subroutine that counts the number of maximal cliques in $G$ that contain a specified vertex $v$.

\begin{lemma}
\label{lemma:cliqueCountAlgo}
    Let $G$ be a graph and let $v$ be any vertex. Then there is an algorithm that counts the number of maximal cliques in $G$ that contain $v$ using $\tilde\OO(n \cdot \log \alpha)$ space in time $\alpha \cdot n^{\OO(1)}$, where $\alpha$ denotes the number of maximal cliques in $G$ that contain $v$.
\end{lemma}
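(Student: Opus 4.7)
The plan is to reduce the counting task to the enumeration of maximal cliques in the closed neighborhood subgraph of $v$, and then to use a classical polynomial-delay, polynomial-space enumeration subroutine. First, observe that maximal cliques of $G$ containing $v$ are in bijection with maximal cliques of the induced subgraph $G[N[v]]$: on one hand, if $C$ is a maximal clique of $G$ containing $v$, then $C \subseteq N[v]$ and $C$ remains maximal in $G[N[v]]$ (since any extension in $G[N[v]]$ would extend it in $G$); on the other hand, every maximal clique of $G[N[v]]$ must contain $v$ (otherwise it could be extended by $v$, which is adjacent to every vertex of $N(v)$), and such a clique is clearly also maximal in $G$ because any vertex outside $N[v]$ is non-adjacent to $v$. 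So it suffices to count maximal cliques in the graph $H := G[N[v]]$, and there are exactly $\alpha$ of them.

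Next, I would invoke a standard polynomial-delay, polynomial-space maximal clique enumeration algorithm on $H$, such as the one of Tsukiyama, Ide, Ariyoshi and Shirakawa applied to the complement of $H$, or the Johnson–Papadimitriou–Yannakakis reverse-search tree whose nodes are labelled by maximal cliques, whose parent/child relation is computed in polynomial time, and whose depth and degree are polynomial in $|V(H)| \le n$. Traversing this tree by depth-first search visits every maximal clique exactly once using only polynomial working space for the current stack (each stack frame stores a candidate subset encoded in $O(n)$ bits, and the stack has depth $O(n)$). The total running time is $\alpha$ times the delay, which is $n^{O(1)}$, giving an overall time bound of $\alpha \cdot n^{O(1)}$ as claimed.

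Finally, to obtain the count, I would simply maintain an integer counter that is incremented each time a new maximal clique is emitted by the enumeration procedure; no other data needs to be retained between outputs because the enumeration algorithm never revisits a clique. The counter never exceeds $\alpha$ and therefore occupies $O(\log \alpha)$ bits, while the enumeration algorithm itself uses $\widetilde{O}(n)$ space (essentially storing the current DFS stack and the subgraph $H$, which can be accessed via the read-only input $G$ together with the $n$-bit indicator of $N[v]$). Summing the two contributions yields the desired $\widetilde{O}(n \cdot \log \alpha)$ space bound.

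The only potentially subtle step is ensuring that the chosen enumeration algorithm genuinely runs in polynomial space, since some natural implementations of Bron–Kerbosch-style enumeration may accumulate exponentially many intermediate cliques; however, this is not a real obstacle because the reverse-search paradigm of Avis–Fukuda, instantiated to maximal cliques as in Johnson–Papadimitriou–Yannakakis, explicitly yields polynomial delay and polynomial working space, and this is by now a standard black-box subroutine.
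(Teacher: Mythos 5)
Your high-level strategy is sound and genuinely different from the paper's. You reduce the counting task to enumerating maximal cliques of $H = G[N[v]]$ (observing the clean bijection with maximal cliques of $G$ through $v$) and then delegate to a known polynomial-total-time maximal-clique enumeration subroutine, whereas the paper presents a self-contained recursive branching routine (\texttt{Count-X-Cliques}) that directly generates the maximal cliques containing $v$ in a fixed lexicographic order. Your route is more modular; what the paper's route buys is explicit control over per-frame storage, which is precisely the place where your version is shaky.

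Indeed, there is a gap in your space accounting. You first write that each DFS stack frame ``stores a candidate subset encoded in $O(n)$ bits, and the stack has depth $O(n)$,'' which is $\Theta(n^2)$ bits of working memory, yet two sentences later you assert that ``the enumeration algorithm itself uses $\widetilde{\OO}(n)$ space.'' These statements are inconsistent, and the distinction matters: $\widetilde{\OO}(n \cdot \log\alpha)$ collapses to $\widetilde{\OO}(n)$ once $\alpha = n^{\OO(1)}$, which is precisely the regime in which the paper later invokes this lemma (for $\{C_4,D_4\}$-free graphs, where $\alpha \le n^2$), so a $\Theta(n^2)$-space implementation would not do. The repair is the same trick the paper makes explicit: keep a \emph{single} global $n$-bit indicator for the current partial clique, mutate it in place on descent and on backtracking, and let each stack frame record only the $O(\log n)$-bit identity of the vertex added at that level together with an $O(\log\alpha)$-bit running count, so the stack contributes $O\bigl(n(\log n + \log\alpha)\bigr)$ bits in total. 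A second, smaller issue is the attribution: the Johnson--Papadimitriou--Yannakakis algorithm for lexicographic enumeration of maximal independent sets achieves polynomial delay but may require exponential space, and it is not a reverse-search algorithm, so it is not the right citation for a simultaneous polynomial-delay, polynomial-space guarantee. Tsukiyama--Ide--Ariyoshi--Shirakawa, which you also mention, or a genuine Avis--Fukuda-style reverse-search formulation for maximal cliques, is the reference you want.
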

\begin{proof}
    We present the algorithm in the following pseudo-code. The input is a graph $G$ and a vertex $v$. We assume that the vertices of $G$ are numbered from $1$ to $n$ in some arbitrary order.

\begin{algorithm}
\caption{Counting the number of Maximal Cliques in $G$ containing vertex $v$}
\SetAlgoLined
    \KwIn{Graph $G$ on $n$ vertices, vertex $v$}
    \KwOut{Number of maximal cliques in $G$ containing vertex $v$}
    \tcp*[h]{GLOBAL VARIABLES}\\
    $\sigma \gets$ an arbitrary ordering of the vertices in $V(G)$ where $\sigma(v) = 1$\\
    \tcp*[h]{bitvector of the current partial maximal clique}\\
    $X \gets $ bit-vector of length $n$ initialized to all $0$\\
    $X[1] \gets 1$ \tcp*[h]{Add $v$ to $X$}\\
    \SetKwFunction{FMain}{Count-X-Cliques}
    \SetKwProg{Fn}{Function}{:}{}
    \Fn{\FMain{}}{
        $Stop \gets \sf{False}$\\
        $uPrev \gets 0$\\
        $numCliques \gets 0$\\
        \While{not Stop}{
            $u \gets$ the common neighbor of $X$ in $V(G) \setminus X$ minimizing $\sigma(u)$ such that $\sigma(u) > \sigma(uPrev)$ and $\sigma(u) > \sigma(w) ~\forall w \in X$\\
            \uIf{$u = \sf{NULL}$}{
                \If{$numCliques = 0$} {
                    \tcp*[h]{then the input $X$ cannot be extended to a larger clique}\\
                    $numCliques \gets 1$\\
                }
                $Stop \gets \sf{True}$\\
            }
            \Else{
                $uPrev \gets u$\\
                $X[\sigma(u)] \gets 1$ \tcp*[h]{Extend $X$ by the vertex $u$}\\
                \tcp*[h]{Recursively count maximal cliques containing $X$ extended by $u$}\\
                $uCliques \gets$ \FMain{}\\
                $numCliques \gets numCliques + uCliques$\\

                $X[\sigma(u)] \gets 0$ \tcp*[h]{Restore $X$}\\
            }

        }
        \Return{$numCliques$}
    }
\end{algorithm}

The algorithm is implemented a simple branching algorithm.
We start by defining an arbitrary ordering $\sigma$ of the vertices in $V(G)$ such that $v$ is the first vertex; note that this requires $\OO(n)$ space and $n^{\OO(1)}$ time. We then define a global length-$n$ bit-vector $X$ that contains the vertices of the current (partial) maximal clique. The algorithm is implemented via the function \texttt{Count-X-Cliques()} that makes recursive calls at each branching. When this function is called, it iterates over all the common neighbors $u$ of $X$ in $V(G) \setminus X$ such that $\sigma(u) > \sigma(w) ~ \forall w \in X$, one by one from smallest to largest. For each $u$, it counts the number of maximal cliques containing $X \cup \{u\}$. It adds up all these numbers and returns the sum as the number of maximal cliques in $G$ containing $X$. Observe that, to determine the next common neighbor of $X$, it only needs to remember the previous common neighbor and check all larger vertices in $V(G) \setminus X$ one by one. Note that this can be done in $n^{\OO(1)}$ time and $\OO(\log n)$ space.  If no common neighbor of $X$ exists, then $X$ itself is a maximal clique and we return $1$. It is clear that every maximal clique in $G$ that contains $v$ is generated and counted once, in the lexicographic order induced by $\sigma$ on subsets of $V(G)$. Indeed, for any two common neighbors $u, u'$ of $X$ such that $\sigma(u) < \sigma(u')$, the maximal cliques that contains both vertices are only generated in the branch for $X \cup u$ (and not in the branch for $X \cup u'$). Next, we show the claimed bounds on space and time.

Observe that each call to the function \texttt{Count-X-Cliques()} requires only a constant number of local-variables, that each require at most $\log \alpha$ bits to store. Further, the depth of recursion is upper-bounded by $n$ since each recursive call sets a $0$-bit of $X$ to $1$. Hence the total space used by the algorithm is $\tilde\OO(n \cdot \log \alpha)$.
Finally, observe that after any $n$ successive calls made to \texttt{Count-X-Cliques()}, we must encounter a maximal clique of $G$ that contains $v$, and each call internally requires $n^{\OO(1)}$ time.
Hence the total time required by this algorithm is $\alpha \cdot n^{\OO(1)}$.
\end{proof}

As a corollary of Proposition~\ref{prop:BVD:cliqueBound} and Lemma~\ref{lemma:cliqueCountAlgo}, we obtain the following.
\begin{lemma}
\label{lemma:BVD:cliqueCount}
    Let $G$ be a graph that is $\{C_4, D_4\}$-free. Then in $n^{\OO(1)}$ time and $\tilde\OO(n)$ space, we can count and store the number of maximal cliques in $G$ that contain $v$ for every vertex $v \in V(G)$.
\end{lemma}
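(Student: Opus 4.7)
The plan is to apply Lemma \ref{lemma:cliqueCountAlgo} once for each vertex of $G$ and simply store the resulting counts in an array indexed by $V(G)$. The first step is to note that since $G$ is $\{C_4, D_4\}$-free, Proposition \ref{prop:BVD:cliqueBound} bounds the total number of maximal cliques in $G$ by $n^2$, so for every $v \in V(G)$ the parameter $\alpha_v$ (the number of maximal cliques containing $v$) is at most $n^2$.

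Given this bound, each call to the subroutine of Lemma \ref{lemma:cliqueCountAlgo} on the pair $(G,v)$ terminates in time $\alpha_v \cdot n^{\OO(1)} \leq n^{\OO(1)}$ and uses $\tilde\OO(n \log \alpha_v) = \tilde\OO(n)$ working space, since $\log \alpha_v \leq 2 \log n$. Iterating this over the $n$ vertices of $G$, while reusing the working memory between calls (i.e., freeing the state of the previous invocation before starting the next), yields a cumulative running time of $n \cdot n^{\OO(1)} = n^{\OO(1)}$ and a peak working space of $\tilde\OO(n)$.

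To store the counts persistently, I would maintain an array of length $n$ indexed by $V(G)$, with the $v$-th entry holding $\alpha_v$; each entry needs $\OO(\log n)$ bits, so the persistent storage is $\OO(n \log n) = \tilde\OO(n)$, well within the claimed budget. There is no real obstacle in this argument: the only subtlety is verifying that the worst-case per-vertex guarantee of Lemma \ref{lemma:cliqueCountAlgo} combines cleanly with the global clique-count bound from Proposition \ref{prop:BVD:cliqueBound}, which is exactly what the two steps above accomplish.
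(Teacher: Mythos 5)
Your proof is correct and takes exactly the approach the paper intends: the paper presents Lemma~\ref{lemma:BVD:cliqueCount} as an immediate corollary of Proposition~\ref{prop:BVD:cliqueBound} and Lemma~\ref{lemma:cliqueCountAlgo} without further elaboration, and your argument — bounding $\alpha_v \leq n^2$ via the global clique bound, invoking the per-vertex subroutine with $\tilde\OO(n \log \alpha_v) = \tilde\OO(n)$ space and $n^{\OO(1)}$ time, reusing working memory across the $n$ calls, and storing the $n$ counts in $\OO(n\log n)$ bits — is precisely the intended expansion of that corollary.
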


Next, we present a few lemmas and reduction rules that will be useful for our algorithm.

\begin{lemma}\label{lem:bvd:soln-prop}
    Let $G$ be a graph that is $\{C_4, D_4\}$-free. Then the following statements hold.
    \begin{itemize}
        \item Let $v \in V(G)$ be a vertex that lies in only one maximal clique of $G$. Then, $S \subseteq V(G)$ is a block vertex deletion set of $G$ if and only if it is a block vertex deletion set for $G - v$.

        \item Let $G'$ be the graph obtained from $G$ by iteratively removing all vertices that lie in only one maximal clique. Then, $G'$ can be computed in $n^{\OO(1)}$ time and using $\tilde\OO(n)$ space. Moreover, if $G$ is a block-graph then $G'$ is the empty graph.

        \item Let $H'$ be the auxiliary graph of $G'$. Then, every vertex of $H'$ 
        has at least two distinct neighbors.
    \end{itemize}
\end{lemma}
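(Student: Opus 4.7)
\emph{First claim.} I begin by noting that a vertex lies in exactly one maximal clique of $G$ if and only if it is simplicial, i.e.~$N_G(v)$ is a clique (any clique containing $v$ sits inside $N_G[v]$, so uniqueness forces $N_G[v]$ itself to be a clique). The case $v \in S$ is immediate because $G-S$ and $(G-v)-S$ coincide as graphs. So I assume $v \notin S$. The forward direction is heredity of block graphs: $(G-v)-S$ is an induced subgraph of the block graph $G-S$, hence itself a block graph. For the reverse direction, suppose that $(G-v)-S$ is a block graph while $G-S$ is not. Since $G$, and therefore $G-S$, is $\{C_4, D_4\}$-free, the only possible induced forbidden subgraph of $G-S$ with respect to the block-graph obstruction family $\{D_4, C_\ell : \ell \geq 4\}$ must be a chordless cycle $C_\ell$ with $\ell \geq 5$. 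However, $v$ remains simplicial in $G-S$ (its neighborhood there is contained in the clique $N_G(v)$), whereas every vertex of $C_\ell$ has two non-adjacent neighbors. Therefore $v$ cannot belong to this cycle, which consequently lies entirely inside $(G-v)-S$, a contradiction.

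\emph{Second claim.} For the algorithmic part, I apply Lemma~\ref{lemma:BVD:cliqueCount} to compute, in $n^{\OO(1)}$ time and $\wtilde{\OO}(n)$ space, for every vertex of the current graph the number of maximal cliques containing it; since $\{C_4, D_4\}$-freeness is inherited by induced subgraphs, the bound of Proposition~\ref{prop:BVD:cliqueBound} continues to apply throughout the process. I then scan for any vertex whose count equals one, delete it, and repeat, reusing the space across the at most $n$ iterations. For the block-graph claim, any block graph is chordal, and by Proposition~\ref{prop:chordal-peo} every chordal graph has a simplicial vertex---exactly a vertex lying in a single maximal clique. Removing such a vertex preserves the block-graph property by heredity, so iterating empties the graph.

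\emph{Third claim.} The reduction terminates precisely when no vertex of $G'$ lies in a single maximal clique of $G'$. Thus every $v \in V(G')$ lies in at least two maximal cliques of $G'$, giving at least two neighbors in $H'$ among the clique-vertices $V_{\cal C}$. For a clique-vertex $v_C \in V_{\cal C}$, its neighbors in $H'$ are precisely the elements of $C$; if $|C| = 1$, say $C = \{u\}$, then $u$ is isolated in $G'$ and lies in the unique maximal clique $\{u\}$, contradicting the termination condition. Hence $|C| \geq 2$, and $v_C$ has at least two neighbors.

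\emph{Main obstacle.} The most delicate step is the reverse direction of the first claim. The key observation is that $\{C_4, D_4\}$-freeness of $G$ restricts the block-graph obstruction in $G-S$ to a long induced cycle, while the simpliciality of $v$ forbids $v$ from participating in any such cycle. This combination reduces what looks like a subtle structural comparison of $G-S$ and $(G-v)-S$ to a one-line check.
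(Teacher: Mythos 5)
Your proof is correct, and for the first bullet it takes a slightly different route than the paper. The paper argues constructively: since $N_{G-S}(v)$ is a clique, ``adding $v$ back to the block graph $(G-v)-S$ results in a block graph.'' As stated, that implication is false in general (attach a simplicial vertex to two vertices of a triangle and you get a $D_4$); it becomes true only because $G-S$ inherits $\{C_4,D_4\}$-freeness, a fact the paper leaves implicit. You instead go through the forbidden-subgraph characterization $\{D_4, C_\ell : \ell\geq 4\}$ of block graphs, observe that $\{C_4,D_4\}$-freeness of $G-S$ leaves only a long hole $C_\ell$, $\ell\geq 5$, as a possible obstruction, and note that a simplicial vertex cannot lie on a chordless cycle of length $\geq 4$. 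This makes the dependence on $\{C_4,D_4\}$-freeness explicit and is, if anything, the more rigorous of the two arguments. You also handle the case $v\in S$, which the paper glosses over, and you give the short justification (an isolated vertex lies in a single maximal clique and would have been deleted) for why every clique-vertex of $H'$ has degree $\geq 2$, which the paper asserts without comment. The second bullet is essentially identical to the paper's: simplicial $\Leftrightarrow$ in a unique maximal clique, invoke the clique-counting subroutine and reuse space across iterations, and for block graphs appeal to the greedy perfect-elimination-ordering argument.
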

\begin{proof}
    Consider the first statement. One direction is clear---if $S$ is a solution to $G$, then it is also a solution to $G - v$. In the reverse direction, consider the block graph $(G - v) - S$. Then observe that, adding back $v$ to $(G - v) - S$ results in a block graph as well, since the neighborhood of $v$ in $G$ is a clique.
    For the second statement, to compute the graph $G'$, we find a vertex $v$ in $G$ that lies in only one maximal clique. Here we apply Lemma~\ref{lemma:cliqueCountAlgo}, and it is clear that the required space and time bounds are obtained. Next, if $G$ were a block graph then it has a perfect elimination ordering (see Definition~\ref{defn:peo} and Proposition~\ref{prop:chordal-peo}) that is obtained by greedily removing simplicial vertices from $G$ one by one. Hence, iteratively removing all vertices of $G$ that lie in only one maximal clique results in an empty graph.
    Finally, for the third statement observe that, as each vertex in $G$ is part of two distinct maximal cliques, in $H$ it has at least two neighbors. Similarly, any maximal clique in $G$ contains at least two vertices, and hence it has at least two neighbors in $H$.
    %
\end{proof}

\begin{lemma}\label{lem:bvd:H-fvs-prop}
    Let $G$ be a $\{C_4, D_4\}$-free graph such that every vertex of $G$ lies in at least two maximal cliques. Further, suppose that there is no connected component of $G$ that is just a cycle. Let $H$ be the auxiliary graph of $G$.
    \begin{itemize}
        \item[(i)] In $n^{\OO(1)}$ time and $\OO(n)$ space, we can identify and mark each vertex $v \in V(G)$ such that $d_G(v) = 2$ and $d_G(u) = d_G(u') = 2$ where $u,u'$ are the two neighbors of $v$ in $G$.

        \item[(ii)] Let $P$ be a maximal path in $H$ such that all of its vertices are of degree 2 in $H$, $P$ is not a cycle and the end-vertices are from $V(G)$. Then all internal vertices of $P$ that lie in $V(G)$ are marked, and the two end-vertices are unmarked.

        \item[(iii)] If $v$ is a marked vertex, then there is a maximal path $P$ in $H$ such that all vertices of $P$ are of degree $2$, $P$ is not a cycle, the endpoints of $P$ lie in $V(G)$ and they are unmarked.

        \item[(iv)] Let $S \subseteq V(G)$ be a feedback vertex set of $H$.
        Then, there exists another feedback vertex set $S' \subseteq V(G)$  whose vertices are all unmarked and such that $|S'| \leq |S|$.
    \end{itemize}
\end{lemma}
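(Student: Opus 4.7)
The plan is to establish (i) by direct computation, and then exploit a correspondence between maximal degree-$2$ chains of $G$ and maximal degree-$2$ paths of $H$ to deduce (ii) and (iii), finally using them in an exchange argument for (iv). For (i), a single scan of the edges of $G$ (presented in read-only memory) suffices to compute and store both $d_G(v)$ and the (at most two) neighbors of every degree-$2$ vertex in $\OO(n)$ space; we then mark $v$ precisely when $d_G(v)=2$ and both of its neighbors also have $G$-degree $2$, all in $n^{\OO(1)}$ time.

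For (ii) and (iii), the key observation is that in $H$ the degree of $v\in V(G)$ equals the number of maximal cliques containing $v$, while the degree of $c\in V_{\cal C}$ equals $|c|$. Since $G$ is $\{C_4,D_4\}$-free, distinct maximal cliques intersect in at most one vertex (Proposition~\ref{prop:BVD:cliqueBound}), so $d_G(v)=\sum_{K\ni v}(|K|-1)$ without overcounting; combined with the hypothesis that every vertex lies in at least two maximal cliques, this yields the equivalence: $d_G(v)=2$ iff $d_H(v)=2$ and both $V_{\cal C}$-neighbors of $v$ have $H$-degree $2$. Now let $P=x_0-y_1-x_1-\cdots-y_\ell-x_\ell$ be a maximal $H$-path (alternating $x_i\in V(G)$ with $y_j\in V_{\cal C}$) whose vertices all have $H$-degree $2$, whose endpoints lie in $V(G)$, and which is not a cycle. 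Each internal $x_i$ has both $V_{\cal C}$-neighbors on $P$ and of $H$-degree $2$, so these are the $2$-cliques $\{x_{i-1},x_i\},\{x_i,x_{i+1}\}$; thus $d_G(x_i)=2$ with $G$-neighbors $x_{i-1},x_{i+1}$, to which the same reasoning applies and gives $d_G(x_{i\pm 1})=2$, so $x_i$ is marked. For an endpoint, say $x_0$, maximality of $P$ forces its off-path $V_{\cal C}$-neighbor to have $H$-degree $\neq 2$ (otherwise the extended path would still satisfy all stated conditions); hence this clique has size $\geq 3$, yielding $d_G(x_0)\geq 3$ and so $x_0$ is unmarked. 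This proves (ii). For (iii), a marked vertex $v$ has both $G$-neighbors in $V_2:=\{u:d_G(u)=2\}$ and so lies strictly inside a maximal degree-$2$ chain $u_0-v_1-\cdots-v_m-u_{m+1}$ of $G$ with $u_0,u_{m+1}\notin V_2$ (such flanking vertices exist because no connected component of $G$ is a cycle); the inclusion-maximal extension in $H$ of the path induced by this chain, subject to the constraints of the statement, is the required $P$, and (ii) shows its endpoints are unmarked.

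For (iv), the crucial structural fact is that any cycle in $H$ passing through an internal vertex of a path $P_v$ supplied by (iii) must use the entire path $P_v$ (since every interior vertex has $H$-degree exactly $2$, forcing the cycle to continue along $P_v$) and therefore must pass through both $V(G)$-endpoints of $P_v$. To construct $S'$ from $S$, we process each maximal degree-$2$ chain $\chi$ of $G$: if $S$ contains no marked vertex of $\chi$, do nothing; otherwise, remove from $S$ all marked vertices of $\chi$, and if neither unmarked $V(G)$-endpoint of the associated $H$-path is already in $S$, add one such endpoint. Each operation removes at least one marked vertex and adds at most one unmarked vertex, so $|S'|\leq|S|$, and every cycle of $H$ formerly hit by a removed marked vertex now passes through the chosen endpoint, preserving the FVS property. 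The main technical obstacle is verifying that this exchange is well-defined when $S$ originally contained several marked vertices of the same chain or when a single endpoint could serve multiple adjacent chains; both are handled cleanly, since a single endpoint per chain breaks every cycle of $H$ through that chain and any would-be duplicate insertions can be merged without loss.
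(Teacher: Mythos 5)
Your treatments of (i), (iii), and (iv) essentially match the paper's; in (iv) you batch the exchange per degree-$2$ chain (removing all marked vertices of a chain at once and adding one endpoint) whereas the paper replaces one marked vertex at a time, but the two are interchangeable. The genuine gap is in the endpoint analysis of (ii). You assert that maximality of $P$ forces the off-path $V_{\mathcal{C}}$-neighbor $u'$ of the endpoint $x_0$ to have $H$-degree $\neq 2$, justifying this with ``otherwise the extended path would still satisfy all stated conditions.'' But appending $u'$ alone yields a path whose new endpoint lies in $V_{\mathcal{C}}$, violating the requirement that endpoints be in $V(G)$, so that single-vertex extension does not contradict maximality. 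The extension one must rule out is $P+u'+x$, where $x$ is the other $H$-neighbor of $u'$; since $x\notin V(P)$ (a degree-$2$ cycle in $H$ would correspond to a cycle component of $G$, which is excluded), maximality only gives $d_H(u')\geq 3$ \emph{or} $d_H(x)\geq 3$.

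In the second alternative $u'$ is a $2$-clique $\{x_0,x\}$, so $d_G(x_0)=2$, contradicting your conclusion $d_G(x_0)\geq 3$. The paper's proof handles this case separately: here $x$ is a $G$-neighbor of $x_0$ with $d_H(x)\geq 3$, hence $d_G(x)\geq 3$, so $x_0$ is unmarked by virtue of this neighbor's degree rather than its own. Your proof, considering only the first alternative, is incomplete; and the unconditional claim $d_G(x_0)\geq 3$ also clashes with your internal-vertex argument, which asserts $d_G(x_{i\pm1})=2$ for every internal $x_i$ and therefore implicitly $d_G(x_0)=2$ when $x_1$ is internal. Splitting into the two cases (as the paper does) resolves both issues.
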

\begin{proof}
    For $(i)$, we simply enumerate each vertex $v$ of degree $2$ in $G$, and then test if the two neighbors of $v$ are also of degree $2$. Here we note that every vertex of $G$ lies in at least 2 maximal cliques, and therefore there are no vertices of degree 1 in $G$. After this marking procedure, observe that if $Q$ is a maximal path in $G$ such that every vertex of $Q$ has degree $2$ in $G$, then all internal vertices of $Q$ are marked, while the two endpoints are unmarked. Since $V(G) \subseteq V(H)$, we can carry over the marking in $G$ to the auxiliary graph $H$.

    For $(ii)$, let is let $u,v$ be the two endpoints of $P$ and note that $u,v \in V(G)$. Let $u'$ and $u''$ be the neighbors of $u$ in $V(H) \setminus V(P)$ and $V(P)$, respectively. Note that $u',u'' \in V_{\cal C}$ correspond to maximal cliques of $G$. Consider $u'$ and let $K \in {\cal C}$ be the maximal clique of $G$ corresponding to it. Note that we have two cases: either the $K$ contains at least 3 vertices, or only 2 vertices. In the first case $u \in V(K)$ has at least $3$ neighbors in $G$, 2 in $V(K)$ and another in the maximal clique corresponding to $u''$. Hence $u$ is unmarked (indeed, we can further argue that $P$ must exclude $u$). In the second case $u'$ is of degree 2 in $H$, and it has a neighbor $x \in V(G)$ in $H$. Then $u$ and $x$ are adjacent in $G$ (forming the clique $C$ corresponding to $u'$). However, as $P$ is maximal and $x \notin V(P)$ it must be the case that $d_H(x) \geq 3$. Then it is clear that $d_G(x) \geq 3$. Hence, in the graph $G$, while $d_G(v) = 2$, one of it's neighbors $x$ has degree at least $3$. Therefore, once again, $u$ is not marked. Similar arguments apply to the vertex $v$, and hence $u,v$ are not marked.
    Now consider any internal vertex $w$ of $P$ that lies in $V(G)$, and observe that $d_G(w) =2$ and for the two neighbors $w',w''$ of $w$ in $G$, $d_G(w') = d_G(w'') = 2$. Hence, every internal vertex of $P$ that lie in $V(G)$ are marked.

    For $(iii)$, recall that $v$ is marked in $G$, only if $d_G(v) = 2$ and the two neighbors of $v$ in $G$, $u, w$ are also of degree $2$ in $G$. Then observe that, corresponding to the path $u-v-w$ in $G$, we have a path $P'$ in $H$ that consists of $u-x-v-y-w$ where $x,y \in V_{\cal C}$ correspond to the maximal cliques $\{u,v\}$ and $(v,w)$ of $G$, respectively. Next consider a maximal path $P''$ of $H$ that contains $P'$ such that all vertices of $P''$ are of degree $2$ in $H$. If $P''$ is a cycle of $H$ then by construction it corresponds to a cycle $C$ in $G$ whose vertices have degree $2$ in $G$. This means that $C$ is a connected component of $G$ that is just a cycle, which is forbidden by the premise of the lemma. Hence $P''$ is not a cycle in $H$, and it contains a maximal sub-path $P$ whose both-endpoints are in $V(G)$. By $(ii)$, the two endpoints of $P$ are unmarked.

    For $(iv)$, first observe that any vertex $v \in V(G) \subseteq V(H)$ such that $d_H(v) \geq 3$ must satisfy $d_G(v) \geq 3$, and hence it is unmarked. Next, observe that as $G$ has no connected component that is an isolated cycle, $H$ has no cycle whose all vertices are of degree $2$ in $H$. Therefore, any cycle $C$ of $H$ can be partitioned into maximal paths whose vertices are of degree 2 in $H$ and the remaining vertices that have degree 3 or more. Each such maximal path $P'$ contains another maximal path $P''$ such that the endpoints of $P''$ lie in $V(G)$. By $(ii)$, the endpoints of $P''$ are unmarked. Now consider the feedback vertex set $S \subseteq V(G)$ of $H$ and consider any marked vertex $w \in S$. By $(iii)$, $w$ lies in a maximal path $P$ of $H$ such that all vertices of $P$ are of degree 2, $P$ is not a cycle, the end-vertices of $P$ lie in $V(G)$ and they are unmarked.
    Since any cycle containing $w$ must contain both $u$ and $v$, it is clear $(S \setminus \{w\}) \cup \{u\}$ is another feedback vertex set of $H$ of same or smaller size. Therefore, by replacing every marked vertex in $S$, we can obtain a feedback vertex set $S'$ that contains only unmarked vertices and $|S'| \leq |S|$.
    This concludes the proof.
\end{proof}

Let us now describe our algorithm for {\sc Block Vertex Deletion} on the class of $\{C_4, D_4\}$-free graphs.

\begin{lemma}\label{lemma:BVD:special-case}
    Let $G$ be a $\{C_4, D_4\}$-free graph and let $k$ be an integer. Then there is a randomized algorithm that can decide if there is a block vertex deletion set of cardinality at most $k$ in $G$, and compute one if it exists. This algorithm runs in $17^k n^{\OO(1)}$ time, uses $\tilde\OO(nk)$ space and succeeds with constant probability $> 2/3$.
\end{lemma}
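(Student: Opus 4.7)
The plan is to exploit the correspondence between block vertex deletion sets of $G$ and feedback vertex sets of the auxiliary graph $H$ established by Proposition~\ref{prop:BVD:FVS}, and to simulate a Becker--Bar-Yehuda--Geiger style randomized algorithm for \textsc{Feedback Vertex Set} on $H$ while never constructing $H$ explicitly (since $|V(H)|$ may be $\Omega(n^2)$ by Proposition~\ref{prop:BVD:cliqueBound}). The solution must be a subset of $V(G)$, so $V_{\mathcal{C}}$-vertices are "undeletable" for us, and we will need a sampling analysis that respects this.

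First, I would run a preprocessing phase. Using Lemma~\ref{lemma:cliqueCountAlgo} I can identify, in $\tilde{\OO}(n)$ space, any vertex $v \in V(G)$ that lies in exactly one maximal clique of $G$; by the first part of Lemma~\ref{lem:bvd:soln-prop}, such a $v$ may be safely deleted, which corresponds to repeatedly trimming degree-$1$ vertices from $H$. I would iterate this until no such $v$ remains. Connected components of $G$ that are single cycles can be detected in $\tilde{\OO}(n)$ space and resolved trivially (each contributes exactly one vertex to the solution, decreasing $k$ accordingly). After this, every vertex of the (implicit) $H$ has degree at least $2$ and no component of $H$ is a cycle, so the hypothesis of Lemma~\ref{lem:bvd:H-fvs-prop} applies. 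I would then compute the marked set of $V(G)$ in $n^{\OO(1)}$ time and $\tilde{\OO}(n)$ space via part (i) of that lemma; by part (iv), there is a minimum feedback vertex set $S^\star \subseteq V(G)$ of $H$ contained entirely in the unmarked vertices, so from here on we look for solutions within the unmarked vertices of $V(G)$.

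Next, I would simulate one BBG-sampling step on $H$ without building $H$. Using Lemma~\ref{lemma:BVD:cliqueCount}, I compute $d_H(v)$ for every $v \in V(G)$ in $n^{\OO(1)}$ time and $\tilde{\OO}(n)$ space (this is exactly the number of maximal cliques of $G$ containing $v$). I then pick an unmarked vertex $v \in V(G)$ with probability proportional to $d_H(v)$ (achievable in $\tilde{\OO}(n)$ space by a standard prefix-sum sampling scheme over an implicit list), add $v$ to the partial solution, delete $v$ from $G$, and recurse on $(G-v, k-1)$. Because every recursive call preserves $\{C_4, D_4\}$-freeness (as $\cH$ is hereditary), the preprocessing and sampling primitives remain applicable. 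The recursion depth is $k$, so the total space is $\tilde{\OO}(nk)$ (one $\tilde{\OO}(n)$ frame per level).

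The core technical content will be the success-probability analysis: by the constraints we have just imposed (minimum degree $\geq 2$ in $H$ on the relevant part, no component of $H$ being a cycle, and solutions restricted to unmarked $V(G)$-vertices), a weighted-degree counting argument in the spirit of Becker--Bar-Yehuda--Geiger shows that for any fixed minimum feedback vertex set $S^\star$ of $H$ inside the unmarked $V(G)$-vertices, $\sum_{v \in S^\star} d_H(v)$ is at least a fixed fraction $\alpha$ of $\sum_{v \text{ unmarked}} d_H(v)$, where the bookkeeping over the two sides of the bipartition and the unmarked/marked distinction yields $\alpha \geq 1/17$. Hence one recursive call hits $S^\star$ with probability at least $1/17$, and the full depth-$k$ chain hits $S^\star$ with probability at least $17^{-k}$; running the procedure $\OO(17^k)$ times independently amplifies the success probability above $2/3$. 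The main obstacle is pinning down this constant $1/17$: one must carefully account for the degree contribution of $V_{\mathcal{C}}$-vertices (which cannot be in $S^\star$) and of marked $V(G)$-vertices (which we choose not to sample), and argue that neither distorts the weighted ratio beyond the claimed factor; this is where the structural properties established in Lemma~\ref{lem:bvd:soln-prop} and Lemma~\ref{lem:bvd:H-fvs-prop} are essential.
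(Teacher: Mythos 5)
Your proposal follows exactly the paper's strategy: reduce to FVS on the (implicit) auxiliary graph $H$ via Proposition~\ref{prop:BVD:FVS}, apply the reduction rules of Lemmas~\ref{lem:bvd:soln-prop} and~\ref{lem:bvd:H-fvs-prop} to prune degree-$1$ vertices and mark long degree-$2$ stretches, then simulate the Becker--Bar-Yehuda--Geiger edge-sampling step by sampling an unmarked $v\in V(G)$ with probability proportional to $d_H(v)$ (computable by Lemma~\ref{lemma:BVD:cliqueCount}), and amplify by $\OO(17^k)$ repetitions.

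The one place where you wave your hands is precisely the place the paper spends its effort: establishing the constant $1/17$. You state that ``a weighted-degree counting argument ... shows'' the $1/17$ bound and that ``bookkeeping ... yields $\alpha\geq 1/17$,'' but you do not say what that bookkeeping is, and it is not a routine adaptation of the standard BBG $1/2$ bound (which requires minimum degree $3$, a condition $H$ will not satisfy even after the reduction rules). The paper's key idea, which you should make explicit, is to pass to the graph $H'$ obtained by contracting every edge of $H$ whose \emph{both} endpoints are marked (where, in addition, a degree-$2$ vertex of $V_{\mathcal C}$ whose two $V(G)$-neighbours are both marked is also regarded as marked). The edges of $H'$ are then in bijection with $\tilde E$, and Lemma~\ref{lem:bvd:H-fvs-prop}(ii)--(iii) guarantees that every maximal degree-$2$ path in $H'$ has length at most $7$. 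Plugging this path-length bound into the usual forest-counting argument gives $|Y|\leq 16|E(X,Y)|$ for any FVS $X$ of $H'$ with $Y=V(H')\setminus X$, and hence the $1/17$ edge-hitting probability. Without this contraction step, the degree-$2$ paths in $H$ itself can be arbitrarily long and the counting argument breaks down. The rest of your plan (sampling a $V(G)$-vertex in place of an edge, depth-$k$ recursion using $\tilde{\OO}(n)$ space per level, base case by Lemma~\ref{lem:bvd:soln-prop}) matches the paper.
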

\begin{proof}
    Our algorithm is based on Proposition~\ref{prop:BVD:FVS} and a simple randomized branching algorithm for {\sc Feedback Vertex Set}~(\cite{becker2000randomized}; see also Chapter~5 in~\cite{CyganFKLMPPS15}). Let $H$ be the auxiliary graph of $G$, and recall that we seek a feedback vertex set of cardinality at most $k$ in $H$ that is a subset of $V(G)$. Note that, as the number of maximal cliques in $G$ could be $\OO(n^2)$, it is not possible for us to explicitly construct the graph $H$. Therefore, we must extract the required information about $H$ from $G$ in $\tilde\OO(n)$ space and $n^{\OO(1)}$ time.

    To simplify the algorithm, let us immediately deal with a special case. Suppose that $G$ contains a connected component that is an isolated cycle $C$ of length $4$ or more. Then, observe that corresponding to $C$ we obtain a connected component $C'$ of $H$ that is also a cycle whose vertices correspond to $V(C)$ and $E(C)$. For such cycles, we simply pick an arbitrary vertex of $C$ into our solution, decrease $k$ by 1, and remove the rest of $C$ from $G$. The optimality and correctness of this step is clear. When this reduction is not applicable, we can assume that there is no connected component of $G$ that is just a cycle. Hence, there is a vertex of degree at least 3 in every connected component of $G$.

    The randomized branching algorithm for feedback vertex set~\cite{CyganFKLMPPS15}, first applies some simple reduction rules to ensure that every vertex has at least $3$ distinct neighbors. Then, it can be shown that if $S$ is any feedback vertex set, then at least half the edges have an endpoint in $S$. In our setting, where space is bounded, applying these reduction rules becomes a little more challenging.
    Towards this, we first apply Lemma~\ref{lem:bvd:soln-prop} and Lemma~\ref{lem:bvd:H-fvs-prop}, which can be done in $\widetilde\OO(n)$ space and $n^{\OO(1)}$ time. Further, Lemma~\ref{lem:bvd:H-fvs-prop} ends up marking a subset of vertices of $V(G) \subseteq V(H)$. Then, we prove the following claim.

    \begin{claim}
    Let $S \subseteq V(G)$ is a feedback vertex set of $H$ that contains only unmarked vertices, and let $\tilde{E} \subseteq E(H)$ be the collection of all those edges that have an unmarked endpoint. Then if we pick an edge $e \in F$ uniformly at random, then with probability at least $1/17$, one endpoint of $e$ lies in $S$.
    \end{claim}
    To prove this claim, observe that by Lemma~\ref{lem:bvd:soln-prop}, every vertex in $H$ has two distinct neighbors. Moreover, by Lemma~\ref{lem:bvd:H-fvs-prop}, if $P$ is a maximal path made of degree-2 vertices in $H$, then all vertices of $P \cap V(G)$ except perhaps the two end-vertices of $P$ and their immediate neighbors in $P$ are marked. Let us consider every vertex in $V_{\cal C} \subseteq V(H)$ of degree-2 in $H$ whose both neighbors from $V(G)$ have been marked, as another marked vertex. Then for every maximal path $P$ of degree-2 vertices in $H$, all vertices are marked except for at most 6 vertices, the two endpoints and the 4 vertices in $P$ that closest to them. Note that, we don't need to explicitly mark these vertices, but only require them for analysis. Also note that if $v \in V(H)$ is marked then $d_H(v) = 2$. The proof of this claim, now easily follows from Lemma~5.1 of \cite{CyganFKLMPPS15} that states that when graph $X$ has minimum degree 3, and $Y$ is any feedback vertex set of $X$, then a randomly sampled edge of $X$ has an endpoint in $Y$ with probability at least $1/2$.

    Indeed, consider the graph $H'$ obtained by contracting each edge whose both endpoints are marked vertices in $H$ one by one, until every edge has at least one unmarked endpoint. Observe that there is a bijection between the edges of $H'$ and the edges of $H$ with at least one unmarked endpoint. Further observe that any maximal path of degree-2 vertices of $H'$ have length at most $7$. Moreover, any feedback vertex set of $H'$ is a feedback vertex set of $H$, and any feedback vertex set of $H$ that contains only unmarked vertices is a feedback vertex set of $G$. Now consider a feedback vertex set $X$ of $H$, and consider the forest $H' - X$. Let $Y = V(H') \setminus X$, and we claim that $|Y| \leq 16|E(X, Y)|$ where $E(X,Y)$ denotes the edges of $H'$ with one endpoint in $X$ and the other in $Y$. Let $V_1, V_2, V_3$ denote the subsets of vertices of $H' - X$ of degree $1$, $2$ and $\geq 3$ respectively. Note that $|V_3| \leq |V_1|$, by a standard result on forests. And, as the minimum degree of $H'$ is two, each vertex in $V_1$ has a distinct edge in $H'$ whose other endpoint lies in $X$, i.e. $|V_1| \leq |E(X,Y)|$. Hence, $|V_3| + |V_1| \leq 2|E(X,Y)|$. Next, to bound $|V_2|$ consider the forest $H''$ obtained from $H' - X$, by contracting every vertex of degree $2$. Then $|E(H'')| \leq |V_3| + |V_1| - 1$. Observe that each edge of $H''$ corresponds to a maximal path of degree 2 vertices in $H'-Y$ whose length is upper-bounded by $7$, and these paths partition $V_2$. Hence $|V_2| \leq 7 |E(H'')| \leq 7(|V_3| + |V_1| -1) \leq 14|E(X,Y)|$, and therefore we conclude that $|Y| \leq 16|E(X,Y)|$. Then, we conclude that $|E(H' - X)| \leq |Y| \leq 16|E(X,Y)|$. Finally, let $E(X)$ be the set of edges with at least one endpoint in $X$ and note that $|E(X)| \geq |E(X,Y)| \geq |V_1|$. And let $E(Y)$ be the set of edges with both endpoints in $Y$; $|E(Y)| = |E(H' - X)| \leq 16|E(X,Y)|$. Therefore the probability that an edge that is sampled uniformly at random from $E(H')$ has an endpoint in $X$ is at least $1/17$.

    Following this claim, and Proposition~\ref{prop:BVD:FVS}, a simple randomized algorithm could be as follows: sample an edge $e \in \tilde{E}$, then select the endpoint $v$ of $e$ from $V(G)$ into the solution, and then solve $(H-v, k-1)$. This algorithm would produce a solution in polynomial time with probability $1/17^k$.
    %
    Unfortunately, as $H$ is not explicitly constructed, we cannot easily sample edge of $\tilde{E}$. However, observe the following. Let $V' \subseteq V(G)$ be all the unmarked vertices. The probability that a vertex $v \in V'$ is selected by the above random process is equal to $d_H(v) / |\tilde{E}|$.
    Here we rely on the fact that if $v$ is an unmarked vertex in $H$, then all edges incident on $v$ are present in $\tilde{E}$.
    Hence, if we sample a vertex $v \in V(G)$ with probability $d_H(v)/|\tilde{E}|$, then with probability $1/17$, $v \in S$. To implement this sampling process, we only need to know the degrees of the vertices in $V(G)$ and as $H$ is bipartite we have $|\tilde{E}| = \sum_{v \in V(G)~:~ v \text{ is unmarked}} d_H(v)$. These values can obtained from Lemma~\ref{lemma:BVD:cliqueCount} in $n^{\OO(1))}$ time and $\tilde\OO(n)$ space. After sampling, $v$ we can recursively solve the instance $(H-v, k-1)$.
    Here the base case of this algorithm where $k=0$ is handled by Lemma~\ref{lem:bvd:soln-prop}. It allows us to test if $G$ is a block graph in polynomial time and $\tilde\OO(n)$ space. If this test fails, we output that there is no solution, or else there is a solution of size $0$ for $G$.
    Observe that this leads to a polynomial time algorithm that requires $\tilde\OO(nk)$ space and outputs a feedback vertex set $S \subseteq V(G)$ of size $k$ with probability at least $1/17^k$. By repeating this algorithm $\OO(17^k)$ times we obtain a simple randomized algorithm for {\sc Block Vertex Deletion} in $\{C_4, D_4\}$-free graphs that $17^k \cdot n^{\OO(1)}$ time and uses $\tilde\OO(nk)$ space. By standard arguments, it follows that this algorithm succeeds with probability $>2/3$ (which can be boosted to any constant by further repetitions).
\end{proof}

Finally, we extend the above lemma to an algorithm for {\sc Block Vertex Deletion}.

\begin{lemma}\label{lemma:BVD:post-algo}
    {\sc Block Vertex Deletion} admits a randomized FPT-algorithm that runs in time $17^k n^{\OO(1)}$ time $\tilde\OO(nk)$ space, where $k$ is the parameter denoting the solution size. It outputs a solution with constant probability.
\end{lemma}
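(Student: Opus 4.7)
The plan is to reduce the general problem to the $\{C_4,D_4\}$-free special case handled by Lemma~\ref{lemma:BVD:special-case} via a straightforward bounded-depth branching procedure, exactly in the spirit of \cite{AgrawalKLS16} but implemented in $\tilde\OO(nk)$ space. The key combinatorial fact is that block graphs are characterized by the absence of $\{D_4, C_{\ell+4, \ell \geq 0}\}$ as induced subgraphs, so in particular every block graph is $\{C_4, D_4\}$-free; hence any block vertex deletion set must contain at least one vertex from every induced $C_4$ and every induced $D_4$ of $G$.

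First I would design a simple subroutine \textsc{FindObstruction}$(G, D)$ that, given the input graph $G$ (in read-only memory) and a set $D \subseteq V(G)$ with $|D| \le k$ of tentatively deleted vertices, decides whether $G - D$ contains an induced $C_4$ or $D_4$, and if so returns its four vertices. Since each obstruction has only four vertices, brute-force enumeration over 4-tuples of $V(G) \setminus D$, checking the six potential edges of each tuple via the succinct representation of $G$, runs in $n^{\OO(1)}$ time and uses only $\OO(\log n)$ additional space beyond storing $D$.

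The main routine is a recursive branching algorithm. At each call we maintain the set $D$ of vertices currently placed into the solution; implicitly we work on $G - D$. If $|D| > k$, we return failure. Otherwise we call \textsc{FindObstruction}$(G, D)$: if it returns an obstruction with vertex set $\{v_1,v_2,v_3,v_4\}$, we branch into four subcalls, one for each choice of $v_i$, adding $v_i$ to $D$. If no $C_4$ or $D_4$ exists in $G-D$, we invoke the special-case algorithm of Lemma~\ref{lemma:BVD:special-case} on $(G-D,\, k-|D|)$, which is now a $\{C_4,D_4\}$-free instance; if it returns a set $S'$, we output $D \cup S'$.

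For the analysis, the recursion tree has branching factor $4$ and depth at most $k$, so there are at most $4^k$ leaves and only polynomial-time work at internal nodes. At a leaf reached after $j$ branch steps, the special case runs in time $17^{k-j} n^{\OO(1)}$, so the total running time is
\[
\sum_{j=0}^{k} 4^j \cdot 17^{k-j} \cdot n^{\OO(1)} \;\le\; 17^k \cdot n^{\OO(1)} \cdot \sum_{j=0}^{\infty}(4/17)^j \;=\; 17^k \cdot n^{\OO(1)}.
\]
The recursion stack only needs to remember one vertex and an enumeration pointer per level, using $\tilde\OO(k)$ space; together with the $\tilde\OO(nk)$ workspace of Lemma~\ref{lemma:BVD:special-case} at each leaf (reused across leaves since calls are sequential), the total space is $\tilde\OO(nk)$. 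For correctness, if $(G,k)$ admits a solution $S^\star$, then since $S^\star$ hits every induced $C_4$ and $D_4$ of $G$, there is a root-to-leaf path in the recursion tree along which every chosen vertex belongs to $S^\star$; at this leaf the residual instance is a yes-instance on a $\{C_4,D_4\}$-free graph, and Lemma~\ref{lemma:BVD:special-case} returns a solution with probability at least $2/3$. The main obstacle to keep in mind is purely the space discipline: we cannot materialize the current graph $G - D$ explicitly (which would cost $\Omega(n^2)$ if $G$ is dense) nor the auxiliary bipartite graph of maximal cliques inside the special case, but since $G$ is in read-only memory and $D$ is small, every neighborhood query in $G-D$ can be answered on the fly, and this is already what Lemma~\ref{lemma:BVD:special-case} is designed to handle. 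Standard repetition boosts the success probability to any desired constant while preserving the stated time and space bounds.
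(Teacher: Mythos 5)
Your proposal is correct and follows essentially the same two-stage strategy as the paper: branch on induced $C_4$/$D_4$ obstructions to depth at most $k$, then invoke Lemma~\ref{lemma:BVD:special-case} on the resulting $\{C_4,D_4\}$-free residual instance. Your explicit geometric-sum bound $\sum_{j=0}^{k} 4^j\cdot 17^{k-j}\cdot n^{\OO(1)} \le 17^k\cdot n^{\OO(1)}$ makes precise a step the paper merely asserts, and your space accounting matches the paper's.
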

\begin{proof}
    Let $G$ be the input graph. Our algorithm is implemented in two stages. In the first stage, we have a simple branching algorithm that finds $C_4$ or a $D_4$ in $G$ and branches on which of it's vertices lies in a solution of size $k$.
    This process can be represented via a branching tree of depth $k$, where each node has four child-nodes. At the leaf of each node, either we obtain a subgraph $G'$ that is $\{C_4, D_4\}$-free, or conclude that the choices made to arrive at that node does nnt lead to a solution of size $k$. Note that this stage of the algorithm can be implemented in time $\OO(4^k n^{4})$ and space $\OO(k \log n)$. In the next stage, we apply Lemma~\ref{lemma:BVD:special-case} to the graph $G'$ along with the parameter $k'$. Here $k' = k ~- ($the number of vertices selected in Stage 1$)$. It follows that Stage~$2$ runs in time $17^k n^{\OO(1)}$ and $\tilde\OO(nk)$ space, and succeeds with constant probability.
    Hence, we obtain the required algorithm for {\sc Block Vertex Deletion}.
\end{proof}

From Theorem~\ref{lem:reductionLemma}, Lemma~\ref{lemma:BVD:reconst} and Lemma~\ref{lemma:BVD:post-algo} we obtain Theorem \ref{thm:blockGraphs}. Here we also rely on the fact that the class of block graphs are hereditary.

\section{Cut Problems}\label{sec:cutProblems}
Our starting point is the sampling primitive of Guha, McGregor and Tench~\cite{GuhaMT15} (with appropriately set parameters). The main idea here is to sample roughly $\wtilde\bigoh(k^{\bigoh(1)})$ vertex subsets of the input graph so that with good probability, for every set $S\subseteq V(G)$ that is disjoint from some forbidden substructure (of potentially unbounded size) in $G$, there is a set of sampled subgraphs that are disjoint from $S$ and whose union also contains a forbidden substructure for the problem concerned. This allows us to produce $\wtilde\bigoh(k^{\bigoh(1)}n)$ subgraphs that can still certify when a vertex set is not a solution for the problem at hand. However, this is only the first part of our approach and problem-specific steps need to be taken following this.

\thinblackbox{{\sf Sampling Primitive~($V(G),n,k,q$)\footnote{We assume that $n,k,q\geq 1$.}}:  Sample	$\ell=64qk^3 \ln n$ vertex subsets $V_1,\dots, V_\ell$ of $G$. Each $V_i$ is sampled by independently picking each vertex of $G$ with probability $1/2k$. Let $G_i=G[V_i]$ for every $i\in[\ell]$, let ${\cal L}=\{G_1,\dots, G_\ell\}$ and for every  $S\subseteq V(G)$, let ${\cal L}_S=\{G_i\in {\cal L}\mid V(G_i)\cap S=\emptyset \}$. }

We now summarize a useful property of the sampling primitive, which is required for the correctness of some of our algorithms.

\begin{lemma}[Properties of {\sf Sampling Primitive~$(V(G),n,k,q)$}]\label{lem:samplingCorrectness}
Fix a set
 $\{\CC_1,\dots, \CC_r\}$ where $r\leq n^q$ and each $\CC_i$ is a set of subgraphs of $G$.
With probability at least $1-2/n$, the following event occurs:
%
 For every $k$-vertex set $S\in V(G)$ and for every $\CC_i$, if $S$ is disjoint from some $C\in \CC_i$, then $\bigcup_{G\in {\cal L}_S}G$ contains some $C'\in \CC_i$.
\end{lemma}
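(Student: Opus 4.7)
The plan is to reduce the lemma to a simple edge-survival statement and then conclude by a union bound. For each pair $(S,\CC_i)$ consisting of a $k$-vertex set $S \subseteq V(G)$ and an index $i \in [r]$ such that at least one $C \in \CC_i$ is vertex-disjoint from $S$, fix a single such witness $C_{S,i}$, chosen deterministically from $(G,S,\CC_i)$ and in particular independently of the random samples. If every edge of $C_{S,i}$ belongs to some $G_j \in \mathcal{L}_S$ (the index $j$ may depend on the edge), then $C_{S,i}$ is a subgraph of $\bigcup_{G \in \mathcal{L}_S} G$, and taking $C' := C_{S,i}$ witnesses the conclusion of the lemma for that pair. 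It therefore suffices to show that with probability at least $1-2/n$, every $C_{S,i}$ is preserved edge by edge in the corresponding union.

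First I would bound the single-edge survival probability. Fix $S$ of size $k$ and an edge $e = \{u,v\}$ of $G$ with $u,v \notin S$. For a single $j \in [\ell]$, the event ``$e \in E(G_j)$ and $V_j \cap S = \emptyset$'' is the conjunction of $u \in V_j$, $v \in V_j$, and $V_j \cap S = \emptyset$; since $\{u\}$, $\{v\}$, and $S$ are pairwise disjoint vertex sets, these three events are mutually independent, and the probability is exactly
\[
\frac{1}{4k^2}\cdot\left(1 - \frac{1}{2k}\right)^{k} \;\ge\; \frac{1}{8k^2},
\]
using the standard elementary inequality $(1-1/(2k))^k \ge 1/2$ valid for all $k \ge 1$. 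Because the samples $V_1,\dots,V_\ell$ are independent, the probability that $e$ belongs to no $G_j \in \mathcal{L}_S$ is at most
\[
\left(1 - \frac{1}{8k^2}\right)^{\ell} \;\le\; \exp\!\left(-\frac{\ell}{8k^2}\right) \;=\; \exp(-8qk\ln n) \;=\; n^{-8qk},
\]
since $\ell = 64qk^3\ln n$.

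Finally, I would apply a union bound over the at most $n^k$ choices of $S$, the at most $r \le n^q$ choices of $\CC_i$, and the at most $\binom{n}{2} \le n^2$ edges of the corresponding $C_{S,i}$ (each of which has both endpoints outside $S$, since $V(C_{S,i}) \cap S = \emptyset$). This yields a failure probability of at most $n^{k+q+2}\cdot n^{-8qk}$, and a direct check gives $k+q+2-8qk \le -4$ whenever $k,q \ge 1$, so the bound is at most $n^{-4} \le 2/n$ for $n \ge 2$ (the case $n=1$ being vacuous, and the case $k=0$ not arising by the footnote assumption). The main point requiring care is the independence argument in the single-edge calculation -- it depends crucially on $\{u,v\} \cap S = \emptyset$, which is guaranteed by $V(C_{S,i}) \cap S = \emptyset$ -- together with fixing the witnesses $C_{S,i}$ before the random samples are drawn, so that the collection of edges entering the union bound is itself deterministic. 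Everything else is routine.
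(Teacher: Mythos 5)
Your proof is correct, and it takes a genuinely more direct route than the paper's. The paper runs a two-stage conditioning argument: it first shows (via a Chernoff bound) that with high probability every $k$-set $S$ has $|\mathcal{L}_S| \geq 8qk^3 \ln n$ surviving sampled sets, and only then, conditioning on this event, bounds the probability that a fixed edge with both endpoints outside $S$ is missed by all the surviving sets, multiplying out $(1 - 1/(2k)^2)$ per surviving set. You instead compute in one step the unconditional per-sample probability that a single sample simultaneously avoids $S$ and captures a given edge $e$ with endpoints outside $S$, namely $\tfrac{1}{4k^2}(1-\tfrac{1}{2k})^k \geq \tfrac{1}{8k^2}$, and then exploit the independence across the $\ell$ samples directly. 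This collapses the two Chernoff-type steps into a single geometric-decay calculation and avoids the mild awkwardness in the paper's argument of conditioning on the global event (about all $S$ simultaneously) while reasoning about a particular $S$. Both arguments share the essential structural ideas — fixing deterministic witnesses $C_{S,i}$ before the randomness is drawn, reducing to edge-by-edge survival, and union-bounding over at most $n^k$ sets, $n^q$ indices, and $n^2$ edges — but your route produces cleaner exponents ($n^{-8qk}$ per edge, yielding $n^{k+q+2-8qk} \leq n^{-4}$, which closes comfortably for all $k,q \geq 1$) and requires no auxiliary event about $|\mathcal{L}_S|$.
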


\begin{proof}
	Let $\cS_k$ denote the set of all $k$-vertex sets in $G$.  For every fixed $S\in \cS_k$, the expected number of sampled vertex subsets disjoint from $S$,  $$\mean{|{\cL}_S|}=(1-1/2k)^{k}\cdot \ell\geq \ell/4=16qk^3 \ln n.$$

Using standard Chernoff bounds we get: $$\prob{|{\cal L}_S|<8qk^3 \ln n}\leq e^{2qk^3 \ln n}\leq  1/n^{2k}.$$

Let ${\bf Y}$ denote the event that for every $S\in \cS_k$, $|{\cal L}_S|\geq 8qk^3 \ln n$. Taking the union bound over all $S\in \cS_k$, we conclude that:

$$\prob{~\exists S\in \cS_k: |{\cal L}_S|<8qk^3 \ln n}=\prob{\overline {\bf Y}}\leq 1/n^{k}.$$
We say that $S\in \cS_k$ is {\em good} if for every $i\in [r]$ at least one element of $\CC_i$ disjoint from $S$ is contained in the union of the graphs in ${\cal L}_S$. We say that $S\in \cS_k$ is {\em bad} otherwise. Let $\bf X$ denote the event that every $S\in \cS_k$ is good. Recall that: $$\prob{\overline {\bf X}}\leq \prob{\overline {\bf Y}}+\prob{\overline{{\bf X}}\mid {\bf Y}}.$$


Now, we observe the following for a fixed $S\in \cS_k$:
$$\prob{S\textrm{ is bad}\mid {\bf Y}}\leq n^q\cdot n^2\cdot (1-1/(2k)^2)^{8qk^3 \ln n}\leq 1/n^{2k}.$$

Here, for each $i\in [r]$, we fixed a graph $C$ in $\CC_i$, used union bound along with the fact that the number of edges in $C$ is at most $n^2$ and then took union bound over the $\CC_i$'s. Finally, taking the union bound over all $S\in \cS_k$, we conclude that:

$$\prob{~\exists S\in
\cS_k: S\textrm{ is bad}\mid {\bf Y}}=\prob{\overline{{\bf X}}\mid {\bf Y}}\leq 1/n^{k}.$$

Therefore, with probability at least $1-2/n^k\geq 1-2/n$, we conclude that every $k$-vertex set in $\cS_k$ is good. This completes the proof of the lemma.\end{proof}

 In the rest of this section, we use the above sampling primitive as the first step in our fixed-parameter semi-streaming algorithms for well-studied cut problems. As discussed above, this primitive identifies $\wtilde\bigoh(k^{\bigoh(1)}n)$ subgraphs that can still certify when a vertex set is not a solution for the problem at hand. {\em However, these subgraphs may still be large and dense}. Therefore, for the problems we consider in this section, we provide a problem-specific sparsification step that allows us to shrink the number of edges in each subgraph while ensuring that non-solutions can be witnessed by a substructure in the union of the sparsified subgraphs.

  \subsection{Odd Cycle Transversal}

Recall that in the {\sc Odd Cycle Transversal} (OCT) problem, the input is a graph $G$ and integer $k$ and the goal is to decide whether there is a set of at most $k$ vertices that intersect every odd cycle in $G$.

Before we proceed, we recall the notion of the {\em bipartite double cover} of a graph.

\begin{definition}[\cite{AhnGM12}]\label{def:bipartiteDoubleCover}
	For a graph $G$, the {\em bipartite double cover} of $G$ is the graph $D$  constructed from $G$ as follows. For each $v\in V(G)$, construct $v_a,v_b\in V(D)$ and for each edge
$(u, v) \in  E(G)$, create two edges $(u_a, v_b)$ and $(u_b, v_a)$.
\end{definition}

The above definition was used by Ahn, Guha and McGregor~\cite{AhnGM12} to give the first  dynamic semi-streaming algorithm to test bipartiteness. In this work, we use the properties of bipartite double covers to generalize their result and give a fixed-parameter semi-streaming algorithm for checking if a graph is at most $k$ vertices away from being bipartite, i.e., OCT.

\oct*

\begin{proof} If $k=0$, then we simply need to check whether the input graph is bipartite. For this, we use the aforementioned algorithm of Ahn, Guha and McGregor~\cite{AhnGM12}. Hence, we may assume that $k\geq 1$. Let $G$ be the input graph. We run {\sf Sampling Primitive}~($V(G),n,k,2$) to generate the vertex sets $V_1,\dots, V_\ell\subseteq V(G)$ and then process the input stream as follows.
	   For each sampled vertex set $V_i$, we use Proposition~\ref{prop:dynamicConnectivitySketch} to maintain a connectivity sketch  for the bipartite double cover of $G_i$, which is denoted by $G_{i}^{\sf aux}$ (see Definition~\ref{def:bipartiteDoubleCover}). Recall that $G_{i}^{\sf aux}$ has exactly $2|V_i|$ vertices. Therefore, the  space requirement for each $V_i$ is bounded by $\wtilde{\bigoh}(n)$.
	   This completes the processing step.
  Since $\ell=\wtilde{\bigoh}(k^{\bigoh(1)})$, the space used by our processing step is $\wtilde{\bigoh}(k^{\bigoh(1)}\cdot n)$ as required.

 It remains for us to describe our fixed-parameter post-processing routine.
Recall that for any $c>0$ and $i\in [\ell]$, with probability at least $1-1/n^c$, we can recover a spanning forest of $G_{i}^{\sf aux}$. By choosing $c$ to be large enough and taking the union bound over polynomially many subgraphs, we obtain a spanning forest $T_i$ of $G_{i}^{\sf aux}$ for each $i\in [\ell]$ and conclude that with probability at least $1-1/n$,
each $T_i$ is indeed a spanning forest of $G_{i}^{\sf aux}$. Now, we define $H_i$ to be the subgraph of $G_i$ induced by edges that contribute a copy to $T_i$. That is, an edge $e=(u,v)\in E(G_i)$ is added to $H_i$ if and only if $(u_a,v_b)$ or $(u_b,v_a)$ is contained in  $E(T_i)$.
Notice that the number of edges in each $H_i$ is bounded by $\bigoh(n)$, implying that the number of edges in the graph $\bigcup_{i\in [\ell]}H_i$ is $\wtilde\bigoh(k^{\bigoh(1)}n)$.

\begin{claim}\label{clm:thm:nodeULC}
With probability at least $1-3/n$, the following holds: For every $S\subseteq V(G)$ of size $k$, if there is an odd cycle in $G-S$, then there is an odd cycle present in the subgraph  $\bigcup_{G_i\in {\cal L}_S}H_i$.
\end{claim}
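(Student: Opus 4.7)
The plan is to combine the sampling guarantee of Lemma~\ref{lem:samplingCorrectness} with the spanning forest recovery of Proposition~\ref{prop:dynamicConnectivitySketch}, and then use the bipartite double cover to translate the preserved connectivity into odd cycles. First, I would apply Lemma~\ref{lem:samplingCorrectness} with $r=1$ and $\CC_1$ defined to be the (fixed) set of all odd cycles of $G$; since the primitive is invoked with $q=2$, the condition $r\leq n^q$ holds trivially. This yields, with probability at least $1-2/n$, that for every $k$-subset $S$ with $G-S$ containing an odd cycle, some odd cycle $C$ of $G$ appears entirely inside $\bigcup_{G_i \in {\cal L}_S} G_i$, meaning each edge of $C$ lies in at least one $G_i\in {\cal L}_S$. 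Separately, by enlarging the constant in Proposition~\ref{prop:dynamicConnectivitySketch} and union-bounding over the $\ell = \wtilde\bigoh(k^3)$ sketches, with probability at least $1-1/n$ every $T_i$ is indeed a spanning forest of $G_i^{\sf aux}$. A final union bound makes both events hold simultaneously with probability at least $1-3/n$, and I would condition on this good event throughout.

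Next, I would fix such an $S$ and let $C = v_1 v_2 \cdots v_\ell v_1$ be the odd cycle whose edges are distributed across the $G_i\in {\cal L}_S$, where $\ell$ is odd. For each edge $\{v_i, v_{i+1}\}$ (indices modulo $\ell$), I would choose an index $j(i)$ with $G_{j(i)} \in {\cal L}_S$ and $\{v_i, v_{i+1}\} \in E(G_{j(i)})$. By the definition of the bipartite double cover, $(v_{i,a}, v_{i+1,b})$ is then an edge of $G_{j(i)}^{\sf aux}$, so $v_{i,a}$ and $v_{i+1,b}$ lie in the same connected component of $G_{j(i)}^{\sf aux}$; because $T_{j(i)}$ is a spanning forest of $G_{j(i)}^{\sf aux}$, there is a path $P_i$ in $T_{j(i)}$ connecting $v_{i,a}$ to $v_{i+1,b}$. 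Since $G_{j(i)}^{\sf aux}$ is bipartite and these two endpoints sit on opposite sides of the bipartition, the length of $P_i$ is odd.

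I would then project each $P_i$ back down to $G$ by forgetting the $a/b$ labels on its vertices; by the definition of $H_{j(i)}$, every tree edge $(x_a, y_b) \in E(T_{j(i)})$ contributes the edge $\{x,y\}$ to $H_{j(i)}$, so the projection is an odd-length walk from $v_i$ to $v_{i+1}$ inside $H_{j(i)} \subseteq \bigcup_{G_i \in {\cal L}_S} H_i$. Concatenating these projected walks along $C$ yields a closed walk in $\bigcup_{G_i \in {\cal L}_S} H_i$ based at $v_1$ whose total length is a sum of $\ell$ odd integers; since $\ell$ itself is odd, this total is odd as well, and any closed walk of odd length contains an odd cycle as a subgraph.

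The main obstacle to anticipate is that the sampling primitive only guarantees that the edges of some odd cycle are individually preserved (typically scattered across many $G_i$'s, since long cycles are broken up by vertex sampling), while the sparsification step further discards all but a spanning forest of each $G_i^{\sf aux}$. A priori, no single $H_i$ need contain any odd cycle, and the deleted edges could in principle destroy every odd cycle present in the union. The bipartite double cover is precisely the structural device that sidesteps this difficulty: odd closed walks through $v$ in $G$ correspond exactly to $v_a$-$v_b$ paths in $G^{\sf aux}$, and connectivity of this form is faithfully preserved by any spanning forest. This reformulation is what allows one to stitch tree-paths from different $T_{j(i)}$'s together while carefully tracking parity, and it is the reason the sparsified union still retains an odd cycle whenever the original graph does.
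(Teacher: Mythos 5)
Your proof is correct and follows essentially the same approach as the paper: invoke Lemma~\ref{lem:samplingCorrectness} to preserve an odd structure in $\bigcup_{G_i\in\mathcal{L}_S}G_i$, then use the bipartite-double-cover observation that a spanning-forest path from $u_a$ to $v_b$ has odd length and projects to an odd $u$-$v$ walk in $H_i$, and finally concatenate an odd number of odd walks to obtain an odd closed walk. The only cosmetic difference is that you take $\CC_1$ to be the family of odd cycles while the paper takes it to be odd-length closed walks; both choices make the sampling lemma applicable and yield the same conclusion.
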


 \begin{proof}
 Observe that an undirected graph has an odd cycle if and only if it has an odd-length closed walk and let $\CC_1$ denote the set of odd-length closed walks in  $G$. Invoking Lemma~\ref{lem:samplingCorrectness} with $r=q=1$, we conclude that with probability at least $1-2/n$, if $S$ is a set of size at most $k$ such that there are no odd-length closed walks in the graph $\bigcup_{G_i\in {\cal L}_S}G_i$, then $S$ hits all odd-length closed walks in $G$.

 Towards the proof of the claim we next argue that for each $i\in [\ell]$ and $(u,v)\in E(G_i)$, there is an odd $u$-$v$ walk in $H_i$. That is, even if we have ignored the edge $(u,v)$ in the sparsification, we would have at least preserved an odd $u$-$v$ walk, which we later argue would suffice for our purposes.
  Indeed, notice that since $(u,v)\in E(G_i)$, it follows that $(u_a,v_b)\in E(G_{i}^{\sf aux})$. Therefore, it must be the case that $u_a$ and $v_b$ lie in the same tree in the spanning forest $T_i$. Call this tree $\Gamma_i^{u,v}$ and let $e_1,\dots, e_r$ be the edges that comprise the unique $u_a$-$v_b$ path in $\Gamma_i^{u,v}$ and notice that the edges of $G_i$ corresponding to the edges in $\{e_1,\dots, e_r\}$ are present in $H_i$ by definition. Moreover, notice that $r$ must be odd since $u_a$ and $v_b$ lie on opposite sides of the bipartite graph $G_{i}^{\sf aux}$.  Hence, we obtain a $u$-$v$ walk in $H_i$ containing an odd number of edges as required. Given this, we are ready to complete the proof of the claim.

  We have already observed that with probability at least $1-2/n$, for any vertex set $S$ of size at most $k$, if there is an odd cycle in $G-S$ (and hence an odd-length closed walk), then there is an odd-length closed walk $C^S=x^S_1,\dots, x^S_{r_S},x^S_1$ in $\bigcup_{G_i\in {\cal L}_S}G_i$.
  Fix an $S$ such that $G-S$ contains an odd cycle and let $e_1,\dots, e_{r_S}$ be the edges in the odd-length closed walk $C_S$. We have argued above that for every $j\in [r_S]$ and $e_j=(x_j,x_{j+1 ({\sf mod }~r_S)})$, there is an odd $x_j$-$x_{j+1 ({\sf mod}~r_S)}$ walk in $\bigcup_{G_i\in {\cal L}_S}H_i$, implying an odd-length closed walk in $\bigcup_{G_i\in {\cal L}_S}H_i$ as required. This completes the proof of the claim.
   \end{proof}

 Given the above claim, we conclude that with probability at least $1-3/n$, any set $S\subseteq V(G)$ is a solution for the OCT instance $(G,k)$ if and only if  $S$ is a solution for the OCT instance $(G',k)$ where $G'=\bigcup_{G_i\in {\cal L}}H_i$. Therefore, our post-processing algorithm computes the sparsified instance $(G',k)$ with $\wtilde\bigoh(k^{\bigoh(1)}n)$ edges and invokes the $\bigoh(3^kk^3(m+n))$-time $k^{\bigoh(1)}(m+n)$-space algorithm from \cite{KolayMRS20} for {\sc OCT}.
 This completes the proof of the theorem.
\end{proof}

\subsection{Subset Feedback Vertex Set}
Recall that in the {\sc Subset Feedback Vertex Set} problem, the input is a graph $G$, a set $T\subseteq V(G)$ of vertices called terminals and an integer $k$. The goal is to determine whether there is a set $S\subseteq V(G)$ of size at most $k$ that intersects all cycles in $G$ that contain a vertex of $T$. We assume the following input model for instances of this problem. For each terminal, when the first edge incident on it is inserted, it is inserted along with a pair of bits that indicate whether (and if so, which of) the endpoints of this edge are terminals.

\subsetFVS*

\begin{proof}
Let $G$ be the input graph.
We run {\sf Sampling Primitive}~($V(G),n,k+1,2$) to generate the vertex sets $V_1,\dots, V_\ell\subseteq V(G)$ and then
 process the input stream as follows. We use Proposition~\ref{prop:kSparseRecovery} to construct a $(k+1)n$-sparse recovery data structure $\cX$ of size $\wtilde{\bigoh}(kn)$ and for each terminal, we insert all edges adjacent to it, into this data structure. We also have  a $\bigoh(\log n)$-bit counter that counts the number of such edges. Moreover,  for each vertex set $V_i$ produced by the sampling primitive, we maintain a dynamic connectivity sketch of $G_i=G[V_i]$ (call this data structure $D_i$) using Proposition~\ref{prop:dynamicConnectivitySketch}.
This completes the processing.  Since $\ell=\wtilde{\bigoh}(k^{\bigoh(1)})$ and $\cX$ has size $\wtilde{\bigoh}(kn)$ the space used by our processing step is $\wtilde{\bigoh}(k^{\bigoh(1)}\cdot n)$ as required.

   In the post-processing step, we first check the counter to determine whether the number of edges incident on terminals exceeds $(k+1)n$. If the answer is yes, then we return {\sc No}. Otherwise, we do the following. We recover all edges incident on $T$ by extracting all elements from $\cX$ (see Proposition~\ref{prop:kSparseRecovery}). Call this set $E_{\cX}$.
   For every $i\in [\ell]$, we construct a spanning forest of $G_i$ from $D_i$ (Proposition~\ref{prop:dynamicConnectivitySketch}). Choosing $c$ to be large enough and applying the union bound over all $\{G_i\mid i\in [\ell]\}$, we conclude that with probability at least $1-1/n$, for every $i\in [\ell]$, $H_i$ is a spanning forest of $G_i$.
We now run the $25.6^k (m+n)$-time $k^{\bigoh(1)}(m+n)$-space randomized algorithm from \cite{LokshtanovRS18}\footnote{This algorithm is in fact a $k^{\bigoh(1)}(m+n)$-time (and so, also a $k^{\bigoh(1)}(m+n)$-space) algorithm with success probability $1/25.6^k$.} for $\bigoh(\log n)$ repetitions to solve the instance $(E_{\cX}\cup \bigcup_{i\in [\ell]}H_i,T,k)$.

  We now prove the correctness of our algorithm.

  \begin{claim} The following statements hold.
  \begin{enumerate}
  \item If $(G,T,k)$ is a yes-instance, then the number of edges incident to $T$ is at most $(k+1)n$.
\item With probability at least $1-3/n$, every   	$S\subseteq V(G)$ is a solution for $(G,T,k)$ if and only if $S$ is a solution for $(E_{\cX}\cup \bigcup_{G_i\in {\cal L}}H_i,T,k)$.

\end{enumerate}
  \end{claim}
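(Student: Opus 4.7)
The plan is to establish the two items in turn. For the first item, the key structural fact is that if $S$ is a solution of size at most $k$, then no terminal in $T\setminus S$ lies on a cycle in $G-S$, so every edge of $G-S$ that is incident to $T\setminus S$ is a bridge of $G-S$. Since the bridges of any graph form a forest, the number of edges of $G-S$ that are incident to $T$ is at most $n-1$. Combined with the trivial bound $|S|(n-1)\le kn$ on the number of edges of $G$ with at least one endpoint in $S$, we obtain a bound of at most $(k+1)n$ edges of $G$ incident to $T$. This justifies the use of the $(k+1)n$-sparse recovery structure $\cX$: whenever $(G,T,k)$ is a yes-instance, $E_\cX$ will be correctly recovered in post-processing, and if recovery fails the algorithm outputs \textsc{No}, which is correct.

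For the second item, the forward direction is immediate since $E_\cX\cup\bigcup_{G_i\in\cL}H_i$ is a subgraph of $G$, so any cycle through a terminal in the sparsified graph minus $S$ is already such a cycle in $G-S$. For the converse, suppose $S$ is not a solution for $(G,T,k)$; then some cycle in $G-S$ passes through a terminal $t\in T\setminus S$ and decomposes into two distinct edges $(t,u_1),(t,u_2)\in E(G-S)$ together with a $u_1$--$u_2$ path $P$ in $(G-S)-t$. I would invoke Lemma~\ref{lem:samplingCorrectness} with the $(k+1)$-vertex set $S'=S\cup\{t\}$ and the family of collections $\{\CC_{v_1,v_2}\}_{(v_1,v_2)\in V(G)\times V(G)}$, where $\CC_{v_1,v_2}$ is the set of $v_1$--$v_2$ paths in $G$. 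This family has at most $n^2$ collections, matching the parameter $q=2$ used when invoking the sampling primitive with parameter $k+1$.

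Since $S'$ is disjoint from $P\in\CC_{u_1,u_2}$, the lemma guarantees a $u_1$--$u_2$ path $P'$ in $\bigcup_{G_j\in\cL_{S'}}G_j$, whose vertices all lie outside $S\cup\{t\}$. Each edge $(x,y)$ of $P'$ lies in some $G_j$ with $V_j\cap S'=\emptyset$, and the unique $x$--$y$ path in the spanning forest $H_j$ is confined to $V_j$, hence avoids both $S$ and $t$. Concatenating these paths and extracting a simple subpath yields a $u_1$--$u_2$ path in $(\bigcup_{G_i\in\cL}H_i)-S-t$, which we close into a cycle through $t$ by appending the edges $(t,u_1)$ and $(t,u_2)$; both are incident to $t\in T$, hence lie in $E_\cX$, and their endpoints are outside $S$, so they survive the deletion of $S$. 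This produces a cycle through $t$ in $(E_\cX\cup\bigcup_{G_i\in\cL}H_i)-S$, certifying that $S$ is not a solution for $(E_\cX\cup\bigcup_{G_i\in\cL}H_i,T,k)$ either.

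The main obstacle is to ensure that the reconstructed $u_1$--$u_2$ connectivity avoids the terminal $t$, so that the cycle we build in the sparsified graph genuinely passes through $t$. A naive application of the sampling primitive with parameter $k$ would only preserve a cycle through $t$ in the sampled union, but the spanning-forest replacements might themselves traverse $t$, preventing a clean closure at $t$. The choice of sampling parameter $k+1$ is precisely what lets us enlarge $S$ by the terminal $t$ and thereby force the relevant sampled subgraphs, and hence their spanning forests, to avoid $t$ entirely. The success probability of $1-3/n$ then follows by a union bound, combining the $1-2/n$ guarantee from Lemma~\ref{lem:samplingCorrectness} applied to the $n^2$ path-collections and the $1/n$ failure probability of recovering all $H_i$ as spanning forests via the dynamic connectivity sketches of Proposition~\ref{prop:dynamicConnectivitySketch}.
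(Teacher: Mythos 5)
Your proof is correct and uses the same key ingredients as the paper: one direction is trivial by subgraph containment, and the converse rests on applying Lemma~\ref{lem:samplingCorrectness} to the $(k+1)$-vertex set $S'=S\cup\{t\}$ together with the fact that each $H_j$ for $G_j\in\cL_{S'}$ lives entirely in $V(G)\setminus S'$. Your treatment of the converse is slightly cleaner: you decompose the offending $t$-cycle into the two terminal edges $(t,u_1),(t,u_2)$ plus a $u_1$--$u_2$ path in $(G-S)-t$ and directly rebuild a $t$-cycle in the sparsified graph, whereas the paper argues by contradiction through the block structure of $G'-S$ (identifying a ``critical'' non-terminal adjacent pair $u,v$ separated by $t$ in $G'-S$). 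Both arguments hinge on the same observation that enlarging the deletion set by $t$ is what keeps the spanning-forest replacements from sneaking through $t$; your write-up makes this motivation explicit. Your item~1 argument (bridges of $G-S$ form a forest, plus the trivial count of edges touching $S$) is also a sound and slightly more elementary variant of the paper's biconnected-component count.
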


	\begin{proof}
		For the first statement, fix a solution $S$ for $(G,T,k)$ and consider the edges incident on $T$ that do not have a vertex in $S$ as an endpoint. It is sufficient to prove that the number of such edges is at most $n$. And indeed, this is the case since each edge incident on a terminal must form a biconnected component of $G-S$ on its own and the number of biconnected components of a graph is bounded by $n$.

		Consider the second statement.
	 	 Let $G'$ be the subgraph of $G$ obtained by taking the subgraph $\bigcup_{G_i\in {\cal L}}H_i$ and then adding the edges in
	$E_{\cX}$.  Since $G'$ is a subgraph of $G$, it follows that if $S$ is a solution for $(G,T,k)$, it is also a solution for $(G',T,k)$.

	 We now consider the converse direction.
	For every $u,v\in V(G)$, let $\CC_{uv}$ denote the set of all $u$-$v$ paths in $G$. Consider the set $\{\CC_{uv}\mid u,v\in V(G)\}$ which has size at most $n^2$.	Invoking Lemma~\ref{lem:samplingCorrectness}  guarantees that with probability at least $1-2/{n}$, for every set $P\subseteq V(G)$ of size at most $k+1$, and every $u,v\in V(G)$, if there is a $u$-$v$ path in $G-P$ then there is a $u$-$v$ path in $G'-P$. Henceforth, we assume that this event indeed occurs.
 Now, suppose that $S$ is a solution for $(G',T,k)$, but is not a solution for $(G,T,k)$. Let $t\in T$ be such that $C$ is a $t$-cycle in $G-S$ which is not present in $G'-S$. Recall that every block of $G'-S$ that $t$ is contained in, consists of a single edge incident on $t$. Moreover, since $G'$ contains every edge in $G$ that is incident on $t$, it follows that every edge in $G$ that is incident on $t$ but not incident on $S$ is already present in $G'-S$.
	We now observe that for $t$ to have a flow of 2 to some vertex in $G-S$, there must be a pair of adjacent vertices $u,v\notin T$ that lie on a $t$-cycle in $G-S$,
	but which have no path between them in $G'-S-\{t\}$ (i.e., they lie in distinct blocks of $G'-S$ that are separated by $t$).
Consider the set $S'=S\cup \{t\}$ and notice that we have inferred the existence of a $u$-$v$ path in $G-S'$. As already argued using Lemma~\ref{lem:samplingCorrectness}, there must be a $u$-$v$ path in $G'-S'$, a contradiction.
	 This completes the proof of the claim.
		\end{proof}
Since the instance $(E_{\cX}\cup \bigcup_{i\in [\ell]}H_i,T,k)$ has $n$ vertices and $\wtilde\bigoh(k^{\bigoh(1)}n)$ edges, the theorem follows.
\end{proof}

\subsection{Multiway Cut}
Recall that in the {\sc Multiway Cut} problem, the input is a graph $G$, a set $T\subseteq V(G)$ of vertices called terminals and an integer $k$. The goal is to determine whether there is a set $S\subseteq V(G)$ of size at most $k$ such that there is no connected component of $G-S$ containing two terminals. As was the case for {\sc Subset Feedback Vertex Set}, we assume the following input model for instances of this problem. For each terminal, when the first edge incident on it is inserted, it is inserted along with a pair of bits that indicate whether (and if so, which of) the endpoints of this edge are terminals.

\multiwaycut*

\begin{proof}
	We run {\sf Sampling Primitive}~($V(G),n,k,2$) to generate the vertex sets $V_1,\dots, V_\ell\subseteq V(G)$ and then
 process the input stream as follows. For each vertex set $V_i$ produced by the sampling primitive, we maintain a dynamic connectivity sketch of $G_i=G[V_i]$ (call this data structure $D_i$) using Proposition~\ref{prop:dynamicConnectivitySketch}.
This completes the processing.  Since $\ell=\wtilde{\bigoh}(k^{\bigoh(1)})$,  the space used by our processing step is $\wtilde{\bigoh}(k^{\bigoh(1)}\cdot n)$ as required.

   In the post-processing step, for every $i\in [\ell]$, we construct a spanning forest of $G_i$ from $D_i$ (Proposition~\ref{prop:dynamicConnectivitySketch}). Choosing $c$ to be large enough and applying the union bound over all $\{G_i\mid i\in [\ell]\}$, we conclude that with probability at least $1-1/n$, for every $i\in [\ell]$, $H_i$ is a spanning forest of $G_i$.
We now run the known $\bigoh(4^kk^{\bigoh(1)}(m+n))$-time $k^{\bigoh(1)}(m+n)$-space fixed-parameter algorithm for {\sc Multiway Cut} (see, for example, \cite{CyganFKLMPPS15}) on the instance $(\bigcup_{i\in [\ell]}H_i,T,k)$ which has $n$ vertices and $\wtilde\bigoh(k^{\bigoh(1)}n)$ edges.

  We now prove the correctness of our algorithm.

  \begin{claim}
 With probability at least $1-3/n$, every   	$S\subseteq V(G)$ is a solution for $(G,T,k)$ if and only if $S$ is a solution for $(\bigcup_{G_i\in {\cal L}}H_i,T,k)$.
  \end{claim}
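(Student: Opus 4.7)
The plan is to mirror the approach used in the proof of Theorem~\ref{thm:subsetFVS}, but with a simpler argument since Multiway Cut only requires preserving connectivity between terminals rather than preserving two vertex-disjoint paths to a terminal. The forward direction is trivial: since $G' := \bigcup_{G_i \in \mathcal{L}} H_i$ is a subgraph of $G$, any solution $S$ for $(G,T,k)$ is automatically a solution for $(G',T,k)$ (every $G'-S$ component is contained in some $G-S$ component).

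For the reverse direction, I will argue the contrapositive. Suppose $S \subseteq V(G)$ with $|S| \le k$ is \emph{not} a solution for $(G,T,k)$; then there exist two terminals $t_1,t_2 \in T$ and a $t_1$-$t_2$ path $P$ in $G-S$. To conclude that $S$ is also not a solution for $(G',T,k)$, I need to exhibit a $t_1$-$t_2$ path in $G' - S$. I will set up the sampling analysis by invoking Lemma~\ref{lem:samplingCorrectness} with the collection $\{\mathcal{C}_{uv} : u,v \in V(G)\}$, where $\mathcal{C}_{uv}$ denotes the set of all $u$-$v$ paths in $G$. Since there are at most $n^2$ such collections, we are in the regime $q=2$, matching the call to {\sf Sampling Primitive}$(V(G),n,k,2)$ used in the algorithm. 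The lemma then guarantees that, with probability at least $1-2/n$, the following ``path-preservation event'' holds: for every $k$-set $S$ and every pair $u,v \in V(G)$, if some $u$-$v$ path in $G$ avoids $S$, then some $u$-$v$ path in $\bigcup_{G_i \in \mathcal{L}_S} G_i$ also avoids $S$.

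The remaining (and main) step is to lift paths from the union of sampled subgraphs to the union of their spanning forests. For every $i \in [\ell]$, by Proposition~\ref{prop:dynamicConnectivitySketch} and a union bound (choosing the polynomial failure exponent sufficiently large), with probability at least $1-1/n$ each $H_i$ is a bona fide spanning forest of $G_i$. Given this, if $Q = u_0 u_1 \cdots u_r$ is any $t_1$-$t_2$ path in $\bigcup_{G_i \in \mathcal{L}_S} G_i$ avoiding $S$, then each edge $\{u_j,u_{j+1}\}$ lies in some $G_i \in \mathcal{L}_S$, so $u_j$ and $u_{j+1}$ are connected in the spanning forest $H_i$ by a path whose vertices all lie in $V(G_i) \subseteq V(G) \setminus S$. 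Concatenating these subpaths over $j = 0,\dots,r-1$ yields a $t_1$-$t_2$ walk in $\bigcup_{G_i \in \mathcal{L}_S} H_i$ that avoids $S$, hence a $t_1$-$t_2$ path in $G' - S$, contradicting the assumption that $S$ is a solution for $(G',T,k)$.

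Finally, a union bound over the two low-probability failure events (the sampling event from Lemma~\ref{lem:samplingCorrectness} and the spanning-forest recovery event) gives the claimed $1-3/n$ success probability. The only mildly subtle step is the path-lifting: the key point is that the spanning forest $H_i$ preserves connectivity within each $V(G_i)$ without introducing any new vertices, so the concatenated walk is guaranteed to avoid $S$ whenever the original path does.
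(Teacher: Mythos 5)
Your proof is correct and follows the same strategy as the paper: forward direction by the subgraph containment, reverse direction by invoking Lemma~\ref{lem:samplingCorrectness} on the family $\{\CC_{uv}\}_{u,v}$ with $q=2$, and a union bound over the sampling event and the spanning-forest-recovery event to get $1-3/n$. Your explicit path-lifting step (replacing each edge of a path in $\bigcup_{G_i\in\mathcal{L}_S}G_i$ by the corresponding tree path in $H_i$, which stays inside $V(G_i)\subseteq V(G)\setminus S$, then shortcutting the resulting walk to a path) is actually spelled out in more detail than the paper's terse conclusion, and it correctly fills in exactly the step the paper leaves implicit.
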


	\begin{proof}

	Let $G'=\bigcup_{G_i\in {\cal L}}H_i$.  Since $G'$ is a subgraph of $G$, it follows that if $S$ is a solution for $(G,T,k)$, it is also a solution for $(G',T,k)$.

	 We now consider the converse direction.
	For every $u,v\in V(G)$, let $\CC_{uv}$ denote the set of all $u$-$v$ paths in $G$. Consider the set $\{\CC_{uv}\mid u,v\in V(G)\}$ which has size at most $n^2$.	Invoking Lemma~\ref{lem:samplingCorrectness}  guarantees that with probability at least $1-2/{n}$, for every set $P\subseteq V(G)$ of size at most $k$, and every $u,v\in V(G)$, if there is a $u$-$v$ path in $G-P$ then there is a $u$-$v$ path in $G'-P$. Henceforth, we assume that this event occurs.
	Now, suppose that $S$ is a solution for $(G',T,k)$, but is not a solution for $(G,T,k)$. Let $t_1,t_2\in T$ be such that $Q$ is a $t_1$-$t_2$ path in $G-S$.
	But then,  there must be a $t_1$-$t_2$ path in $G'-S'$ also, yielding a contradiction.
	 This completes the proof of the claim.
		\end{proof}
Since the instance $(\bigcup_{i\in [\ell]}H_i,T,k)$ has $n$ vertices and $\wtilde\bigoh(k^{\bigoh(1)}n)$ edges, the theorem follows.
\end{proof}

Notice that in contrast to the static setting, the {\em edge deletion} variants of the above problems cannot be trivially reduced to the vertex deletion variants by, say, subdividing edges. This is because the number of vertices now blows up, plus the two edges that correspond to a single original edge may now appear far apart in time. Hence, one has to treat the edge deletion variants separately. Our focus in this paper is on the vertex deletion variants and we leave the edge deletion variants for future work.

\section{Our Refinement of The {\semips} Class}\label{sec:refiningSemiPS}

We recall the class {\semips} introduced by Chitnis and Cormode~\cite{ChitnisC19} and then formally define our new notions related to parameterized semi-streaming algorithms.

\begin{definition}[{\semips}]{\sf \cite{ChitnisC19}}
	The class {\semips} comprises precisely those parameterized problems that are solvable using $\wtilde{\bigoh}(f(k)\cdot n)$ bits for some computable function $f$, allowing unbounded computation at each edge update, and also at the end of the stream, i.e., in post-processing.
\end{definition}

\begin{definition}[{\fptsemips} and FPSS algorithms]
	The class {\fptsemips} comprises precisely those parameterized graph problems that are solvable using a $\wtilde{\bigoh}(f(k)\cdot n)$-space streaming algorithm for some computable function $f$,
	allowing $f(k)n^{\bigoh(1)}$-time at each edge update, and also at the end of the stream, i.e., in post-processing. We call algorithms of this form, {\em fixed-parameter semi-streaming algorithms}, or {\em {FPSS}-algorithms}.
\end{definition}

Clearly, {\fptsemips} is contained in {\semips}, but the converse (restricted to graph problems) is not true as {\semips} contains W-hard problems, whereas {\fptsemips} is clearly contained in {\sf FPT}.  We recall that, in this paper all FPSS-algorithms we designed only require $\wtilde{\bigoh}(k^{\bigoh(1)}\cdot n)$-space and polynomially bounded time at each edge update. Therefore, we in fact have worked within a further refinement of {\fptsemips} that is already very rich in the amount and type of graph problems it contains as we have shown.

We next refine the class {\fptsemips} by introducing the notion of (polynomial) semi-streaming kernelization and also introduce the weaker notion of (polynomial) semi-streaming compression. 	Recall that Fafianie and Kratsch~\cite{fafianie2014streaming} first introduced the notion of {\em streaming kernelization}, which, for a parameterized problem $P$ is a polynomial-time $k^{\bigoh(1)}\log n$-space streaming algorithm $\cA$  that, on input $(I,k)$, outputs an instance $(I',k')$ of $P$, where $(I,k)$ is a yes-instance if and only if $(I',k')$ is a yes-instance and $|I'|,k'\leq f(k)$ for some polynomially bounded function $f$. However, for vast numbers of well-studied parameterized graph problems, this definition places insurmountable constraints and so, we introduce an appropriate notion of {\em semi-streaming kernelization}. We show through Theorem~\ref{thm:FPT-semi-PS-Kernels-equivalence} that this definition is robust and consistent with our definition of parameterized semi-streaming algorithms and mimics the relation between (static) kernelization and (static) parameterized algorithms.

\begin{definition}
	[Semi-streaming Kernelization]
	Let $P$ be a graph problem parameterized by  $k\in {\mathbb N}$. A {\em semi-streaming kernelization} for $P$ is an $\widetilde{\OO}(k^{\bigoh(1)}n)$-space streaming algorithm $\cA$ with polynomially bounded time at each edge update and polynomial post-processing time that, on input $(I,k)$, outputs an instance $(I',k')$ of $P$, where $(I,k)$ is a yes-instance if and only if $(I',k')$ is a yes-instance and $|I'|,k'\leq f(k)$ for some computable function $f$. If $f$ is a polynomial function, then we say that $\cA$ is a {\em polynomial} semi-streaming kernelization. 	\end{definition}

	\begin{definition}
	[Semi-streaming Compression]
	Let $P$ be a graph problem parameterized by  $k\in {\mathbb N}$ and let $Q$ be any language. A {\em semi-streaming compression} of $P$ into $Q$ is an  $\widetilde{\OO}(k^{\bigoh(1)}n)$-space streaming algorithm $\cA$ polynomially bounded time at each edge update and polynomial post-processing time that, on input $(I,k)$, outputs an instance $I'$ of $Q$ such that $(I,k)$ is a yes-instance of $P$ if and only if $I'$ is a yes-instance of $Q$ and $|I'|\leq f(k)$ for some computable function $f$. If $f$ is a polynomial function, then we say that $\cA$ is a {\em polynomial} semi-streaming compression of $P$ into $Q$. We also simply say that $\cA$ is a {(polynomial) semi-streaming compression} for $P$.
	\end{definition}

	\begin{theorem}\label{thm:FPT-semi-PS-Kernels-equivalence}
		Let $P$ be a parameterized graph problem. Then, $P$ is in {\fptsemips} if and only if it is decidable and has a semi-streaming kernelization.
	\end{theorem}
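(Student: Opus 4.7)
The plan is to prove the equivalence by establishing the two implications separately, in direct analogy with the classical equivalence between FPT and polynomial-time kernelization. For the backward direction (decidability plus a semi-streaming kernelization imply membership in \fptsemips), the idea is a straightforward simulation. Given the semi-streaming kernelization $\cA$, I would run it on the input stream within its $\widetilde{\OO}(k^{\bigoh(1)} n)$ space and polynomial per-update time budget; at the end of the stream, $\cA$ outputs a kernel $(I',k')$ of total size at most $f(k)$ for some computable $f$. Since $P$ is decidable there is a computable function $g$ and a decision procedure $\DD$ that decides $P$ on any instance $I$ in time $g(|I|)$; invoking $\DD$ on the kernel in post-processing takes time at most $g(f(k))$. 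Setting $h(k) := k^{\bigoh(1)} + g(f(k))$, both computable in $k$, the combined algorithm uses $\widetilde{\OO}(h(k) \cdot n)$ space and $h(k) \cdot n^{\bigoh(1)}$ time per update and in post-processing, which matches the FPSS specification.

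For the forward direction (membership in \fptsemips yields decidability plus a semi-streaming kernelization), decidability is immediate from the existence of any FPSS algorithm for $P$. To construct the kernelization, I would start from an FPSS algorithm $\cB$ for $P$ with space $\widetilde{\OO}(f(k) n)$ and time $f(k) n^{\bigoh(1)}$ per update and in post-processing, and perform a threshold-based case analysis on the vertex count $n$, in the spirit of the classical proof that FPT implies polynomial-time kernelization. If $n$ is small relative to the parameter, specifically $n \leq k^{c}$ for a suitable constant $c$, the whole edge set has size at most $O(n^2) \leq O(k^{\bigoh(1)} n)$; I would store it exactly using the $T$-sparse recovery primitive of Proposition~\ref{prop:kSparseRecovery} with capacity $T := n^2$, and in post-processing output the reconstructed graph itself as the kernel, whose total size is $O(n^2)$, a computable function of $k$. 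If instead $n$ is large, specifically so large that $f(k) \leq n^{\bigoh(1)}$, then $\cB$ runs in polynomial time per update and in post-processing; I would simulate $\cB$ to decide the instance and output a trivial one-vertex YES- or NO-instance as the kernel.

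The main technical obstacle is that these two regimes do not immediately cover all pairs $(n, k)$ when $f$ is super-polynomial in $k$: there can be an intermediate range where $n$ is too large for direct storage (so $n > k^{\bigoh(1)}$) yet still small enough that $f(k) n^{\bigoh(1)}$ is super-polynomial in $n$, in which case simulating $\cB$ violates both the space and time budgets of a kernelization. My plan to close this gap is to exploit that \fptsemips is contained in \FPT (since running $\cB$ on a buffered stream yields a static algorithm in $f(k) n^{\bigoh(1)}$ time), and invoke the classical FPT-to-kernel equivalence to obtain a polynomial-time classical kernelization $\cK$ for $P$ producing an output of size $\tilde g(k)$ for some computable $\tilde g$. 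I would then maintain during the stream a sparse-recovery sketch of capacity large enough (a function of $k$ times $n$) so that in this intermediate regime the sketch suffices to losslessly reconstruct the input graph, after which $\cK$ can be invoked in post-processing. Choosing the thresholds so that the three regimes together cover all $(n, k)$ uniformly, and carrying out the bookkeeping so that the space, time, and output-size budgets all hold in each regime, is where I expect the bulk of the technical effort to lie.
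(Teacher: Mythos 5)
Your backward direction matches the paper's and is fine. The issue is the forward direction. Your two thresholds ($n\le k^{c}$ and $f(k)\le n^{\bigoh(1)}$) leave a gap, as you yourself note, and your plan to close it — a sparse-recovery sketch of capacity ``a function of $k$ times $n$'' followed by a classical static kernelization — is neither verified nor actually necessary. The observation you are missing is that in this intermediate regime $f(k)>n^{\Omega(1)}$ already forces $n<f(k)^{\bigoh(1)}$, so $n$, and hence the total number of potential edges $n^2$, is bounded by a computable function of $k$; you could simply store the graph verbatim (exactly as in your small-$n$ case) and output it as the kernel, with output size itself a function of $k$. There is no need to detour through the classical FPT-to-kernel equivalence, and doing so introduces unverified obligations — that the sketch capacity really suffices (it does, but only because $n<f(k)^{\bigoh(1)}$, a bound you never write down) and that the space used by the static kernelizer $\cK$ stays within budget — that your sketch leaves open.

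The paper sidesteps all of this with a single threshold, $f(k)\lessgtr\log n$. When $f(k)>\log n$ one has $n<2^{f(k)}$, so the entire graph occupies space bounded by a function of $k$ alone and can be stored and output directly as the kernel. When $f(k)\le\log n$, the FPSS algorithm $\cA$ itself already runs in $\widetilde{\OO}(n)$ space with polynomially bounded per-update and post-processing time, so it decides the instance outright and a trivial constant-size kernel is emitted. This one split covers every pair $(n,k)$, collapses your first and third cases into one, and avoids classical kernelization machinery entirely; it is the key idea your proposal lacks.
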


	\begin{proof}
	Let $P$ be parameterized by  $k\in {\mathbb N}$.
		Suppose that $P\in$ {\fptsemips} and let $\cA$ be a $\wtilde{\bigoh}(f(k)\cdot n)$-space streaming algorithm for $P$ that uses $f(k)n^{\bigoh(1)}$-time for post-processing and between edge updates. Decidability trivially follows from the existence of $\cA$. If $f(k)>\log n$, then, we can simply store the input graph using space bounded by $2^{2f(k)}$ and output it as the required equivalent instance of $P$. On the other hand, if $f(k)<\log n$, then $\cA$ itself is a $\wtilde{\bigoh}(n)$-space streaming algorithm with polynomially bounded post-processing time and polynomially bounded time between edge updates. Hence, we can use $\cA$ to solve the input instance and output an equivalent trivial instance of $P$. This gives a semi-streaming kernelization for $P$.

		Conversely, suppose that $P$ has a semi-streaming kernelization $\cA$. Then, we run $\cA$ on the input instance, say $(I,k)$, and use decidability to solve the computed equivalent instance $(I',k')$ (whose size is bounded by some $f(k)$) using additional space bounded by $g(k)$ for some computable $g$. This is a fixed-parameter semi-streaming algorithm for $P$ as required.
	\end{proof}

As in the case of static kernelization, the primary classification goal in semi-streaming kernelization is to identify problems that have polynomial semi-streaming kernelization and for those that do, to obtain outputs whose size can be bounded by the slowest-growing polynomial function possible. 
We extend the definitions of semi-streaming kernelization and compression from single passes to multiple passes in the natural way. 

\section{Lower Bounds}\label{sec:lowerBounds}

In this section, we show lower bounds implying that (i) there are problems in {\sf FPT} that are not in {\fptsemips}, and (ii) our corollary for Theorem~\ref{thm:Hcovering} when $\cH$ is the class of split graphs is tight (for $\cH$ being cluster graphs, a lower bound is already known from \cite{ChitnisCEHMMV16}).

We use the  two party communication model, introduced by Yao~\cite{yaocc} to prove
lower bounds on streaming algorithms.  In this model of communication, there are two players, Alice and Bob, holding inputs $x\in X$ and $y\in Y$ respectively, where $X$ and $Y$ are two arbitrary sets. Their objective is  to compute a given function $f~:~X\times Y \rightarrow \{0,1\}$, by communicating as few bits as possible. It is assumed that both players have infinite computational power.
The minimum number of bits communicated, for any pair of inputs $(x,y)$, to compute the function $f$, is called the (deterministic) communication complexity of $f$, denoted by $D(f)$.
For a randomized protocol $P$, we say $P$ has error probability at most $\varepsilon$ if $\Pr(P (x, y) = f (x, y)) \geq 1-\varepsilon$ for all inputs $x\in X$ and $y\in Y$. Here, the randomness is over the private coin tosses of Alice and Bob. The  randomized communication complexity of $f$, denote by $R(f)$, is the minimum communication cost over all the randomized  protocols for $f$ with error probability at most $1/3$.
In the one-way communication model, only Alice is allowed to send one message to Bob and no other message is allowed to be sent.   We use $D^{1-\text{way}}(f)$ and $R^{1-\text{way}}(f)$ to denote the one way deterministic and randomized communication complexity of $f$, respectively.
Notice that $D(f) \geq R(f)$ and$D^{1-\text{way}}(f) \geq R^{1-\text{way}}(f)$.

Lower bounds on $D^{1-\text{way}}(f)$ and $R^{1-\text{way}}(f)$ can be used for proving lower bounds on
deterministic and randomized $1$-pass streaming algorithms via standard reductions. That is, given a function $f$, Alice creates a stream $\sigma_x$ from her input $x$ and Bob creates a stream $\sigma_y$ from his input $y$.  First Alice runs the streaming algorithm on $\sigma_x$ and passes the state of the algorithm to Bob. Then, Bob  continues the execution of the streaming algorithm on $\sigma_y$.  Finally,   the output of the streaming algorithm is used to solve the problem $f$. This provides
space complexity lower bound of the streaming algorithm to be at least $D^{1-\text{way}}(f)$ and $R^{1-\text{way}}(f)$.

Next, we describe some well known communication problems and state their communication lower bounds. 

\paragraph*{INDEX.}
The {\sf INDEX} problem is one of the  most commonly used communication problems  to prove lower bounds on streaming algorithms. In the {\sf INDEX} problem, Alice is given an $n$-length bit string $x_1\ldots x_n \in \{0,1\}^n$ and Bob is given an index $i\in [n]$.  Bob wants to find the bit $x_i$.

\begin{proposition}[\cite{KremerNR99}]
\label{prop:index}
$R^{1-\text{way}}({\sf INDEX})=\Omega (n)$.
\end{proposition}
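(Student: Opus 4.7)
The plan is to establish the lower bound by combining Yao's minimax principle with an information-theoretic argument under a uniform hard distribution. First, I would reduce randomized one-way communication complexity (even in the public-coin model, which only weakens the bound by an additive $O(\log n)$) to distributional complexity by fixing the input distribution $\mu$ where Alice's bit string $X = x_1 x_2 \cdots x_n$ is uniform in $\{0,1\}^n$ and Bob's index $i$ is uniform in $[n]$, drawn independently of $X$. By Yao's principle, it suffices to show that any deterministic one-way protocol $\Pi$ that errs with probability at most $1/3$ under $\mu$ must transmit $\Omega(n)$ bits.

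Let $M = M(X)$ denote Alice's (deterministic) message and let $\widehat{x}_i(M,i)$ be Bob's output. Since $\Pr_\mu[\widehat{x}_i = x_i] \geq 2/3$ and $i$ is uniform in $[n]$, an averaging argument shows that for at least a constant fraction $\alpha$ of indices $i \in [n]$, we have $\Pr_X[\widehat{x}_i(M,i) = x_i] \geq 2/3 - \epsilon$ for some small $\epsilon > 0$. For each such ``good'' coordinate $i$, Fano's inequality applied to the binary variable $x_i$ with predictor depending only on $M$ gives $H(x_i \mid M) \leq h(1/3 + \epsilon)$, where $h(\cdot)$ is the binary entropy. Since $x_i$ is uniform under $\mu$, $H(x_i) = 1$, and therefore $I(x_i ; M) = H(x_i) - H(x_i \mid M) \geq 1 - h(1/3 + \epsilon) =: \delta > 0$.

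The key step is to aggregate these per-coordinate information gains into a single bound on $|M|$. Because the bits $x_1, \ldots, x_n$ are mutually independent under $\mu$, the super-additivity of mutual information under independent coordinates (equivalently, the chain rule applied to $I(X;M) = \sum_i I(x_i ; M \mid x_{<i})$ together with independence giving $I(x_i; M \mid x_{<i}) \geq I(x_i; M)$) yields
\[
\sum_{i=1}^n I(x_i ; M) \;\leq\; I(X ; M) \;\leq\; H(M) \;\leq\; \mathbb{E}[|M|].
\]
Combining with the per-coordinate bound over the $\alpha n$ good coordinates gives $\mathbb{E}[|M|] \geq \alpha \delta \cdot n = \Omega(n)$, which is the desired lower bound on the expected, and hence worst-case, message length.

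The main obstacle I expect is the averaging step that converts the global success guarantee (probability $\geq 2/3$ over $\mu$) into per-index success guarantees strong enough to make Fano's inequality productive. A naive Markov bound only produces a few indices where Bob does very well, so care is needed to argue that on a constant fraction of coordinates Bob's conditional error is bounded away from $1/2$ by a fixed constant; this is typically handled by observing that if the per-index error exceeded $1/2 - \eta$ on more than a small fraction of coordinates, the average error would exceed $1/3$. A secondary, purely technical point is verifying that the independence-based inequality $\sum_i I(x_i ; M) \leq I(X ; M)$ is applied in the correct direction, which is where the independence of coordinates under the chosen hard distribution is essential and is the reason the bound cannot be improved by picking a correlated distribution on $X$.
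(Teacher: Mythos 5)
Your proof is correct, and it is the standard information-theoretic argument for the $\Omega(n)$ lower bound on one-way randomized communication for \textsf{INDEX}. Note, however, that the paper does not contain a proof of this proposition at all: it is stated as a known fact with a citation to Kremer, Nisan and Ron, so there is no in-paper argument to compare yours against. Your Yao-minimax reduction to a deterministic protocol under the uniform product distribution, the Markov averaging to obtain a constant fraction of indices with per-coordinate error bounded away from $1/2$, the Fano step giving $I(x_i;M)\geq 1-h(5/12)$ on those indices, and the super-additivity $\sum_i I(x_i;M)\leq I(X;M)\leq H(M)$ (which relies exactly on the independence of the $x_i$ under $\mu$, as you correctly emphasize) are all standard and sound. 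One small technical point worth tightening if you write this up in full: the final inequality $H(M)\leq\mathbb{E}[|M|]$ requires the message encoding to be prefix-free, or else one should just argue directly with the worst-case message length $c$ and use $H(M)\leq c$, which is what communication complexity lower bounds usually measure anyway. Your flagged ``obstacle'' about the averaging step is exactly the right thing to worry about and is handled correctly: Markov on the per-index error $1-p_i$ with threshold $5/12$ gives at least a $1/5$ fraction of good indices, which suffices.
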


\paragraph*{GRAPH INDEX.}
In this problem, Alice is given a bipartite graph $G$ with bipartition $V(G)=A\uplus B$ such
$\vert A\vert =\vert B\vert=n$. Bob is given a pair $(a,b)\in A\times B$ and he wants to test whether $\{a,b\}\in E(G)$ or not. A reduction from {\sf INDEX} implies that $R^{1-\text{way}}({\sf GRAPH\; INDEX})=\Omega (n^2)$ (see Proof of Theorem 2.1~\cite{ChakrabartiG0V20}).

\paragraph*{Chordal Recognition.}
We prove that any $1$-pass streaming algorithm to recognize a chordal graph requires $\Omega(n^2)$ space.


  \begin{theorem}\label{thm:chordalLowerBound}
Any $1$-pass streaming algorithm to test whether a graph is chordal  requires $\Omega(n^2)$ space.
\end{theorem}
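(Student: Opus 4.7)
The plan is to prove this lower bound by reducing from \textsf{INDEX} on strings of length $n^2$, whose one-way randomised communication complexity is $\Omega(n^2)$ by Proposition~\ref{prop:index}. Alice will hold $x\in\{0,1\}^{[n]\times[n]}$ and Bob will hold a query index $(i^\star,j^\star)\in[n]\times[n]$, and together they will feed a graph $H$ on the vertex set $V=L\cup R\cup\{p,q\}$ with $|L|=|R|=n$ into the stream. The aim is to arrange matters so that $H$ is chordal iff $x_{i^\star j^\star}=1$; any $s$-space $1$-pass chordality algorithm then yields an $s+O(1)$-bit one-way protocol for \textsf{INDEX}, forcing $s=\Omega(n^2)$ on graphs with $|V|=\Theta(n)$ vertices, and hence (after rescaling $|V|=N$) the claimed $\Omega(N^2)$ space bound.

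In the construction Alice will first stream a fixed ``skeleton'': turn $L$ and $R$ into cliques, join $p$ to every vertex of $L$, join $q$ to every vertex of $R$, and add the edge $\{p,q\}$. She then encodes $x$ by inserting the cross edge $\{l_i,r_j\}$ exactly when $x_{ij}=0$. Bob, knowing $(i^\star,j^\star)$, will stream $\{p,r_j\}$ for every $j\neq j^\star$, $\{l_i,q\}$ for every $i\neq i^\star$, and the cross edges $\{l_i,r_j\}$ for all $(i,j)\neq(i^\star,j^\star)$. In the resulting graph $p$ becomes adjacent to everything except $r_{j^\star}$, $q$ becomes adjacent to everything except $l_{i^\star}$, and the subgraph induced on $L\cup R$ equals $K_{2n}$ minus the single edge $\{l_{i^\star},r_{j^\star}\}$, which is present precisely when $x_{i^\star j^\star}=0$.

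The main claim is that $H$ is chordal iff $x_{i^\star j^\star}=1$. One direction is immediate: if $x_{i^\star j^\star}=0$ then the $4$-cycle $p-l_{i^\star}-r_{j^\star}-q-p$ has both of its candidate chords $\{p,r_{j^\star}\}$ and $\{l_{i^\star},q\}$ absent, since those are exactly the edges Bob withheld, so it is induced. For the converse, where the main work lies, I will do a case analysis on any hypothetical induced cycle $C$ of length at least $4$. Because $p$ has only one non-neighbour (namely $r_{j^\star}$), every induced cycle through $p$ must have length $4$ and the form $p-n_1-r_{j^\star}-n_2-p$; a short check of each admissible pair $(n_1,n_2)$ in $(L\setminus\{l_{i^\star}\})\cup(R\setminus\{r_{j^\star}\})\cup\{q\}$ will show that it creates a chord via a clique edge of $L$ or $R$, a cross edge that Bob completed, or an edge Bob inserted at $q$. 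A symmetric argument rules out induced cycles through $q$ alone; cycles through both $p$ and $q$ must place them consecutively (otherwise $\{p,q\}$ is a chord) and shrink to the $4$-cycle $p-q-r_{j^\star}-l_{i^\star}-p$, which requires the absent edge $\{l_{i^\star},r_{j^\star}\}$; and cycles avoiding $p,q$ altogether live in $K_{2n}$ minus an edge, which trivially contains no induced cycle of length at least~$4$. The delicate point I expect to spend the most care on is ensuring that Bob's completion of the cross part and of the neighbourhoods of $p$ and $q$ simultaneously supplies chords for every spurious induced cycle while leaving the distinguishing four-cycle through $\{p,l_{i^\star},r_{j^\star},q\}$ untouched.
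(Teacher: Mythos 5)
Your chordality analysis of the gadget $H$ is correct: once $L\cup R$ induces $K_{2n}$ minus $\{l_{i^\star},r_{j^\star}\}$, $p$ is nonadjacent only to $r_{j^\star}$, and $q$ is nonadjacent only to $l_{i^\star}$, the only candidate induced cycle of length at least $4$ is the one on $p,l_{i^\star},r_{j^\star},q$, and it is a cycle precisely when $\{l_{i^\star},r_{j^\star}\}\in E(H)$, i.e.\ when $x_{i^\star j^\star}=0$. (Induced cycles of length $\geq 5$ are ruled out because they would require at least two nonneighbours of $p$, resp.\ $q$, on the cycle, yet each of $p,q$ has a unique nonneighbour, and $L\cup R$ itself is $K_{2n}$ minus one edge, which is chordal.) So the graph-theoretic core is sound, and the two-vertex gadget is arguably cleaner than the paper's.

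The reduction itself, however, has a genuine gap. In a one-way communication-to-streaming reduction Alice runs the algorithm on her portion of the stream, hands over the memory state, and Bob continues; the concatenated stream must therefore be a legitimate edge stream of a single simple graph, which in particular forbids the same edge being inserted twice. In your construction Alice inserts $\{l_i,r_j\}$ whenever $x_{ij}=0$, while Bob inserts $\{l_i,r_j\}$ for every $(i,j)\neq(i^\star,j^\star)$; hence every cross pair with $x_{ij}=0$ and $(i,j)\neq(i^\star,j^\star)$ is inserted twice. Bob cannot avoid this because he does not know $x$, and the cross-edge completion is essential to the argument: if Bob omits it, the cross part of the graph is arbitrary and spurious induced $C_4$'s appear, e.g.\ on $p,l_i,r_{j^\star},r_j$ whenever $x_{ij^\star}=0$ and $x_{ij}=1$, entirely independently of the queried bit. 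The paper sidesteps this by routing all of Bob's edges through the private vertices $y_1,\dots,y_n$ (and from $b_j$ into $\{x_1,\dots,x_n\}$), so that Bob's edge set is disjoint from Alice's for every input $x$. Any repair of your construction needs a comparable disjointness mechanism, at which point it essentially collapses into the paper's construction.
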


\begin{proof}
We give a reduction from {\sf GRAPH INDEX}. In this problem, Alice is given a bipartite graph $G$ with bipartition $A \cup B$, where $\vert A\vert=\vert B\vert=n$,  $A=\{a_1,\ldots,a_n\}$ and $B=\{b_1,\ldots,b_n\}$. Bob is given a pair $(a_i,b_j)\in A\times B$ and he want to test whether $(a_i,b_j)\in E(G)$. Alice and Bob will construct a graph $H$  such that $H$ is chordal if and only if $(a_i,b_j)\in E(G)$ as follows. The vertex set of $H$ is $A\cup B \cup \{x_i,y_i ~\colon~i\in [n]\}$. Alice will add all the edges of $G$ to $H$ and make $A\cup \{x_1,\ldots,x_n\}$ a clique. Then, Alice will add the set of edges $\{(x_r,y_{r'})~\colon~ r\neq r'\}$.
Recall that $(a_i,b_j)$ is the pair Bob has. Bob will add the edges $(a_i,y_i)$, $(y_i,b_j)$, and $\{(b_j,x_r)~\colon~r\in [n]\}$. This completes the construction of $H$.

 Next we prove that $H$ is a chordal graph if and only if $(a_i,b_j)\in E(G)$. From the construction of $H$, we have that $C=a_i,y_i,b_j,x_i,a_i$ is a cycle of length $4$ and $(x_i,y_i)\notin E(H)$. Now, if $(a_i,b_j)\notin E(G)$, then $(a_i,b_j)\notin E(H)$ and hence $C$ is an induced $C_4$ in $H$. That is, if $(a_i,b_j)\notin E(G)$, then $H$ is not a chordal graph. Next, we prove the other direction of the correctness proof. That is, if $(a_i,b_j)\in E(G)$, then $H$ is a chordal graph. Let $e=(y_i,b_j)$.  First, it is easy to see that $H-e$ is a split graph with $A'=A\cup \{x_1,\ldots,x_n\}$ being a clique  and $B'=B\cup \{y_1,\ldots,y_n\}$ being an independent set in $H-e$. That is, $e$ is the only edge present in $H[B']$. For the sake of contradiction, suppose $H$ is not a chordal graph. Then, by the structure of $H$, there is an induced $C_4$ in $H$ containing $e$.  Let $y_i,b_j,q,q'$ be an induced $C_4$ in $H$. As $H'$ is a split graph, we have that $q,q'\in A'$.
 As $\{(b_j,x_r)~\colon~r\in[n]\}\subseteq E(H)$, $q'\in A'$, and $y_i,b_j,q,q'$ is an induced $C_4$ in $H$, we have that $q'\in A$. Moreover, by the construction of $H$, the only one vertex in $A$ which is adjacent to $y_i$ is $a_i$. Therefore, $q'=a_i$. Then, since $(a_i,b_j)=(q',b_j)\in E(H)$, we got a contradiction that $y_i,b_j,q,q'$ is an induced $C_4$ in $H$.
%
\end{proof}

 \paragraph*{Split Graph Recognition.}
Here, we prove a streaming lower bound for recognizing split graphs that implies that our corollary for Theorem~\ref{thm:Hcovering} when $\cH$ is the class of split graphs, is essentially tight. 

 \begin{theorem}
Any $1$-pass streaming algorithm to test whether the graph is a split graph  requires $\Omega(n)$ space.
\end{theorem}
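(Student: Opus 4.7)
I will prove this $\Omega(n)$ lower bound by a reduction from the INDEX problem in the one-way communication model, which by Proposition~\ref{prop:index} satisfies $R^{1-\text{way}}(\textsf{INDEX}) = \Omega(n)$. Given an INDEX instance $(x, i)$ where Alice holds $x \in \{0,1\}^n$ and Bob holds $i \in [n]$, I will encode it as a graph $H$ on $N = n+3$ vertices that is split if and only if $x_i = 1$; any 1-pass streaming algorithm for split-recognition then yields a one-way communication protocol for INDEX whose message size equals the algorithm's space usage, forcing $\Omega(n) = \Omega(N)$ bits.

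The construction uses the vertex set $V = \{v_1, \ldots, v_n, a, p, q\}$. Alice appends to the stream the edge $(v_j, p)$ for every $j \in [n]$, the edge $(p, a)$, and the edge $(v_j, a)$ for each $j$ with $x_j = 1$. Bob then appends the two edges $(v_i, q)$ and $(q, a)$.

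The dichotomy rests on two arguments. In the positive case $x_i = 1$, I will exhibit the split partition $K = \{a, p, v_i\}$ and $I = V \setminus K$: all three pairs in $K$ are edges of $H$ (the pairs $(a,p)$ and $(p, v_i)$ are added unconditionally by Alice, and $(a, v_i)$ is added because $x_i = 1$), while $I$ is independent because Alice creates no edges among the $v_j$'s and Bob's only two edges $(v_i, q)$ and $(q, a)$ each have an endpoint in $K$; hence $H$ is split. In the negative case $x_i = 0$, I will show that the four vertices $v_i, p, a, q$ induce a $C_4$ in $H$: the cycle edges $(v_i, p), (p, a), (a, q), (q, v_i)$ are all present, whereas the would-be chord $(v_i, a)$ is absent because $x_i = 0$ and the chord $(p, q)$ is never added by either party. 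Since split graphs exclude $C_4$ as an induced subgraph (Section~\ref{sec:graphClassPrelims}), $H$ is not split.

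The main subtlety I will need to justify is Alice's choice to add the ``anchor'' edges $(v_j, p)$ for every $j$, rather than only for indices with $x_j = 1$. These edges are what give $v_i$ a second neighbor besides $q$ in the final graph, and thus what enable the induced $C_4$ witness $v_i{-}p{-}a{-}q{-}v_i$ when $x_i = 0$; without them, $v_i$ would be a pendant and no forbidden subgraph involving $v_i$ could arise. Simultaneously, these same anchor edges do not spoil splitness when $x_i = 1$, because the presence of $(a, v_i)$ permits $v_i$ to be absorbed into the clique $\{a, p, v_i\}$ alongside the two anchors. Once this verification is in place, the reduction is complete and the standard Alice-to-Bob memory-transfer simulation, together with Proposition~\ref{prop:index}, yields the claimed $\Omega(n)$ space lower bound.
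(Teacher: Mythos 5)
Your proof is correct and, like the paper, proceeds by a one‑way reduction from \textsf{INDEX}, but the gadget you use is different and in fact more robust. The paper has Alice build a clique on $\{y_j : x_j = 1\}$ inside a $2n$‑vertex graph and Bob attach a star from $y_i$ to $n$ fresh vertices $z_1,\ldots,z_n$, so that non‑splitness is witnessed by a $2K_2$ (one edge from the clique plus one edge of the star). This only works if the clique actually contains an edge, i.e. the proof sketch in the paper silently assumes at least two coordinates of $x$ equal $1$; for low‑weight strings (e.g.\ $x=0^n$) the resulting graph is still split even when $x_i=0$, so that sketch needs a small patch (restrict to balanced \textsf{INDEX} instances, or pad the clique with two fixed dummy vertices). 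Your construction sidesteps this entirely: with only three auxiliary vertices $a,p,q$, the universal anchor edges $(v_j,p)$ guarantee that $v_i$ always has a second neighbor besides $q$, and non‑splitness is certified by an induced $C_4$ on $\{v_i,p,a,q\}$ for every string $x$ with $x_i=0$, while $x_i=1$ collapses $\{a,p,v_i\}$ into a clique and leaves the rest independent. So the high‑level route is the same (reduction from \textsf{INDEX} plus the finite forbidden‑subgraph characterization of split graphs from Section~\ref{sec:graphClassPrelims}), but you chose a different forbidden subgraph ($C_4$ rather than $2K_2$) and a constant‑size auxiliary gadget, and as a result your reduction is correct for all inputs without the extra case analysis the paper's sketch would require.
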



\begin{proof}[Proof sketch]
We give a reduction from the {\sf INDEX} problem. In this problem Alice has a bit string $x_1\ldots x_n$ and Bob has an index $i\in [n]$. Bob wants to know that whether $x_i=1$ or not. Alice and Bob will construct a graph $G$ as follows. The vertex set of $G$ is $\{y_1,\ldots,y_n\}\cup \{z_1,\ldots,z_n\}$. For each $1\leq j<j'\leq n$ such that $x_i=x_j=1$ Alice will add an edge $(y_j,y_{j'})$. Recall that
Bob has an index $i$. He will  add edges between $x_i$ and $\{z_1,\ldots,z_n\}$.
Notice that if $x_i=1$, then $G$ graph is a split graph. Otherwise $G$ is  not a split graph.
Thus, by Proposition~\ref{prop:index}, the theorem follows.
\end{proof}

A similar lower bound for {\sc Custer Vertex Deletion} is already known from \cite{ChitnisCEHMMV16}.



 \section{Concluding Remarks and Discussion}\label{sec:conclusion}

In this paper, 
we have given a framework to design semi-streaming algorithms for NP-hard cut problems and proved two meta theorems that reduce the task of solving, for any graph class $\cH$, the corresponding {\probVDH} problem,
to just the recognition problem (i.e. solving it for just for $k=0$)
when: ~~{\em (i)}~$\cH$ is characterized by a finite set of forbidden induced subgraphs, or ~~{\em (ii)}~$\cH$ is hereditary, where additionally ``reconstruction'' rather than just recognition is required. The second case encompasses graph classes that are characterized by a possibly infinite set of obstructions.  We have exemplified the usage of our theorems with several applications to well-studied graph problems, and we believe that these will serve as building blocks for many more applications in the future. 
We remark that the running times in many of our results are also optimal under ETH.

A point to note here is that considering {\probVDH} where $\hh$ is defined by excluding a set of forbidden graphs as {\em minors} or {\em topological minors} is not very interesting in our setting. In particular, if $\hh$ is defined by excluding a {\em finite} set of  forbidden graphs as minors or topological minors (e.g., $\hh$ is the class of planar graphs), then the 
number of edges in any $G\in \cH$ is upper bounded by $\cO(n)$. Thus, given an instance $(G,k)$ of \probVDH\, one can check whether the number of edges in $G$ is $\cO(kn)$. If the answer is no, then we have a no-instance. Otherwise, we can simply store the entire graph.
%
%
This means that \probVDH for such $\cH$ trivially belongs to the class  {\semips}~\cite{ChitnisC19}. Further, we know that for all such $\cH$,  \probVDH is in {\sf FPT} (see, for example, \cite{CyganFKLMPPS15}), which implies that in fact, whether or not \probVDH belongs to the class {\fptsemips} (which is contained in {\semips}) simply depends on whether there is a {\em static} linear-space FPT algorithm for the problem that can be used in post-processing. While this is an interesting research direction in itself, it presents different challenges to those we encounter when the goal is to sketch the input graph stream.
On the other hand, suppose that $\hh$ is defined by excluding an {\em infinite} number of forbidden graphs as minors. It is well-known from the Graph Minors project that if a graph class is characterized by an infinite set of forbidden minors, then it is also characterized by a {\em finite} set of forbidden minors, as a result of which we fall back into the previous case. Finally, to the best of our knowledge, there are no natural graph classes characterized by an infinite set of forbidden topological minors.

Let us conclude the paper with a few open questions and further research directions.

\begin{itemize}
\item A central tool in Parameterized Complexity is the usage of graph decompositions. Which among them is relevant to parameterized problems in the semi-streaming model? Also, which of the width-parameters corresponding to them is relevant? With respect to tree decompositions, once a graph has bounded treewidth $w$, its number of edges is already $\OO(nw)$, which means that it can just be stored explicitly. More interestingly, we believe that the ``unbreakability decompositions'' (which generalize tree decompositions) given in \cite{DBLP:journals/siamcomp/CyganLPPS19,CyganKLPPSW21} can be computed in $\widetilde{\OO}(k^{\OO(1)}n)$ space. If this is achieved, then the classic {\sc Minimum Bisection} problem (see Appendix for the formal definition) can be dealt in the same manner as we deal with other cut problems in this paper, yielding a parameterized algorithm for it in the semi-streaming model. We leave this as an interesting question.

\item One of the main inspirations behind Parameterized Complexity is the graph minors project of Robertson and  Seymour~\cite{DBLP:conf/birthday/Downey12,DBLP:conf/birthday/Lokshtanov0Z20}. Here, the central algorithms are for {\sc Minor Containment} and {\sc Disjoint Paths}, also on specific graph classes. Could one develop a theory for algorithmic graph minors in the semi-streaming model?

\item Our work naturally motivates revisiting and improving classic algorithmic methods in Parameterized Complexity where space exceeds $\widetilde{\OO}(k^{\OO(1)}n)$. For example, consider the celebrated matroid-based technique to design polynomial kernels by Kratsch and Wahlstr\"{o}m \cite{DBLP:journals/jacm/KratschW20,DBLP:journals/talg/KratschW14}---can it be implemented in $\widetilde{\OO}(k^{\OO(1)}n)$ space? If the answer is positive, then, for example, composed with our sketch for {\sc Odd Cycle Transversal}, we may derive a semi-streaming kernelization  for this problem. Here, and in many other foundational techniques, completely new ideas may be required. As we mentioned above, in light of semi-streaming in particular, and big data in general, increasing importance must be given to space complexity.

\item In recent years, parameterized approximation algorithms and lossy kernelization have become a very active subject of research (see, e.g., \cite{marx2008parameterized,DBLP:journals/algorithms/FeldmannSLM20,DBLP:journals/siamcomp/ChalermsookCKLM20,LokshtanovPRS17,KulikS20,DBLP:journals/jacm/SLM19,FPTApprox21,LokshtanovSS20}). Indeed, we do not have to choose between parameterized analysis and approximation, but combine them and ``enjoy the best of both worlds''---for example, we can develop a $(1+\epsilon)$-approximation algorithm that runs in $2^{\OO(k)}\cdot n^{\OO(1)}$ time for some problems that are both APX-hard and W[1]-hard. Naturally, we can use approximation also in the context of parameterized algorithms in the semi-streaming model. Then, we may use the same definition, but seek an approximate rather than exact solution.

\item  Lastly, we would like to remark that the study of parameterized (graph) problems in the semi-streaming model can also be of significant interest when the input graph is restricted to belong to a certain graph class. Here, of course, it only makes sense to consider dense classes of graphs. For example, most geometric intersection graphs classes are dense (containing cliques of any size), and, recently, they are being extensively studied from the viewpoint of Parameterized Complexity (see, e.g., \cite{giannopoulos2008parameterized,DBLP:conf/compgeom/FominLP0Z20,DBLP:conf/stoc/BergBKMZ18,DBLP:conf/soda/Panolan0Z19,DBLP:journals/dcg/FominLPSZ19}). So, they may be natural candidates in this regard. In particular, recognizing the family of interval graphs in the semi-streaming model is an interesting open question.
\end{itemize}

\bibliographystyle{alpha}
\bibliography{references,references-old,ref-reconst,ref-relatedwork,referencesConclusion,elimination}
\appendix
\section{Problem Definitions}\label{ap:problems}


%
%

%

\defparproblem{\sc Feedback Vertex Set on Tournaments}{A tournament $T$ and integer $k$.}{$k$}
{Is there a set $S \subseteq V(G)$ of size at most $k$ such that $T-S$ is acyclic.}

\defparproblem{\sc  Split Vertex Deletion}{A graph $G$ and integer $k$.}{$k$}
{Is there a set $S \subseteq V(G)$ of size at most $k$ such that $G-S$ is a split graph, i.e., the vertex set can be partitioned into two -- one inducing a complete graph and the other, an independent set.}

\defparproblem{\sc  Threshold Vertex Deletion}{A graph $G$ and integer $k$.}{$k$}
{Is there a set $S \subseteq V(G)$ of size at most $k$ such that $G-S$ is a threshold graph, i.e., a split graph that also excludes the induced path on 4 vertices.}

\defparproblem{\sc  Cluster Vertex Deletion}{A graph $G$ and integer $k$.}{$k$}
{Is there a set $S \subseteq V(G)$ of size at most $k$ such that $G-S$ is a cluster graph, i.e., every connected component is a complete graph.}

\defparproblem{\sc Minimum Bisection}{A graph $G$ and an integer $k$.}{k}
{Does there exists a partition $(A,B)$ of $V(G)$ such that
$||A| - |B|| \leq 1$ and $E(A,B) \leq k$? }


%


\end{document}